
\def\paper{\ignore}   
\def\report{}         

\documentclass{article}
\newenvironment{proof}{\begin{trivlist}\item[]\hspace{\parindent}{\em Proof.}\ }
                     {\hfill \rule{.2em}{.2em}\end{trivlist}}
\usepackage{times}
\usepackage{graphicx}
\usepackage{epic}
\usepackage{eepic}
\usepackage{epsfig}
\usepackage{latexsym}
\usepackage{stmaryrd}
\usepackage{amssymb}
\usepackage{url}
\usepackage{prooftree}
\usepackage{float}
\usepackage{thmtools} 
\usepackage{thm-restate}
\usepackage{hyperref}
\usepackage[noabbrev,capitalize]{cleveref}

\usepackage[usenames]{color}
\usepackage{multicol} 
\usepackage[f]{esvect} 
\usepackage{ifthen} 
\input xy                        
\xyoption{all} 
\usepackage[dvips]{xy}           

\newcommand{\delia}[1]{{\color{blue} #1}}

\newtheorem{theorem}{Theorem}[section]
\newtheorem{lemma}[theorem]{Lemma}
\newtheorem{corollary}[theorem]{Corollary}
\newtheorem{proposition}[theorem]{Proposition}
\newtheorem{definition}[theorem]{Definition}
\newtheorem{remark}[theorem]{Remark}

\long\def\ignore#1{\relax}

\newcommand{\mathsmall}[1]{\ensuremath{\mbox{\small{$#1$}}}}
\newcommand{\usm}{{\tt R_{\mX}}}
\newcommand{\lamBig}{'\!\Lambda\mbox{\textsc{big}}}

\newcommand{\ShaneIgnore}[1]{}

\newcommand{\fresh}[1]{\widehat{#1}}
\newcommand{\U}{{\cal U}}
\newcommand{\Er}{{\cal E}}
\newdir{> }{{}*!/26pt/@{>}}
\newdir{>' }{{}*!/16pt/@{>}}
\newdir{>'' }{{}*!/3pt/@{>}}
\newdir{>''' }{{}*!/7pt/@{>}}
\newdir{>> }{{}*!/26pt/@{>>}}

\newboolean{ShowFullProofs}
\setboolean{ShowFullProofs}{true} 
\newcommand{\ShortLongProof}[2]{\ifthenelse{\boolean{ShowFullProofs}}{#2}{#1}}

\def\itmath#1{\leavevmode\ifmmode{\mbox{\it#1} }\else{\it#1 }\fi}
\def\sfmath#1{\leavevmode\ifmmode{\mbox{\sf#1} }\else{\sf#1 }\fi}
\def\condmath#1{\leavevmode\ifmmode{#1}\else{$#1$}\fi}

\newcommand{\lpar}     {\condmath{\bindnasrepma}}

\newcommand{\ie}{{\em i.e.~}}
\newcommand{\eg}{{\em e.g.~}}

\newcommand{\sep}{\hspace*{0.5cm}}
\renewcommand{\>}{\rightarrow}

\def\l{\lambda}

\def\Gam{\Gamma}
\def\Del{\Delta}

\newcommand{\Rew}[1]{\rightarrow_{#1}}
\newcommand{\Rewn}[2][*]{\rightarrow^{#1}_{#2}}

\newcommand{\s}{{\tt s}}

\newcommand{\es}{{\tt es}}

\newcommand{\les}{\l \es}

\newcommand{\lm}{\lambda_{sub}}
\newcommand{\mterms}{\Lambda}
\newcommand{\m}{{sub}}

\newcommand{\Es}{{\tt E}_{\s}}
\newcommand{\EqEs}{=_{\Es}}

\newcommand{\llxr}{\l {\tt lxr}}

\newcommand{\isubs}[2]{\{#1 / #2\}}

\newcommand{\SN}[1]{{\cal SN}_{#1}}

\newcommand{\lambdasigma}{\l_{\sigma}}

\newcommand{\B}{{\tt B}}
\newcommand{\C}{{\tt C}}

\newcommand{\R}{{\cal R}}

\newcommand{\tra}[1]{{\tt T}(#1)}
\newcommand{\pntra}[1]{{\tt W}(#1)}
\newcommand{\pnlestra}[1]{{\tt Z}(#1)}

\newcommand{\nombretra}{{\tt T}}

\newcommand{\un}[1]{\underline{#1}}

\def\l{\lambda}
\def\int{int}

\newcommand{\Rewplus}[1]{\rightarrow^{+}_{#1}}

\newcommand{\LRew}[1]{\; \mbox{}_{#1}{\leftarrow}\ }
\newcommand{\LRewn}[1]{\; \mbox{}^{*}_{#1}{\leftarrow}\ }

\newcommand{\irule}[2]
   {\renewcommand{\arraystretch}{1.2}
    \begin{array}{c} \mbox{\(  #1 \)} \\ \hline \mbox{\( #2 \)} \end{array}}

\newcommand{\fv}{{\tt fv}}

\newcommand{\bv}{{\tt bv}}

\newcommand{\vd}{\vdash}

\newcommand{\ini}{\cap \; {\tt I}}
\newcommand{\ine}{\cap \; {\tt E}}

\newcommand{\axiom}{ {\tt ax}}
\newcommand{\aaxiom}{ {\tt ax}^{+}}
\newcommand{\maxiom}{ {\tt ax}^{*}}

\newcommand{\abs}{ {\tt abs}}
\newcommand{\aabs}{ {\tt abs}^{+}}
\newcommand{\mabs}{ {\tt abs}^{*}}
\newcommand{\absi}{ {\tt abs}_1}
\newcommand{\absii}{ {\tt abs}_2}
\newcommand{\mabsi}{ {\tt abs}^{*}_1}
\newcommand{\mabsii}{ {\tt abs}^{*}_2}
\newcommand{\app}{ {\tt app}}
\newcommand{\aapp}{ {\tt app}^{+}}
\newcommand{\mapp}{ {\tt app}^{*}}

\newcommand{\asubstr}{ {\tt subs}^{+}}
\newcommand{\msubstr}{ {\tt subs}^{*}}
\newcommand{\subsi}{ {\tt subs}_1}
\newcommand{\subsii}{ {\tt subs}_2}
\newcommand{\msubsi}{ {\tt subs}^{*}_1}
\newcommand{\msubsii}{ {\tt subs}^{*}_2}
\newcommand{\Var}{{\tt Var}}
\newcommand{\Gc}{{\tt Gc}}
\newcommand{\BGc}{{\tt BGc}}
\newcommand{\bp}{\beta_{p}}

\newcommand{\us}{{\tt R}}

\newcommand{\Appi}{{\tt App}_1}
\newcommand{\Appii}{{\tt App}_2}
\newcommand{\Appiii}{{\tt App}_3}

\newcommand{\Compi}{{\tt Comp}_1}
\newcommand{\Compii}{{\tt Comp}_2}
\newcommand{\Lamb}{{\tt Lamb}}

\newcommand{\Com}{{\tt C}}

\newcommand{\ov}[1]{\overline{#1}}

\newcommand{\dom}{{\tt dom}}

\newcommand{\sm}{{\tt sm}}

\newcommand{\set}[1]{ \{ #1 \}}

\newcommand{\size}{{\tt s}}

\newcommand{\mul}[2]{{\tt M}_{#1}(#2)}

\newcommand{\ems}{\emptyset}

\newcommand{\paralp}[1]{\Rrightarrow_{#1}}
\newcommand{\lparalp}[1]{\; \mbox{}_{#1}{\Lleftarrow}\ }
\newcommand{\paralnp}[1]{\Rrightarrow^{*}_{#1}}
\newcommand{\mX}{\mathbb{X}}

\newcommand{\ALC}{{\tt ALC}}
\newcommand{\ih}{i.h.}

\renewcommand{\vv}[1]{\overline{#1}}
\renewcommand{\vec}[1]{\overline{#1}}
\newcommand{\capp}[1]{\cap_{#1}}
\newcommand{\saddl}{{\tt add}_{\l}}
\newcommand{\smultl}{{\tt mul}_{\l}}
\newcommand{\saddls}{{\tt add}_{\lm}}
\newcommand{\smultls}{{\tt mul}_{\lm}}

\newcommand{\addl}{{\tt add}^i_{\l}}
\newcommand{\multl}{{\tt mul}^i_{\l}}
\newcommand{\addls}{{\tt add}^i_{\lm}}
\newcommand{\multls}{{\tt mul}^i_{\lm}}
\newcommand{\cwc}[1]{{[} \! {[} #1 {]}  \! {]}}
\renewcommand{\lpar}{\l_{\beta_{p}}}
\newcommand{\ldef}{\l_{def}}
\newcommand{\parlm}[1]{{\tt U}(#1)}
\newcommand{\lmpar}[1]{{\tt V}(#1)}
\newcommand{\type}{{\tt type}}

\newcommand{\defn}[1]{{\bf #1}}
\newcommand{\TST}{{\cal T}}



\restylefloat{figure}

\begin{document}

\paper{\frontmatter}

\title{Milner's Lambda-Calculus with Partial Substitutions}

\author{
\begin{tabular}{cc}
Delia Kesner & Shane \'O Conch\'uir \\
PPS, CNRS and Universit\'{e} Paris 7 & Trinity College Dublin\\
France &  Ireland \\
\end{tabular}}

\date{This is a technical report developed in 2008 \\ but remained unpublished until now. }

\maketitle

\begin{abstract}
  We study  Milner's lambda-calculus with  partial substitutions.
  Particularly,  we  show  confluence  on  terms  and  metaterms,
  preservation     of     $\beta$-strong    normalisation     and
  characterisation   of  strongly   normalisable  terms   via  an
  intersection typing  discipline. The results  on terms transfer
  to  Milner's  bigraphical  model  of the  calculus.  We  relate
  Milner's  calculus  to calculi  with  definitions,
  to calculi with  explicit
  substitutions, and to MELL Proof-Nets.
\end{abstract}

\section{Introduction}

The  $\lm$-calculus  was  introduced  by  Milner as  a  means  to
modelling   the   $\l$-calculus  in   bigraphs~\cite{Milner2006}. However, the  $\lm$-calculus is  interesting apart  from the
model; it enjoys confluence  on terms, step-by-step simulation of
$\beta$-reduction~\cite{OConchuir:2006a},   and  preservation  of
$\beta$-strong    normalisation     (PSN)    \emph{i.e.}    every
$\l$-calculus term which  is $\beta$-strongly normalising is also
$\lm$-strongly normalising ~\cite{ShaneHOR06}.

In this paper we study many remaining open questions about the
$\lm$-calculus.
The first  of them concerns \emph{confluence  on metaterms} which
are terms containing  \emph{metavariables} usually used to denote
\emph{incomplete}  programs  and/or   proofs  in  higher-order
frameworks~\cite{HuetThEtat}.   To obtain  a  confluent
reduction  relation  on metaterms  we  need  to extend  the
existing notion of reduction  on terms. We develop a proof
of confluence  for this extended  new relation by using Tait
and  Martin-L\"of's  technique.  This  proof  includes  a  formal
argument  to show  that the  calculus of  substitution  itself is
terminating.

Our main  contribution lies  in studying the  connections between
the $\lm$-calculus and other formalisms.  We start by considering calculi with
definitions,       namely,      the       partial 
$\l$-calculus~\cite{deBruijn87,Ned92}, which we call $\lpar$, and
the  $\l$-calculus  with  definitions~\cite{severi94definitions},
which we call $\ldef$.  We distinguish arbitrary terms of
  the   calculi   with    definitions,   which   we   call
    $\mterms$-terms,  from  (pure) terms   without  definition,  which  are
  ordinary   $\l$-terms.     We   show   that    the   sets   of
strongly-normalising  $\l$-terms  in $\lm$  and  $\lpar$ are  the
same.  Similarly,  we show that the  sets of strongly-normalising
$\mterms$-terms in $\lm$ and $\ldef$ are equal.  Thus, we demonstrate
that partial substitutions and definitions are similar notions.

  We also relate  $\lm$-strongly normalising terms
  to  typed $\mterms$-terms.
  For that, we start by introducing an
  intersection type discipline for  $\mterms$-terms.
  We  then give a simple
  (and    constructive)   argument   to    prove   $\lpar$-strong
  normalisation for   typed  $\l$-terms.  This  argument turns
  out to be sufficient  to conclude $\lm$-strong normalisation for
  \emph{intersection} typed $\mterms$-terms. 
  By proving 
  the converse \emph{i.e.} $\lm$-strongly normalising $\mterms$-terms can be  typed
  in the intersection type discipline, we also provide a characterisation
  of $\lm$-strongly normalising $\mterms$-terms.
  
The relation between typable and $\lm$-strongly normalising
$\mterms$-terms also gives an alternative proof of PSN for the
$\lm$-calculus, which is self-contained, and which simplifies previous
work~\cite{ShaneHOR06} considerably.
Indeed, the existing proof is quite
involved, and uses  a translation of $\lm$ into a rather complex calculus,
obtained by modifying a language with explicit resources
  inspired from Linear Logic's Proof-Nets. 
 
Another contribution of the paper
is the  study of the relation  between
partial   substitutions   and   explicit   substitutions.    More
precisely, we define a translation  from $\lm$ to a calculus with
explicit    substitutions    called   $\les$~\cite{K07}.     This
translation preserves  reduction and  has at least  two important
consequences.  On one hand, we  obtain a simple proof of
$\lm$-strong
normalisation 
for  \emph{simply typed}  $\mterms$-terms.   A second
consequence is  that the existing simulation of  the simply typed
$\les$-calculus  into MELL  Proof-Nets~\cite{K07} also gives a 
natural interpretation for the simply typed $\lm$-calculus by composition.  As a
corollary, $\lm$-strong normalisation  for simply typed $\mterms$-terms can
also be inferred from strong normalisation of MELL Proof-Nets.

Finally, we transfer our confluence and strong normalisation proofs on $\mterms$-terms without metavariables 
in $\lm$ to Milner's model using an existing result. 

{\bf Road map.} 
\Cref{s:lm-calculus}  introduces the $\lm$-calculus.
Metaterms are introduced in~\cref{s:metaterms}: 
some preliminary properties are discussed in~\cref{s:basic-properties-lm}
and confluence on metaterms is proved
using     Tait      and     Martin-L\"of's     technique
in~\cref{s:confluence}.      In~\cref{s:partial-definitions}   we  relate  $\lm$   to
two 
calculi     with     definitions,    $\lpar$     and     $\ldef$.
In~\cref{s:PSN},  we present  the translation  from  $\lm$ to
$\les$ and  prove that  reduction in the  former is  simulated by
non-empty  reduction  sequences   in  the  latter.  \cref{s:charact-les-lm}  presents a  neat characterisation
of  $\lm$-strongly  normalising  terms  using  intersection  type
systems   as  well   as   the  PSN   property   for  untyped
$\mterms$-terms of $\lm$.     We  conclude
$\lm$-strong  normalisation  for  simply typed  $\mterms$-terms  from
strong $\les$-normalisation  for simply typed  $\mterms$-terms. 
 Last
but  not  least,  we  discuss a relation between  $\lm$  and  MELL
Proof-Nets  and transfer results to
the bigraphical setting
in~\cref{s:graphical}.

\section{The $\lm$-calculus}
\label{s:lm-calculus}

The $\lm$-calculus  was introduced by Milner to present a model of
the $\lambda$-calculus  in local bigraphs.   The    calculus   was
  inspired by $\lambdasigma$~\cite{ACCL91}
  although  it is a  \emph{named} calculus  and has  turned out  to have
  stronger properties as we show in this paper.  Terms 
  of the $\lm$-calculus, called \defn{$\mterms$-terms}, are  given by the following grammar: 

\[ t ::= x \mid t\ t \mid \l x. t \mid t[x/t] \]

The set of terms includes \defn{variables}, \defn{abstractions},
\defn{applications} and \defn{closures} respectively.  The piece of
syntax $[x/t]$, which is not a term itself, is called an
\defn{explicit substitution}. A term $t$ is said to be \defn{pure} if $t$
does not contain any explicit substitution. \\

\defn{Free} and \defn{bound} variables are defined as usual, by assuming 
the terms $\l x. t$ and $t[x/u]$ bind $x$ in $t$. Formally,
\[ \begin{array}{llllll}
   \fv(x) & = & \set{x}  & \bv(x) & = & \ems \\
   \fv(t\ u) & = & \fv(t) \cup \fv(u) & \bv(t\ u) & = & \bv(t)
   \cup \bv(u)\\
   \fv(\l x. t) & = & \fv(t) \setminus \set{x} & \bv(\l x. t) & = &
   \bv(t) \cup \set{x} \\
   \fv(t[x/u]) & = & \fv(t) \setminus \set{x} \cup \fv(u) & \bv(t[x/u])
   & = & \bv(t) \cup \set{x} \cup \bv (u) \\
   \end{array} \] 
We consider $\alpha$-conversion which is the congruence generated by
renaming of bound variables. Thus for example $(\l y. x)[x/y]
=_{\alpha} (\l z. x')[x'/y]$. We work with $\alpha$-equivalence
classes so that two bound variables of the same term are assumed to be
distinct, and no free and bound variable of the same term have the
same name. Thus, $\alpha$-conversion avoids capture of variables.
We use notation $\l \vec{y}.s$ for $\l y_1. \ldots. \l
  y_n. s$, where $s$ is not a lambda abstraction.  \defn{Implicit
substitution} on $\mterms$-terms can be defined modulo $\alpha$-conversion
in such a way that capture of variables is avoided:

\[ \begin{array}{llllll}
   x\isubs{x}{v} & :=  & v \\
   y\isubs{x}{v} & :=  & y & \mbox{ if } y \neq x \\
   (\l y. t)\isubs{x}{v}  & := & \l y. t\isubs{x}{v} \\
   (tu)\isubs{x}{v} & := & t\isubs{x}{v}u\isubs{x}{v} \\
  t[y/u]\isubs{x}{v}& := &  t\isubs{x}{v}[y/u\isubs{x}{v}]\\
   \end{array} \]

The set  of \defn{$\mterms$-contexts}  can be  defined by the following grammar:

\[ C ::= \Box \mid C\ t \mid t\ C \mid \l x. C \mid C[x/t]
\mid t[x/C] \]

We use the notation $C\cwc{u}_\phi$
to mean that the hole $\Box$ in the context $C$ has been replaced
by the term $u$ \emph{without capture of the variables in the set
  $\phi$}. Thus for example, if $C = \l z. \Box$, then
$C\cwc{x}_{\phi}$
with $x \in \phi$
means in particular that $z \neq x$.

The reduction rules  of  the $\lm$-calculus are given
in~\cref{f:lm-rules}. 

\begin{figure}[htp]
\[ \begin{array}{|lcll|}
\hline 
(\l x. t)\ u    &\Rew{\B}   & t[x/u] & \\
t[x/u]          &\Rew{\Gc}  & t      & \mbox{ if } x \notin \fv(t) \\ 
C\cwc{x}_\phi[x/u]   &\Rew{\us}   & C\cwc{u}_\phi[x/u] & \mbox{ if } \{x\} \cup \fv(u) \subseteq \phi\\
\hline
\end{array} \]
\caption{Reduction Rules for $\mterms$-Terms}
\label{f:lm-rules} 
\end{figure}

As Milner  describes, an explicit substitution  
$[x/u]$ acts  \emph{`at  a distance'} on each free occurrence
of $x$ in turn, rather than migrating a copy of itself towards
each such occurrence \eg\  the reduction step
  $(\l x.  x\ (y\ y))[y/t] \Rew{\us}  (\l x. x\  (t\ y))[y/t]$
implements  a \emph{partial}  substitution.
Partial substitution is atypical and therefore presents novel
challenges: traditional methods of proving simulation or normalisation
properties need to be adapted to this setting.  However, it also
exhibits interesting properties: the
$\lm$-calculus retains PSN whilst having full
composition of substitutions.  This is remarkable since Melli\`{e}s'
counterexample of PSN \cite{Mellies1995a} for $\lambdasigma$ was based
on full composition of substitutions.

\ignore{
  Bloo and  Rose's subsequent  work~\cite{Bloo95}
  led to the intuition that

\begin{quote}
  \emph{``the   essential  property   for  losing   PSN   is  the
    possibility of moving one  substitution from outside a second
    substitution to the \emph{inside} of the latter by means of a
    composition of substitutions.'' \cite{Bloo1997}}
\end{quote}

The composition  described in~\cite{Bloo1997} is  given by a local reduction rule, by pushing  one  substitution
inside  the one appearing on its left-hand side  (see for example rules
$\Compi$ and $\Compii$ in~\cref{s:PSN}). Composition  of  substitutions
in the  $\lm$-calculus is different in the sense that it is implicit via
the non-local substitution rule.
}

We denote by $\sm$ (resp. $\B\sm$)
the  reduction relation generated by
the reduction  rules $\us$ and $\Gc$ in~\cref{f:lm-rules} 
(resp. $\B$, $\us$ and $\Gc$)
and closed by \emph{all} contexts. 
The \defn{reduction relations  of the $\lm$-calculus for $\mterms$-terms},
defined on $\mterms$-terms,  are generated by the previous
reduction relations  $\sm$ (resp. $\B\sm$) \emph{ modulo the
  equivalence relation 
$\alpha$}, they are denoted by  $\Rew{\m}$ (resp. $\Rew{\lm}$):   
\[  \begin{array}{llll} 
    t \Rew{\m} t' & \mbox{ iff there are } s,s' \mbox{ s.t.}  & t
    =_{\alpha} s \Rew{\sm}  s'=_{\alpha} t' \\
   t \Rew{\lm} t' & \mbox{ iff there are } s,s' \mbox{ s.t.}  & t
   =_{\alpha} s \Rew{\B\sm}  s'=_{\alpha} t'  \\
   \end{array} \]

Thus, the reduction relation 
acts on  $\alpha$-equivalence classes.
For any reduction relation $\R$, we use the notation $\Rewn{\R}$
(resp.  $\Rewplus{\R}$) to denote the reflexive (resp. reflexive
and
transitive) closure of $\Rew{\R}$. As a consequence if $t
\Rewn{\m} t'$ (resp, $t
\Rewn{\lm} t'$) in $0$ steps, then $t=_{\alpha} t'$ (and not $t = t'$). \\

  Reduction enjoys   the  following   properties.

\begin{lemma}
  [Preservation of Free Variables]
  \label{l:preservation-variables}
Let $t,t'$ be $\mterms$-terms. If $t \Rew{\lm} t'$, then  $\fv(t') \subseteq \fv(t)$.
\end{lemma}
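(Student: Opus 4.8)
The plan is to reduce the statement to the three root rules of \cref{f:lm-rules}, propagate the inclusion through arbitrary contexts, and then dispose of $\alpha$-conversion. First I would record two auxiliary facts. (i) If $s =_\alpha s'$ then $\fv(s) = \fv(s')$, so reduction modulo $\alpha$ is harmless for the free-variable count. (ii) The operator $\fv$ is monotone under plugging: if $\fv(r') \subseteq \fv(r)$ then $\fv(C\cwc{r'}_\phi) \subseteq \fv(C\cwc{r}_\phi)$ for every $\mterms$-context $C$. Fact (ii) is proved by a routine induction on $C$ using the defining equations of $\fv$, since each context constructor only forms unions, removes a single bound variable, or passes to the plugged subterm, all of which preserve inclusions. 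Granting these, a step $t \Rew{\lm} t'$ means $t =_\alpha s \Rew{\B\sm} s' =_\alpha t'$, and $\Rew{\B\sm}$ is the closure under all contexts of the root rules, so $s = C\cwc{r}_\phi$ and $s' = C\cwc{r'}_\phi$ for some context $C$ and some root step $r \to r'$ firing one of $\B$, $\Gc$, $\us$. Hence it suffices to prove the inclusion for root steps, after which $\fv(t') = \fv(C\cwc{r'}_\phi) \subseteq \fv(C\cwc{r}_\phi) = \fv(t)$ by (i) and (ii).

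For the root cases, the rules $\B$ and $\Gc$ are immediate from the defining equations of $\fv$. For $\B$ one computes $\fv((\l x. t)\, u) = (\fv(t) \setminus \set{x}) \cup \fv(u) = \fv(t\subs{x}{u})$, so here the inclusion is in fact an equality. For $\Gc$, where $x \notin \fv(t)$, one has $\fv(t\subs{x}{u}) = \fv(t) \cup \fv(u) \supseteq \fv(t)$, so free variables may only be discarded, namely those occurring solely in the erased argument $u$.

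The delicate case, which I expect to be the main obstacle, is $\us$: $C\cwc{x}_\phi \subs{x}{u} \Rew{\us} C\cwc{u}_\phi \subs{x}{u}$ under the side condition $\set{x} \cup \fv(u) \subseteq \phi$. The point is that the outer substitution $\subs{x}{u}$ already contributes $\fv(u)$ to the free variables of both sides and binds $x$, so replacing a single occurrence of $x$ by $u$ inside the context cannot create genuinely new free variables. Formally, I would first establish, by induction on $C$, that plugging $u$ in place of the hole adds at most the free variables of $u$ (those bound along the path to the hole only shrink the set), which gives $\fv(C\cwc{u}_\phi) \setminus \set{x} \subseteq (\fv(C\cwc{x}_\phi) \setminus \set{x}) \cup \fv(u)$. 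Taking the union with $\fv(u)$ on both sides and recalling $\fv(v\subs{x}{u}) = (\fv(v) \setminus \set{x}) \cup \fv(u)$ then yields
\[ \fv(C\cwc{u}_\phi \subs{x}{u}) = (\fv(C\cwc{u}_\phi) \setminus \set{x}) \cup \fv(u) \subseteq (\fv(C\cwc{x}_\phi) \setminus \set{x}) \cup \fv(u) = \fv(C\cwc{x}_\phi \subs{x}{u}), \]
which is exactly the required inclusion. The side condition $\fv(u) \subseteq \phi$ is what guarantees that the replacement is capture-free, so that the syntactic computation of $\fv$ above faithfully reflects the reduct and no free variable of $u$ is accidentally bound or unbound in the passage from $C\cwc{x}_\phi$ to $C\cwc{u}_\phi$.
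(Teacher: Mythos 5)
Your proof is correct and is essentially the paper's own argument spelled out in full: the paper simply says ``by induction on $t \Rew{\lm} t'$'', and your decomposition into root cases ($\B$, $\Gc$, $\us$), context monotonicity, and invariance of $\fv$ under $\alpha$-conversion is exactly what that induction unfolds to. The treatment of the delicate $\us$ case via the inclusion $\fv(C\cwc{u}_\phi) \setminus \set{x} \subseteq (\fv(C\cwc{x}_\phi) \setminus \set{x}) \cup \fv(u)$ is sound and correctly identifies where the work lies.
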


\begin{proof}
  By induction on $t \Rew{\lm} t'$.
\end{proof}

\begin{lemma}[Full Composition for Terms]
\label{l:full-composition}
Let $t,u$ be $\mterms$-terms. Then $t[x/u] \Rewplus{\lm} t \isubs{x}{u}$.
\end{lemma}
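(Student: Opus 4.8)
The plan is to induct on the number $n$ of free occurrences of $x$ in $t$, peeling off occurrences one at a time with rule $\us$ and discarding the resulting vacuous substitution with rule $\Gc$. Before starting the induction I would normalise the closure by $\alpha$-conversion: renaming the bound variable to a fresh name, we may assume $x \notin \fv(u)$ and that every bound variable occurring in $t$ differs from $x$ and from the free variables of $u$. This is harmless because $\Rew{\lm}$ is defined modulo $\alpha$ and $t\isubs{x}{u}$ is stable under such renamings. I regard this as the crucial preliminary step: without the invariant $x \notin \fv(u)$, a single $\us$ step could reintroduce free occurrences of $x$ coming from the freshly inserted copy of $u$, and the occurrence measure would fail to decrease.

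For the base case $n = 0$ we have $x \notin \fv(t)$, hence $t\isubs{x}{u} = t$ and the single step $t[x/u] \Rew{\Gc} t = t\isubs{x}{u}$ already witnesses a nonempty $\Rew{\lm}$-reduction. For the inductive step $n \geq 1$, I would single out one free occurrence of $x$ and write $t = C\cwc{x}_\phi$ for a context $C$ whose binders on the path to the hole avoid $\{x\} \cup \fv(u)$; the side condition $\{x\} \cup \fv(u) \subseteq \phi$ of $\us$ is then satisfied thanks to the $\alpha$-convention fixed above. One application of $\us$ gives $t[x/u] \Rew{\us} C\cwc{u}_\phi[x/u]$. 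Since $x \notin \fv(u)$, the term $t' := C\cwc{u}_\phi$ has exactly $n-1$ free occurrences of $x$, so the induction hypothesis yields $t'[x/u] \Rewplus{\lm} t'\isubs{x}{u}$, and concatenating we obtain $t[x/u] \Rewplus{\lm} t'\isubs{x}{u}$.

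It then remains to identify $t'\isubs{x}{u}$ with the target $t\isubs{x}{u}$. For this I would establish the routine commutation of implicit substitution with hole-filling, $(C\cwc{s}_\phi)\isubs{x}{u} = (C\isubs{x}{u})\cwc{s\isubs{x}{u}}$, valid whenever the binders of $C$ avoid $\{x\}\cup\fv(u)$; it follows by a direct induction on $C$ from the defining clauses of $\isubs{\cdot}{\cdot}$. Applying it with $s = u$ and with $s = x$, and using $u\isubs{x}{u} = u$ together with $x\isubs{x}{u} = u$ (both valid because $x \notin \fv(u)$), each side reduces to $(C\isubs{x}{u})\cwc{u}$, so $t'\isubs{x}{u} = t\isubs{x}{u}$ and the step closes.

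The main obstacle is not any single computation but the capture bookkeeping: one must ensure that the initial $\alpha$-renaming simultaneously validates the side condition of $\us$, keeps the occurrence measure strictly decreasing across each $\us$ step, and legitimises the context-commutation equality. Once the invariant $x \notin \fv(u)$ and the freshness of the binders of $t$ are fixed at the outset, each of these three points becomes routine, and the count of free occurrences supplies a well-founded measure guaranteeing that the reduction terminates in $t\isubs{x}{u}$.
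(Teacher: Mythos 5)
Your proposal is correct, but it takes a different route from the one-line proof the paper gives for this lemma. The paper proves \cref{l:full-composition} by structural induction on $t$, whereas you induct on the number of free occurrences of $x$ in $t$, removing one occurrence per $\us$-step and finishing with a single $\Gc$-step. Your measure is, in fact, exactly the one the paper itself uses for the metaterm generalisation (Full Composition for Metaterms, proved by induction on $n_{x,t}$), down to the same closing identification $t\isubs{x}{u} = C\cwc{u}\isubs{x}{u}$, which you justify more explicitly via the commutation of implicit substitution with hole-filling. The trade-off is as follows. A structural induction must do extra work in the application, abstraction and closure cases: the \ih\ yields reductions of the form $t_i[x/u] \Rewplus{\lm} t_i\isubs{x}{u}$ with the substitution attached to the \emph{subterm}, and since in $\lm$ a substitution cannot be pushed inside a term, those steps have to be re-enacted through the enclosing context (each inner $\us$-redex $C\cwc{x}[x/u]$ becomes an outer redex $(C'\, C)\cwc{x}[x/u]$, say), which is a lifting argument the paper leaves implicit. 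Your occurrence-count induction avoids this entirely because the \ih\ is always applied to the whole term $C\cwc{u}_\phi[x/u]$ with the substitution still outermost, so no lifting is needed; the price is the capture bookkeeping you identify, i.e.\ fixing $x \notin \fv(u)$ and the freshness of the binders of $C$ once and for all, which the paper's $\alpha$-conversion convention (bound variables pairwise distinct and distinct from free variables) already guarantees. Both arguments are sound; yours has the advantage of being self-contained and of extending verbatim to metaterms.
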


\begin{proof}
  By induction on $t$.
\end{proof}

Full composition   guarantees
  that explicit substitution  implements the implicit one.  While
  this  property seems reasonable/natural,  it is  worth noticing
  that many calculi with  explicit substitutions do not enjoy it.

\begin{lemma}[One-Step $\beta$-Simulation]
\label{l:one-step-beta-simulation}
Let $t,u$ be $\l$-terms.
If $t \Rew{\beta} u$, then $t \Rewplus{\lm} u$. 
\end{lemma}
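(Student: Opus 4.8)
The plan is to split the $\beta$-step into the contraction of its redex and the surrounding context, discharge the contraction with the $\B$-rule together with Full Composition (\cref{l:full-composition}), and then propagate the resulting $\lm$-reduction back through the context.

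First I would recall that $\Rew{\beta}$ on $\l$-terms is the contextual closure of the root rule $(\l x. s)\,v \Rew{\beta} s\isubs{x}{v}$. Hence $t \Rew{\beta} u$ holds exactly when there are a context $C$ (without explicit substitutions) and $\l$-terms $s,v$ with $t =_{\alpha} C\cwc{(\l x. s)\,v}_{\ems}$ and $u =_{\alpha} C\cwc{s\isubs{x}{v}}_{\ems}$, the hole of $C$ marking the position of the contracted redex. The argument then separates cleanly into the root case and the lifting through $C$.

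The crux is the root case. Here I would fire the $\B$-rule, $(\l x. s)\,v \Rew{\B} s[x/v]$, and then invoke \cref{l:full-composition} to obtain $s[x/v] \Rewplus{\lm} s\isubs{x}{v}$. Composing the two yields a non-empty reduction $(\l x. s)\,v \Rewplus{\lm} s\isubs{x}{v}$, as required. This is the only place where substitution is genuinely performed: the single implicit substitution appearing on the $\beta$-side is realised here by one $\B$-step followed by the $\us$- and $\Gc$-steps already packaged inside Full Composition.

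It remains to lift this reduction through $C$. Since $\Rew{\lm}$ is by definition closed under all contexts, every step of $(\l x. s)\,v \Rewplus{\lm} s\isubs{x}{v}$ may be replayed inside $C$, giving $C\cwc{(\l x. s)\,v}_{\ems} \Rewplus{\lm} C\cwc{s\isubs{x}{v}}_{\ems}$, i.e. $t \Rewplus{\lm} u$. Equivalently, one may run an explicit induction on the derivation of $t \Rew{\beta} u$, with the root redex as base case and the term formers (left or right of an application, and under an abstraction) as inductive cases; each inductive case is immediate from the induction hypothesis and context closure. I do not anticipate a real obstacle: the only subtlety is that the intermediate term $s[x/v]$ is a genuine $\mterms$-term carrying an explicit substitution although $t$ and $u$ are pure, which is precisely why the conclusion is stated with $\Rewplus{\lm}$ on $\mterms$-terms rather than on pure $\l$-terms. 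Working throughout modulo $\alpha$, and noting that free variables do not grow along the reduction by \cref{l:preservation-variables}, guarantees that no variable of the redex is wrongly captured by the binders of $C$.
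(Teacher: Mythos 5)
Your proposal is correct and matches the paper's approach: the paper proves this lemma by induction on $t \Rew{\beta} u$, which is exactly your decomposition --- the root redex handled by one $\B$-step followed by Full Composition (\cref{l:full-composition}), and the remaining cases discharged by closure of $\Rew{\lm}$ under contexts.
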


\begin{proof}
  By induction on $t \Rew{\beta} u$.
\end{proof}


\section{Metaterms}
\label{s:metaterms}

We now introduce
\emph{metaterms},  usually used to denote  \emph{incomplete} programs/proofs
  in   higher-order  frameworks~\cite{HuetThEtat}.  Metavariables
  come  with a minimal amount of  information to guarantee
  that some  basic operations such  as instantiation (replacement
  of metavariables  by metaterms) are sound in  a typing context.
  An example can  be given by the (non  annotated) metaterm $t=\l
  y. y\ \mX\  (\l z. \mX)$, for which  the instantiation of $\mX$
  by a term containing a  free occurrence of $z$ would be unsound
  (see~\cite{MunozTh,DHK2000,FP07} for details).
  The  set   of  \defn{$\mterms$-metaterms}  is  obtained   by  adding
    \defn{annotated metavariables} of  the form $\mX_\Del$ (where
    $\Del$ is  a set of  variables) to the grammar
   generating the 
    $\mterms$-terms introduced in~\cref{s:lm-calculus}.
    The  notion of \defn{free
    variable}  is extended  to $\mterms$-metaterms by $\fv(\mX_\Del)= \Del$. As a
    consequence, $\alpha$-conversion can also be defined on
    $\mterms$-metaterms and thus for example $\l x. \mX_{x,y} =_{\alpha} \l z. \mX_{z,y}$.
We also extend the  standard notion of \defn{implicit substitution} to $\mterms$-metaterms as follows:
$$
\begin{array}{llll}
\mX_\Del \isubs{x}{v} & := & \mX_\Del & \mbox{if } x \notin \Del\\
\mX_\Del \isubs{x}{v} & := & \mX_\Del[x/v] & \mbox{if } x \in \Del\\
\end{array}
$$

It is worth noticing that Milner's original presentation did
    not consider metaterms, as  the bigraphical system did not model
    them. However, all properties we prove here involving metaterms
    hold also for terms.

    Throughout  this section,  we include a new rule
    in the reduction   relation  as   well   as  a new equation in the equivalence relation. Indeed, we add  the equation  $\Com$
    and the reduction rule $\usm$, presented in~\cref{f:extra-lm-rules},
    to the ones in~\cref{f:lm-rules}.

\begin{figure}[htp]
\[ \begin{array}{|llll|}
\hline  
\mbox{{\bf Equation}}: &&&\\
t[x/u][y/v]  & =_{\Com} &   t[y/v][x/u] &  \mbox{ if } y \notin \fv(u)\ \&\
x \notin \fv(v)\\ &&&\\ 
\mbox{{\bf Reduction Rule}}: &&&\\
C\cwc{\mX_\Del}_{\phi} [x/u]  & \Rew{\usm} &    
C\cwc{\mX_\Del[x/u]}_{\phi}[x/u]\  & \mbox{ if }   x \in
\Del\ \&\  x \cup \fv(u) \subseteq \phi \\ 
&&& \;\;\;\;  \&\ C \neq \Box[y_1/v_1]\ldots[y_n/v_n]\ (n \geq 0) \\
\hline
\end{array} \]  
\caption{Extra Equation and Reduction Rule for $\mterms$-Metaterms}
\label{f:extra-lm-rules} 
\end{figure}

Remark in particular that $\usm$ cannot be applied if the
  context is  empty. Remark also that the equation $\Com$ can
always  be postponed w.r.t. reduction if only
terms (and not metaterms) are considered.

The equation $\C$ is not part of the original presentation
  of Milner's  $\lambda$-calculus but we  include it here  for at
  least two reasons. The first one is that in bigraphs as well as
  in  proof-nets, which are  graphical representation  of $\mterms$-terms,
  some  syntactic  details ---such  as  for  example  the order  of
  appearence  of \emph{independent} substitutions---  is 
  extensionally  irrelevant.  The   second  reason  is  that  the
  reduction   relation    on   $\mterms$-metaterms   we    study   in~\cref{s:confluence}  turns out to  be confluent  only
  with the  equation $\C$.

The \emph{equivalence  relation}  generated  by  the conversions  $\alpha$  and
$\Com$ is denoted by $\EqEs$. We now   denote by $\sm$ (resp. $\B\sm$)
the  reduction relation generated by
the reduction  rules $\{\us, \Gc, \usm\}$ (resp. $\{\B, \us, \Gc, \usm \}$)
and closed by \emph{all} contexts.
The \defn{reduction relations  of the $\lm$-calculus for $\mterms$-metaterms} are generated by the
reduction relations  $\sm$ (resp. $\B\sm$) \emph{ modulo the
  equivalence relation 
$\Es$}, always  denoted by  $\Rew{\m}$ (resp. $\Rew{\lm}$):   
\[  \begin{array}{llll} 
    t \Rew{\m} t' & \mbox{ iff there are } s,s' \mbox{ s.t.}  & t
    \EqEs s \Rew{\sm}  s' \EqEs t' \\
   t \Rew{\lm} t' & \mbox{ iff there are } s,s' \mbox{ s.t.}  & t
   \EqEs s \Rew{\B\sm}  s'\EqEs t'  \\
   \end{array} \]

\subsection{Preliminary Properties}
\label{s:basic-properties-lm}

In this section we prove some preliminary properties of $\mterms$-metaterms. First, full composition still holds for $\mterms$-metaterms:

\begin{lemma}[Full Composition for Metaterms]
Let $t, u$ be $\mterms$-metaterms. Then
$t[x/u] \Rewn{\m} t\isubs{x}{u}$.
\end{lemma}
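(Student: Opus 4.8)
The plan is to argue by induction on the structure of $t$, reusing the shape of the proof of \cref{l:full-composition} and only adding the two metavariable cases. Full composition needs no $\B$-step, so every reduction step used lies in $\Rew{\m}$ and is built from $\us$, $\Gc$ and $\usm$. As usual I assume the variable convention, so that $x \notin \fv(u)$ in $t[x/u]$ and the $\Gc$ side condition becomes available as soon as $x$ no longer occurs free.

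The base cases are direct. For $t = x$ we have $x[x/u] \Rew{\us} u[x/u] \Rew{\Gc} u = x\isubs{x}{u}$, the first step using the empty context and the second using $x \notin \fv(u)$; for $t = y \neq x$ we have $y[x/u] \Rew{\Gc} y = y\isubs{x}{u}$; and for $t = \mX_\Del$ with $x \notin \Del$ we have $\mX_\Del[x/u] \Rew{\Gc} \mX_\Del = \mX_\Del\isubs{x}{u}$. The one genuinely new case is $t = \mX_\Del$ with $x \in \Del$: here $\mX_\Del\isubs{x}{u}$ is defined to be $\mX_\Del[x/u]$, so $t[x/u]$ and $t\isubs{x}{u}$ are the \emph{same} metaterm and \emph{no} step is required. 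Observe that both rules are then blocked: $\Gc$ because $x \in \Del = \fv(\mX_\Del)$, and $\usm$ because its side condition rules out the empty context $\Box$. This zero-step case is exactly why the statement is phrased with $\Rewn{\m}$, which permits zero steps, rather than with the at-least-one-step $\Rewplus{\lm}$ used for terms in \cref{l:full-composition}.

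For the inductive cases ($t = t_1 t_2$, $t = \l y. s$, $t = s[y/w]$) the goal is to drive the top-level substitution into all the positions it governs and then erase it. For an application I would establish $(t_1 t_2)[x/u] \Rewn{\m} (t_1\isubs{x}{u}\, t_2\isubs{x}{u})[x/u] \Rew{\Gc} (t_1 t_2)\isubs{x}{u}$, the closing $\Gc$ being legal because $x \notin \fv(t_i\isubs{x}{u})$ (a consequence of $x \notin \fv(u)$); the abstraction and closure cases are analogous, with intermediate metaterms $(\l y. s\isubs{x}{u})[x/u]$ and $(s\isubs{x}{u}[y/w\isubs{x}{u}])[x/u]$ and with $y \neq x$, $y \notin \fv(u)$ by the convention.

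The delicate point, and the main obstacle, is the first (propagation) phase of each inductive case. Although $\Rew{\m}$ is closed under all contexts, one cannot simply apply the induction hypothesis to $t_i$, since in $(t_1 t_2)[x/u]$ the substitution sits at the root and $(t_1 t_2)[x/u]$ is not $t_1[x/u]\ t_2[x/u]$. I would instead \emph{replay}, under the surrounding constructor, the propagation steps that the induction hypothesis performs on each subterm, keeping the single top-level $[x/u]$ in place: a step $C\cwc{x}_\phi[x/u] \Rew{\us} C\cwc{u}_\phi[x/u]$ internal to $t_1$ is mirrored by the same step with context $C\, t_2$ (and symmetrically $t_1\, C$, $\l y. C$, $C[y/w]$, $s[y/C]$ elsewhere), and likewise for $\usm$. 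Each replayed step keeps its side conditions, because prefixing the context with a constructor binding neither $x$ nor a variable of $u$ changes neither $\phi$ nor the conditions $x \in \Del$ and $\{x\} \cup \fv(u) \subseteq \phi$. The care needed here is twofold: the single $\Gc$ that discards the substitution must be performed only at the very end, after both subterms are exhausted, so one works with the strengthened propagation statement $t[x/u] \Rewn{\m} t\isubs{x}{u}[x/u]$ (valid for every $t$ that is not a bare $x$-annotated metavariable); and a subterm of the form $\mX_\Del$ with $x \in \Del$, inert as a base case, is now enclosed in a \emph{non-empty} context, so $\usm$ does fire and rewrites it to $\mX_\Del[x/u] = \mX_\Del\isubs{x}{u}$, after which $x \notin \fv(\mX_\Del[x/u])$ and the final top-level $\Gc$ is still enabled.
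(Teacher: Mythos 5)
Your base cases and the application/abstraction cases are sound, but the closure case of your structural induction breaks down, and it does so on exactly the configuration that forces the paper to work modulo the equation $\C$. The flaw is your reading of the side condition of $\usm$: the rule excludes not merely the empty context but \emph{every} context of the form $\Box[y_1/v_1]\ldots[y_n/v_n]$ with $n \geq 0$. So your claim that a subterm $\mX_\Del$ with $x \in \Del$, once ``enclosed in a non-empty context'', can be rewritten by $\usm$ is false whenever that enclosing context is a chain of substitutions. Concretely, take $t = \mX_\Del[y/z]$ with $x,y \in \Del$ and $z$ a variable distinct from $x$. Then $t[x/z'] = \mX_\Del[y/z][x/z']$ admits no $\m$-step at all: $\us$ cannot fire (there is no free occurrence of the \emph{variable} $x$ or $y$, only the annotation), $\Gc$ cannot fire on either substitution (both $x$ and $y$ lie in $\Del$), and $\usm$ is blocked for both substitutions because the relevant contexts $\Box[y/z]$ and $\Box$ are of the excluded shape. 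Yet $t\isubs{x}{z'} = \mX_\Del[x/z'][y/z]$ is a syntactically different normal form. So your intermediate target $(s\isubs{x}{u}[y/w\isubs{x}{u}])[x/u]$ is unreachable by reduction, and your strengthened propagation statement $t[x/u] \Rewn{\m} t\isubs{x}{u}[x/u]$ fails for precisely these $t$.

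The missing idea is the equivalence $\Es$: since the two substitutions are independent ($x \notin \fv(z)$ and $y \notin \fv(z')$ by $\alpha$-convention), one has $\mX_\Del[y/z][x/z'] =_{\C} \mX_\Del[x/z'][y/z] = t\isubs{x}{z'}$, i.e., the target is reached in \emph{zero} reduction steps modulo $\Es$ --- recall that $\Rewn{\m}$ on metaterms is defined modulo $\Es$, so a zero-step reduction is an $\Es$-equality, not a syntactic one. This is exactly the paper's third case: when $t$ is a chain of substitutions directly above an $x$-annotated metavariable (with $x$ not free in the bodies of that chain, else one of the other cases applies), the top substitution $[x/u]$ is commuted innermost by $=_{\C}$ and then coincides with the implicit substitution by definition. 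It is also the very reason the equation $\C$ is added to the calculus on metaterms at all; a proof that never invokes $\C$ cannot succeed, since without it the statement is false. A secondary difference: the paper inducts on the number of free occurrences of $x$ in $t$ rather than on the structure of $t$, which lets it consume one $\us$/$\usm$ redex at a time at the root and avoids your replay-under-context bookkeeping entirely; your structural induction could be repaired, but only after the substitution-chain case is handled by $=_{\C}$ as above.
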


\begin{proof}
By induction on the number $n_{x,t}$ of free occurrences of 
$x$ in $t$.

\begin{itemize}
\item If $n_{x,t}=0$, then $t[x/u] \Rew{\Gc} t = t\isubs{x}{u}$.
\item If $n_{x,t}>0$, then we have different cases.

  \begin{enumerate}
  \item Suppose  $t$ can be written as $C\cwc{x}$, for some
    context $C$. Then
        $t[x/u] \Rew{\us} C\cwc{u}[x/u]$ and 
        $n_{x,C\cwc{u}} < n_{x,t}$. By the \ih\
        $C\cwc{u}[x/u] \Rewn{\m} C\cwc{u}\isubs{x}{u}$.
        Since $t\isubs{x}{u} = C\cwc{u}\isubs{x}{u}$,
        then   $t[x/u] \Rewn{\m} t\isubs{x}{u}$. 
            
      \item Otherwise, suppose  $t$ can  be written as
        $t=C\cwc{\mX_{\Del}}\ (x \in \Del)$, for some context
        $C \neq \Box[y_1/v_1]\ldots[y_n/v_n]\ (n \geq 0)$.
        Then $t[x/u] \Rew{\usm} C\cwc{\mX_{\Del}[x/u]}[x/u]$ and 
        $n_{x,C\cwc{\mX_{\Del}[x/u]}} < n_{x,t}$. By the \ih\
        $C\cwc{\mX_{\Del}[x/u]}[x/u] \Rewn{\m} C\cwc{\mX_{\Del}[x/u]}\isubs{x}{u}$.
        Since $t\isubs{x}{u} = C\cwc{\mX_{\Del}[x/u]}\isubs{x}{u}$,
        then   $t[x/u] \Rewn{\m} t\isubs{x}{u}$.

      \item Otherwise,  $t$ can only be written as
        $\mX_{\Del}[y_1/u_1]..[y_n/u_n]\ (x \in \Del)$
        for some $n \geq 0$.
        Remark that $x \notin \fv(u_i)$ for all $1 \leq i \leq n$,
        otherwise we would be in the previous case. 
        Then $t[x/u]= \mX_{\Del}[y_1/u_1]..[y_n/u_n][x/u] =_{\C}  
            \mX_{\Del}[x/u][y_1/u_1]..[y_n/u_n]
            = \mX_{\Del}\isubs{x}{u}[y_1/u_1\isubs{x}{u}]..[y_n/u_n\isubs{x}{u}]
            = \mX_{\Del}[y_1/u_1]..[y_n/u_n]\isubs{x}{u}$.

      \end{enumerate}
\end{itemize}
\end{proof}

\ignore{So here there is a  subtle problem since $t \Rewn{\m} t'$
  means $t = t'$ and not $t =_{\C} t'$ when the number of steps
  from
$t$ to $t'$ is $0$. I've just realised that I
  have the same problem in  my $\les$-paper. The solution is just
  to talk about the reflexive closure on equivalence classes. Do
  you agree? }
  
\ignore{Yes,  I agree. We rewrite modulo the equation in the
  transitive case so I think we should also do the same in the reflexive case (it's funny how it's always the outlying cases). Besides solving the problem, this seems the correct notion to me. I've added the definition in the syntax section but there is probably a nicer way to do it.}

\ignore{If there's space, explain $\m$...
 {system $\sm = \{ \us, \usm, \Gc\}$ modulo
$\Es$ can be used as a function on $\Es$-equivalence classes.}}

We now  remark that the system $\sm = \{ \us, \usm, \Gc\}$ modulo
$\Es$ can be used as a function on $\Es$-equivalence classes.

\begin{lemma}
\label{l:unique-es-nf}
The $\m$-normal forms of $\mterms$-metaterms exist and are unique modulo $\Es$. 
\end{lemma}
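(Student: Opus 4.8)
The plan is to establish the two halves of the statement by a standard convergence argument phrased on $\EqEs$-equivalence classes. Existence of $\m$-normal forms will follow from \emph{termination} of $\sm$ modulo $\Es$, and uniqueness modulo $\Es$ from \emph{local confluence} together with termination, via Newman's Lemma. Concretely, since $\Rew{\m}$ is by definition a binary relation on the set of $\EqEs$-classes ($t \Rew{\m} t'$ iff $t \EqEs s \Rew{\sm} s' \EqEs t'$), I would view $\Rew{\m}$ as an abstract rewriting system on classes: if this system is terminating and locally confluent, Newman's Lemma yields confluence, and a terminating confluent system has unique normal forms. Thus the theorem reduces to (i) termination of $\sm$ modulo $\Es$ and (ii) local confluence of $\Rew{\m}$.

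For (i), I would exhibit a measure $\mathcal{M}$ from $\mterms$-metaterms to a well-founded set that is invariant under $\EqEs$ and strictly decreases along every $\us$, $\Gc$ and $\usm$ step. Invariance under $\alpha$ is immediate; invariance under $\Com$ requires $\mathcal{M}$ to treat two independent substitutions $t[x/u][y/v]$ (with $y \notin \fv(u)$ and $x \notin \fv(v)$) symmetrically, so that the measure descends to $\EqEs$-classes. The rule $\Gc$ only erases, so it is unproblematic; the genuine difficulty --- and the main obstacle of the whole proof --- is that $\us$ and $\usm$ \emph{duplicate} the argument $u$: the step $C\cwc{x}_\phi[x/u] \Rew{\us} C\cwc{u}_\phi[x/u]$ produces a fresh copy of $u$ inside the body while keeping the substitution alive. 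A purely additive occurrence-counting measure fails here, because a copy of $u$ may contain free variables governed by an \emph{outer} substitution, so performing one inner step can increase the amount of work pending at an outer substitution. I would therefore use a measure in which the weight attached to a substitution dominates the total weight contributed by everything in its scope (for instance a multiplicative interpretation-style measure, or a recursive path ordering in which the substitution symbol is maximal and duplication is absorbed), so that resolving one occurrence strictly decreases the measure even after accounting for the duplicated copy. Establishing this decrease uniformly for $\us$ and $\usm$ --- noting that $x \notin \fv(u)$ by the variable convention, so neither rule can regenerate an occurrence of the variable it consumes --- is where the real care lies.

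For (ii), I would prove local confluence of $\Rew{\m}$ by a critical-pair analysis. Disjoint redexes commute immediately, so it suffices to inspect the finitely many genuine overlaps between $\us$, $\Gc$ and $\usm$ acting on the same substitution or the same occurrence, together with their interaction with the equation $\Com$ (and with $\alpha$). For example, two $\us$ steps resolving distinct occurrences of $x$ under the same $[x/u]$ join by performing the other step; $\Gc$ never overlaps $\us$ or $\usm$ on the same substitution, since $\Gc$ requires the substituted variable to be absent while the other rules require an occurrence; and the $\usm$ cases are handled like $\us$, using that after a $\usm$ step the metavariable is shielded by the newly created substitution and so the rule cannot immediately refire. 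The only subtlety is that some peaks close only \emph{modulo} $\Es$: aligning two redexes, or commuting a reduction past a $\Com$-swap of independent substitutions, may require an application of the equation, which is precisely why the closure is stated up to $\EqEs$. Having checked that every local peak closes to a common $\EqEs$-class, Newman's Lemma applied to the terminating system from (i) gives confluence on classes, and hence existence and uniqueness modulo $\Es$ of $\m$-normal forms.
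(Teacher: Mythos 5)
Your plan matches the paper's proof in its two main components: termination of $\sm$ modulo $\Es$ is established there by exactly the kind of measure you describe --- an $\Es$-invariant, multiplicative weight with $\size(t[x/u]) = \size(t) + \size(u) + \mul{x}{t}\cdot\size(u)$, so that the weight attached to a substitution absorbs the cost of duplication (\cref{l:sub-terminating-metaterms}, with invariance under $\Com$ proved in \cref{l:size-mul-invariant-up-to-Com}) --- and the paper then checks rule--rule critical pairs together with rule--equation interactions (\cref{l:lm-local-confluent-coherent}). The gap is in your final step. What a critical-pair check of this kind yields is (a) local confluence \emph{modulo} $\Es$: two reductions fired from the \emph{same} term join up to $\Es$; and (b) local \emph{coherence}: one reduction step and one $\Com$-step join up to $\Es$. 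That is weaker than what Newman's Lemma on the quotient needs, namely local confluence of the relation on $\EqEs$-classes, whose peaks consist of two reductions fired from two \emph{different} representatives $u \EqEs v$ separated by an arbitrary chain of $\Com$-steps. Reducing such a peak to (a) and (b) by pushing the reduction across the chain fails naively: a coherence diagram (b) closes only with \emph{multi-step} joins modulo $\Es$, so after one push you no longer have a single step to push further, and the iteration is not obviously terminating. Turning (a), (b) and termination into confluence modulo $\Es$ requires a well-founded induction along the termination order; that is precisely the Jouannaud--Kirchner theorem the paper cites (\cite{JK86}), which is a strengthening of Newman's Lemma, not an instance of it.

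The repair is standard and cheap: either invoke \cite{JK86} (termination $+$ local confluence modulo $+$ local coherence $\Rightarrow$ confluence modulo $\Es$), as the paper does, or prove the stronger, one-step form of coherence --- every $\sm$-step out of $u$ is matched by a \emph{single} $\sm$-step out of any $\Com$-variant of $u$, with $\EqEs$-equivalent results. The diagrams in the paper's appendix for $\us$/$\Com$, $\usm$/$\Com$ and $\Gc$/$\Com$ do in fact close one step to one step, so this stronger property is available here; with it the quotient relation is independent of the chosen representative, its local confluence reduces to your finitely many critical pairs, and your Newman argument then goes through literally. Everything else in your proposal, in particular the treatment of duplication in the termination measure, is exactly what the paper does.
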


\begin{proof}
  The reduction relation  $\Rew{\m}$ can be shown to be  terminating
  by associating to each $\mterms$-metaterm 
a measure 
which does not change by $\Es$ but strictly decreases by
$\Rew{\sm}$ (\cref{l:sub-terminating-metaterms} in ~\cref{app:confluence}).   Thus, $\m$-normal forms  of $\mterms$-metaterms exist. 
Moreover,   $\Rew{\m}$  is
locally confluent and locally coherent
(\cref{l:lm-local-confluent-coherent}
in~\cref{app:types}).  Therefore, by~\cite{JK86},
  $\Rew{\m}$  is  confluent  on $\mterms$-metaterms  and  hence $\m$-normal forms
  of $\mterms$-metaterms are unique
  modulo $\Es$-equivalence.
\end{proof}

Moreover, the following properties are  straightforward:

\begin{lemma}
\label{l:s-distributes} Let $u$ and $v$ be metaterms. Then, 
\begin{enumerate}
\item $\m(u\ v) = \m(u)\ \m(v)$.
\item $\m(\l x.u) = \l x.\m(u)$.
\end{enumerate}
\end{lemma}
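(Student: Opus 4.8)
The plan is to treat both statements uniformly. For a compound metaterm I will (i) reduce it to the compound of the $\m$-normal forms of its immediate subterms, (ii) check that this compound is itself a $\m$-normal form, and (iii) invoke uniqueness of normal forms modulo $\Es$ (\cref{l:unique-es-nf}) to identify it with the normal form of the whole term. Throughout, $\m(u)$ and $\m(v)$ denote the $\m$-normal forms of $u$ and $v$, which exist and are unique modulo $\Es$ by \cref{l:unique-es-nf}; the equalities in the statement are therefore to be read as equalities of $\Es$-equivalence classes. This is well posed because the operations $\l x. (\cdot)$ and $(\cdot)\ (\cdot)$ are compatible with $\EqEs$, so the right-hand sides do not depend on the chosen representatives.

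For step (i) I use that $\Rew{\m}$ is closed under all contexts. Hence from $u \Rewn{\m} \m(u)$ and $v \Rewn{\m} \m(v)$ I obtain $u\ v \Rewn{\m} \m(u)\ v \Rewn{\m} \m(u)\ \m(v)$, and similarly $\l x. u \Rewn{\m} \l x. \m(u)$.

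Step (ii) is the heart of the argument: I must show that $\m(u)\ \m(v)$ and $\l x. \m(u)$ admit no $\m$-redex. The decisive structural observation is that $\EqEs$, being generated by $\alpha$ and $\Com$ and closed under contexts, cannot change the outermost constructor of an application or of an abstraction: $\alpha$ preserves it, and $\Com$ rewrites only closures, so at the root of $u\ v$ or $\l x. u$ it can act only inside the immediate subterms. Consequently, if $\m(u)\ \m(v) \Rew{\m} w$, then by definition $\m(u)\ \m(v) \EqEs s \Rew{\sm} s'$ with $s = s_1\ s_2$, $\m(u) \EqEs s_1$ and $\m(v) \EqEs s_2$. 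None of the $\sm$-rules $\us, \Gc, \usm$ fires at the root of an application (each requires a closure there, and $\B \notin \sm$), so the $\sm$-step lies inside $s_1$ or $s_2$. But normality is an $\Es$-invariant, since if $s_1 \EqEs \m(u)$ and $s_1 \Rew{\m} t$ then $\m(u) \Rew{\m} t$, contradicting normality of $\m(u)$; hence $s_1$ and $s_2$ are normal, a contradiction. The same reasoning, noting that no $\sm$-rule fires at the root of an abstraction either, shows $\l x. \m(u)$ is normal.

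Finally, step (iii): since $u\ v \Rewn{\m} \m(u)\ \m(v)$ with a normal target, uniqueness (\cref{l:unique-es-nf}) yields $\m(u\ v) =_{\Es} \m(u)\ \m(v)$, and likewise $\m(\l x. u) =_{\Es} \l x. \m(u)$. I expect the main obstacle to be exactly the two invariance claims of step (ii): that $\EqEs$ preserves the head constructor of applications and abstractions, and that being a $\m$-normal form is preserved along $\EqEs$. Together these confine every $\m$-redex to an immediate subterm, which is what makes the normality check go through.
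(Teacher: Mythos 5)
Your proof is correct; note that the paper itself offers no proof of this lemma, dismissing it as ``straightforward,'' so your argument is filling a genuine gap rather than mirroring or diverging from an existing one. The route you take is exactly the natural formalization the authors presumably had in mind: push the reduction of the subterms through the surrounding context, show that the compound of normal forms is itself normal because every $\sm$-rule and the equation $\Com$ have a closure at the root (so neither a rewrite step nor an $\Es$-step can act at the head of an application or abstraction), and conclude by uniqueness of $\m$-normal forms modulo $\Es$ (\cref{l:unique-es-nf}); your explicit treatment of the two invariance claims---preservation of the head constructor under $\EqEs$ and $\Es$-invariance of normality---together with reading the stated equalities modulo $\Es$ and checking well-posedness of the right-hand sides, is precisely what makes the ``straightforward'' claim rigorous.
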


\begin{lemma}
\label{l:es-application-lambda-composition}
Let  $t=f(t_1,\ldots,t_n)$, where $f$  is a  $\l$-abstraction, an
application     or      a     substitution     operator.     Then
$\m(f(t_1,\ldots,t_n)[x/u]) = \m(f(t_1[x/u],\ldots,t_n[x/u]))$.
\end{lemma}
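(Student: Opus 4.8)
The plan is to rewrite both sides of the equation to one and the same metaterm and then appeal to uniqueness of $\m$-normal forms modulo $\Es$ (\cref{l:unique-es-nf}); recall that $\m(\cdot)$ is a well-defined operation on $\Es$-equivalence classes, so the equality in the statement is understood modulo $\Es$. Concretely, I would show that both $f(t_1,\ldots,t_n)[x/u]$ and $f(t_1[x/u],\ldots,t_n[x/u])$ reduce by $\Rewn{\m}$ to the common metaterm $f(t_1\isubs{x}{u},\ldots,t_n\isubs{x}{u})$. Since $A \Rewn{\m} C$ forces $\m(A) \EqEs \m(C)$ (the normal form of $C$ is a normal reduct of $A$, and the normal forms of $A$ are unique by \cref{l:unique-es-nf}), applying this observation to both sides and the common reduct yields the claim.

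For the left-hand side I would apply Full Composition for Metaterms to the outermost explicit substitution, giving $f(t_1,\ldots,t_n)[x/u] \Rewn{\m} f(t_1,\ldots,t_n)\isubs{x}{u}$, and then unfold the implicit substitution structurally. In each of the three cases for $f$ --- abstraction $\l y.(-)$ ($n=1$), application $(-)\,(-)$ ($n=2$), and closure $(-)[y/(-)]$ ($n=2$) --- the relevant defining clause of implicit substitution gives $f(t_1,\ldots,t_n)\isubs{x}{u} = f(t_1\isubs{x}{u},\ldots,t_n\isubs{x}{u})$. For the abstraction and closure cases this uses the standard $\alpha$-convention that the bound variable $y$ is chosen distinct from $x$ and not free in $u$, so that the clauses $(\l y.s)\isubs{x}{u} = \l y.(s\isubs{x}{u})$ and $s[y/r]\isubs{x}{u} = (s\isubs{x}{u})[y/(r\isubs{x}{u})]$ apply without capture.

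For the right-hand side I would instead apply Full Composition for Metaterms to each argument separately, $t_i[x/u] \Rewn{\m} t_i\isubs{x}{u}$, and then close these reductions under the surrounding constructor (recall that $\Rew{\m}$ is closed under all contexts). This gives $f(t_1[x/u],\ldots,t_n[x/u]) \Rewn{\m} f(t_1\isubs{x}{u},\ldots,t_n\isubs{x}{u})$. Both sides have therefore been driven to the common reduct $f(t_1\isubs{x}{u},\ldots,t_n\isubs{x}{u})$, so \cref{l:unique-es-nf} concludes the proof.

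There is no real conceptual difficulty here; the work is essentially bookkeeping, and the only points deserving care are (i) the $\alpha$-conventions that let the implicit-substitution clauses fire without variable capture, and (ii) the step $A \Rewn{\m} C \Rightarrow \m(A) \EqEs \m(C)$, which relies on confluence together with termination of $\Rew{\m}$ --- precisely the content already packaged in \cref{l:unique-es-nf}. One should also keep in mind that we are using Full Composition in its metaterm form (allowing arguments that are themselves closures or annotated metavariables), which is exactly the statement of the Full Composition for Metaterms lemma proved above.
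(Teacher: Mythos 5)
Your proof is correct, but it takes a genuinely different route from the paper's. The paper proves this lemma by a direct induction on the number $n_{x,t}$ of free occurrences of $x$ in $t$: the base case $n_{x,t}=0$ is closed by $\Gc$ on both sides, and for $n_{x,t}>0$ either some $t_i$ contains a $\us$/$\usm$-redex for $[x/u]$, in which case firing it inside and outside the constructor $f$ preserves $\m$-normal forms and the induction hypothesis applies, or no such redex exists and $t$ must be of the form $t_1[y/t_2]$ with $x \notin \fv(t_2)$, in which case the equation $\C$ and a $\Gc$-step close the diagram directly. You instead avoid any fresh induction by driving both $f(t_1,\ldots,t_n)[x/u]$ and $f(t_1[x/u],\ldots,t_n[x/u])$ to the common reduct $f(t_1\isubs{x}{u},\ldots,t_n\isubs{x}{u})$ --- the left side via Full Composition for Metaterms applied to the outer closure followed by the compositionality of implicit substitution over each constructor, the right side via Full Composition applied argument-wise and closure of $\Rew{\m}$ under contexts --- and then conclude $\m$-normal forms coincide modulo $\Es$ by \cref{l:unique-es-nf}. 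Both arguments ultimately rest on the same infrastructure (termination and confluence of $\Rew{\m}$, which is what makes $\m(\cdot)$ well defined, plus the conventions avoiding capture), and there is no circularity in your use of Full Composition, since that lemma is proved before and independently of this one. What your approach buys is modularity and brevity: all the occurrence-counting work is delegated to the already-proved Full Composition lemma rather than replayed. What the paper's direct induction buys is slightly finer information --- it exhibits the equality as a chain of individual normal-form-preserving steps ($\Gc$, $\C$, one redex at a time) without detouring through the implicit substitution $\isubs{x}{u}$, which keeps the argument entirely inside the explicit-substitution world. Your observation that the stated equality must be read modulo $\Es$ is also the correct reading, since normal forms are unique only up to $\Es$.
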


\ignore{
  \begin{proof} First of all remark that $t=f(t_1,\ldots,t_n)$
means either $t=\l x. t_1$ or $t = t_1\ t_2$ or $t= t_1[y/t_2]$.
We reason by induction on the number  $n_{x,t}$ of free occurrences of 
$x$ in $t$.
\begin{itemize}
\item If $n=0$, then $\m(t[x/u]) =_{\Gc} \m(t) =_{\Gc}
  \m(f(t_1[x/u],\ldots,t_n[x/u]))$. 
\item If $n_{x,t}>0$, then we have two cases.
      \begin{itemize}
      \item There is a reduction 
            $t[x/u] \Rew{\lm} t'[x/u]$ such that 
            $n_{x,t'} < n_{x,t}$ so that 
           $t_i = C\cwc{x}$ or $t_i = C\cwc{\mX_{\Del}}$ with $x
           \in \Del$ and $C \neq \Box[y_1/v_1]\ldots[y_n/v_n], n
           \geq 0 $
           for some $i \in [1..n]$. As a consequence
           $t' = f(t_1,\ldots,t'_i,\ldots,t_n)$ with 
           $t'_i = C\cwc{u}$ or $t'_i = C\cwc{\mX_{\Del}[x/u]}$. Thus
           \[ \begin{array}{ll}
              \m(f(t_1,\ldots,t_i,\ldots,t_n)[x/u]) & = \\
              \m(f(t_1,\ldots,t'_i,\ldots,t_n)[x/u]) & =_{\ih} \\
              \m(f(t_1[x/u],\ldots,t'_i[x/u],\ldots,t_n[x/u])) & =_{\ih} \\
              \m(f(t_1[x/u],\ldots,t_i[x/u],\ldots,t_n[x/u])) & =_{\ih} \\
              \end{array}\]
      \item There is no such reduction so that is of the form
            $t = t_1[y/t_2]$ with $x \in
            \fv(t_1)$ and $x \notin
            \fv(t_2)$. We thus have
\[ \begin{array}{ll}
  \m(f(t_1,t_2)[x/u]) & = \\
  \m( t_1[y/t_2][x/u]) &   =_{\C} \\
  \m(t_1[x/u][y/t_2]) & =_{\Gc} \\
  \m(t_1[x/u][y/t_2[x/u]]) & = \\
  \m(f(t_1[x/u], t_2[x/u]))\\
  \end{array} \]
\end{itemize}
\end{itemize}
\end{proof}
}

\begin{lemma}
\label{l:forma-snf}
Let $t$ be a $\mterms$-metaterm in  $\m$-normal form.
Then it has one of the following forms:
\begin{itemize}
\item $t =x$, $t= t_1\ t_2$, or $t = \l y. t_1$ where $t_1$ and $t_2$ are in $\m$-normal form.
\item $t = \mX_\Del [x_1/u_1]\ldots[x_n/u_n]$, where $n
  \geq 0$ and every $u_i$ is in $\m$-normal form and $x_i \in
  \Del$ and $x_i \notin \fv(u_j)$ for all $i, j \in [1,n]$ s.t. $i < j$.
\end{itemize}
\end{lemma}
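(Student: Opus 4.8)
The plan is to argue by structural induction on $t$, using that $\Rew{\sm}$ is closed under all $\mterms$-contexts and that $\EqEs$ is a congruence. Together these give that every subterm of a $\m$-normal form is again a $\m$-normal form, so the induction hypothesis applies to all immediate subterms of $t$. The base cases are immediate: a variable $x$, and a bare metavariable $\mX_\Del$ (the case $n=0$ of the second item), admit no redex and already have the stated shape. If $t = t_1\ t_2$ or $t = \l y. t_1$, then there is no redex at the root, since the left-hand sides of $\us$, $\Gc$ and $\usm$ in \cref{f:lm-rules} and \cref{f:extra-lm-rules} are all closures; hence any redex would lie in an immediate subterm, and as those are $\m$-normal the \ih\ puts them in one of the two forms, so $t$ falls under the first item.

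The real work is the closure case $t = s[x/u]$, where $s$ and $u$ are $\m$-normal by the remark above. First I would rule out that $s$ has the first shape. If $x \notin \fv(s)$ then $\Gc$ fires, contradicting normality, so $x \in \fv(s)$. A free occurrence of $x$ in $s$ is either a genuine variable leaf, whence $s = C\cwc{x}_\phi$ and $\us$ applies (its side condition $\set{x}\cup\fv(u) \subseteq \phi$ is met after $\alpha$-renaming, and $\us$ places no restriction on $C$), or it is hidden in an annotation $\mX_{\Del'}$ with $x \in \Del'$, whence $s = C\cwc{\mX_{\Del'}}_\phi$ and $\usm$ applies provided $C \neq \Box[y_1/v_1]\ldots[y_m/v_m]$. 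The point is that when $s$ is a variable, an application, or an abstraction, the context reaching any such occurrence carries an application or $\l$ node above the hole, so it is never of the excluded form; hence one of $\us$, $\usm$ always fires and $s$ cannot have the first shape inside a normal form.

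Therefore $s$ must have the second shape, $s = \mX_\Del[x_1/u_1]\ldots[x_n/u_n]$, and $t = \mX_\Del[x_1/u_1]\ldots[x_n/u_n][x/u]$ is of the same shape with one further substitution; it remains to check the side conditions for the outermost substitution $[x/u]$. If $x \in \fv(u_i)$ for some $i$, then $x$ occurs free inside $u_i$ (as a leaf, or inside an annotation), and the context reaching that occurrence descends into a substitution argument, so it is not of the form $\Box[\ldots]$; consequently $\us$ or $\usm$ fires, against normality. Hence $x \notin \fv(u_i)$ for every inner $u_i$, which is exactly the independence condition tying the new substitution to the ones inside it. Since $\Gc$ does not fire we also have $x \in \fv(s)$, and as $x \notin \fv(u_i)$ for all $i$ and $x \neq x_i$ by the $\alpha$-convention, this forces $x \in \Del$. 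Note that the head $\mX_\Del$ is reached from the body of $[x/u]$ through the context $\Box[x_1/u_1]\ldots[x_n/u_n]$, precisely the shape excluded by $\usm$; this is exactly why such metavariable-headed closures can be normal, and why the restriction $C \neq \Box[\ldots]$ in \cref{f:extra-lm-rules} is essential.

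I expect the closure case to be the main obstacle, and within it the careful bookkeeping of contexts: one must separate genuine free occurrences of $x$ (handled by $\us$) from occurrences buried in metavariable annotations (handled by $\usm$), and check in each sub-case whether the ambient context is of the excluded shape $\Box[y_1/v_1]\ldots[y_m/v_m]$. A secondary point needing care is that normality is taken modulo $\EqEs$: since this congruence only reorders independent substitutions (via $\Com$) and renames bound variables (via $\alpha$), it preserves the metavariable-headed list shape, so reading off the side conditions from the chosen representative is legitimate.
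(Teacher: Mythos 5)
Your proof is correct in overall structure and is the natural argument; the paper in fact states \cref{l:forma-snf} without any proof, treating it as a routine consequence of the definitions, so the only question is whether your elaboration holds up. The induction, the observation that subterms of $\m$-normal forms are $\m$-normal (closure of $\Rew{\sm}$ under contexts plus congruence of $\EqEs$), and the closure-case analysis --- ruling out a first-shape body via $\Gc$, $\us$, $\usm$, and observing that only occurrences at the head metavariable escape $\usm$ because their context $\Box[x_1/u_1]\ldots[x_n/u_n]$ has exactly the excluded shape --- are all sound, including your handling of the modulo-$\Es$ issue.

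There is, however, one slip to repair: in the closure case you derive $x \notin \fv(u_i)$ and declare it to be ``exactly the independence condition'', but the condition the lemma asserts for the extended list $\mX_\Del[x_1/u_1]\ldots[x_n/u_n][x/u]$ is the transposed one, namely $x_i \notin \fv(u)$ for all $i \leq n$ (inner binders not free in the outermost body). These are different facts. The one you proved, $x \notin \fv(u_i)$, is genuinely a consequence of normality and is precisely what you need to force $x \in \Del$ --- so keep it --- but it is not the stated side condition, which your proof as written never establishes. Fortunately the stated condition needs no redex argument at all: each $x_i$ is a bound variable of $t$, so if $x_i$ occurred free in $u$ it would be both free and bound in $t$, contradicting the paper's convention that no free and bound variable of a term share a name; hence $x_i \notin \fv(u)$ holds automatically, exactly as you invoke the same convention to get $x \neq x_i$. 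A second, cosmetic point: when $s$ is the variable $x$ itself the context is $\Box$, which \emph{is} of the excluded shape, so your blanket claim that the context ``carries an application or $\l$ node above the hole'' fails in that subcase; it is harmless only because that occurrence is a variable leaf, and $\us$, unlike $\usm$, imposes no restriction on the context --- a distinction you state correctly earlier in the same paragraph.
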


Remark that metaterms   in  $\m$-normal  form   have  all  explicit
  substitutions  directly  above  metavariables. Thus in particular   terms
  without metavariables in $\m$-normal form have no explicit substitutions at all.
From now on, we write $\m(t)$ to denote the (unique)
$\m$-normal form of the $\mterms$-metaterm $t$.

\subsection{Confluence}
\label{s:confluence}

While confluence on terms    always holds for calculi  with   explicit
substitutions, confluence on metaterms is often based 
on 
some possible form of  interaction between   substitutions,   such  as  in
$\lambda\sigma$~\cite{ACCL91}   or  $\lambda_{ws}$~\cite{guillaume01}. 
To illustrate this requirement, let  us consider the typical diverging
example adapted to $\lm$-reduction:

\[  t\isubs{y}{v}[x/u\isubs{y}{v}]  \LRewn{\lm} ((\l x. t)\ u)[y/v] \Rew{\B} t[x/u][y/v] \] 

This diagram can be closed using  full composition with the sequence
$t[x/u][y/v] \linebreak \Rewplus{\lm} t[x/u]\isubs{y}{v} =
t\isubs{y}{v}[x/u\isubs{y}{v}]$. 

However,  while  de  Bruijn   notation  for  $\l$-terms  allows  a
canonical  representation  of bound  variables   given by  a
certain  order  on  their  natural numbers,  calculi  with  named
variables  suffer  from the  following  (also typical)  diverging
example:

\[  \mX_{x,y}[y/v][x/z]  \LRewn{\lm} ((\l x. \mX_{x,y})\ z)[y/v] \Rew{\B} \mX_{x,y}[x/z][y/v] \] 
 
The $\mterms$-metaterms $\mX_{x,y}[y/v][x/z]$ and $\mX_{x,y}[x/z][y/v]$ are
equal modulo permutation of \emph{independent substitutions},
thus justifying the introduction of the equation $\C$ in the definition
of the calculus for metaterms. 

One possible technical tool to show confluence for $\mterms$-metaterms
is the use of another confluent calculus well-related to the
$\lm$-calculus. We prefer to give a self-contained argument, and so
adapt a proof based on Tait and Martin-L\"of's technique: define a
simultaneous reduction relation denoted $\paralp{\lm}$; prove that $\lm$
can be projected to $\paralp{\lm}$ on $\m$-normal forms; show that
$\paralnp{\lm}$ has the diamond property; and finally conclude.

\begin{definition}
\label{d:paralp}
The relation $\paralp{}$ on $\mterms$-metaterms in $\m$-normal form is
given by:
\begin{itemize}
\item $x \paralp{} x$
\item If $t  \paralp{} t'$, then $\l x. t  \paralp{} \l x. t'$
\item If $t  \paralp{} t'$ and 
         $u \paralp{} u'$, then $t\ u  \paralp{} t'\ u'$
\item If $t  \paralp{} t'$ and 
         $u \paralp{} u'$, then $(\l x. t)\ u  \paralp{} \m(t'[x/u'])$
\item If $u_i \paralp{} u'_i$ and $x_j \notin \fv(u_i)$ for all $i, j \in [1,n]$, then $\mX_\Del [x_1/u_1]\ldots[x_n/u_n] \paralp{} \mX_\Del [x_1/u'_1]\ldots[x_n/u'_n]$
\end{itemize}
\end{definition}

The relation $\paralp{\lm}$ is defined by $t  \paralp{\lm} t' \mbox{ iff } \exists s,s' \mbox{ s.t. }
t=_{\Es} s \paralp{} s' =_{\Es} t'$.
We use  $\paralnp{\lm}$
to 
  denote the reflexive closure of $\paralp{\lm}$ and thus
  $t \paralnp{\lm} t'$  in $0$ steps means  $t=_{\Es} t'$.
The following properties are straightforward.

\begin{remark}
  \label{l:simple-properties-es} The reduction relation $ \paralp{\lm}$
  enjoys the following properties: 
\begin{description}
\item[(Reflexivity)] $t \paralp{\lm} t$ for every $\mterms$-metaterm $t$ in $\m$-normal form. 
\ShaneIgnore{Induct on  the structure of $t$, using the second two cases in the definition of $\paralp{}$ for the induction cases.}
\item[(Closure by contexts)] if
$t \paralp{\lm} t'$, then
$u=C\cwc{t} \paralp{\lm} C\cwc{t'}=u'$ 
whenever $u$ and $u'$ are $\m$-normal forms. 
\ShaneIgnore{Use your proof for $\les$ replacing $\es$ with $\s$.}
\ShaneIgnore{I induct on  $n$, proving the base case $n=1$. The induction case is easier then since we are guaranteed to have an application or substitution where some $t_i$s are in one branch and the rest are in the other. We apply the induction hypothesis on both sides and then the third or fifth rule defining $\paralp{}$ to join the application/substitution. We then ``build'' the rest of the term $u'$ using the other cases in the definition of $\paralp{}$.}
\end{description}
\end{remark}

\begin{lemma}
\label{l:standard} \mbox{}
The reflexive and transitive closures
of $\paralp{\lm}$ and $\Rew{\lm}$ are the same relation.
\end{lemma}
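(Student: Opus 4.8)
The plan is to prove the equality of the two reflexive--transitive closures by the two inclusions $(\paralp{\lm})^{*} \subseteq (\Rew{\lm})^{*}$ and $(\Rew{\lm})^{*} \subseteq (\paralp{\lm})^{*}$ separately. Since $\paralp{}$, and hence $\paralp{\lm}$, is only defined on $\m$-normal forms, I read both closures as relations between $\m$-normal forms, and I use freely that every metaterm $r$ satisfies $r\,(\Rew{\m})^{*}\,\m(r)$ with $\Rew{\m}\,\subseteq \Rew{\lm}$ (normal forms exist by \cref{l:unique-es-nf}), so that an $\Rew{\lm}$-sequence running between two normal forms can be projected onto the normal forms of its intermediate terms. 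The first inclusion is the routine one; the second is where the work lies.

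For $(\paralp{\lm})^{*} \subseteq (\Rew{\lm})^{*}$ it suffices to show $\paralp{\lm} \subseteq (\Rew{\lm})^{*}$. As $s \EqEs s'$ already implies $s\,(\Rew{\lm})^{*}\,s'$ (a zero-length reduction modulo $\Es$), I may drop the $\Es$-conjugation in the definition of $\paralp{\lm}$ and argue by induction on the derivation of $s \paralp{} s'$. The reflexivity, abstraction, application and metavariable clauses follow at once from the induction hypotheses together with closure of $\Rew{\lm}$ under contexts. The only interesting clause is $(\l x. t)\,u \paralp{} \m(t'[x/u'])$: by induction $t\,(\Rew{\lm})^{*}\,t'$ and $u\,(\Rew{\lm})^{*}\,u'$, whence $(\l x. t)\,u\,(\Rew{\lm})^{*}\,(\l x. t')\,u'$; one $\B$-step gives $(\l x. t')\,u' \Rew{\B} t'[x/u']$, and $t'[x/u']\,(\Rew{\m})^{*}\,\m(t'[x/u'])$. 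Since $\Rew{\B},\Rew{\m}\,\subseteq \Rew{\lm}$, concatenation yields $(\l x. t)\,u\,(\Rew{\lm})^{*}\,\m(t'[x/u'])$. Taking reflexive--transitive closures gives the inclusion.

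For $(\Rew{\lm})^{*} \subseteq (\paralp{\lm})^{*}$ I project a single step $t \Rew{\lm} t'$ to a step $\m(t) \paralnp{\lm} \m(t')$ between normal forms, and then compose along a reduction sequence $s\,(\Rew{\lm})^{*}\,s'$ between normal forms $s,s'$, using $\m(s) \EqEs s$ and $\m(s') \EqEs s'$ to conclude $s\,(\paralp{\lm})^{*}\,s'$. The projection itself splits on the rule fired at the redex. If the step is an $\m$-step ($\us$, $\usm$ or $\Gc$) then $t \Rew{\m} t'$ forces $\m(t) \EqEs \m(t')$ by uniqueness of normal forms (\cref{l:unique-es-nf}), so $\m(t) \paralnp{\lm} \m(t')$ by reflexivity of $\paralp{\lm}$ (\cref{l:simple-properties-es}). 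If the step is a $\B$-step, say $t = C\cwc{(\l x. a)\,b}$ and $t' = C\cwc{a[x/b]}$, I compute the normal forms through the surrounding context with \cref{l:s-distributes,l:es-application-lambda-composition}, exhibiting in $\m(t)$ a $\beta$-redex whose contraction by the fourth clause of \cref{d:paralp}, lifted through $C$ by closure under contexts (\cref{l:simple-properties-es}), produces exactly $\m(t')$.

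I expect the $\B$-case of the projection to be the main obstacle. It amounts to showing that $\m$-normalisation commutes with the creation of the substitution $[x/b]$ and with the enclosing context, i.e. that $\m(C\cwc{a[x/b]})$ is reached from $\m(C\cwc{(\l x. a)\,b})$ by a single parallel step rather than by an uncontrolled $\Rew{\lm}$-sequence; this is precisely what \cref{l:s-distributes,l:es-application-lambda-composition} are designed to support, and it will likely require an auxiliary induction on $C$ together with the shape analysis of $\m$-normal forms from \cref{l:forma-snf}. The remaining bookkeeping --- tracking $\Es$-equivalence classes and keeping the reflexive closure $\paralnp{\lm}$ distinct from the full reflexive--transitive closure $(\paralp{\lm})^{*}$ --- is routine.
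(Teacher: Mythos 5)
Your first inclusion, $(\paralp{\lm})^{*} \subseteq (\Rew{\lm})^{*}$, is exactly the paper's argument: induction on the derivation of $s \paralp{} s'$, with the only interesting clause closed by $(\l x. t)\,u \Rewn{\lm} (\l x. t')\,u' \Rew{\B} t'[x/u'] \Rewn{\m} \m(t'[x/u'])$, and the $\Es$-conjugation absorbed into zero-step reduction. No issues there.

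The converse inclusion is where the gap is. Your outline --- project each step $t \Rew{\lm} t'$ to $\m(t) \paralnp{\lm} \m(t')$ and compose --- is the right strategy (it is what the paper proves later as \cref{l:rew-in-paralp}), but the mechanism you give for the $\B$-case does not work as described. You propose to exhibit the $\beta$-redex in $\m(t)$ and lift its contraction ``through $C$ by closure under contexts'' (\cref{l:simple-properties-es}). That remark only relates $C\cwc{r}$ and $C\cwc{r'}$ when \emph{both pluggings are already $\m$-normal forms}; it says nothing about $\m(C\cwc{r})$ versus $\m(C\cwc{r'})$ when plugging creates $\m$-redexes, which is exactly what happens whenever the $\B$-step sits under an explicit substitution, e.g.\ $t_1[y/u] \Rew{\B} t_1'[y/u]$ with $t_1 \Rew{\B} t_1'$ and $y \in \fv(t_1)$. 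There, normalisation pushes $[y/u]$ \emph{into} $t_1$ and $t_1'$, so the redex occurring in $\m(t)$ is a substituted copy of the original one, and the induction hypothesis of your ``auxiliary induction on $C$'' (relating $\m(D\cwc{r})$ to $\m(D\cwc{r'})$) is not what the substitution case needs: there you must relate $\m(\m(D\cwc{r})[y/\m(u)])$ to $\m(\m(D\cwc{r'})[y/\m(u)])$. Closing this is precisely the content of the paper's \cref{l:stable-paralp} (if $t \paralp{\lm} t'$ and $u \paralp{\lm} u'$ then $\m(t[x/u]) \paralp{\lm} \m(t'[x/u'])$), a separate and fairly long induction; alternatively one can strengthen the induction on $C$ to carry the pending substitutions along, using $\m(t[x/u]) = \m(t)\isubs{x}{\m(u)}$ (\cref{l:snf-explicit-substitution}). \cref{l:s-distributes,l:es-application-lambda-composition,l:forma-snf} cover only the application, abstraction and metavariable contexts; the substitution context is the hard case, and it is not routine bookkeeping.

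A mitigating remark for comparison: the paper's own in-lemma proof of this direction is even thinner --- it asserts that a single $\Rew{\lm}$-step is already a $\paralp{\lm}$-step, which cannot be literally true, since e.g.\ $(\l x. x)\,y \Rew{\B} x[x/y]$ takes an $\m$-normal form to a non-normal form while $\paralp{\lm}$ relates only normal forms; the real content is deferred to \cref{l:rew-in-paralp} plus \cref{l:stable-paralp}. So your decision to project through $\m$ is the correct repair of the paper's sketch; what is missing from your proposal is the stability lemma that makes the projection of $\B$-steps under substitutions go through.
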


\begin{proof}
To show $\paralnp{\lm} \subseteq\  \Rewn{\lm}$
we first show that $t \paralp{} t'$ implies $t \Rewn{\lm}
  t'$ by induction on the definition of $\paralp{}$. 
  
\begin{itemize}
\item $t = x \paralp{} x = t'$. Then $t \Rewn{\lm} t'$.
\item $t = \lambda x. u \paralp{} \l x. u' = t'$ where $u
  \paralp{} u'$. By the \ih\  $u \Rewn{\lm} u'$. Therefore, $\l x. u \Rewn{\lm} \l x. u'$.
\item $t = u\ v \paralp{} u'\ v' = t'$ where $u \paralp{} u'$ and
  $v \paralp{} v'$. By the \ih\  $u \Rewn{\lm} u'$ and $v \Rewn{\lm} v'$. Therefore, $u\ v \Rewn{\lm} u'\ v'$.
\item $t  = (\l  x. u)  v \paralp{} \m(u'[x/v'])  = t'$  where $u
  \paralp{} u'$ and $v \paralp{}  v'$. By \ih\, $u \Rewn{\lm} u'$
  and $v \Rewn{\lm} v'$. Therefore,  $$(\l x. u) v \Rewn{\lm} (\l
  x.  u')   v'  \Rew{\lm}  u'[x/v']   \Rewn{\lm}  \m(u'[x/v']).$$
  \item  $t = \mX_\Del  [x_1/u_1]\ldots[x_n/u_n] \paralp{}
    \mX_\Del   [x_1/u'_1]\ldots[x_n/u'_n]   =   t'$  where   $u_i
    \paralp{} u'_i$ and  $x_i \notin \fv(u_j)$ for all  $i, j \in
    [1,n]$.  By the  \ih\  $u_i  \Rewn{\lm} u'_i$  for all  $i \in
    [1,n]$ therefore $t \Rewn{\lm} t'$.

\end{itemize}

Now, suppose $t \paralp{\lm}^n t'$. We reason by
induction on  the number of
steps $n$. 
For $n=0$, $t =_{\Es} t'$ so that $t \Rewn{\lm} t'$ and we are done. 
Assume $n = k+1$ and $t \paralp{\lm} s \paralp{}^k t'$.
Then $t =_{\Es} t_1 \paralp{} s_1 =_{\Es} s \paralp{}^k t'$. 
We have $t =_{\Es} t_1 \Rewn{\lm} s_1 =_{\Es} s  \Rewn{\lm} t'$
by the previous point and the \ih\ 
so we conclude $t \Rewn{\lm}
  t'$.

\medskip

To show $\Rewn{\lm} \subseteq\ \paralnp{\lm}$
we first show that $t \Rew{\lm} t'$ implies $t \paralp{\lm} t'$ by
induction on $\Rew{\lm}$ using~\cref{l:simple-properties-es}. 
Now, suppose $t \Rewn{\lm} t'$ in $n$ steps. We conclude by
a simple  induction on  $n$.

\end{proof}

A consequence of~\cref{l:preservation-variables} and the previous  lemma is that
$t \paralp{\lm} t'$ implies $\fv(t') \subseteq \fv(t)$.
  \begin{lemma}
\label{l:stable-paralp}
Let $t,t',u,u'$ be $\mterms$-metaterms.
If $t \paralp{\lm} t'$ and $u \paralp{\lm} u'$,
then $\m(t[x/u]) \paralp{\lm} \m(t'[x/u'])$.
\end{lemma}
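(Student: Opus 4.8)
The plan is to prove this substitution lemma for simultaneous reduction by induction on the derivation of the parallel step, after first trivialising the role of $\Es$. Since $\m(\cdot)$ is well-defined on $\Es$-equivalence classes (\cref{l:unique-es-nf}) and $\EqEs$ is a congruence, I first observe that it suffices to treat the case where $t \paralp{} t'$ and $u \paralp{} u'$ hold \emph{directly}, not merely modulo $\Es$: choosing $\Es$-representatives $s \paralp{} s'$ and $r \paralp{} r'$ of the two given steps, one has $t[x/u] \EqEs s[x/r]$ and $t'[x/u'] \EqEs s'[x/r']$, hence $\m(t[x/u]) = \m(s[x/r])$ and likewise on the primed side. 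So I may assume $t \paralp{} t'$ and $u \paralp{} u'$ and then induct on the structure of $t \paralp{} t'$ (\cref{d:paralp}). I would also record, as stated right after \cref{l:standard}, that $\paralp{\lm}$ never creates free variables, which is what makes the freshness side conditions below go through.

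Before the induction I isolate the one genuinely non-trivial ingredient, a \emph{substitution lemma for $\m$}: whenever $y \neq x$ and $y \notin \fv(u)$,
\[ \m(\m(v[x/u])[y/\m(w[x/u])]) \EqEs \m(\m(v[y/w])[x/u]). \]
This I prove by collapsing every explicit substitution to an implicit one. Using full composition for metaterms together with uniqueness of $\m$-normal forms (\cref{l:unique-es-nf}), the left-hand side equals $\m(v\isubs{x}{u}\isubs{y}{w\isubs{x}{u}})$ and the right-hand side equals $\m(v\isubs{y}{w}\isubs{x}{u})$; the two implicit substitutions agree modulo $\Es$ by the classical substitution lemma, whose metaterm version holds modulo $\Com$ precisely because the metavariable clause of implicit substitution turns a substitution into a closure and $\Com$ permutes the resulting independent closures.

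The induction then proceeds by cases on the last clause of \cref{d:paralp}. For a variable $t = z$ I split on whether $x = z$: if so, $z[z/u] \Rew{\us} u[z/u] \Rew{\Gc} u$ gives $\m(z[z/u]) = u$ and symmetrically $u'$, so the claim is the hypothesis $u \paralp{\lm} u'$; otherwise $z[x/u] \Rew{\Gc} z$ and I use reflexivity. For the abstraction and application clauses I push the substitution inwards with \cref{l:es-application-lambda-composition} and distribute $\m$ through the constructor with \cref{l:s-distributes}, apply the \ih\ to the immediate subterms, and recombine using closure under contexts (\cref{l:simple-properties-es}). The beta clause $t = (\l y. v)\ w \paralp{} \m(v'[y/w'])$ is where the work concentrates: the same pushing-in gives $\m(t[x/u]) = (\l y.\m(v[x/u]))\,\m(w[x/u])$, a redex to which the beta clause of $\paralp{}$ applies after lifting the two \ih\ instances $\m(v[x/u]) \paralp{\lm} \m(v'[x/u'])$ and $\m(w[x/u]) \paralp{\lm} \m(w'[x/u'])$ to direct $\paralp{}$-steps via $\Es$-representatives; this produces $\m(\m(v'[x/u'])[y/\m(w'[x/u'])])$, which the substitution lemma for $\m$ rewrites to $\m(\m(v'[y/w'])[x/u']) = \m(t'[x/u'])$.

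The remaining clause, $t = \mX_\Del[x_1/u_1]\ldots[x_n/u_n]$, is the most bookkeeping-heavy. Here I compute $\m(t[x/u])$ by full composition: $t[x/u] \Rewn{\m} t\isubs{x}{u}$, and the metavariable clause of implicit substitution inserts a fresh closure $[x/u]$ directly on $\mX_\Del$ exactly when $x \in \Del$, while the existing closures become $[x_i/u_i\isubs{x}{u}]$; normalising and reordering by $\Com$ yields a metavariable normal form in the shape prescribed by \cref{l:forma-snf}. Its component bodies are $\m(u_i[x/u])$ and, when $x \in \Del$, $u$, which are related by $\paralp{\lm}$ to $\m(u'_i[x/u'])$ and $u'$ through the \ih\ and the hypothesis $u \paralp{} u'$; the conditions $x_j \notin \fv(\m(u_i[x/u]))$ needed for the metavariable clause follow since parallel reduction does not create free variables. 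Passing once more to $\Es$-representatives lets me apply the metavariable clause of \cref{d:paralp} and conclude. I expect the beta clause --- more precisely, the substitution lemma for $\m$ and its reliance on the $\Com$-permutation of independent closures --- to be the main obstacle, with the metavariable clause a close second on account of the reordering and freshness bookkeeping.
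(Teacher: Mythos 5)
Your proof is correct, and its inductive skeleton is exactly the paper's: pass to $\Es$-representatives so that the step on $t$ becomes a direct $\paralp{}$-step (the paper keeps $u \paralp{\lm} u'$ as is; reducing it too, as you do, is harmless), then induct on the definition of $\paralp{}$, with the variable, abstraction, application, beta and metavariable cases organised as you describe, using \cref{l:s-distributes}, \cref{l:es-application-lambda-composition} and closure under contexts for the easy cases. The divergence is in how the crucial commutation identity is obtained. The paper never collapses explicit substitutions into implicit ones: in the beta case it applies \cref{l:es-application-lambda-composition} with $f$ taken to be the substitution operator itself, which yields $\m(t'_1[x/u'][y/v'[x/u']]) = \m(t'_1[y/v'][x/u'])$ in a single step, and in the metavariable case it runs an inner induction on the number of closures, invoking \cref{l:es-application-lambda-composition} once per closure. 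You instead isolate a reusable ``substitution lemma for $\m$'' and prove it by translating both sides into implicit substitutions via full composition and uniqueness of $\m$-normal forms, then invoking the classical substitution lemma for implicit substitution on metaterms, which, as you rightly stress, holds only modulo $\Com$ because the metavariable clause turns substitutions into closures. Your route tacitly uses two facts: that $\m(a[y/b])$ and $\m(\m(a)[y/\m(b)])$ coincide (this follows from closure of $\Rew{\m}$ under contexts plus confluence, \cref{l:unique-es-nf}), and the metaterm substitution lemma itself; both are true and provable with the paper's toolkit --- indeed the appendix establishes the closely related identity $\m(t[x/u]) = \m(t)\isubs{x}{\m(u)}$ (\cref{l:snf-explicit-substitution}). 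What your approach buys is a single named identity and a cleaner metavariable case (direct computation of the normal form via $t\isubs{x}{u}$ rather than an induction on $n$ with case analysis on $x \in \fv(\cdot)$); what the paper's buys is staying entirely within explicit substitutions, so it never needs to prove the implicit substitution lemma modulo $\Com$ for metaterms as a separate result.
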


\begin{proof} We have $t=_{\Es} s_1 \paralp{} s_2 =_{\Es}
    t'$ so that it is sufficient to show that $s_1 \paralp{} s_2$
and $u \paralp{\lm} u'$
imply $\m(s_1[x/u]) \paralp{\lm} \m(s_2[x/u'])$ since then
$\m(t[x/u]) =_{\Es} \m(s_1[x/u])
\paralp{\lm} \m(s_2[x/u']) =_{\Es} \m(t'[x/u'])$.
We reason by induction on $s_1 \paralp{} s_2$. 
\begin{itemize}
  \item If $x \paralp{} x$, then 
           $\m(x[x/u]) = \m(u) = u \paralp{\lm} u' = 
                          \m(u') = \m(x[x/u'])$. 
  \item If $y \paralp{} y$, then 
        $\m(y[x/u]) = y  \paralp{\lm}
                       y = \m(y[x/u'])$ holds by definition. 
    
  \item If $t_1\ t_2 \paralp{} t'_1\ t'_2$, where
          $t_1 \paralp{} t'_1$ and $t_2 \paralp{} t'_2$,  then 
           \[ \begin{array}{ll}
            \m((t_1\ t_2)[x/u]) & = (\mbox{\cref{l:es-application-lambda-composition}})\\
            \m(t_1[x/u])\ \m(t_2[x/u]) &  \paralp{\lm} (\ih)\\
            \m(t'_1[x/u'])\ \m(t'_2[x/u']) & = (\mbox{\cref{l:es-application-lambda-composition}})\\
            \m((t'_1\ t'_2)[x/u'])
            \end{array} \]               
  \item If $\l y. v \paralp{} \l y. v'$, where
          $v \paralp{} v'$, then
           \[ \begin{array}{ll}
            \m((\l y. v)[x/u]) & = (\mbox{\cref{l:es-application-lambda-composition},~\cref{l:s-distributes}}) \\
            \l y. \m(v[x/u])   & \paralp{\lm} (\ih) \\ 
            \l y. \m(v'[x/u']) & =(\mbox{\cref{l:es-application-lambda-composition},~\cref{l:s-distributes}}) \\
            \m((\l y. v')[x/u']) 
            \end{array} \]  

  \item If $(\l y. t_1)\ v \paralp{} \m(t'_1[y/v'])$, where
          $t_1 \paralp{} t'_1$ and $v \paralp{} v'$, then 

           \[ \begin{array}{ll}
            \m(((\l y. t_1)\ v)[x/u]) & = (\mbox{\cref{l:es-application-lambda-composition}}) \\
            \m((\l y. t_1)[x/u])\ \m(v[x/u]) & = (\mbox{\cref{l:es-application-lambda-composition},~\cref{l:s-distributes}})\\ 
            (\l y. \m(t_1[x/u]))\ \m(v[x/u]) & \paralp{\lm}
            (\ih) \\
           \m(\m(t'_1[x/u'])[y/\m(v'[x/u'])]) & = \\ 
           \m( t'_1[x/u'][y/v'[x/u']]) & = (\mbox{\cref{l:es-application-lambda-composition}}) \\
           \m( t'_1[y/v'][x/u']) & = \\
           \m(\m( t'_1[y/v'])[x/u'])
           \end{array} \]  
        \item If $\mX_\Del [x_1/u_1]\ldots[x_n/u_n] \paralp{} \mX_\Del [x_1/u'_1]\ldots[x_n/u'_n]'$ where $u_i \paralp{} u'_i$ and $x_i \notin \fv(u_j)$ for all $i, j \in [1,n]$ then we reason by induction on $n$.

Note that $\m(u)=u$ and $\m(u')=u'$. We use~\cref{l:simple-properties-es} throughout.

\begin{description}
\item[--] 
For $n=0$ we have two cases.

If $x \notin \Del$ then $\m(\mX_\Del[x/u]) = \mX_\Del \paralp{\lm} \mX_\Del = \m(\mX_\Del[x/u'])$.

If $x \in \Del$ then $\m(\mX_\Del[x/u]) = \mX_\Del[x/\m(u)] \paralp{\lm}  \mX_\Del[x/\m(u')] = \m(\mX_\Del[x/u'])$.

\item[--] For $n>0$ we consider the following cases.

If $x \notin \fv(\mX_\Del [x_1/u_1]\ldots[x_n/u_n])$ then $x \notin \fv(\mX_\Del [x_1/u'_1]\ldots[x_n/u'_n])$ and thus
$$
\begin{array}{ccc}
\m(\mX_\Del [x_1/u_1]\ldots[x_n/u_n][x/u])\\
=\\
\mX_\Del [x_1/u_1]\ldots[x_n/u_n] & \paralp{\lm} & \mX_\Del [x_1/u'_1]\ldots[x_n/u'_n]\\
&&=\\
&&\m(\mX_\Del [x_1/u'_1]\ldots[x_n/u'_n][x/u'])
\end{array}
$$

If $x \in \fv(\mX_\Del [x_1/u_1]\ldots[x_n/u_n])$ then
$$
\begin{array}{ll}
\m(\mX_\Del [x_1/u_1]\ldots[x_n/u_n][x/u]) & = \mbox{\cref{l:es-application-lambda-composition} $n$ times}\\
\m(\mX_\Del [x/u] [x_1/u_1[x/u]]\ldots[x_n/u_n[x/u]]) & = \\
\m(\mX_\Del [x/u]) [x_1/\m(u_1[x/u])]\ldots[x_n/\m(u_n[x/u])] & \paralp{\lm} \mbox{\ih\  $n$ times}\\
\m(\mX_\Del [x/u']) [x_1/\m(u'_1[x/u'])]\ldots[x_n/\m(u'_n[x/u' ])] & = \\
\m(\mX_\Del [x/u'] [x_1/u'_1[x/u']]\ldots[x_n/u'_n[x/u']]) & = \mbox{\cref{l:es-application-lambda-composition} $n$ times}\\
\m(\mX_\Del [x_1/u'_1]\ldots[x_n/u'_n][x/u'])& 
\end{array}
$$
\end{description}

\end{itemize}
\end{proof}

\begin{lemma}[Projecting $\Rew{\lm}$ into $\paralp{\lm}$]
  \label{l:rew-in-paralp}
  Let $s,s'$ be $\mterms$-metaterms. 
If $s \Rew{\lm} s'$ then $\m(s) \paralp{\lm} \m(s')$. 
\end{lemma}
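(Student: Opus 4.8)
The plan is to prove the statement by first stripping away the equivalence $\Es$ and then performing an induction on the raw one-step reduction. Since $s \Rew{\lm} s'$ means there are $s_1, s_1'$ with $s \EqEs s_1 \Rew{\B\sm} s_1' \EqEs s'$, and since $\m$-normal forms are unique modulo $\Es$ (\cref{l:unique-es-nf}), we have $\m(s) \EqEs \m(s_1)$ and $\m(s') \EqEs \m(s_1')$; as $\paralp{\lm}$ is closed under $\EqEs$ by definition, it suffices to prove the sub-statement: if $s_1 \Rew{\B\sm} s_1'$ (the raw relation closed under all contexts, without $\Es$), then $\m(s_1) \paralp{\lm} \m(s_1')$. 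I would prove this sub-statement by induction on the derivation of the one-step reduction $s_1 \Rew{\B\sm} s_1'$.

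For the base cases, the key observation is that the three rules $\us$, $\Gc$ and $\usm$ already belong to the system $\sm$ computing $\m$, so a root step by any of them yields $\m(s_1) \EqEs \m(s_1')$; reflexivity of $\paralp{\lm}$ (\cref{l:simple-properties-es}) then gives $\m(s_1) \paralp{\lm} \m(s_1')$. The only base case doing real work is $\B$, where $s_1 = (\l x. t)\ u$ and $s_1' = t[x/u]$. Here I would compute $\m(s_1) = (\l x. \m(t))\ \m(u)$ using \cref{l:s-distributes}, note that $\m(t) \paralp{} \m(t)$ and $\m(u) \paralp{} \m(u)$ by reflexivity, and apply the fourth clause defining $\paralp{}$ to obtain $(\l x. \m(t))\ \m(u) \paralp{} \m(\m(t)[x/\m(u)])$. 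Finally, since $t[x/u] \Rewn{\sm} \m(t)[x/\m(u)]$, the two sides share the same normal form, i.e. $\m(\m(t)[x/\m(u)]) \EqEs \m(t[x/u]) = \m(s_1')$, and $\Es$-closure of $\paralp{\lm}$ lets me conclude $\m(s_1) \paralp{\lm} \m(s_1')$.

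For the inductive (context) cases the redex sits in a proper subterm, and I would combine the induction hypothesis with the way $\m$ distributes over constructors. For abstraction and application, \cref{l:s-distributes} pushes $\m$ inside ($\m(\l x. t) = \l x. \m(t)$ and $\m(t\ u) = \m(t)\ \m(u)$), so the claim follows from the induction hypothesis and closure of $\paralp{\lm}$ under contexts (\cref{l:simple-properties-es}), using reflexivity on the untouched component; here I would remark that an abstraction over, or an application of, $\m$-normal forms is again in $\m$-normal form, so the context-closure side condition is met.

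The main obstacle is the substitution cases, $s_1 = t[x/u] \to t'[x/u]$ and $s_1 = t[x/u] \to t[x/u']$, because $\m(t)[x/\m(u)]$ is in general \emph{not} a $\m$-normal form, so the context-closure property cannot be invoked directly. Here I would instead rely on \cref{l:stable-paralp}: from the induction hypothesis ($\m(t) \paralp{\lm} \m(t')$, resp. $\m(u) \paralp{\lm} \m(u')$) together with reflexivity on the inert component, \cref{l:stable-paralp} yields $\m(\m(t)[x/\m(u)]) \paralp{\lm} \m(\m(t')[x/\m(u)])$ (resp. the symmetric statement). Since $t[x/u] \Rewn{\sm} \m(t)[x/\m(u)]$, both outer normal forms coincide modulo $\Es$ with $\m(t[x/u])$ and $\m(t'[x/u])$, and the result follows by $\Es$-closure of $\paralp{\lm}$. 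Thus the whole argument reduces to the projection of a single $\B$-redex via the fourth clause of $\paralp{}$, plus the stability lemma for substitution.
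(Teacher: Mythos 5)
Your proposal is correct and follows essentially the same route as the paper's proof: dispose of $\Es$ and of the $\us$/$\Gc$/$\usm$ steps via uniqueness of $\m$-normal forms modulo $\Es$ (\cref{l:unique-es-nf}), then handle $\B$ by induction, using \cref{l:s-distributes} and the fourth clause of $\paralp{}$ at the root, context closure for abstraction/application, and \cref{l:stable-paralp} for the substitution cases. The only difference is organizational --- you induct on the one-step derivation of $\Rew{\B\sm}$ (so internal $\sm$-steps are also handled inside the induction), whereas the paper dispatches all $\m$-steps, root or internal, in one stroke before inducting --- which does not change the substance of the argument.
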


\begin{proof}
  If $s \Rew{\m}  s'$, then $\m(s) =  \m(s')$  holds  by~\cref{l:unique-es-nf}.      Thus,  $\m(s)
  \paralp{} \m(s')$ and $\m(s)
  \paralp{\lm} \m(s')$ holds by definition.
  If $s=_{\Es} s'$, then $\m(s) =_{\Es}  \m(s')$
  by~\cref{l:unique-es-nf}.
  Then $\m(s) =_{\Es}  \m(s')\paralp{} \m(s')$
  implies $\m(s)
  \paralp{\lm} \m(s')$.
  It remains to show that  $s \Rew{\B} s'$ implies
  $\m(s) \paralp{\m(s)} \m(s')$.
  We reason by induction on $s$.
\begin{itemize}
\item If $s = (\l x. t)\ u \Rew{\B} t[x/u] = s'$, then
      $\m(s) = (\l x. \m(t))\ \m(u) \paralp{}
  \m(\m(t)[x/\m(u)]) \EqEs \m(t[x/u])$.
  Then $\m(s) \paralp{\m(s)} \m(s')$.
\item If $s = t\ u \Rew{\B} t'\ u=s'$, where $t \Rew{\B} t'$, 
      then by the \ih\  we get $\m(t\ u) =_{\mbox{\cref{l:s-distributes}}} \m(t)\ \m(u) \paralp{\lm}  \m(t')\ \m(u) =_{\mbox{\cref{l:s-distributes}}} 
      \m(s')$.
\item If $s = t\ u \Rew{\B} t\ u'$, where $u \Rew{\B} u'$, 
      then by the \ih\  we get 
      $\m(t\ u) =_{\mbox{\cref{l:s-distributes}}} \m(t)\ \m(u) \paralp{\lm}  \m(t')\ \m(u)
      =_{\mbox{\cref{l:s-distributes}}} 
      \m(s')$.
\item If $s = t\ u \Rew{\B} t\ u'$, where $u \Rew{\B} u'$, 
      then by the \ih\  we get 
       $\m(t\ u) =_{\mbox{\cref{l:s-distributes}}} \m(t)\ \m(u) \paralp{\lm} \m(t)\ \m(u') =_{\mbox{\cref{l:s-distributes}} }
      \m(s')$.
\item If $s = \l x. t \Rew{\B} \l x. t'$, where $t \Rew{\B} t'$, then
      by the \ih\  we get 
       $\m( \l x. t) =_{\mbox{\cref{l:s-distributes}}} \l x. \m(t) \paralp{\lm} \l x. \m(t') =_{\mbox{\cref{l:s-distributes}} }
      \m(s')$.
\item If $s = t[x/u] \Rew{\B} t'[x/u]$, where $t \Rew{\B} t'$, then 
      $\m(t) \paralp{\lm} \m(t')$ by the \ih\ and 
  $\m(u) \paralp{\lm} \m(u)$ by definition.
  By~\cref{l:stable-paralp}
      $\m(s) \EqEs \m(\m(t)[x/\m(u)]) \paralp{\lm} 
      \m(\m(t')[x/\m(u)])  \EqEs \m(s')$. 
\item If $s = t[x/u] \Rew{\B} t[x/u']$, where $u \Rew{\B} u'$, then 
      $\m(u) \paralp{\lm}  \m(u')$ by the \ih\ and 
  $\m(t) \paralp{\lm} \m(t)$ by definition.
 By~\cref{l:stable-paralp} we conclude 
      $\m(s)  \EqEs  \m(\m(t)[x/\m(u)]) \paralp{\lm} 
      \m(\m(t)[x/\m(u')])  \EqEs  \m(s')$.
\end{itemize}

Finally, one concludes that $s \Rew{\lm}  s'$ implies 
$\m(s) \paralp{ \lm } \m(s')$. 
\end{proof}

From~\cref{l:rew-in-paralp}, we
conclude  that   $\m$    projects   $\Rew{\lm}$  into
$\paralp{\lm}$.

\begin{lemma}
\label{l:paralp-diamond}
The relation $\paralp{\lm}$ has the diamond property,  \ie\
if $t_1 \lparalp{\lm} t \paralp{\lm} t_2$, then there is
$t_3$ such that $t_1 \paralp{\lm} t_3  \lparalp{\lm} t_2$.
\end{lemma}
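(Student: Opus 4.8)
The plan is to prove the diamond property for $\paralp{\lm}$ by a careful induction, reducing everything to the structural relation $\paralp{}$ on $\m$-normal forms via the equivalence $\EqEs$. The key insight is that $\paralp{\lm}$ is defined as $t =_{\Es} s \paralp{} s' =_{\Es} t'$, so I first want to reduce the diamond property for $\paralp{\lm}$ to a corresponding statement about $\paralp{}$ modulo $\EqEs$. That is, given $t_1 \lparalp{\lm} t \paralp{\lm} t_2$, I can find $s, s_1, s_2$ with $t =_{\Es} s$, $s \paralp{} s_1 =_{\Es} t_1$ and (after moving along $\EqEs$) a common $\m$-normal-form representative from which both reductions emanate. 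The real work is then to establish the diamond for the underlying relation $\paralp{}$, and glue using the fact (from~\cref{l:unique-es-nf}) that $\m$-normal forms are unique modulo $\EqEs$.

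First I would prove the core combinatorial statement: if $t_1 \lparalp{} t \paralp{} t_2$ where all three are in $\m$-normal form, then there exists $t_3$ (in $\m$-normal form) with $t_1 \paralp{\lm} t_3 \lparalp{\lm} t_2$. This goes by induction on the structure of $t$, following~\cref{d:paralp}. The routine cases are variables, abstractions, and genuine applications $t\ u$, where the two redexes act in disjoint subterms and one simply applies the induction hypothesis componentwise and recombines using reflexivity (\cref{l:simple-properties-es}) and closure by contexts. The metavariable case $\mX_\Del[x_1/u_1]\ldots[x_n/u_n]$ is similar: both steps act inside the $u_i$, so I apply the induction hypothesis to each argument and rebuild the substitution chain.

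The hard part will be the critical-pair case where $t = (\l x. v)\ u$ and the two sides make different choices: one side fires the $\B$-like rule to reach $\m(v_1[x/u_1])$ while the other either fires it too (reaching $\m(v_2[x/u_2])$) or keeps the redex as an application $(\l x. v_1')\ u_1'$. When both sides fire, I close the diamond by invoking~\cref{l:stable-paralp}: from $v_1 \lparalp{} v \paralp{} v_2$ and $u_1 \lparalp{} u \paralp{} u_2$ the induction hypothesis yields common reducts $v_3$ and $u_3$, and then $\m(v_1[x/u_1]) \paralp{\lm} \m(v_3[x/u_3]) \lparalp{\lm} \m(v_2[x/u_2])$ by stability of $\paralp{\lm}$ under substitution. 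When one side fires and the other does not, I first use the induction hypothesis on $v$ and on $u$ to obtain $v_3, u_3$, then apply the $\B$-case of $\paralp{}$ on the still-applicative side to reach $\m(v_3[x/u_3])$, matching the other branch again through~\cref{l:stable-paralp}. This is precisely where~\cref{l:stable-paralp} is indispensable, since it guarantees that the contraction commutes with taking $\m$-normal forms of substitutions.

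Finally, I would lift the result from $\paralp{}$ to $\paralp{\lm}$. Given the general diamond $t_1 \lparalp{\lm} t \paralp{\lm} t_2$, I unfold both sides through $\EqEs$ to reach $\m$-normal-form representatives, use~\cref{l:unique-es-nf} to align them modulo $\EqEs$ into a single peak in the structural relation, apply the combinatorial diamond just proved, and transport the common reduct back across $\EqEs$. The main technical care throughout is to ensure that every intermediate term is genuinely an $\m$-normal form (so that $\paralp{}$ is even applicable), and that the $\EqEs$-steps never obstruct the confluence — which is exactly what~\cref{l:stable-paralp} and the uniqueness of $\m$-normal forms modulo $\EqEs$ secure.
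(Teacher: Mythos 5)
Your core argument --- the structural diamond for $\paralp{}$ on $\m$-normal forms, with the critical-pair cases at $(\l x. v)\ u$ closed via \cref{l:stable-paralp} and the routine cases handled componentwise --- is exactly the heart of the paper's proof (its step 3), and that part of your proposal is sound. The gap is in your gluing step. Unfolding the peak $t_1 \lparalp{\lm} t \paralp{\lm} t_2$ gives $t_1 =_{\Es} s_1 \lparalp{} s =_{\Es} t =_{\Es} u \paralp{} u_2 =_{\Es} t_2$, where $s$ and $u$ are in general \emph{different} (though $\Es$-equivalent) $\m$-normal forms --- for instance $\mX_\Del[x/a][y/b]$ versus $\mX_\Del[y/b][x/a]$. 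Your structural diamond needs both reductions to emanate from one and the same term, so you must first transport one of the two $\paralp{}$-steps across the equivalence: if $s \paralp{} s_1$ and $s =_{\Es} u$, then there is $u_1$ with $u \paralp{} u_1 =_{\Es} s_1$. This commutation of $\paralp{}$ with $=_{\Es}$ is a lemma in its own right --- the paper proves it by induction on the definition of $\paralp{}$, as points (1) and (2) of its proof --- and it is \emph{not} what \cref{l:unique-es-nf} says: that lemma gives existence and uniqueness modulo $\Es$ of $\Rew{\m}$-normal forms, and says nothing about moving a parallel reduction step from one $\Es$-representative to another. Your appeal to ``align them modulo $\EqEs$ into a single peak'' using \cref{l:unique-es-nf} is therefore unjustified as stated.

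The fix is cheap but must be done explicitly: state and prove the commutation lemma (if $t \lparalp{} u =_{\Es} u'$ then $t =_{\Es} t' \lparalp{} u'$ for some $t'$) by induction on the definition of $t \lparalp{} u$, using that $\Es$-equivalence between $\m$-normal forms only $\alpha$-renames and permutes independent substitutions above metavariables (\cref{l:forma-snf}); then lift it to $\paralp{\lm}$. With that lemma in hand, your gluing goes through essentially as you describe, and as in the paper: transport the left-hand reduction to the representative $u$, apply the structural diamond to the now-genuine peak at $u$, and the common reduct closes the diagram since $\paralp{\lm}$ absorbs $=_{\Es}$ on both ends.
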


\begin{proof}
By  induction on the definition of $\paralp{\lm}$. We organize the proof as follows.

\begin{enumerate}
\item We first prove that $t \lparalp{} u =_{\Es} u'$ 
implies $t =_{\Es} t' \lparalp{} u'$ by induction on the
definition of  $t \lparalp{} u$.

\item \label{pdos} We then conclude  that $t \lparalp{\lm} u =_{\Es} u'$ 
implies $t =_{\Es} t' \lparalp{} u'$ using the previous point.

\item \label{ptres} We now prove that $t_1 \lparalp{} t \paralp{} t_2$
implies $t_1 \paralp{\lm} t \lparalp{\lm} t_2$.

\begin{itemize}

\item Let us consider 
\[ (\l x. t_1)\  u_1 \lparalp{} (\l x. t)\ u \paralp{} \m(t_2[x/u_2]) \]

where $t \paralp{} t_1$ and $t \paralp{} t_2$ and
 $u \paralp{} u_1$ and $u \paralp{} u_2$. By the \ih\  we
know there are $t_3$ and $u_3$ such that
$t_1 \paralp{\lm} t_3$ and $t_2 \paralp{\lm} t_3$ and
 $u_1 \paralp{\lm} u_3$ and $u_2 \paralp{\lm} u_3$ so that
in particular 
$t_1 =_{\Es} w_1 \paralp{} w_3 =_{\Es} t_3$ and 
$u_1 =_{\Es} w'_1 \paralp{} w'_3 =_{\Es} u_3$. 
We have 

\[ (\l x. t_1)\  u_1 =_{\Es}
   (\l x. w_1)\  w'_1 \paralp{}
   \m(w_3[x/w'_3 ])  =_{\Es}
   \m(t_3[x/u_3]) \]
 
and~\cref{l:stable-paralp} gives 

\[ \m(t_2[x/u_2]) \paralp{\lm} 
   \m(t_3[x/u_3]) \]

\item Let us consider 
 
\[ \m(t_1[x/u_1]) \lparalp{} (\l x. t)\ u \paralp{} \m(t_2[x/u_2]) \] 

where $t \paralp{} t_1$ and $t \paralp{} t_2$ and
 $u \paralp{} u_1$ and $u \paralp{} u_2$. By the \ih\  we
know there are $t_3$ and $u_3$ such that
$t_1 \paralp{\lm} t_3$ and $t_2 \paralp{\lm} t_3$ and
 $u_1 \paralp{\lm} u_3$ and $u_2 \paralp{\lm} u_3$.
Then,~\cref{l:stable-paralp} gives 

\[ \m(t_1[x/u_1]) \paralp{\lm} \m(t_3[x/u_3]) \lparalp{\lm} \m(t_2[x/u_2])  \] 

\item All the other cases are straightforward
using~\cref{l:simple-properties-es}.

\end{itemize}

\item We finally prove the diamond property as follows. 
Let $t_1 \lparalp{\lm} t =_{\Es} u \paralp{} u' =_{\Es}  t_2$.
      By point (\ref{pdos}) there is $u_1$ such that
      $t_1 =_{\Es} u_1 \lparalp{}  u$ and by  point (\ref{ptres}) there is
      $t_3$ such that $u_1 \paralp{\lm}  t_3 \lparalp{\lm} u'$.
      We conclude
      $t_1 \paralp{\lm}  t_3 \lparalp{\lm} t_2$.
\end{enumerate}

\end{proof}

\begin{corollary}
The $\lm$-reduction relation  is confluent on $\mterms$-terms and $\mterms$-metaterms.
\end{corollary}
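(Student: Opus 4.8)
The plan is to carry out the final, routine step of the Tait--Martin-L\"of argument that the preceding lemmas have already set up, so that essentially all of the work is done. The engine is the standard fact that a relation enjoying the single-step diamond property has a confluent reflexive--transitive closure. So first I would record this abstract fact: if a relation $R$ has the diamond property, then so does $R^*$, proved by the usual tiling argument, namely a double induction on the lengths of the two reduction sequences $t \mathrel{R^*} t_1$ and $t \mathrel{R^*} t_2$, closing each elementary square by one application of the diamond property for $R$. Instantiating $R$ with $\paralp{\lm}$, which has the diamond property by~\cref{l:paralp-diamond}, then shows that $\paralnp{\lm}$ has the diamond property as well, i.e.\ that $\paralp{\lm}$ is confluent.

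Next I would transfer this confluence to $\Rew{\lm}$ on $\mterms$-metaterms. By~\cref{l:standard} the reflexive--transitive closures $\paralnp{\lm}$ and $\Rewn{\lm}$ coincide as relations, and confluence of a relation depends only on its reflexive--transitive closure; hence confluence of $\paralp{\lm}$ is literally confluence of $\Rew{\lm}$. Observe that no extra bookkeeping for the equation $\Com$ is required at this point, because $\Rew{\lm}$ is by definition reduction taken modulo $\Es$: the handling of rewriting modulo an equation has already been absorbed into $\Rew{\lm}$ and into~\cref{l:paralp-diamond,l:standard}.

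Finally I would derive the statement for $\mterms$-terms from the one for $\mterms$-metaterms. The key observation is that the $\mterms$-terms (those containing no metavariable) form a subset of the $\mterms$-metaterms that is closed under $\Rew{\lm}$: none of the rules $\B$, $\us$, $\Gc$ and $\usm$ can create a metavariable, so every reduct of a term is again a term. Consequently a peak $t_1 \LRewn{\lm} t \Rewn{\lm} t_2$ with $t$ a term is in particular a peak among metaterms; its common reduct $t_3$ supplied by confluence on metaterms is reached by reduction from the terms $t_1$ and $t_2$ and is therefore itself a term, so confluence on terms follows at once.

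The genuine difficulty of the development resides in~\cref{l:paralp-diamond} and in the projection and stability results (\cref{l:rew-in-paralp,l:stable-paralp}) that feed it, together with the insistence that $\paralp{}$ range over $\m$-normal forms so as to cope with the equation $\Com$; the corollary itself is only the final assembly. The one point I would double-check is that the abstract diamond-to-confluence lemma is applied to the reflexive--\emph{transitive} closure, in agreement with~\cref{l:standard}, rather than to the merely reflexive closure that the notation $\paralnp{\lm}$ is also said to denote.
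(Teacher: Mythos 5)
Your skeleton (diamond property for the parallel relation, then transfer of confluence to $\Rew{\lm}$) is right, but the transfer step is exactly where your argument breaks, and it is exactly the step where the paper's proof does its real work. You read \cref{l:standard} literally, as saying that the reflexive--transitive closures of $\paralp{\lm}$ and $\Rew{\lm}$ coincide on all $\mterms$-metaterms, and you conclude that confluence of $\paralp{\lm}$ \emph{is} confluence of $\Rew{\lm}$. That literal coincidence cannot hold: by \cref{d:paralp}, $\paralp{}$ is defined \emph{only on metaterms in $\m$-normal form}, and $\paralp{\lm}$ is its closure under $=_{\Es}$, which preserves $\m$-normal forms; hence any pair related by a nonempty chain of $\paralp{\lm}$-steps consists of two $\m$-normal forms, while the zero-step case is just $=_{\Es}$. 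The relation $\Rew{\lm}$, on the other hand, immediately leaves the set of $\m$-normal forms: $(\l x.\, x)\, y \Rew{\B} x[x/y]$, where $x[x/y]$ has a $\us$-redex and is not $\Es$-equivalent to $(\l x.\, x)\, y$; so this pair lies in $\Rewn{\lm}$ but not in the reflexive--transitive closure of $\paralp{\lm}$. Only the inclusion $\paralnp{\lm} \subseteq \Rewn{\lm}$ is sound; the converse direction holds only in projected form, namely $t \Rew{\lm} t'$ implies $\m(t) \paralp{\lm} \m(t')$ (\cref{l:rew-in-paralp}). As a result, your argument establishes at most confluence of $\Rew{\lm}$ restricted to $\m$-normal forms; it says nothing about peaks $t_1 \LRewn{\lm} t \Rewn{\lm} t_2$ whose endpoints are not $\m$-normal forms, which is the general case.

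The missing piece is the projection-and-normalization glue of which the paper's proof essentially consists. Given such a peak, the paper first applies \cref{l:rew-in-paralp} along both branches to get $\m(t) \paralnp{} \m(t_i)$ for $i = 1,2$; it then closes this projected peak using confluence of $\paralp{}$ (obtained, as you do, from \cref{l:paralp-diamond} and the standard diamond-implies-confluence fact), yielding $s$ with $\m(t_i) \paralnp{} s$; finally it glues in the reductions $t_i \Rewn{\m} \m(t_i)$, which exist by \cref{l:unique-es-nf}, and converts $\m(t_i) \paralnp{} s$ into $\m(t_i) \Rewn{\lm} s$ using only the sound direction of \cref{l:standard}. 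The common reduct is thus reached by $t_i \Rewn{\m} \m(t_i) \Rewn{\lm} s$. Your proposal can be repaired by replacing the ``closures coincide'' step with this argument; your other ingredients (the tiling argument, and the remark that $\mterms$-terms are closed under $\Rew{\lm}$ so that confluence on metaterms restricts to terms) are correct, the latter being unnecessary in the paper since its proof treats terms as particular metaterms.
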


\begin{proof}
  Take any  $\mterms$-metaterms $t, t_1, t_2$
  such that $t \Rewn{\lm} t_i$ for $i=1,2$.
\Cref{l:rew-in-paralp}
gives  $\m(t) \paralnp{} \m(t_i)$.
Since   the  diamond  property  implies 
  confluence~\cite{Baader-Nipkow}, then~\cref{l:paralp-diamond}
implies   confluence of $\paralp{}$.
Therefore, there is 
a metaterm $s$ s.t. $\m(t_i) \paralnp{} s$.
By~\cref{l:unique-es-nf} the (unique) $\m$-normal forms
of $t_i$ exist, so that 
$t_i \Rewn{\m} \m(t_i)$. 
We can then  close the
diagram by $t_i \Rewn{\m} \m(t_i) \Rewn{\lm} s$ using~\cref{l:standard}
to obtain $\m(t_i) \Rewn{\lm} s$ from $\m(t_i) \paralnp{} s$. 
\end{proof}


\section{Relating Partial Substitutions to Definitions}
\label{s:partial-definitions}

Partial substitution can be  related to calculi with definitions.
A definition can be understood as an abbreviation given
  by a name for a larger term which can be used several times 
in a program or a proof. A definition mechanism 
is essential for practical use; current implementations of 
proof assistants provide such a facility.

We  consider two  calculi,  the  first  one, which  we  call
$\lpar$,  appears  in~\cite{deBruijn87}  and  uses  a  notion  of
partial  substitution on  $\l$-terms, while  the second  one, which  we call
$\ldef$,  uses  partial substitutions  on $\mterms$-terms to  model definitions  and
combines  standard  $\beta$-reduction   with  the  rules  of the
substitution calculus $\m$. The general result of this section is
that normalisation  in $\lpar$ and  $\lm$ are equivalent  on \linebreak 
$\l$-terms and normalisation in  $\ldef$ and $\lm$ are equivalent
on $\mterms$-terms. More precisely,  for every  $\l$-term $t$, $t
\in \SN{\lpar}$ if  and only if $t \in \SN{\lm}$  and for every 
$\mterms$-term  $t$,  $t  \in  \SN{\ldef}$  if and  only  if  $t  \in
\SN{\lm}$.   Thus,  the $\lm$-calculus  can  be  understood as  a
concise  and simple  language implementing  partial  and ordinary
substitution,  both in implicit  and explicit  style at  the same
time.

\subsection{The partial $\lpar$-calculus}
\label{s:lpar}

Terms of the partial $\lpar$-calculus are $\l$-terms. The operational
semantics of the $\lpar$-calculus is given by the following rules:

\[ \begin{array}{|llll|}
\hline
(\l x. C\cwc{x}_\phi)\ u    &\Rew{\bp}  & (\l x. C\cwc{u}_\phi)\ u & \mbox{ if } \{x\} \cup \fv(u) \subseteq \phi\\   
(\l x. t)\ u    &\Rew{\BGc}   & t & \mbox{ if } x \notin \fv(t) \\
\hline
\end{array} \]

We consider the following translation from  $\l$-terms to
$\mterms$-terms: 


\[ \begin{array}{lclll}
   \begin{array}{lll}
     \parlm{x} & := & x \\  
     \parlm{\l x. t} & := & \l x. \parlm{t}
   \end{array}
   &\sep&
   \parlm{t\ u} & := & \left \{ \begin{array}{ll}
                               \parlm{t}\ \parlm{u} & \mbox{ if } t
                               \mbox{ is not a $\l$-abstraction} \\
                               \parlm{v}[x/\parlm{u}] & \mbox{ if } t=
                               \l x. v  \\
                               \end{array} \right . 
   \end{array} \]

\begin{lemma}
\label{l:from-lpar-to-lm}
Let $t,t'$ be $\l$-terms.
If $t \Rew{\lpar} t'$, then $\parlm{t} \Rewplus{\lm} \parlm{t'}$.  
\end{lemma}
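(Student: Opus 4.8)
The plan is to argue by induction on the derivation of $t \Rew{\lpar} t'$, i.e. on the position of the contracted redex, after two preliminary observations about the translation $\parlm{\cdot}$. First, $\parlm{\cdot}$ preserves free variables, $\fv(\parlm{t}) = \fv(t)$, by a routine induction on $t$ (the only interesting clause is the application $t_1\,t_2$ with $t_1 = \l x. v$, where the closure $\parlm{v}[x/\parlm{t_2}]$ has the expected free variables). Second, because a variable is never a $\l$-abstraction, the problematic clause of the translation---the one turning $(\l x. v)\,u$ into a closure---never fires at the hole when the hole is filled by a variable; hence $\parlm{\cdot}$ commutes with the hole in that case, and there is a genuine $\lm$-context $\widetilde{C}$ with $\parlm{C\cwc{x}_\phi} = \widetilde{C}\cwc{x}$. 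Moreover $\widetilde{C}\cwc{\parlm{u}}$ and $\parlm{C\cwc{u}_\phi}$ coincide everywhere except possibly at the single application node immediately enclosing the hole.

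For the base cases I treat the two root rules. The rule $\BGc$ is immediate: if $t = (\l x. s)\,u \Rew{\BGc} s = t'$ with $x \notin \fv(s)$, then $\parlm{t} = \parlm{s}[x/\parlm{u}]$ and, since $x \notin \fv(\parlm{s}) = \fv(s)$, one $\Gc$ step gives $\parlm{s}[x/\parlm{u}] \Rew{\Gc} \parlm{s} = \parlm{t'}$. The rule $\bp$ is the heart of the argument. Here $t = (\l x. C\cwc{x}_\phi)\,u \Rew{\bp} (\l x. C\cwc{u}_\phi)\,u = t'$, so $\parlm{t} = \widetilde{C}\cwc{x}[x/\parlm{u}]$ and $\parlm{t'} = \parlm{C\cwc{u}_\phi}[x/\parlm{u}]$. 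Using $\{x\}\cup\fv(\parlm{u}) = \{x\}\cup\fv(u)\subseteq\phi$ and the fact that the hole occurrence of $x$ is in scope (guaranteed by $x \in \phi$ in $C\cwc{x}_\phi$), one $\us$ step targeting precisely the hole occurrence yields $\widetilde{C}\cwc{x}[x/\parlm{u}] \Rew{\us} \widetilde{C}\cwc{\parlm{u}}[x/\parlm{u}]$. It then remains to reduce $\widetilde{C}\cwc{\parlm{u}}$ to $\parlm{C\cwc{u}_\phi}$: by the second preliminary observation these differ only at the application node enclosing the hole, so this needs zero steps when that node is absent or $u$ is not an abstraction, and exactly one $\B$ step---contracting $(\l y. \parlm{s})\,\parlm{w}$ into $\parlm{s}[y/\parlm{w}]$---when the hole sits in function position of an application $\Box\,w$ and $u = \l y. s$. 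Chaining the $\us$ step with this $\B$ correction gives $\parlm{t} \Rewplus{\lm} \parlm{t'}$.

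The inductive cases, where the redex lies strictly inside $t$, follow from the induction hypothesis together with the fact that $\Rew{\lm}$ (hence $\Rewplus{\lm}$) is closed under all contexts. The only subtlety is again caused by the context-sensitivity of the translation: when $t = t_1\,t_2$ and the step occurs in $t_1$, a $\BGc$ step may turn $t_1$ from a non-abstraction into an abstraction (e.g. $(\l x. \l z. w)\,u \Rew{\BGc} \l z. w$). In that case $\parlm{t} = \parlm{t_1}\,\parlm{t_2}$ but $\parlm{t'} = \parlm{s}[z/\parlm{t_2}]$ for $t_1' = \l z. s$, and I close the gap by first applying the induction hypothesis $\parlm{t_1} \Rewplus{\lm} \l z. \parlm{s}$ and then one extra $\B$ step $(\l z. \parlm{s})\,\parlm{t_2} \Rew{\B} \parlm{s}[z/\parlm{t_2}]$. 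All remaining sub-cases (step under a $\l$, step in the argument, step in a function position that stays a non-abstraction or was already an abstraction) are direct applications of the induction hypothesis under the relevant constructor, and never decrease the number of steps, so the result stays a $\Rewplus{\lm}$.

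The main obstacle is the non-compositional nature of $\parlm{\cdot}$ at head-of-application positions: the translation of $t_1\,t_2$ branches on whether $t_1$ is an abstraction, so it does not commute with context filling or with reduction in general. Taming this is exactly what the context $\widetilde{C}$ and the bookkeeping of the single $\B$ correction accomplish; the rest is routine. A secondary point to get right is that $\bp$ and $\BGc$ are partial, single-occurrence rules, so I must contract precisely the hole occurrence with $\us$ rather than, say, invoking \cref{l:full-composition}, which would substitute all occurrences of $x$ and overshoot $\parlm{t'}$.
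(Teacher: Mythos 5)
Your proof is correct and takes essentially the same route as the paper, which establishes this lemma simply ``by induction on $\Rew{\lpar}$'' and leaves all details implicit. Your two preliminary observations---the induced context $\widetilde{C}$ with $\parlm{C\cwc{x}_\phi}=\widetilde{C}\cwc{x}$, and the at-most-one $\Rew{\B}$ correction step needed because $\parlm{\cdot}$ branches on whether the head of an application is an abstraction---supply exactly the bookkeeping that this induction requires, both in the root $\bp$ case and in the inductive case where a $\BGc$ step turns the function part of an application into an abstraction.
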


\begin{proof}
By induction on $\Rew{\lpar}$. 
\end{proof}

\begin{corollary}
\label{c:lm-lpar}
Let $t$ be a  $\l$-term. 
If $t \in \SN{\lm}$, then $t \in \SN{\lpar}$.
\end{corollary}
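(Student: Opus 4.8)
The plan is to argue by contraposition, transporting a hypothetical infinite $\lpar$-reduction of $t$ into an infinite $\lm$-reduction via the translation $\parlm{\cdot}$, using \cref{l:from-lpar-to-lm}. The one subtlety is that the translation turns $\beta$-redexes into explicit substitutions, so $\parlm{t}$ is in general not $t$ itself; consequently the argument will not be about $t$ directly but about $\parlm{t}$, and I will need a bridge relating the strong normalisation of $t$ to that of $\parlm{t}$.

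First I would establish the auxiliary fact that $t \Rewn{\lm} \parlm{t}$ for every $\l$-term $t$, by induction on $t$. The variable case is immediate since $\parlm{x} = x$; the abstraction case and the application case where the left subterm is not an abstraction follow directly from the induction hypothesis applied to the immediate subterms. The interesting case is $t = (\l x. v)\ u$, where $\parlm{t} = \parlm{v}[x/\parlm{u}]$: here one first performs the step $(\l x. v)\ u \Rew{\B} v[x/u]$ and then applies the \ih\ to $v$ and $u$, obtaining $v[x/u] \Rewn{\lm} \parlm{v}[x/\parlm{u}] = \parlm{t}$, so that in fact $t \Rewplus{\lm} \parlm{t}$.

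With this in hand, suppose for contradiction that $t \in \SN{\lm}$ but $t \notin \SN{\lpar}$, so that there is an infinite reduction $t = t_0 \Rew{\lpar} t_1 \Rew{\lpar} t_2 \Rew{\lpar} \cdots$ By \cref{l:from-lpar-to-lm}, each step yields $\parlm{t_i} \Rewplus{\lm} \parlm{t_{i+1}}$, and since each of these is a non-empty $\lm$-reduction sequence, their concatenation forms an infinite $\lm$-reduction starting from $\parlm{t_0} = \parlm{t}$; hence $\parlm{t} \notin \SN{\lm}$. On the other hand, the auxiliary fact gives $t \Rewn{\lm} \parlm{t}$, and since strong normalisation is preserved under reduction, $t \in \SN{\lm}$ forces $\parlm{t} \in \SN{\lm}$, a contradiction. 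Therefore $t \in \SN{\lpar}$.

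The main obstacle is the auxiliary lemma $t \Rewn{\lm} \parlm{t}$, since it is exactly the ingredient that closes the gap between the term and its translation; the remainder is the routine observation that non-empty simulation steps concatenate into an infinite sequence and that $\SN{\lm}$ is closed under $\lm$-reduction.
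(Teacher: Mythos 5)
Your proof is correct and follows essentially the same route as the paper: simulate the hypothetical infinite $\lpar$-sequence in $\lm$ via \cref{l:from-lpar-to-lm}, and bridge the gap between $t$ and $\parlm{t}$ using the fact that $t \Rewn{\lm} \parlm{t}$. The only difference is that you spell out the inductive proof of this bridging fact, which the paper's proof asserts without justification.
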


\begin{proof}
Let  $t \in \SN{\lm}$ and suppose $t \notin \SN{\lpar}$.
Then, from an infinite $\lpar$-reduction sequence starting
at $t$ we can construct, by~\cref{l:from-lpar-to-lm},
an infinite $\lm$-reduction sequence starting at $\parlm{t}$.
Since $t \Rewn{\lm} \parlm{t}$, then $t \notin \SN{\lm}$,  which
leads to a contradiction. We thus conclude $t \in \SN{\lpar}$.
\end{proof}

The  converse reasoning also works. Define a 
translation from $\mterms$-terms  to $\l$-terms:

\[ \begin{array}{lll@{\hspace{2cm}}lll}
   \lmpar{x} & := & x & 
   \lmpar{\l x. t} & := & \l x. \lmpar{t}\\
   \lmpar{t\ u} & := & \lmpar{t}\ \lmpar{u} & 
   \lmpar{t[x/u]} & := & (\l x. \lmpar{t})\ \lmpar{u} \\ 
    \end{array} \]

Remark that $\lmpar{t}\isubs{x}{\lmpar{u}} =
\lmpar{t\isubs{x}{u}}$.

\begin{lemma} 
\label{l:from-lm-to-lpar}
Let $t,t'$ be $\mterms$-terms
such that $t \Rew{\lm} t'$.
If $t \Rew{\B} t'$, then $\lmpar{t} = \lmpar{t'}$.
If  $t \Rew{\m} t'$, then $\lmpar{t} \Rewplus{\lpar} \lmpar{t'}$. 
\end{lemma}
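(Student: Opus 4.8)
The plan is to prove Lemma~\ref{l:from-lm-to-lpar} by case analysis on the reduction step $t \Rew{\lm} t'$, using induction on the structure of $t$ (equivalently, on the position of the reduced redex). The translation $\lmpar{\cdot}$ turns every explicit substitution $[x/u]$ into a $\beta$-redex $(\l x. \lmpar{t})\ \lmpar{u}$, so the two statements correspond to the two kinds of reduction: the $\B$ step $(\l x. t)\ u \Rew{\B} t[x/u]$ should be \emph{invisible} under $\lmpar{\cdot}$ because it merely creates the explicit substitution that $\lmpar{\cdot}$ was already representing as a $\beta$-redex, whereas the substitution steps $\us$, $\Gc$ (and, for metaterms, $\usm$) should correspond to genuine $\lpar$-reduction because they actually move or discard the substituted term.

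First I would verify the two base cases, i.e.\ the reductions happening at the root. For $\B$: the root step $(\l x. t)\ u \Rew{\B} t[x/u]$ translates on the left to $\lmpar{(\l x. t)\ u} = (\l x. \lmpar{t})\ \lmpar{u}$ and on the right to $\lmpar{t[x/u]} = (\l x. \lmpar{t})\ \lmpar{u}$, so $\lmpar{t} = \lmpar{t'}$ holds literally. For the root $\us$ step $C\cwc{x}_\phi[x/u] \Rew{\us} C\cwc{u}_\phi[x/u]$, both sides translate to $(\l x. \lmpar{C\cwc{\cdot}})\ \lmpar{u}$ with the hole filled by $x$ respectively $\lmpar{u}$ inside; since $\lmpar{\cdot}$ is a homomorphism on the term constructors, $\lmpar{C\cwc{x}_\phi} = \lmpar{C}\cwc{x}$ and $\lmpar{C\cwc{u}_\phi} = \lmpar{C}\cwc{\lmpar{u}}$ for the induced $\lpar$-context $\lmpar{C}$, and the single $\bp$ rule of $\lpar$ fires exactly once, giving a $\Rewplus{\lpar}$ step. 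For the root $\Gc$ step $t[x/u] \Rew{\Gc} t$ with $x \notin \fv(t)$, the left side is $(\l x. \lmpar{t})\ \lmpar{u}$ and, since $x \notin \fv(t)$ implies $x \notin \fv(\lmpar{t})$, the $\BGc$ rule applies to yield $\lmpar{t}$ in one step.

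For the inductive (contextual) cases, where the redex sits strictly inside $t$, I would push the reduction through each term constructor. The homomorphism property of $\lmpar{\cdot}$ on application and abstraction makes the application/abstraction cases immediate. The only constructor that needs care is the closure $t[x/u]$: here $\lmpar{t[x/u]} = (\l x. \lmpar{t})\ \lmpar{u}$, and a reduction inside the body $t$ or inside the argument $u$ translates to a reduction in the body of the abstraction or in the argument of the application, so the induction hypothesis together with context closure of $\Rew{\lpar}$ (and of equality for the $\B$ case) finishes it. I would also note the remark just above the lemma, $\lmpar{t}\isubs{x}{\lmpar{u}} = \lmpar{t\isubs{x}{u}}$, which guarantees that $\lmpar{\cdot}$ commutes with implicit substitution; this is what makes the root cases of full composition line up cleanly.

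The main obstacle is bookkeeping the substitution context correctly in the $\us$ case: one must check that the $\lpar$-side condition $\{x\} \cup \fv(u) \subseteq \phi$ survives translation, so that the single $\bp$ firing is legitimate, and that $\lmpar{\cdot}$ genuinely commutes with hole-filling (no capture is introduced, which is ensured by the working-modulo-$\alpha$ convention). A secondary subtlety is confirming that a $\us$ (or $\usm$) step really produces a \emph{non-empty} $\lpar$-reduction rather than an equality; this is exactly why the statement splits $\B$ off as the silent case and keeps $\m$ as the strictly-reducing case, and the root analysis above confirms the split is correct.
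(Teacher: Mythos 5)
Your proposal is correct and follows exactly the route the paper takes: the paper's proof is simply ``by induction on $\Rew{\lm}$'', and you carry out precisely that induction, checking the root cases ($\B$ collapses to syntactic equality, $\us$ becomes one $\bp$ step, $\Gc$ becomes one $\BGc$ step, with free variables and binding structure preserved by $\lmpar{\cdot}$) and closing under contexts. The only cosmetic point is that the remark $\lmpar{t}\isubs{x}{\lmpar{u}} = \lmpar{t\isubs{x}{u}}$ is not actually needed here, since no implicit substitution arises in any case of this lemma.
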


\begin{proof}
By induction on $\Rew{\lm}$. 
\end{proof}

\begin{corollary}
\label{c:lpar-lm}
Let $t$ be a  $\l$-term.
Then $t \in \SN{\lpar}$ if and only if $t \in \SN{\lm}$.
\end{corollary}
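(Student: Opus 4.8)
The plan is to prove the two implications separately. One direction, namely $t \in \SN{\lm} \Rightarrow t \in \SN{\lpar}$, is already established as~\cref{c:lm-lpar}, so the work lies entirely in the converse: for a $\l$-term $t$, $t \in \SN{\lpar}$ implies $t \in \SN{\lm}$. I would argue this by contradiction, assuming $t \in \SN{\lpar}$ and supposing there were an infinite $\lm$-reduction sequence $t = s_0 \Rew{\lm} s_1 \Rew{\lm} s_2 \Rew{\lm} \cdots$ starting from $t$. Note that since the rule $\usm$ never applies to terms without metavariables, this entire sequence stays inside $\mterms$-terms, so~\cref{l:from-lm-to-lpar} is applicable at every step.

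The key tool is the translation $\lmpar{\cdot}$ together with~\cref{l:from-lm-to-lpar}, which tells us that each step of the sequence projects as follows: a $\B$-step $s_i \Rew{\B} s_{i+1}$ gives $\lmpar{s_i} = \lmpar{s_{i+1}}$, whereas an $\m$-step $s_i \Rew{\m} s_{i+1}$ gives $\lmpar{s_i} \Rewplus{\lpar} \lmpar{s_{i+1}}$, i.e.\ a \emph{non-empty} $\lpar$-reduction. Since $t$ is a $\l$-term it contains no explicit substitution, so $\lmpar{t} = t$ by a trivial induction on its structure. Applying $\lmpar{\cdot}$ to the whole sequence therefore yields $t = \lmpar{s_0} \Rewn{\lpar} \lmpar{s_1} \Rewn{\lpar} \cdots$, where progress is strict exactly at the $\m$-steps. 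I would then split into two cases according to how many $\m$-steps the infinite sequence contains.

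If the sequence contains infinitely many $\m$-steps, the projected sequence performs infinitely many genuine $\lpar$-steps, producing an infinite $\lpar$-reduction starting from $\lmpar{t} = t$; this contradicts $t \in \SN{\lpar}$. If instead only finitely many $\m$-steps occur, then from some index onward the sequence consists solely of $\B$-steps, giving an infinite $\B$-only reduction. To rule this out I would invoke termination of $\B$ on its own: a $\B$-step $(\l x. u)\, v \Rew{\B} u[x/v]$ turns one application node into a substitution node and creates no new application, so the number of applications occurring in a term strictly decreases at every $\B$-step. Hence no infinite $\B$-only sequence exists, and this case is impossible as well.

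Either way we reach a contradiction, so $t \in \SN{\lm}$, and combined with~\cref{c:lm-lpar} this yields the stated equivalence. I expect the only genuinely delicate point to be the second case, that is, establishing that $\B$ alone is terminating so that an infinite $\lm$-sequence is forced to contain infinitely many $\m$-steps; everything else follows directly from~\cref{l:from-lm-to-lpar} and the identity $\lmpar{t} = t$ on pure $\l$-terms.
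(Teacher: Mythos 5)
Your proof is correct and takes essentially the same route as the paper's: one direction is \cref{c:lm-lpar}, and for the converse you project a hypothetical infinite $\lm$-sequence through $\lmpar{\cdot}$ using \cref{l:from-lm-to-lpar}, with termination of $\Rew{\B}$ forcing infinitely many $\m$-steps and hence an infinite $\lpar$-sequence from $\lmpar{t}=t$. The only difference is that you explicitly justify termination of $\Rew{\B}$ (each $\B$-step strictly decreases the number of application nodes), a fact the paper's proof simply asserts.
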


\begin{proof}
  If   $t \in \SN{\lm}$, then $t \in \SN{\lpar}$
  by~\cref{c:lm-lpar}.
For the converse, let  $t \in \SN{\lpar}$ and suppose $t \notin \SN{\lm}$.
Consider an infinite $\lm$-reduction sequence starting
at $t$. Since $\Rew{\B}$ is terminating,  such 
infinite reduction sequence must contain an infinite
number of  $\Rew{\m}$ steps. By~\cref{l:from-lm-to-lpar}
this gives  an infinite $\lpar$-reduction sequence starting at
$\lmpar{t}$. Since $t$ is a $\l$-term, then $\lmpar{t} = t$,
thus $t \notin \SN{\lpar}$ which leads to a contradiciton.
We conclude $t \in \SN{\lm}$.
\end{proof}

\subsection{The $\l$-calculus with definitions $\ldef$}
\label{s:calculus-with-definitions}

The syntax         of          the         $\l$-calculus         with
definitions $\ldef$~\cite{severi94definitions},  is isomorphic to
that of the $\lm$-calculus, where the use of a definition $x:=u$ in a term $v$,
denoted ${\tt let}\ x:=u\ {\tt in}\ v$, can be thought
as the term $v[x/u]$ in          $\lm$.          The         original
presentation~\cite{severi94definitions}    of   the   operational
semantics of $\ldef$ is given  by a reduction system which is not
a (higher-order) term rewriting system.   This is due to the fact
that given a  definition $x:=u$, the term $x$  can be reduced to
the term $u$, so that  reduction creates new free variables since
$\fv(u)$  does  not necessarily  belong  to  $\set{x}$. Here,  we
present $\ldef$ by  a set of reduction rules  which preserve free
variables  of  terms.   Moreover,  we  consider  a  more  general
reduction   system  where   any  $\beta$-redex   can   be  either
$\beta$-reduced  or  transformed   to  a  definition,  while  the
calculus  appearing in~\cite{severi94definitions} does  not allow
dynamic creation of definitions.

\[ \begin{array}{|llll|}
\hline
(\l x. t)\ u    &\Rew{\beta}  & t \isubs{x}{u} & \\   
(\l x. t)\ u    &\Rew{\B}   & t[x/u] & \\
t[x/u]          &\Rew{\Gc}  & t      & \mbox{ if } x \notin \fv(t) \\ 
C\cwc{x}_\phi[x/u]   &\Rew{\us}   & C\cwc{u}_\phi[x/u] & \mbox{ if } \{x\} \cup \fv(u) \subseteq \phi\\
\hline
\end{array} \]

The following relations between $\ldef$ and $\lm$ holds:

\begin{lemma} \mbox{}
\label{l:ldef-from/to-lm}
Let $t,t'$ be $\mterms$-terms.
\begin{itemize}
\item If $t \Rew{\ldef} t'$, then $t \Rewplus{\lm} t'$.
  \item If $t \Rew{\lm} t'$, then $t \Rewplus{\ldef} t'$.
\end{itemize}
\end{lemma}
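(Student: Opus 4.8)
The plan is to prove the two inclusions separately, after noting the essential relationship between the two rule sets: on $\mterms$-terms the $\lm$-calculus is generated by $\{\B,\us,\Gc\}$, whereas $\ldef$ is generated by $\{\beta,\B,\Gc,\us\}$. Thus $\ldef$ contains every $\lm$-rule, and the only genuinely new rule is $\beta$, which contracts a redex directly by implicit substitution, $(\l x. s)\ u \Rew{\beta} s\isubs{x}{u}$. The whole statement will hinge on simulating this single rule.

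For the second item, I would give essentially a one-line inclusion argument. Since $\{\B,\us,\Gc\} \subseteq \{\beta,\B,\Gc,\us\}$ and both relations are closed under the same $\mterms$-contexts (and identified modulo $\alpha$), any single step $t \Rew{\lm} t'$ is in particular a single step $t \Rew{\ldef} t'$, whence $t \Rewplus{\ldef} t'$. No real induction is needed here beyond observing that the contextual closure coincides on the two sides.

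For the first item I would argue by induction on the derivation of $t \Rew{\ldef} t'$, that is, on the context in which the contracted redex lies. In the contextual cases the reduction takes place in a proper subterm; I apply the induction hypothesis to that subterm and close under the surrounding context, using that $\Rewplus{\lm}$ is preserved by all $\mterms$-contexts. For the base cases I split on which rule fires at the root. If the step uses $\B$, $\Gc$, or $\us$, it is already a single $\lm$-step, so $t \Rewplus{\lm} t'$ trivially. The only interesting base case is $\beta$: here I simulate $(\l x. s)\ u \Rew{\beta} s\isubs{x}{u}$ in two stages, namely $(\l x. s)\ u \Rew{\B} s[x/u]$ followed by $s[x/u] \Rewplus{\lm} s\isubs{x}{u}$ via Full Composition (\cref{l:full-composition}). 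Composing these yields $(\l x. s)\ u \Rewplus{\lm} s\isubs{x}{u}$, as required.

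The argument is routine, and the single point of genuine content is the $\beta$ base case. The main obstacle is just to invoke Full Composition correctly, so that the explicit substitution $[x/u]$ created by the $\B$-step is completely propagated into $s$ to recover exactly the implicit substitution $s\isubs{x}{u}$ produced by $\beta$; this is precisely what \cref{l:full-composition} provides. Everything else follows from the inclusion of rule sets and from both calculi being closed under the same contexts and taken modulo $\alpha$.
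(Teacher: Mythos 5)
Your proof is correct and follows essentially the same route as the paper: the paper also proves the first item by induction on $\Rew{\ldef}$, simulating a $\beta$-step by a $\B$-step followed by several $\us$-steps and one $\Gc$-step (which is exactly what your appeal to Full Composition, \cref{l:full-composition}, packages), and dismisses the second item as a direct inclusion of rule sets.
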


\begin{proof}
The first point can be shown by induction on $\Rew{\ldef}$ using the fact that
any $\beta$ step can be simulated by $\B$ followed
by several $\us$ steps and one $\Gc$ step.
The second point is straightforward.
\end{proof}

We can then conclude that normalisation for
the $\ldef$-calculus and the  $\lm$-calculus are  equivalent:

\begin{corollary}
  \label{c:sn-lm-sn-ldef}
  Let $t$ be a $\mterms$-term.
Then $t \in \SN{\lm}$ if and only if  $t \in \SN{\ldef}$. 
\end{corollary}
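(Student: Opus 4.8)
The plan is to derive both implications by contraposition from the mutual step-by-step simulation established in \cref{l:ldef-from/to-lm}, in the same spirit as the proof of \cref{c:lm-lpar}. The crucial feature is that \emph{both} directions of that lemma simulate a single step by a \emph{non-empty} reduction sequence ($\Rewplus{}$), so that an infinite reduction in one calculus is faithfully transported to an infinite reduction in the other, with infinitely many steps remaining infinitely many.

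For the direction $t \in \SN{\lm} \Rightarrow t \in \SN{\ldef}$, I would argue by contradiction. Suppose $t \notin \SN{\ldef}$, so there is an infinite sequence $t = t_0 \Rew{\ldef} t_1 \Rew{\ldef} t_2 \Rew{\ldef} \cdots$. Applying the first item of \cref{l:ldef-from/to-lm} to each step gives $t_i \Rewplus{\lm} t_{i+1}$; concatenating these non-empty segments yields an infinite $\lm$-reduction starting at $t$, contradicting $t \in \SN{\lm}$. Hence $t \in \SN{\lm}$ implies $t \in \SN{\ldef}$.

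The converse direction $t \in \SN{\ldef} \Rightarrow t \in \SN{\lm}$ is entirely symmetric, using the second item of \cref{l:ldef-from/to-lm}: from an infinite $\lm$-sequence $t = t_0 \Rew{\lm} t_1 \Rew{\lm} \cdots$ one obtains $t_i \Rewplus{\ldef} t_{i+1}$ and hence, by concatenation, an infinite $\ldef$-reduction starting at $t$, contradicting $t \in \SN{\ldef}$.

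There is no genuine obstacle here, since all the work is carried by the mutual simulation lemma; the only point that must be explicitly checked is that both simulations are by non-empty sequences. I note this argument is cleaner than that of \cref{c:lpar-lm}: unlike the $\lpar$ case, we need neither separate $\B$-steps from $\m$-steps nor invoke termination of any subsystem, because the simulation is genuinely non-empty in both directions.
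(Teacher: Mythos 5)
Your proof is correct and is essentially the paper's intended argument: the paper states \cref{c:sn-lm-sn-ldef} as an immediate consequence of the mutual simulation in \cref{l:ldef-from/to-lm}, leaving the proof implicit, and your contraposition argument (concatenating the non-empty simulating sequences in each direction) is exactly how that implication is filled in. Your closing remark is also apt: unlike \cref{c:lpar-lm}, no termination argument for a subsystem is needed here, precisely because both directions of the simulation are by non-empty sequences.
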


\section{Relating Partial to Explicit Substitutions}
\label{s:PSN}

We now relate $\lm$ to a calculus with  explicit (local)
substitutions called $\les$~\cite{K07}, summarised below. We then give
a (dynamic) translation from $\lm$ to $\les$,
showing that each
$\lm$-reduction step can be simulated by a non-empty reduction
sequence in $\les$. This result  will
be used in particular in the forthcoming~\cref{s:charact-les-lm}
and~\cref{s:graphical}
to obtain different normalisation theorems. 

Terms of the $\les$-calculus are   $\mterms$-terms.
Besides $\alpha$-conversion, we consider the equations and reduction
rules in~\cref{f:les-rules}. 
Remark that  working modulo $\alpha$-conversion  allows us
  to assume implicitly some conditions to avoid capture of
    variables such as  for example $x \neq y$ and $y \notin
  \fv(v)$ in the reduction rule $\Lamb$.

\begin{figure}[htp]
\[ \begin{array}{|lcll|}
\hline 
\mbox{{\bf Equations}}: &&&\\
t[x/u][y/v]  & =_{\Com} &   t[y/v][x/u] &  \mbox{ if } y \notin \fv(u)\ \&\
x \notin \fv(v)\\ 
&&&\\
\mbox{{\bf Reduction Rules}}: &&&\\
(\l x. t)\ u    &\Rew{\B}   & t[x/u] & \\
x[x/u]          &\Rew{\Var}& u & \\
t[x/u]          & \Rew{\Gc} & t & \mbox{ if } x \notin \fv(t) \\ 
(t\ u)[x/v]     &\Rew{\Appi} &(t[x/v]\ u[x/v]) & \mbox{ if } x \in \fv(t)\ \&\  x \in \fv(u)\\
(t\ u)[x/v]     &\Rew{\Appii} &(t\ u[x/v])& \mbox{ if } x \notin \fv(t)\ \&\  x \in \fv(u)\\
(t\ u)[x/v]     &\Rew{\Appiii} &(t[x/v]\ u)& \mbox{ if } x \in \fv(t)\ \&\  x \notin \fv(u)\\
(\l y.t)[x/v]   &\Rew{\Lamb} &  \l y . t[x/v] & \\
t[x/u][y/v]   &\Rew{\Compi}& t[y/v][x/u[y/v]] & \mbox{ if } y \in
\fv(u)\ \&\ y \in \fv(t)\\ 
t[x/u][y/v]   &\Rew{\Compii}& t[x/u[y/v]] & \mbox{ if } y \in \fv(u)\ \&\ y
\notin \fv(t)\\
\hline 
\end{array} \]
\caption{Equations and reduction rules for $\les$}
\label{f:les-rules} 
\end{figure}

We  consider the equivalence  relation $\Es$
  generated   by  $\alpha$   and  $\Com$.
  The  \emph{rewriting   system}  containing 
  all the reduction rules except $\B$
   is denoted
  by $\s$. We write $\B\s$
  for $\B  \cup \s$.    
  We note $\Rew{\ALC}$, the \emph{reduction
    relation}  generated by the rules  $\{ \Appi,$
  $\Appii,$ $\Appiii,$ $\Lamb,$ $\Compi,$ $\Compii \}$,
  closed by \emph{all} contexts, and taken 
  \emph{modulo the equivalence relation}  $\Es$.
 The  \defn{reduction
    relations of the $\les$-calculus} are  generated by  $\s$ (resp.
  $\B\s$) \emph{modulo} $\Es$, and  is denoted
  by  $\Rew{\es}$  (resp.  $\Rew{\les}$),  where  {\tt   e}  means
  equational and {\tt  s} means substitution.
As expected, reduction  preserves free variables. 

\begin{lemma}[Preservation of Free Variables]
\label{s:preservation-free-variables} 
Let $t,t'$ be $\mterms$-terms.
If $t \Rew{\les} t'$, then  $\fv(t') \subseteq \fv(t)$. More precisely, 
$t \Rew{\Gc} t'$ implies $\fv(t') \subseteq \fv(t)$, in
  all other cases $\fv(t') = \fv(t)$.
\end{lemma}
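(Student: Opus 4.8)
The plan is to proceed by induction on the derivation of $t \Rew{\les} t'$, unfolding the definition of $\Rew{\les}$ as $\B\s$-reduction modulo the equivalence $\Es$ and closed under all contexts. The argument splits into three independent ingredients: the behaviour of $\fv$ under $\Es$, its behaviour under each root reduction rule, and its monotonicity under contexts. The sharpened statement (equality except for $\Gc$) will be read off by tracking which underlying rule is fired.

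First I would check that $=_{\Es}$ preserves free variables exactly. For $\alpha$ this is standard. For $\Com$, using $\fv(s[z/w]) = (\fv(s) \setminus \set{z}) \cup \fv(w)$ together with the side conditions $y \notin \fv(u)$ and $x \notin \fv(v)$, both $\fv(t[x/u][y/v])$ and $\fv(t[y/v][x/u])$ compute to $(\fv(t) \setminus \set{x,y}) \cup \fv(u) \cup \fv(v)$, so they coincide. Hence the two $\Es$-steps bracketing a $\B\s$-step never alter $\fv$.

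Next, the root cases, checked rule by rule. For $\B$ and $\Var$ equality of $\fv$ on the two sides is immediate. For $\Gc$, since $x \notin \fv(t)$ we get $\fv(t[x/u]) = \fv(t) \cup \fv(u) \supseteq \fv(t)$, which is the unique source of a strict inclusion. For $\Appi$ both sides reduce to $((\fv(t) \cup \fv(u)) \setminus \set{x}) \cup \fv(v)$; for $\Appii$ and $\Appiii$ equality uses the side condition $x \notin \fv(t)$ (resp. $x \notin \fv(u)$); for $\Lamb$ it uses the implicit $\alpha$-conditions $x \neq y$ and $y \notin \fv(v)$; and for $\Compi$ (resp. $\Compii$) it uses $x \notin \fv(v)$ (resp. $y \notin \fv(t)$) supplied by the rule or by working modulo $\alpha$. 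I expect these composition cases, together with $\Lamb$, to be the only places requiring genuine care, precisely because the equality relies on the capture-avoidance conditions; all remaining cases are direct set-theoretic identities.

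Finally, I would establish monotonicity of $\fv$ under contexts: since every clause defining $\fv$ is built solely from union and removal of a fixed bound variable, the relation $\fv(s') \subseteq \fv(s)$ (resp. $\fv(s') = \fv(s)$) implies $\fv(C\cwc{s'}) \subseteq \fv(C\cwc{s})$ (resp. equality), by a straightforward induction on the context $C$. Combining the three ingredients: given $t =_{\Es} s \Rew{\B\s} s' =_{\Es} t'$ with the root step applied inside a context, the two $\Es$-steps preserve $\fv$ exactly, the contextual closure propagates whatever relation the root rule produces, and that relation is equality unless the root rule is $\Gc$. This yields $\fv(t') = \fv(t)$ in every case except $\Gc$, where $\fv(t') \subseteq \fv(t)$, as required.
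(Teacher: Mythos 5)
Your proof is correct. The paper in fact states this lemma without giving a proof (its analogue for $\lm$, the earlier Preservation of Free Variables lemma, is discharged with a one-line ``induction on the reduction''), and your argument --- exact preservation of $\fv$ under $=_{\Es}$ (including the $\Com$ computation), a rule-by-rule check at the root where the capture-avoidance side conditions carry the $\Lamb$, $\Compi$ and $\Compii$ cases, and monotonicity of $\fv$ under contexts --- is precisely the routine induction the authors leave implicit, with $\Gc$ correctly isolated as the only source of strict inclusion.
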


We write $t [\vec{x}/\vec{u}]$ for $t [x_1/u_1] \ldots [x_n/u_n]$
where $\vec{x} = x_1, \ldots, x_n$, $\vec{u} = u_1, \ldots, u_n$,
and $x_i \notin \fv(u_j)$  where $i,j \in [1,\ldots,n]$. We write
$x_i$ to  denote an  arbitrary member of  $x_1, \ldots,  x_n$ and
similarly for  $u_i$. The concatenation of  two vectors $\vec{x}$
and   $\vec{y}$   is   written    as   $\vv{xy}$.   We   let   $t
[\vec{x}/\vec{u}]$ denote the  $=_\Com$-equivalence class which
arises by the reordering of the independent substitutions.

We extend the set  of variables with marked variables $\fresh{x},
\fresh{y}$, \emph{etc.}   This will be used to  denote binders of
certain garbage substitutions which  will be fresh \emph{i.e.} if
$t = v[\fresh{x}/u]$ then $\fresh{x} \notin \fv(v)$.
 Remark that $\Rew{\ALC}$ may only propagate 
garbage substitutions  through
abstractions  and  not through  applications  or inside  explicit
substitutions.

\begin{restatable}{lemma}{alclocallyconfluent}
The reduction relation $\Rew{\ALC}$ is locally confluent and locally coherent.
  \label{l:ALC-lc-lc}
\end{restatable}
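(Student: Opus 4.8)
The plan is to establish both properties by a critical-pair analysis in the rewriting-modulo framework of~\cite{JK86}. Here \emph{local confluence} means that whenever $t_1 \LRew{\ALC} t \Rew{\ALC} t_2$ there is $t_3$ with $t_1 \Rewn{\ALC} t_3$ and $t_2 \Rewn{\ALC} t_3$, while \emph{local coherence} means that whenever $t_1 \LRew{\ALC} t =_{\Es} t_2$ (one equational step) there are $t_3, t_4$ with $t_1 \Rewn{\ALC} t_3$, $t_2 \Rewn{\ALC} t_4$ and $t_3 =_{\Es} t_4$. Since the left-hand sides of $\{\Appi, \Appii, \Appiii, \Lamb, \Compi, \Compii\}$ are left-linear and $=_{\Es}$ is generated by $\alpha$ together with the linear permutation $\Com$, it suffices by the standard lifting argument to close the root overlaps; the general case follows by closure under contexts, provided the free-variable side conditions are tracked when a redex is duplicated or displaced. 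I would cite \cref{s:preservation-free-variables} as the tool that keeps these side conditions under control, since it pins down exactly how $\fv$ changes under each rule.

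For local confluence I would first dispose of the trivial overlaps (disjoint redexes, or one lying strictly inside an argument of the other), which join in one step on each side. The genuine root critical pairs all occur in terms with two or three consecutive explicit substitutions. The first family pits a $\Comp$ rule against an $\App$ or $\Lamb$ rule on a term $(t_1\ t_2)[x/u][y/v]$ or $(\l z. t_1)[x/u][y/v]$: one may fire $\Compi$/$\Compii$ on the outer pair, or fire an $\App$/$\Lamb$ rule on the inner redex $(t_1\ t_2)[x/u]$. I would show both paths converge, modulo $\Es$, on the term obtained by distributing $[x/u]$ and $[y/v]$ over the subterms and then composing; the side conditions match because $y \in \fv(u)$ forces $y$ to stay free in each $t_i[x/u]$, so the relevant $\App$ and $\Comp$ rules remain applicable after commutation. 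The second family consists of two $\Comp$ steps on $t[x/u][y/v][z/w]$, where I would check that the two ways of contracting the composition meet after a final reordering by $\Com$.

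For local coherence I would enumerate the overlaps of each rule with the equation $\Com$, which relates $t[x/u][y/v]$ and $t[y/v][x/u]$ under $y \notin \fv(u)$ and $x \notin \fv(v)$. When $t$ is an application or abstraction, firing an $\App$/$\Lamb$ rule on one side and re-firing it on the $\Com$-equivalent side yields results that are again $\Com$-equivalent, since distribution of a substitution over $t_1\ t_2$ commutes with permuting independent substitutions. The delicate overlaps are those between a $\Comp$ rule and $\Com$ on three substitutions: a $\Comp$ step needs a dependency ($y \in \fv(u)$) whereas $\Com$ needs independence, so I would case-split on which pair of substitutions each operation touches and verify that after closing the diagram the surviving substitutions can still be reordered by $\Com$.

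The main obstacle I expect is bookkeeping rather than anything conceptual: there are many cases, and each one turns on checking that the side conditions $x \in \fv(t)$, $y \in \fv(u)$, $x \notin \fv(v)$, and their negations, survive the duplication and reordering of substitutions. \Cref{s:preservation-free-variables} is precisely what makes this tractable, since with it every critical pair closes in at most a couple of $\Rew{\ALC}$ steps on each side, possibly up to one final application of $\Com$.
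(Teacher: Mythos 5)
Your proposal takes essentially the same route as the paper's own proof: both reduce the statement to closing the root critical pairs modulo $\Es$ in the rewriting-modulo framework of~\cite{JK86}, with exactly the same families of overlaps --- $\Comp$ rules against $\App$/$\Lamb$ rules on $(t_1\ t_2)[x/u][y/v]$ or $(\l z.t_1)[x/u][y/v]$ and $\Comp$ against $\Comp$ on three stacked substitutions for local confluence, and rule-versus-$\Com$ overlaps (necessarily spanning three substitutions in the $\Comp$ case, since $\Comp$ and $\Com$ have contradictory side conditions) for local coherence --- each closed by case analysis on the free-variable side conditions. The one loose point is your claim that an $\App$/$\Lamb$ rule can always be re-fired on the $\Com$-permuted side: in the subcase where the permuted substitution is garbage for the application (e.g.\ $y \notin \fv(t_1\ t_2)$), no $\ALC$ rule applies at that position and the diagram must instead be closed using the fact that $\Rew{\ALC}$ rewrites $\Es$-classes (the paper itself glosses this subcase with a $\Gc$ step, which strictly lies outside $\ALC$), which your stated definitions do permit.
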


\begin{proof} See~\cref{app:relation-les}.
\end{proof}

\begin{lemma}
\label{l:unique-ALC-nf}
The $\ALC$-normal forms of terms exist and are unique modulo $\Es$. 
\end{lemma}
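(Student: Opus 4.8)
The plan is to follow exactly the strategy used for \cref{l:unique-es-nf}, replacing the substitution system $\m$ by $\ALC$ and exploiting that local confluence and local coherence are already supplied by \cref{l:ALC-lc-lc}. Concretely, I would (i) prove that $\Rew{\ALC}$ is terminating by exhibiting a measure on $\mterms$-terms that is invariant under $\EqEs$ and strictly decreases under every $\ALC$-rule; (ii) combine this with the local confluence and local coherence of \cref{l:ALC-lc-lc}; and (iii) invoke the theorem of~\cite{JK86} on rewriting modulo an equational theory, which then yields confluence of $\Rew{\ALC}$ modulo $\Es$. Existence of $\ALC$-normal forms follows from termination, and uniqueness modulo $\Es$ follows from confluence modulo $\Es$.

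The only real work is step (i), and here the subtlety is that the rules $\Appi$ and $\Compi$ \emph{duplicate} the body of a substitution, so a plain size measure increases. To handle this I would assign to each explicit substitution $[x/u]$ occurring in a term a weight measuring how far $[x/u]$ still has to travel towards the occurrences it feeds, i.e.\ a weight depending on the structure of the subterm that the substitution immediately governs, and then take the multiset of these weights over all substitutions in the term, ordered by the multiset extension of the order on $\nat$. Each propagation rule $\Appi$, $\Appii$, $\Appiii$, $\Lamb$ strictly lowers the governing subterm of the migrating substitution, so even when $\Appi$ replaces one substitution by two copies, each of the two new substitutions governs a strictly smaller subterm and is therefore strictly smaller in the ordering; likewise $\Compi$ and $\Compii$ push the inner substitution strictly inward. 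One checks that the measure is unchanged by $\alpha$ and by the commutation equation $\Com$, so that it is well defined on $\EqEs$-classes and the strict decrease is compatible with rewriting modulo $\Es$. This is the analogue, for $\ALC$, of the termination lemma used in the proof of \cref{l:unique-es-nf}.

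With termination in hand the conclusion is immediate: by \cref{l:ALC-lc-lc} the relation is locally confluent and locally coherent, hence by~\cite{JK86} it is Church--Rosser modulo $\Es$, so every term has a unique $\ALC$-normal form modulo $\Es$. I expect the termination argument of step (i) to be the main obstacle, precisely because of the duplicating rules $\Appi$ and $\Compi$; once the weight measure is arranged so that every migrating substitution comes to govern a strictly smaller subterm, the multiset ordering absorbs the duplication and the remaining case analysis over the six $\ALC$-rules is routine.
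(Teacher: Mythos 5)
Your steps (ii) and (iii) coincide with the paper's proof, but step (i) is where your argument has a genuine gap --- and it is also unnecessary work. The gap: you assign to each substitution occurrence a weight depending only on the subterm it governs, and you claim the multiset order absorbs duplication because each of the two new substitutions governs a strictly smaller subterm. That accounts only for the copies of the \emph{migrating} substitution. But $\Appi$ duplicates the entire body $v$ of $[x/v]$, and $\Compi$ duplicates the entire body $v$ of $[y/v]$; every substitution \emph{nested inside} $v$ is thereby duplicated as well, and each such copy governs exactly the same subterm as before, hence carries the \emph{same} weight, not a smaller one. Concretely, in $(x\ x)[x/v] \Rew{\Appi} (x[x/v]\ x[x/v])$ where $v$ contains a substitution whose weight exceeds that of $x\ x$, the multiset loses the element for $[x/v]$ but gains a duplicate of a larger element, so the multiset extension does not decrease. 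Repairing this requires a measure that is multiplicative in the bodies of substitutions (compare $\size$ in \cref{app:confluence}, whose clause for $t[x/u]$ contains the product $\mul{x}{t}\cdot\size(u)$), not a plain multiset of ``distances still to travel''.

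The paper sidesteps the termination question entirely: the rules of $\ALC$ form a subset of those of $\s$, and $\Rew{\es}$ (that is, $\s$ modulo $\Es$) is terminating by~\cite{K07}; since a subrelation of a terminating relation is terminating, $\Rew{\ALC}$ is terminating with no new measure at all. With termination in hand, \cref{l:ALC-lc-lc} and the criterion of~\cite{JK86} give confluence modulo $\Es$, hence existence and uniqueness of $\ALC$-normal forms modulo $\Es$, exactly as in your steps (ii) and (iii). So your overall architecture is right, but the one piece you identified as ``the only real work'' is both flawed as stated and avoidable by citing the known termination of the ambient calculus.
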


\begin{proof}
  The system $\Rew{\es}$ is terminating~\cite{K07} and so
  in particular $\Rew{\ALC}$
  turns out to be  terminating.
  By~\cref{l:ALC-lc-lc}, $\Rew{\ALC}$ is locally confluent and locally coherent.
  Therefore by~\cite{JK86}, $\Rew{\ALC}$ is confluent and hence $\ALC$-normal
  forms are unique modulo $\Es$-equivalence.
\end{proof}

From now on, we can assume the existence
of a \defn{function $\ALC$} computing
the (unique) $\ALC$-normal form
of a $\mterms$-term, modulo $\Es$.
This allows us in particular
to define   the following     \defn{translation $\nombretra$}  from  $\mterms$-terms  to
$\ALC$-normal forms:  
$$
\begin{array}{lcll}
\tra{x} & :=& x\\
\tra{\l x. t} & :=& \l x. \tra{t}\\
\tra{t\ u} & :=& (\tra{t}\ \tra{u})[\fresh{y}/\tra{u}] & \mbox{ where } \fresh{y} \mbox{ is fresh}\\
\tra{t[y/u]} & :=& \ALC(\tra{t}[y/\tra{u}]) & \mbox{ if $y \notin \fv(t)$}\\
\tra{t[y/u]} & :=& \ALC(\tra{t}[y/\tra{u}][\fresh{y}/\tra{u}]) & \mbox {
  if $y \in \fv(t)$ where $\fresh{y}$ is fresh}
\end{array}
$$

Remark that the translation $\tra{\_}$ preserves free variables.  Remark also
that the translation of a closure $\tra{t[y/u]}$ with $y \in \fv(t)$
introduces a garbage substitution with binder $\fresh{y}$. The
translation of $\tra{t\ u}$ similarly introduces garbage. Intuitively,
this should not interfere with the PSN property as: i) the body of the
garbage is strongly normalising exactly when the body of the regular
substitution is; and ii) the garbage substitution can only interact
with substitutions above it as can the regular substitution so that
any resulting infinite sequences can occur in the regular substitution
as well.  We call these garbage substitutions which are introduced by
the translation \emph{idle substitutions} whilst the other
substitutions are called \emph{mobile substitutions}.


\report{The  translation  $\tra{\_}$  duplicates  all non-garbage  substitutions  and
function arguments, creating idle copies as garbage substitution.
The reasoning is as follows.

To    simulate     the   partial   substitution    
$\Rew{\us}$ of the  $\lm$-calculus, we use the  (local) subcalculus $\ALC$.
$\ALC$-normal forms are required in order
to prove a simulation (otherwise the reduction $((x\ x)\ x)[x/y] \Rew{\us} 
((y\ x)\ x)[x/y]$ cannot be simulated).

Now  consider the (partial)  reduction $x[x/z]  \Rew{\us} z[x/z]$
versus  the  (local) reduction  $x[x/z]  \Rew{\Var} z$.   Partial
reduction does  not remove the explicit  substitution $[x/z]$ but
local reduction  will correctly do  so as no free  occurrences of
the bound variable  $x$ lie beneath the body  of the substitution
$z$.   Therefore,   we   cannot  immediately   simulate   partial
substitution.   A   naive  solution  would  be   to  compose  the
translation with a reduction  to the garbage-free normal form but
this clearly  fails in the  general case; we should  instead keep
the garbage.

Finally,    consider    the     sequence    $(x\    x\    x)[x/y]
\Rewn{\lm} (y\ y\  y)[x/y]$. From the discussion above,
the  translation needs  to  both push  the explicit  substitution
inside  the  term  and  also  keep it  outside  awaiting  garbage
collection. Our solution is  to simply duplicate the substitution
in such a way as to solve this problem.

In  short, the  mobile substitutions  allow us  to  simulate most
$\Rew{\us}$   reductions  but   idle   substitutions  are
required to  simulate $\Rew{\us}$  reductions where the  last free
occurrence of the bound variable is replaced.}



The reader should notice that 
if $t$ is pure then $t = \Gc(\tra{t})$.
Also, if $t$ is in $\Rew{\lm}$-normal form, then $\Gc(\tra{t})$
is in $\Rew{\les}$-normal form so that $\tra{t}$ turns out to be
in $\SN{\les}$.

\ignore{
\begin{lemma}
\label{l:tra-pure-is-pure}
If $t$ is pure then:
\begin{enumerate}
\item $t = \Gc(\tra{t})$;
\item If $t$ is in $\Rew{\lm}$-normal form then $\Gc(\tra{t})$ is in $\Rew{\les}$-normal form; and
\item If $t$ is in $\Rew{\lm}$-normal form then $\tra{t} \in \SN{\les}$.
\end{enumerate}
\end{lemma}
\begin{proof}
\begin{enumerate}
\ShortLongProof
{
\item Induction on the structure of $t$.
\item Prove that $\tra{t}$ only has $\Gc$-redexes by induction on the structure of $t$. As $\Gc$-reductions do not create new redexes, any $\Rewn{\Gc}$-reduct of $\tra{t}$ has only $\Gc$-redexes. The result follows by $\Rew{\Gc}$ SN.
\item Follows from (2).
}
{
\item We prove by induction on the structure of $t$.
\begin{itemize}
\item $t = x$. Trivial.
\item $t = \l x.t'$. 

By \ih, $t' = \Gc(\tra{t'})$. Therefore, $t = \l x.t' = \l x.\Gc(\tra{t'}) = \Gc(\tra{t})$.
\item $t = (u\ v)$. 

By \ih, $u = \Gc(\tra{u})$ and $v = \Gc(\tra{v})$.  Therefore, $t = (u\ v) = (\Gc(\tra{u})\  \Gc(\tra{v})) = \Gc(\tra{u}\  \tra{v}) = \Gc((\tra{u}\  \tra{v})[z/\tra{u}]) = \Gc(\tra{t})$.
\end{itemize}
\item We first show that $\tra{t}$ has no $\B$-redexes by inducting over the structure of $t$.
\begin{itemize}
\item $t = x$. Trivial.
\item $t = \l x.t'$. $\tra{t} = \l x.\tra{t'}$. By \ih, $\tra{t'}$ has no $\B$-redexes and so neither has $\tra{t}$.
\item $t = (u\ v)$. $\tra{t} = (\tra{t}\ \tra{u})[z/\tra{u}]$.  By \ih, $\tra{u}$ and $\tra{v}$ have no $\B$-redexes. Since $t = (u\ v)$ has no $\B$-redexes, $u \neq \l x.u'$ so $\tra{u} \neq \l x.\tra{u'}$. Therefore, $\tra{t}$ has no $\B$-redexes.
\end{itemize}
We now argue about what other redexes $\tra{t}$ does have. All substitutions are garbage so there are no $\Var$ redexes. As the substitutions also lie above applications, there are no $\ALC$ redexes. Therefore $\tra{t}$ can only have $\Gc$ redexes and the same holds for any $\Rewn{\Gc}$-reduct of $\tra{t}$. The result follows by $\Rew{\Gc}$ SN.
\item
Follows from (2).
}
\end{enumerate}
\end{proof}
}

\ignore{
  \begin{lemma}
\label{l:sim-replace-var-with-subterm}
Let $t,t',v$ be $\mterms$-terms.
If $t \Rew{\les} t'$ then $t\isubs{z}{v} \Rew{\les} t'\isubs{z}{v}$.
\end{lemma}

\begin{proof}
By induction on the definition of $t  \Rew{\les}  t'$.
\end{proof}
}

\ignore{
\begin{proof}

By induction on the recursive  definition of $t
  \Rew{\les}  t'$.  By $\alpha$-equivalence,  we can  assume that
  $\fv(v)$  and $\bv(t)$  are distinct  sets and  that  $z \notin
  \bv(t)$.  We  first  consider  the  base  case  of  the
  induction  where reductions  $t \Rew{\les}  t'$ occur  at the root.

\begin{itemize}
\item $t = (\l x. u_1)\ u_2 \Rew{\les} u_1[x/u_2] = t'$.

$t\isubs{z}{v} = (\l x. u_1\isubs{z}{v})\ u_2\isubs{z}{v} \Rew{\les} u_1\isubs{z}{v}[x/u_2\isubs{z}{v}]$.

\item $t = x[x/u] \Rew{\les} u = t'$.

$t\isubs{z}{v} = (x[x/u])\isubs{z}{v} = x[x/u\isubs{z}{v}] \Rew{\les} u\isubs{z}{v}$.

\item $t = u_1[x/u_2] \Rew{\les} u_1 = t'$.

$t\isubs{z}{v} = (u_1[x/u_2])\isubs{z}{v} = u_1\isubs{z}{v}[x/u_2\isubs{z}{v}] \Rew{\les} u_1\isubs{z}{v}$.

\item $t = (u_1\ u_2)[x/u_3] \Rew{\les} (u_1[x/u_3]\ u_2[x/u_3]) = t'$.

$t\isubs{z}{v} = (u_1\isubs{z}{v}\
u_2\isubs{z}{v})[x/u_3\isubs{z}{v}]$\\  
$\Rew{\les} (u_1\isubs{z}{v}[x/u_3\isubs{z}{v}]\ u_2\isubs{z}{v}[x/u_3\isubs{z}{v}])$.

\item $t = (u_1\ u_2)[x/u_3] \Rew{\les} (u_1[x/u_3]\ u_2) = t'$. Similar.

\item $t = (u_1\ u_2)[x/u_3] \Rew{\les} (u_1\ u_2[x/u_3]) = t'$. Similar.

\item $t = (\l y.u_1)[x/u_2] \Rew{\les} \l y.u_1[x/u_2] = t'$.

$t\isubs{z}{v} = (\l y.u_1\isubs{z}{v})[x/u_2\isubs{z}{v}] \Rew{\les} (\l y.u_1\isubs{z}{v}[x/u_2\isubs{z}{v}])$.

\item $t = u_1[x/u_2][y/u_3] \Rew{\les} u_1[y/u_3][x/u_2[y/u_3]] = t'$.

$t\isubs{z}{v} = u_1\isubs{z}{v}[x/u_2\isubs{z}{v}][y/u_3\isubs{z}{v}]$\\
$\Rew{\les} u_1\isubs{z}{v}[y/u_3\isubs{z}{v}][x/u_2\isubs{z}{v}[y/u_3\isubs{z}{v}]]$.

\item $t = u_1[x/u_2][y/u_3] \Rew{\les} u_1[x/u_2[y/u_3]] = t'$. Similar.
\end{itemize}

The inductive cases are the following:

\begin{itemize}
\item $t = (u_1\ u_2) \Rew{\les} (u'_1\ u_2) = t'$.

$t\isubs{z}{v} = (u_1\isubs{z}{v}\ u_2\isubs{z}{v}) \Rew{\les}$ (\ih)  $(u'_1\isubs{z}{v}\ u_2\isubs{z}{v})$.

\item $t = (u_1\ u_2) \Rew{\les} (u_1\ u'_2) = t'$. Similar.

\item $t = \l y.u_1 \Rew{\les} \l y.u'_1 = t'$. Similar.

\item $t = u_1[x/u_2] \Rew{\les} u'_1[x/u_2] = t'$. Similar.


\item $t = u_1[x/u_2] \Rew{\les} u_1[x/u_2'] = t'$. Similar. 
\end{itemize}
\end{proof}

}

\ignore{
  \begin{corollary}
Let $t,t',v$ be $\mterms$-terms.
If $t \Rew{\les}^n t'$ then $t\isubs{z}{v} \Rew{\les}^n
t'\isubs{z}{v}$, for $n \geq 0$.
\end{corollary}
}
\ignore{\begin{proof}
Remark that if $t = t_1[x/u_1][y/u_2] =_{\Es} t_2[y/u_2][x/u_1] = t'$ then $t\isubs{z}{v}  =_{\Es} t'\isubs{z}{v}$ and use~\cref{l:sim-replace-var-with-subterm} to conclude.
\end{proof}
}

\ignore{
  As a particular case of the previous corollary, we have 
$t\isubs{z}{v} \Rewn{\ALC} \ALC(t)\isubs{z}{v}$.

\begin{lemma}
\label{l:isubs-ALCnf-correspondence}
Let $t,v$ be $\mterms$-terms.
Then $\ALC(t)\isubs{z}{\fresh{z}[\fresh{z}/v]}  = \ALC(t[z/v])$  if $z
\in \fv(t)$ and $v$ in $\ALC$-normal form.
\end{lemma}

\begin{proof}
By induction on  the structure of $t$.
\end{proof}
}
\ignore{
\begin{proof}
By induction on  the structure of $t$.
By $\alpha$-equivalence, we can assume that $\fv(v)$ and $\bv(t)$ are distinct sets.
\begin{itemize}

\item $t = z$. $\ALC(t)\isubs{z}{\fresh{z}[\fresh{z}/v]} = \fresh{z}[\fresh{z}/v] = \ALC(z[z/v])$.

\item $t = t_1\ t_2$. There are three cases depending on whether $z \in \fv(t_1)$, $z \in \fv(t_2)$, or both. We treat the case where $z \in \fv(t_1)$ and $z \notin \fv(t_2)$.

$\ALC(t)\isubs{z}{\fresh{z}[\fresh{z}/v]} = (\ALC(t_1)\isubs{z}{\fresh{z}[\fresh{z}/v]}\ \ALC(t_2)\isubs{z}{\fresh{z}[\fresh{z}/v]})$\\ $=$ (\ih) $\:(\ALC(t_1[z/v])\ \ALC(t_2)) = \ALC((t_1\ t_2)[z/v])$.

\item $t = \l y.t_1$. 
$\ALC(t)\isubs{z}{\fresh{z}[\fresh{z}/v]} = \l y.\ALC(t_1)\isubs{z}{\fresh{z}[\fresh{z}/v]}$\\
$=$ (\ih) $\:\l y. \ALC(t_1[z/v]) = \ALC((\l y. t_1)[z/v])$.

\item $t = t_1[x/t_2]$. There are three cases depending on whether $z \in \fv(t_1)$, $z \in \fv(t_2)$, or both. We treat the case where $z \in \fv(t_1)$ and $z \notin \fv(t_2)$. Note that $\ALC$ preserves free names.
  
  $\ALC(t[z/v])  =    \ALC(t_1[x/t_2][z/v])  = \ALC(t_1[z/v][x/t_2])\\ 
  =\ALC(\ALC(t_1[z/v])[x/t_2])         =         \mbox{(\ih)         }
  \:\ALC(\ALC(t_1)\isubs{z}{\fresh{z}[\fresh{z}/v]}[x/t_2])\\ 
  =\ALC(t_1\isubs{z}{\fresh{z}[\fresh{z}/v]}[x/t_2])                 =
  \ALC(t_1[x/t_2])\isubs{z}{\fresh{z}[\fresh{z}/v]}.$
\end{itemize}

\end{proof}
}

\ignore{
  \begin{lemma}
\label{l:adding-independent-subsi}
Let $t,t',v$ be $\mterms$-terms in $\ALC$-normal form.
If $t \Rewplus{\les} t'$  then:
\begin{enumerate}
\item \label{l:adding-independent-subsia} if $z \notin \fv(t)$ then $\ALC(t[z/v]) \Rewplus{\les} \ALC(t'[z/v])$;
\item \label{l:adding-independent-subsib} if $z \in \fv(t)$ and $z \in \fv(t')$ then $\ALC(t[z/v]) \Rewplus{\les} \ALC(t'[z/v])$;
\item \label{l:adding-independent-subsic} if $z \in \fv(t)$ and $z \notin \fv(t')$ then $\ALC(t[z/v]) \Rewplus{\les} t'$.
\end{enumerate}
\end{lemma}
\begin{proof} Note that by our conventions, $z \notin \bv(t)$.

\begin{enumerate}
\item Let $z  \notin \fv(t)$. Write $t$ as  $\l \vec{y}.s$, where
  $s$ is not a $\l$-abstraction.  As $t \Rewplus{\les} t'$ and $t
  = \l \vec{y}.s$, it must be that $s \Rewplus{\les} \l
  \vec{x}.s'$ with  $t' = \l \vec{y}. \l  \vec{x}.s'$, where $s'$
  is not  a $\l$-abstraction and  $s,s'$ in $\ALC$-normal
  form.     Therefore,   $\ALC(t[z/v])   =    \l   \vec{y}.s[z/v]
  \Rewplus{\les} \l   \vec{y}.(\l   \vec{x}.s')[z/v]
  \Rewn{\les}   \l   \vec{y}.\l   \vec{x}.s'[z/v]   =
  \ALC(t'[z/v])$.

\item[2/3.]

  We define $\U(p)  = p\isubs{z}{\fresh{z}[\fresh{z}/v]}$ for any
  $p$ in the sequence  $t \Rewplus{\les} t'$.
  Then, from the sequence
  $t = t_1 \Rew{\les} t_2 \Rew{\les} \ldots  \Rew{\les} t'$
  we can build by~\cref{l:sim-replace-var-with-subterm}
  a sequence of the form 
  $\U(t_1) \Rew{\les} \U(t_2) \Rew{\les}\ldots  \Rew{\les} \U(t')$. 

  We have now two cases: \\
  
If $z \in \fv(t')$ then 
$\U(t_1) = t\isubs{z}{\fresh{z}[\fresh{z}/v]} = \ALC(t[z/v]) \Rewplus{\les} \U(t') = t'\isubs{z}{\fresh{z}[\fresh{z}/v]} = \ALC(t'[z/v])$, using~\cref{l:isubs-ALCnf-correspondence} twice.

If $z \notin \fv(t')$ then $\U(t_1) = t\isubs{z}{\fresh{z}[\fresh{z}/v]} = \ALC(t[z/v]) \Rewplus{\les} \U(t') = t'\isubs{z}{\fresh{z}[\fresh{z}/v]} = t'$ using~\cref{l:isubs-ALCnf-correspondence} once.
\end{enumerate}
\end{proof}
}

\ignore{

\begin{restatable}{lemma}{technicallemmaalc}
Let $t,u,u'$ be $\mterms$-terms.
Let $u$ be in $\ALC$-normal form.
If $u \Rew{\les} u'$  then
$\ALC(t[z/u]) \Rewplus{\les} \ALC(t[z/u'])$.
\label{l:reduce-copies-of-subs}
\end{restatable}

\begin{proof}
Details in~\cref{app:relation-les}.
\end{proof}

\begin{corollary}
\label{c:reduce-copies-of-subs}
Let $t,u,u'$ be $\mterms$-terms. Let $u$ be in $\ALC$-normal form.
If $u \Rewplus{\les} u'$   then
$$\ALC(t[z/u]) \Rewplus{\les} \ALC(t[z/u']).$$
\end{corollary}

}

\ignore{
\begin{lemma}
\label{l:non-interfering-garbage}
Let $t,u,u'$ be $\mterms$-terms.
If $\fresh{x}$ is fresh for $t$ and $t'$ and 
 $\ALC(t[\fresh{x}/u]) \Rewplus{\les} \ALC(t'[\fresh{x}/u])$ 
 then  $\ALC(t) \Rewplus{\les} \ALC(t')$. \delia{pareceria ser *}
\end{lemma}
\begin{proof}
  The garbage substitution $[\fresh{x}/u]$ may not be copied during the sequence\linebreak $\ALC(t[\fresh{x}/u]) \Rewplus{\les} \ALC(t'[\fresh{x}/u])$.
  Moreover, $\ALC(t'[\fresh{x}/u])$
  does not erase any substitution,
  then this same sequence cannot discard $[\fresh{x}/u]$. It occurs exactly once in each term of the sequence. For any such term $p$, we define $\Er(p)$ as the term obtained by dropping the substitution $[\fresh{x}/u]$.

The proof  follows from the diagram  below where the  top line is
the       sequence      $t_1= \ALC(t[\fresh{x}/u])      \Rewplus{\les}
\ALC(t'[\fresh{x}/u])=t_n$.  The  squares   in  the  diagram  can  be
completed as either $\mbox{i) } t_i = \l \vec{z}.v_i[\fresh{x}/u]
\Rew{\les} \l \vec{z}.v_{i+1}[\fresh{x}/u]  = t_{i+1} \mbox{ with
}  v_i \Rew{\les}  v_{i+1}$ so  that $\Er(t_i)  =  \l \vec{z}.v_i
\Rew{\les}  \l  \vec{z}.v_{i+1}$;  or   $\mbox{ii)  }  t_i  =  \l
\vec{z}.(\l  w.  v_i)[\fresh{x}/u]  \Rew{\les} \l  \vec{z}.\l  w.
v_i[\fresh{x}/u] = t_{i+1}$   so  that $\Er(t_i) = \l \vec{z}.\l w.
v_i = \Er(t_{i+1})$.


\[ \begin{array}{lllllllllllllll}
  t_1 & \Rew{\les} & t_2 & \Rew{} & \ldots & \Rew{} & t_n\\
  \\
  \Er(t_1) & \Rewn{\les} & \Er(t_2) & \Rewn{\les} & \ldots & \Rewn{\les} & \Er(t_n)\\
  \end{array} \]


\end{proof}
}

\begin{lemma}
\label{l:tra-preserves-es}
Let $t, t'$ be $\mterms$-terms. 
If $t =_{\Es} t'$ then $\tra{t} =_{\Es} \tra{t'}$.
\end{lemma}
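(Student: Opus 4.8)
The plan is to use that $=_{\Es}$ is the congruence generated by the two base relations $\alpha$ and $\Com$, so that it suffices to check that $\tra{\_}$ sends a single base step, performed in an arbitrary context, to an $\Es$-equivalence. I would argue by induction on the derivation of $t =_{\Es} t'$. Reflexivity, symmetry and transitivity propagate trivially. For the context-closure (monotonicity) steps I need that $\tra{\_}$ is itself a congruence modulo $\Es$, i.e. that $\tra{s} =_{\Es} \tra{s'}$ implies $\tra{\l x.s} =_{\Es} \tra{\l x.s'}$, $\tra{s\,r} =_{\Es} \tra{s'\,r}$, $\tra{s[z/r]} =_{\Es} \tra{s'[z/r]}$, and the symmetric variants. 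These rest on two observations: both $=_{\Es}$ and $\tra{\_}$ preserve free variables (for $=_{\Es}$ because $\alpha$ and $\Com$ do; for $\tra{\_}$ as already remarked), so that $\fv(s) = \fv(s')$ and hence the branch of the definition of $\tra{\_}$ taken at each substitution node agrees on both sides; and the auxiliary fact (P1) below, combined with the fact that $=_{\Es}$ is a congruence.

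Two auxiliary facts drive the argument. (P1): if $a =_{\Es} b$ then $\ALC(a) =_{\Es} \ALC(b)$; this is immediate from \cref{l:unique-ALC-nf}, since $\Rew{\ALC}$ is taken modulo $\Es$ and $\Es$-equivalent terms therefore have $\Es$-equivalent normal forms. (P2): $\ALC(\ALC(a)[\vec z/\vec b]) =_{\Es} \ALC(a[\vec z/\vec b])$ for any string of substitutions $[\vec z/\vec b]$; this holds because $a \Rewn{\ALC} \ALC(a)$ and $\Rew{\ALC}$ is closed under all contexts, whence $a[\vec z/\vec b] \Rewn{\ALC} \ALC(a)[\vec z/\vec b]$, so the two sides share the same $\ALC$-normal form modulo $\Es$. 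Fact (P2) lets me discard all the inner $\ALC$-calls that the recursive definition of $\tra{\_}$ inserts, flattening a nested translation into a single $\ALC$ applied to one string of substitutions.

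The heart of the proof is the $\Com$ base case: given $t[x/u][y/v] =_{\Com} t[y/v][x/u]$ with $y \notin \fv(u)$ and $x \notin \fv(v)$, show $\tra{t[x/u][y/v]} =_{\Es} \tra{t[y/v][x/u]}$. First a free-variable computation: since $y \neq x$ and $y \notin \fv(u)$, we have $y \in \fv(t[x/u])$ precisely when $y \in \fv(t)$, and symmetrically $x \in \fv(t[y/v])$ precisely when $x \in \fv(t)$; consequently both sides introduce the idle substitution $[\fresh x/\tra u]$ exactly when $x \in \fv(t)$ and $[\fresh y/\tra v]$ exactly when $y \in \fv(t)$, while the regular $[x/\tra u]$ and $[y/\tra v]$ are always present. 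Unfolding the definition and applying (P2) to pull the inner $\ALC$ out, each side becomes $\ALC(\cdot)$ applied to $\tra t$ followed by the very same collection of substitutions drawn from $\{[x/\tra u],\,[\fresh x/\tra u],\,[y/\tra v],\,[\fresh y/\tra v]\}$; only their order differs. The two strings are related by a sequence of $\Com$ steps: swapping the $x$-block past the $y$-block is legitimate because $y \notin \fv(\tra u) = \fv(u)$ and $x \notin \fv(\tra v) = \fv(v)$ for the regular binders, while every swap involving a marked binder $\fresh x$ or $\fresh y$ is legitimate by freshness. Hence the two inner terms are $=_{\Es}$, and (P1) yields the desired equality of their $\ALC$-normal forms. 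The three remaining sub-cases (at most one of $x,y$ free in $t$) are strictly simpler instances of the same reordering.

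Finally, the $\alpha$ base case is routine: $\tra{\_}$ respects renaming of bound variables, since the marked binders it introduces can be chosen consistently, so $t =_\alpha t'$ gives $\tra t =_\alpha \tra{t'}$ and a fortiori $\tra t =_{\Es} \tra{t'}$. The main obstacle is the $\Com$ case: keeping the free-variable bookkeeping straight so that the two sides genuinely generate the same garbage, and then justifying the flattening via (P2) and the reordering via iterated $\Com$; everything else is mechanical.
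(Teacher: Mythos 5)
Your proof is correct and follows essentially the same route as the paper's: induction on the derivation of $t =_{\Es} t'$, with the $\Com$ case handled by unfolding the translation and invoking uniqueness of $\ALC$-normal forms modulo $\Es$ (\cref{l:unique-ALC-nf}) to reorder the independent regular and idle substitutions, and the congruence cases closed by the induction hypothesis plus the same uniqueness argument. Your explicit flattening fact (P2) is a detail the paper glosses over when it writes the nested translation directly as a single $\ALC$-call, so your write-up is, if anything, slightly more careful on that point.
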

\begin{proof}
By induction on the definition of  $t =_{\Es} t'$. 
\ignore{
\begin{itemize}
\item $t = t_1[x/t_3][y/t_2] =_{\Es} t_1[y/t_2][x/t_3] = t'$ with $y \notin \fv(t_3), x \notin \fv(t_2)$. We treat the case where $x,y \in \fv(t_1)$. We have
$$
\begin{array}{ll}
&\tra{t_1[x/t_3][y/t_2]}\\
= & \ALC(\tra{t_1}[x/\tra{t_3}][\fresh{x}/\tra{t_3}][y/\tra{t_2}][\fresh{y}/\tra{t_2}]).
\end{array}
$$
As $\ALC$ has unique normal forms up to $_{\Es}$, we have 
$$
\begin{array}{ll}
& \ALC(\tra{t_1}[x/\tra{t_3}][\fresh{x}/\tra{t_3}][y/\tra{t_2}][\fresh{y}/\tra{t_2}])\\
 =_{\cref{l:same-variables-tra}} & \ALC(\tra{t_1}[y/\tra{t_2}][\fresh{y}/\tra{t_2}][x/\tra{t_3}][\fresh{x}/\tra{t_3}])\\
 = & \tra{t_1[y/t_2][x/t_3]}.
\end{array}
$$
\item $t = x =_{\Es} x = t'$.
\item $t = \l x.t_1 =_{\Es} \l x.t'_1  = t'$ with $t_1 =_{\Es} t'_1$. Then $\tra{\l x.t_1} = \l x. \tra{t_1} =_{\ih} \l x. \tra{t'_1} = \tra{\l x.t'_1}$.
\item $t = t_1\ t_2 =_{\Es} t'_1\ t'_2 = t'$ with $t_1 =_{\Es} t'_1$ and $t_2 =_{\Es} t'_2$. Then $\tra{t_1\ t_2} = \tra{t_1}\ \tra{t_2} =_{\ih} \tra{t'_1}\ \tra{t'_2} = \tra{t'_1\ t'_2}$.
\item $t = t_1[y/t_2] =_{\Es} t'_1[y/t'_2] = t'$ with $t_1 =_{\Es} t'_1$ and $t_2 =_{\Es} t'_2$. We will consider the case where $y \in \fv(t_1)$. We have $\tra{t_1[y/t_2]} = \ALC(\tra{t_1}[y/\tra{t_2}][\fresh{y}/\tra{t_2}])$. As $\ALC$ has unique normal forms up to $_{\Es}$, we have $\ALC(\tra{t_1}[y/\tra{t_2}][\fresh{y}/\tra{t_2}]) =_{\ih} \ALC(\tra{t'_1}[y/\tra{t'_2}][\fresh{y}/\tra{t'_2}]) = \tra{t'_1[y/t'_2]}$.
\end{itemize}
}
\end{proof}

\begin{restatable}[$\les$ simulates $\lm$]{proposition}{lessimulateslm}
Let $t, t'$ be $\mterms$-terms. 
If $t \Rew{\lm} t'$ then $\tra{t} \Rewplus{\les} \tra{t'}$.
\label{p:les-sim-lm}
\end{restatable}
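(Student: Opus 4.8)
The plan is to first strip off the equivalence $\Es$. Since $t \Rew{\lm} t'$ unfolds as $t \EqEs s \Rew{\B\sm} s' \EqEs t'$ for some $s,s'$, and $\tra{\_}$ respects $\Es$ by~\cref{l:tra-preserves-es}, it suffices to establish $\tra{s} \Rewplus{\les} \tra{s'}$ for a \emph{raw} one-step reduction $s \Rew{\B\sm} s'$, that is, a closure under an arbitrary context $C$ of one of the base rules $\B$, $\us$, $\Gc$ (the rule $\usm$ cannot fire, as $\mterms$-terms contain no metavariables). I would then proceed by induction on the structure of $C$, splitting into the root reductions (base cases) and the five contextual cases, and using throughout that $\tra{\_}$ preserves free variables, that $\ALC(\cdot)$ is a well-defined function with unique normal forms by~\cref{l:unique-ALC-nf}, and that $\Rew{\ALC} \subseteq \Rew{\les}$ so every $\ALC$-normalisation is itself a legal $\les$-reduction.

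For the easy base cases: in the $\B$ case, $\tra{(\l x.t)\ u} = ((\l x.\tra{t})\ \tra{u})[\fresh{y}/\tra{u}]$ exhibits a $\B$-redex at the head; firing it yields $\tra{t}[x/\tra{u}][\fresh{y}/\tra{u}]$, which after renaming the fresh binder and $\ALC$-normalising (when $x \in \fv(t)$), or after one $\Gc$-step erasing the idle copy and $\ALC$-normalising (when $x \notin \fv(t)$), is exactly $\tra{t[x/u]}$; the sequence is nonempty because the $\B$-step is always present. The $\Gc$ case $t[x/u] \Rew{\Gc} t$ with $x \notin \fv(t)$ asks to collect the garbage substitution that $\tra{\_}$ has introduced, so that $\ALC(\tra{t}[x/\tra{u}]) \Rewplus{\les} \tra{t}$; I would isolate this as an auxiliary ``non-interfering garbage'' lemma stating that an idle substitution is merely propagated through abstractions by $\Rew{\ALC}$ and can finally be discarded by a $\Gc$-step.

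The contextual cases reduce to lifting the induction hypothesis $\tra{t_1} \Rewplus{\les} \tra{t_1'}$ through each term former. Abstraction and application are immediate from the (nearly) homomorphic shape of $\tra{\_}$, noting that in the application case $\tra{t_1\ t_2} = (\tra{t_1}\ \tra{t_2})[\fresh{w}/\tra{t_2}]$ a reduction inside $t_2$ must be replayed both in the application and in the garbage copy, which is harmless since $\Rewplus{\les}$ permits several steps. The two substitution contexts $C[y/t_2]$ and $t_2[y/C]$ are where I would invoke two auxiliary lemmas: one lifting a reduction under an $\ALC$-normalised outer substitution, $\ALC(p[z/v]) \Rewplus{\les} \ALC(p'[z/v])$ whenever $p \Rewplus{\les} p'$ (with a subcase analysis on whether $z$ stays free, to match whether $\tra{\_}$ retains the idle copy), and one reducing inside the body, $\ALC(p[z/v]) \Rewplus{\les} \ALC(p[z/v'])$ whenever $v \Rewplus{\les} v'$, applied to both the mobile and idle copies, possibly followed by re-normalisation justified by~\cref{l:unique-ALC-nf}.

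I expect the genuine difficulty to sit entirely in the $\us$ root case, for the reason stressed in the discussion of the translation: partial substitution rewrites a \emph{single} occurrence, whereas the local calculus propagates a substitution to, and eventually substitutes at, every free occurrence. Here $\tra{C\cwc{x}[x/u]} = \ALC(\tra{C\cwc{x}}[x/\tra{u}][\fresh{x}/\tra{u}])$ already has, in $\ALC$-normal form, a mobile copy $[x/\tra{u}]$ sitting directly above each free occurrence of $x$ (since $x[x/\tra{u}]$ is $\ALC$-irreducible), so a single $\Var$-step at the hole performs the one intended replacement, and the result is re-normalised to $\tra{C\cwc{u}[x/u]}$. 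The idle copy $[\fresh{x}/\tra{u}]$ is precisely what makes this match in the critical situation where the replaced occurrence was the \emph{last} free occurrence of $x$: then the mobile copy is absorbed by the $\Var$-step, yet in $\lm$ the substitution $[x/u]$ survives, and the idle copy $[\fresh{x}/\tra{u}]$---being garbage with body $\tra{u}$ and a binder free nowhere, hence $\alpha$-equal to the now-garbage $[x/\tra{u}]$ appearing in $\tra{C\cwc{u}[x/u]}$---supplies exactly the persisting substitution. Carefully tracking the $\ALC$-normal forms on both sides through this single replacement, and verifying the nonemptiness of the produced sequence, is the crux of the argument.
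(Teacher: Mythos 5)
Your overall skeleton---stripping off $\Es$ via \cref{l:tra-preserves-es}, reducing to a raw $\{\B,\us,\Gc\}$-step, and handling the contextual cases through auxiliary lemmas for reduction under an outer substitution and inside a substitution body---is the right decomposition, and your $\B$ and $\Gc$ root cases are fine. The genuine gap is in the one case you correctly identify as the crux, the root $\us$-step. The translation $\nombretra$ \emph{duplicates} every application argument and every substitution body into an idle copy, so the hole occurrence of $x$ in $C\cwc{x}$ has, in general, several residuals in $\tra{C\cwc{x}}$: one in the main position and one inside each idle substitution created for an enclosing argument or closure body containing the hole. After $\ALC$-normalisation each residual carries its own mobile copy $x[x/\tra{u}]$. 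Your ``single $\Var$-step at the hole, then re-normalise'' replaces only one residual; the others still read $x[x/\tra{u}]$, and since $\Var \notin \ALC$, no amount of $\ALC$-re-normalisation (nor appeal to \cref{l:unique-ALC-nf}) can turn them into the $\tra{u}$ that $\tra{C\cwc{u}[x/u]}$ exhibits in those positions.

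Concretely, take $C = v\ \Box$ with $x \notin \fv(v)$, so $t = (v\ x)[x/u] \Rew{\us} (v\ u)[x/u] = t'$. Since $\tra{v\ x} = (\tra{v}\ x)[\fresh{y}/x]$, normalising $\tra{v\ x}[x/\tra{u}][\fresh{x}/\tra{u}]$ (rule $\Compi$ pushes the mobile copy into the idle body, then $\Appii$) gives $\tra{t} = (\tra{v}\ x[x/\tra{u}])[\fresh{y}/x[x/\tra{u}]][\fresh{x}/\tra{u}]$, whereas $\tra{t'} =_{\alpha} (\tra{v}\ \tra{u})[\fresh{y}/\tra{u}][\fresh{x}/\tra{u}]$. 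One $\Var$-step at the main position yields $(\tra{v}\ \tra{u})[\fresh{y}/x[x/\tra{u}]][\fresh{x}/\tra{u}]$, which is already $\ALC$-normal and is \emph{not} $\tra{t'}$: a second $\Var$-step inside the idle body $[\fresh{y}/\,\cdot\,]$ is required. This is precisely why the detailed proof proceeds by an inner induction on the context $C$, replaying the replacement in the main copy \emph{and} in every idle copy (the cases $C = v\ D$ and $C = v[z/D]$, where $D$ contains the hole, are exactly the ones in which the idle substitution's body must also be reduced, using lemmas of the kind you list). Note that you observed this duplication phenomenon yourself in the contextual application case (``a reduction inside $t_2$ must be replayed both in the application and in the garbage copy''); the same observation has to be applied to the replaced occurrence in the $\us$ case, and your current argument omits it.
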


\begin{proof} By induction on  $t \Rew{\lm} t'$.
Details can be found in ~\cite{OConchuir:2006a}. 
  \end{proof}

This property will be used in~\cref{s:sn-implies-intersection-typable}
to give an alternative
proof of $\lm$-strong normalisation
of simply typable terms, and in~\cref{s:graphical} to relate simply
typable $\mterms$-terms to MELL proof-nets.

\ignore{SACAR
  \begin{corollary}[PSN for $\lm$]
  Let $t$ be a $\mterms$-term.
  If $t \in \SN{\les}$,  then $t\in \SN{\lm}$.
  \end{corollary}
  \begin{proof}
    Let $t \in \SN{\les}$ and suppose $t\notin \SN{\lm}$.
    Then, there is an infinite $\lm$-reduction sequence starting at
    $t$, which projects, by~\cref{p:les-sim-lm},
    to an infinite $\les$-reduction sequence
    starting at $\tra{t}$. RELACIONAR
    $t \in \SN{\les}$ con $\tra{t} \in \SN{\les}$.
\end{proof} }

\section{Normalisation Properties}
\label{s:charact-les-lm}

Intersection   type   disciplines~\cite{CDC78,CDC80}   are   more
flexible  than simple  type systems  in the  sense that  not only are 
typed  terms strongly  normalising,  but  the converse  also
holds,  thus giving  a characterisation  of the  set  of strongly
normalising     terms.     Intersection    types   
for calculi with explicit substitutions not enjoying 
full composition have been  
studied in~\cite{LLDDvB,Kikuchi07}. Here, we apply this
technique to   $\lm$, 
and  obtain  a characterisation  of the  set of
strongly-normalising terms by means of intersection types.

Moreover, we study PSN. The
PSN  property  received a  lot  of  attention  in calculi  with
  explicit            substitutions            (see           for
  example~\cite{ACCL91,Benaissa96,Bloo95}),   starting   from  an
  unexpected  result given  by  Melli\`es~\cite{Mellies1995a} who
  has shown that there are $\beta$-strongly normalisable terms in
  $\l$-calculus  that are  not strongly  normalisable  in calculi
  such as $\lambda\sigma$~\cite{ACCL91}.  Since then, there was a
  challenge to  define calculi with  explicit substitutions being
  confluent on metaterms and enjoying PSN at the same time.  Many
  formalisms     such    as for example
  $\lambda_{ws}$~\cite{guillaume01}
  and  $\les$~\cite{K07}  have been  shown to
  enjoy   both   properties: confluence on metaterms and PSN.    In   particular,   $\lm$   enjoys
  PSN~\cite{ShaneHOR06}.   However,  the   first  proof  of  this
  result~\cite{ShaneHOR06}    is    quite   involved.     Indeed,
  $\lm$-reduction is simulated by  another  calculus enjoying
  PSN,   called  $\lambda_{{\tt  blxr}}$,   which  is   a  slight
  modification of $\llxr$~\cite{KL05},  a formalism with explicit
  ressources  (weakening,  contraction,  substitution)  based  on
  proof-nets  for  the  multiplicative  exponential  fragment  of
  Linear  Logic~\cite{girard}.   The  proof  in~\cite{ShaneHOR06}
  consists  of two  main  steps: first  prove  that the  modified
  calculus $\lambda_{{\tt blxr}}$ has the PSN property (this is a
  long proof  although it is  made easier by  adapting Lengrand's
  techniques~\cite{Lengrand2005}),    then    prove   that    any
  $\lm$-reduction   step  can   be  simulated   by   a  non-empty
  $\lambda_{{\tt  blxr}}$-reduction   sequence. In this section we
  also give an alternative proof of PSN
  for the $\lm$-calculus.

  \subsection{Types}
\defn{Types} are built over a countable set of atomic symbols 
(base types) and the type constructors $\>$ (for functional types)
and $\cap$ (for intersection types).
An \defn{environment} $\Gam$  is a
partial function from variables to types.  We denote by
$\dom(\Gam)$ the \defn{domain} of $\Gam$.
Two
environments $\Gam$ and $\Del$ are said to be \defn{compatible} iff
for all
$x \in \dom(\Gam) \cap
\dom(\Del)$
we have $\Gam(x)  = \Del(x)$.  We denote the
\defn{union  of  compatible contexts}  by  $\Gam  \uplus  \Del$.  Thus  for
example $(x:A, y:B) \uplus (x:A, z:C) =$ $(x:A,$~$y:B,$~$z:C)$.

\defn{Typing  judgements} have the form  $\Gam \vdash
t:A$ where $t$ is  a term, $A$ is a type and  $\Gam$ is an
environment. \defn{Derivations} of typing  judgements
in a certain type discipline system are obtained by application of 
the typing rules of the system. We consider several systems.

The  \defn{additive  simply  type}  system  for 
  $\l$-terms  (resp. for  $\mterms$-terms), written $\saddl$
  (resp. $\saddls$),  is  given  by the
  rules \mathsmall{\aaxiom}, \mathsmall{\aapp}, and
  \mathsmall{\aabs}        (resp.        \mathsmall{\aaxiom},
  \mathsmall{\aapp},          \mathsmall{\aabs},          and
  \mathsmall{\asubstr}) in~\cref{f:additive-systems}.

  \begin{figure}[h!]
    \[ \begin{array}{|c@{\hspace{.5cm}}c|}
\hline
\irule{}
      {\mathsmall{\Gam, x:A \vd x:A}}  \mathsmall{(\aaxiom)}  & 
\irule{\mathsmall{\Gam \vd  t:A \> B} \sep
       \mathsmall{\Gam \vd u:A}}
      {\mathsmall{\Gam \vd  t\;u :B}}   \mathsmall{(\aapp)}  \\[.4cm]
\irule{\mathsmall{\Gam, x:A \vd t:B}}
      {\mathsmall{\Gam \vd \l x. t: A\> B}}   \mathsmall{(\aabs)}  &
\irule{\mathsmall{\Gam \vd u:B \sep \Gam, x:B \vd t:A }} 
      {\mathsmall{\Gam  \vd t[x/u]:A}}  \mathsmall{(\asubstr)} \\
\hline
\end{array} \]
\caption{System $\saddl$ for $\l$-Terms and System  $\saddls$ for $\mterms$-Terms}
\label{f:additive-systems}
  \end{figure}

The \defn{multiplicative simple type} system  for  
  $\l$-terms  (resp. for $\mterms$-terms), written $\smultl$ (resp. $\smultls$), is given by the 
  rules \mathsmall{\maxiom}, \mathsmall{\mapp}, and
  \mathsmall{\mabs} (resp. \mathsmall{\maxiom}, \mathsmall{\mapp}, 
  \mathsmall{\mabs}, and \mathsmall{\msubstr})
  in~\cref{f:multiplicative-systems}.

\begin{figure}[h!]
  \[ \begin{array}{|c@{\hspace{.5cm}}c|}
\hline
\irule{}
      {\mathsmall{x:A \vd x:A}}   \mathsmall{(\maxiom)}   & 
\irule{\mathsmall{\Gam \vd  t:A \> B \sep  \Del \vd u:A}}
      {\mathsmall{\Gam \uplus \Del \vd  t\;u:B}}
      \mathsmall{(\mapp)} \\[.4cm]
\irule{\mathsmall{\Gam  \vd t:B}}
      {\mathsmall{\Gam \setminus \{ x:A\}  \vd \l x. t: A\> B}}  \mathsmall{(\mabs)} \sep &
\irule{\mathsmall{\Gam \vd u:B \sep \Del \vd t:A}}
      {\mathsmall{\Gam \uplus ( \Del \setminus \{ x:B \} ) \vd t[x/u]:A}} \mathsmall{(\msubstr)}  \\
\hline
\end{array} \]
\caption{System $\smultl$ for $\l$-Terms and System  $\smultls$ for $\mterms$-Terms}
\label{f:multiplicative-systems}
  \end{figure}

\begin{lemma}
\label{l:typable-add-typable-mul-simple-types}
Let $t$ be a $\mterms$-term. Then 
$\Gam \vd_{\saddls} t:A$
iff  $\Gam \cap \fv(t) \vd_{\smultls} t:A$. Moreover, 
if $t$ is a  $\l$-term, then $\Gam \vd_{\saddl} t:A$
iff  $\Gam \cap \fv(t) \vd_{\smultl} t:A$.
\end{lemma}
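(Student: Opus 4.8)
The plan is to prove both biconditionals by structural induction on the $\mterms$-term $t$, with the induction hypotheses \emph{universally quantified over the environment and the type} (fixing $\Gam$ would fail, since the premises of the multiplicative rules carry strictly smaller contexts). Before starting I would isolate two auxiliary facts that align the two presentations. The first is a \textbf{relevance} property of the multiplicative systems: if $\Gam \vd_{\smultls} t:A$ then $\dom(\Gam) = \fv(t)$, and likewise $\Gam \vd_{\smultl} t:A$ implies $\dom(\Gam)=\fv(t)$ for $\l$-terms. This is immediate by induction on the derivation: $\maxiom$ introduces exactly one variable, $\mapp$ gives $\dom(\Gam \uplus \Del)=\dom(\Gam)\cup\dom(\Del)=\fv(u)\cup\fv(v)$, and $\mabs$, $\msubstr$ remove exactly the bound variable. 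The second is \textbf{weakening} for the additive systems: if $\Gam\vd_{\saddls}t:A$ and $\Gam'$ extends $\Gam$ compatibly, then $\Gam'\vd_{\saddls}t:A$; this too is a routine induction, relying on the additive axiom $\aaxiom$ already carrying an arbitrary context.

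For the forward direction ($\saddls \Rightarrow \smultls$) I would inspect the last rule of the additive derivation. In the variable case the claim reduces to the observation that $\Gam\vd_{\saddls}x:A$ iff $x{:}A\in\Gam$ iff $\Gam\cap\{x\}=\{x{:}A\}$, which is exactly the multiplicative axiom. In the application case the premises $\Gam\vd_{\saddls}u:A\>B$ and $\Gam\vd_{\saddls}v:A$ give, by the \ih, $\Gam\cap\fv(u)\vd_{\smultls}u:A\>B$ and $\Gam\cap\fv(v)\vd_{\smultls}v:A$; these environments are both restrictions of $\Gam$, hence compatible, and their union is $\Gam\cap(\fv(u)\cup\fv(v))=\Gam\cap\fv(u\,v)$, so $\mapp$ closes the case. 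The abstraction ($\aabs$/$\mabs$) and substitution ($\asubstr$/$\msubstr$) cases are analogous, the only delicate point being whether the bound variable occurs free in the body, which determines whether it survives in the restricted context just before $\mabs$/$\msubstr$ deletes it.

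For the backward direction ($\smultls \Rightarrow \saddls$) I would inspect the last rule of the multiplicative derivation of $\Gam\cap\fv(t)\vd_{\smultls}t:A$; here relevance is exactly what makes the argument go through. In the application case, from $\mapp$ with premises $\Gam_1\vd_{\smultls}u:A\>B$, $\Gam_2\vd_{\smultls}v:A$ and $\Gam_1\uplus\Gam_2=\Gam\cap\fv(u\,v)$, relevance forces $\dom(\Gam_1)=\fv(u)$ and $\dom(\Gam_2)=\fv(v)$, whence $\Gam_1=\Gam\cap\fv(u)$ and $\Gam_2=\Gam\cap\fv(v)$. The \ih\ then yields $\Gam\cap\fv(u)\vd_{\saddls}u:A\>B$ and $\Gam\cap\fv(v)\vd_{\saddls}v:A$, and weakening lifts both to the full environment $\Gam$, so the additive rule $\aapp$ (which shares its context across premises) applies. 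The abstraction and substitution cases follow the same pattern, again splitting on whether the bound variable is free in the body. The \textbf{Moreover} clause for $\l$-terms is the identical argument restricted to the sublanguage without closures, i.e. simply omitting the substitution case.

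The main obstacle is not conceptual but the bookkeeping of environments: one must keep the relevance and weakening lemmas in hand and track how $\cap\fv(\cdot)$, $\uplus$ and $\setminus\{x{:}A\}$ interact, particularly in the $\mabs$/$\msubstr$ cases where the presence or absence of the bound variable in $\fv$ changes the shape of the restricted context. Everything else is a direct reassembly of the premise derivations under the appropriate rule.
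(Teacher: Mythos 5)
Your proposal is correct and matches the paper's approach: the paper states this lemma without an explicit proof, but its proof of the intersection-type analogue (\cref{l:typable-add-typable-mul}) is exactly your argument --- induction on $t$, inversion of the last typing rule (which is trivial generation here, since the simple systems are syntax-directed), and additive weakening in the multiplicative-to-additive direction. Your auxiliary relevance lemma ($\dom(\Gam) = \fv(t)$ for multiplicative judgements) is precisely the content the paper bakes into the domain conditions of its Multiplicative Generation Lemma (\cref{l:multiplicative-generation-lemma}), so the two proofs coincide in substance.
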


For the intersection type systems,
  we also consider the additional rules in~\cref{f:solo-intersection}.

  \begin{figure}[h!]
    \[ \begin{array}{|cccc|}
\hline
\irule{\mathsmall{\Gam \vd t:A \sep \Gam \vd t:B}} 
      {\mathsmall{\Gam \vd t: A \cap  B}} & \mathsmall{(\ini)} & 
\irule{\mathsmall{\Gam \vd t: A_1 \cap  A_2}}
      {\mathsmall{\Gam \vd t:A_i}} & \mathsmall{(\ine)} \\
\hline
    \end{array} \]
    \caption{Additional Rules for Intersection Types}
    \label{f:solo-intersection}
\end{figure}

  The \defn{additive intersection type} system for $\l$-terms
  (resp. for $\mterms$-terms), written $\addl$ (resp. $\addls$)
  and  given in~\cref{f:typing-rules-intersection-additive-les}, is
  obtained by adding the rules $\ini$ and $\ine$ in~\cref{f:solo-intersection} to those of $\saddl$
  (resp. $\saddls$) in~\cref{f:additive-systems}. 
  
  \begin{figure}[h!]
\[ \begin{array}{|cccc|}
\hline
\irule{}
      {\Gam, x:A \vd x:A} &  (\aaxiom)  & 
\irule{\Gam \vd u:B \sep \Gam, x:B \vd t:A } 
      {\Gam  \vd t[x/u]:A} & (\asubstr)   \\ &&&\\
\irule{\Gam \vd  t:A \> B \sep
       \Gam \vd u:A}
      {\Gam \vd  t\;u:B} &  (\aapp) & 
\irule{\Gam, x:A \vd t:B}
      {\Gam \vd \l x. t: A\> B} &  (\aabs)  \\ &&&\\
\irule{\Gam \vd t:A \sep \Gam \vd t:B} 
      {\Gam \vd t: A \cap  B} & (\ini) & 
\irule{\Gam \vd t: A_1 \cap  A_2}
      {\Gam \vd t:A_i} & (\ine) \\
\hline
\end{array} \]
\caption{System $\addl$ for $\l$-Terms and
  System $\addls$ for  $\mterms$-terms}
\label{f:typing-rules-intersection-additive-les} 
\end{figure}


  The \defn{multiplicative intersection type} system for $\l$-terms
  (resp. for $\mterms$-terms), written $\multl$ (resp. $\multls$)
  and given in~\cref{f:typing-rules-l}, is
  obtained by adding the rules $\ini$ and $\ine$ in~\cref{f:solo-intersection} to those of $\saddl$
  (resp. $\saddls$) in~\cref{f:multiplicative-systems}.
  For technical reasons we specify  rule $\mabs$ (resp. $\msubstr$)
  by using two different
  instances  $\mabsi$ and $\mabsii$ (resp. $\msubsi$ and $\msubsii$).

\begin{figure}[h!]
\[ \begin{array}{|cccc|}
\hline
\irule{}
      {x:A \vd x:A} &  (\maxiom)  & 
\irule{\Gam \vd  t:A \> B \sep
       \Del \vd u:A}
      {\Gam \uplus \Del \vd  t\;u:B} &  (\mapp) \\ & & & \\
\irule{\Gam, x:A \vd t:B}
      {\Gam \vd \l x. t: A\> B} &  (\mabsi) & 
\irule{\Gam \vd t:B\ \mbox{ and } x \notin \Gam}
      {\Gam \vd \l x. t: A\> B} & (\mabsii) \\  & & & \\
\irule{\Gam \vd u:B \sep 
       \Del, x:B \vd t:A } 
      {\Gam \uplus \Del \vd t[x/u]:A} &(\msubsi)  & 
\irule{\Gam \vd u:B \sep 
       \Del \vd t:A \mbox{ and }\sep x \notin \Del} 
      {\Gam \uplus \Del \vd t[x/u]:A}  & (\msubsii)  \\& & & \\
\irule{\Gam \vd t:A \sep \Gam \vd t:B}
      {\Gam \vd t: A \cap  B} & (\ini) & 
\irule{\Gam \vd t: A_1 \cap  A_2}
      {\Gam \vd t:A_i} & (\ine) \\
\hline
\end{array} \]
\caption{System $\multl$ for  $\l$-terms
  and System $\multls$  for   $\mterms$-terms}
\label{f:typing-rules-l} 
\end{figure}

A term $t$ is said to be \defn{typable}  in system $\TST$,
written  $\Gam \vd_\TST t:A$ iff
there is $\Gam$ and $A$ s.t. the judgement $\Gam \vd t:A$ is derivable
from the set of typing rules of system $\TST$. 

Remark that for any  $\l$-term  $t$ 
we have $\Gam \vd_{\addl} t:A$  iff 
$\Gam \vd_{\addls} t:A$
and $\Gam \vd_{\multl} t:A$  iff 
$\Gam \vd_{\multls} t:A$.

  \begin{definition}
The relation $\ll$ on types is defined by the following axioms
and rules
\begin{enumerate}
\item $A \ll A$
\item $A \cap B \ll A$
\item $A \cap  B \ll B$
\item $A \ll B\ \&\  B \ll C$ implies $A \ll C$
\item $A \ll B\ \&\  A \ll C$ implies $A \ll B \cap  C$
\end{enumerate}
\end{definition}

We use  $\un{n}$ for  $\set{1 \ldots n}$  and $\capp{n}  A_i$ for
$A_1 \cap \ldots  \cap A_n$. The following property  can be shown
by induction on the definition of $\ll$.

\begin{lemma}
\label{l:relation-between-ll-types}
Let $\capp{n} A_i \ll \capp{m} B_j$, where none of the
$A_i$ and $B_j$ is an intersection type.  Then
for each $B_j$ there is $A_i$ s.t. $B_j = A_i$. 
\end{lemma}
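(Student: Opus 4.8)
The plan is to prove the statement by induction on the derivation of $\capp{n} A_i \ll \capp{m} B_j$, after a small reformulation that lets the induction hypothesis talk about sub-derivations whose endpoints are arbitrary (possibly intersection) types. For any type $C$, I would write $\lfloor C \rfloor$ for the list of non-intersection types obtained by repeatedly replacing any top-level $D \cap E$ by the parts coming from $D$ and $E$, until no component is itself an intersection. This operation is well defined and satisfies $\lfloor D \cap E \rfloor = \lfloor D \rfloor \uplus \lfloor E \rfloor$, while $\lfloor C \rfloor = \{C\}$ when $C$ is not an intersection. I would then prove the generalised claim: whenever $C \ll D$ is derivable, every element of $\lfloor D \rfloor$ equals some element of $\lfloor C \rfloor$. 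The lemma is the instance $C = \capp{n} A_i$, $D = \capp{m} B_j$, since then $\lfloor C \rfloor = \{A_1, \ldots, A_n\}$ and $\lfloor D \rfloor = \{B_1, \ldots, B_m\}$ because the $A_i$ and $B_j$ are already non-intersection types.

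The base and projection rules are immediate. For $C \ll C$ (axiom 1), $\lfloor C \rfloor$ on the right coincides with $\lfloor C \rfloor$ on the left, so each component on the right is literally a component on the left. For $D \cap E \ll D$ (axiom 2) we have $\lfloor D \rfloor \subseteq \lfloor D \rfloor \uplus \lfloor E \rfloor = \lfloor D \cap E \rfloor$, and symmetrically for $D \cap E \ll E$ (axiom 3), so every component of the right-hand side of the conclusion already appears on the left. For rule 5, where $C \ll D \cap E$ comes from $C \ll D$ and $C \ll E$, the \ih\ gives that every element of $\lfloor D \rfloor$ and of $\lfloor E \rfloor$ equals some element of $\lfloor C \rfloor$; since $\lfloor D \cap E \rfloor = \lfloor D \rfloor \uplus \lfloor E \rfloor$, the same holds for every element of $\lfloor D \cap E \rfloor$.

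I expect the transitivity rule 4 to be the main obstacle, precisely because it introduces an intermediate type $D$ that may be an arbitrary intersection and so is not of the restricted shape of the lemma statement; this is exactly why the generalised formulation over $\lfloor \cdot \rfloor$ is needed. Here, from $C \ll E$ derived via $C \ll D$ and $D \ll E$, I would apply the \ih\ twice: $D \ll E$ gives that every element of $\lfloor E \rfloor$ equals some element of $\lfloor D \rfloor$, and $C \ll D$ gives that every element of $\lfloor D \rfloor$ equals some element of $\lfloor C \rfloor$; composing these shows every element of $\lfloor E \rfloor$ equals some element of $\lfloor C \rfloor$. The only delicate point throughout is checking that $\lfloor \cdot \rfloor$ is genuinely well defined and multiplicative on $\cap$, so that no component is merged or lost when intersections are re-associated — this is where the hypothesis that the $A_i$ and $B_j$ are not themselves intersections is used to pin down the decompositions on the two sides.
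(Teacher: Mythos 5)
Your proof is correct and takes essentially the same route as the paper's: an induction on the derivation of $\ll$, where the reflexivity and projection axioms are immediate, rule 5 splits the right-hand side, and transitivity is handled by flattening the intermediate type into non-intersection components and composing two applications of the induction hypothesis. Your explicit $\lfloor\cdot\rfloor$ operation and the generalised claim over arbitrary types simply make rigorous what the paper's proof does implicitly when it writes the intermediate type as $\capp{p} C_k$ ``where no $C_k$ is an intersection type.''
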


\begin{proof}
  By an induction on the definition of $\capp{n} A_i \ll \capp{m}
  B_j$. Let $\capp{p} C_k$ be some
  type where no $C_k$ is an intersection type.
\begin{description}
\item[Case $\capp{n} A_i \ll \capp{n} A_i$.] Trivial.
\item[Case $\capp{n} A_i \cap \capp{p} C_k \ll \capp{n} A_i$.] Trivial.
\item[Case $\capp{p} C_k \cap \capp{n} A_i \ll \capp{n} A_i$.] Trivial.
\item[Case  $\capp{n} A_i  \ll  \capp{p} C_k,  \capp{p} C_k  \ll
  \capp{m}  B_j$.] Applying  \ih\  once,  we have  for  each $B_j$
  there is $C_k$  s.t. $B_j = C_k$. Applying  \ih\  again, we have
  for each $C_k$ there is $A_i$ s.t. $C_k = A_i$.
\item[Case $\capp{n} A_i \ll  B_1 \cap \ldots \cap B_k, \capp{n}
  A_i \ll  B_{k+1} \cap \ldots  \cap B_m$.] 
  Applying the
  \ih\  to $\capp{n} A_i \ll  B_1 \cap \ldots \cap B_k$
  and $\capp{n}
  A_i \ll  B_{k+1} \cap \ldots  \cap B_m$
  we have for each $B_j, 1 \leq j \leq k$ there is $A_i$ s.t. $B_j =
  A_i$ and for each $B_j, k+1 \leq j \leq
  m$ there is $A_i$ s.t. $B_j = A_i$.
 \end{description}
\end{proof}

\subsection{Basic Properties of the Type Systems}
\label{s:basic-properties-type-systems}

We show some basic properties of the type systems.

\begin{lemma}
\label{l:ll-and-typing}
If $\Gam \vd_\TST  t:A$ and $A \ll B$, then 
$\Gam \vd_\TST t:B$ for all $\TST \in \{ \addl, \addls,
  \multl, \multls\}$.
\end{lemma}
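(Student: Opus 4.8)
The plan is to argue by induction on the derivation of $A \ll B$, following the five clauses of the definition. The key observation that makes the statement provable uniformly for all four systems at once is that each of $\addl$, $\addls$, $\multl$, $\multls$ contains both the intersection introduction rule $\ini$ and the elimination rule $\ine$ of \cref{f:solo-intersection}, and that \emph{both} of these rules keep the environment $\Gam$ fixed between premises and conclusion. Consequently the whole argument never needs to split or merge contexts, so the difference between the additive and the multiplicative formulations is irrelevant here, and a single induction covers every $\TST \in \{\addl, \addls, \multl, \multls\}$.

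For the base cases I would proceed as follows. Clause $A \ll A$ is immediate, since the hypothesis $\Gam \vd_\TST t:A$ already is the goal. For $A \cap B \ll A$ the hypothesis reads $\Gam \vd_\TST t: A \cap B$, and one application of $\ine$ (instantiating $A_1 := A$ and $A_2 := B$) yields $\Gam \vd_\TST t:A$; the case $A \cap B \ll B$ is symmetric, using $\ine$ to extract the second component.

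For the two inductive clauses I would use the induction hypothesis together with the remaining rules. If $A \ll C$ is obtained from $A \ll B$ and $B \ll C$ (transitivity), then applying the induction hypothesis first to $A \ll B$ gives $\Gam \vd_\TST t:B$, and applying it again to $B \ll C$ gives $\Gam \vd_\TST t:C$. If instead $A \ll B \cap C$ is obtained from $A \ll B$ and $A \ll C$ (the meet clause), the induction hypothesis produces both $\Gam \vd_\TST t:B$ and $\Gam \vd_\TST t:C$ \emph{with the same environment $\Gam$}, so rule $\ini$ combines them into $\Gam \vd_\TST t: B \cap C$, as required.

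I do not expect a genuine obstacle in this lemma; it is a routine structural induction on $\ll$. The only point that deserves explicit care is the remark above, namely that $\ini$ and $\ine$ are present with identical environments on premises and conclusion in each of the four systems. This is exactly what allows the context $\Gam$ to be carried through unchanged and lets one proof discharge all four instances of $\TST$ simultaneously.
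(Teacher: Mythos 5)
Your proof is correct and follows essentially the same route as the paper's: a structural induction on the definition of $\ll$, handling the reflexivity case trivially, the two projection cases via $\ine$, transitivity by applying the induction hypothesis twice, and the meet case by two uses of the induction hypothesis followed by $\ini$. Your explicit remark that $\ini$ and $\ine$ keep $\Gam$ fixed---which is what lets a single induction cover all four systems---is a nice clarification of a point the paper leaves implicit, but it does not change the argument.
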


\begin{proof} Let $\Gam \vd_\TST  t:A$. We 
reason by induction on the definition of $A \ll B$.
\begin{description}
\item[Case $A=B, A \ll A$.] Trivial.
\item[Case $A=B \cap C \ll B$.] Use $\ine$.
\item[Case $A=C \cap B \ll B$.] Use $\ine$.
\item[Case $A \ll C, C \ll B$.] Use \ih\ once to get $\Gam \vd_\TST  t:C$ and a second time to get $\Gam \vd_\TST  t:B$.
\item[Case $A \ll  B_1, A \ll B_2, B = B_1  \cap B_2$.] Use \ih\ 
  twice to  get $\Gam  \vd_\TST  t:B_1$ and  $\Gam \vd_\TST  t:B_2$  and then
  apply $\ini$.
\end{description}
\end{proof}

The proofs of the following lemmas can be found in~\cref{app:types}.

\begin{restatable}[Environments are Stable by $\ll$]
  {lemma}{environmentstableorderoftypes}
If $\Gam, x:B \vd_\TST  t:A$ and $C \ll B$, then $ \Gam, x:C
  \vd_\TST t:A$ for all $\TST \in \{ \addl, \addls,
  \multl, \multls\}$.
\label{l:ll-preserves-typing}
\end{restatable}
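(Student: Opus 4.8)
The plan is to argue by induction on the derivation of $\Gam, x:B \vd_\TST t:A$, treating the four systems uniformly and isolating the single place where the type $B$ assigned to $x$ is actually consumed, namely the axiom rule. A useful observation up front is that one need \emph{not} induct on the definition of $C \ll B$: all of the work that the subtyping relation does is already packaged in \cref{l:ll-and-typing}, which will be invoked exactly once, in the base case.

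First I would dispatch the base case. In an axiom instance that types the variable $x$ itself, the conclusion type must coincide with the type recorded for $x$, so the judgement reads $\Gam, x:B \vd x:B$. Replacing $B$ by $C$ gives a fresh axiom $\Gam, x:C \vd x:C$, and since $C \ll B$ an application of \cref{l:ll-and-typing} yields $\Gam, x:C \vd x:B$, which is the required conclusion. An axiom instance typing some other variable $y \neq x$ does not mention the type of $x$ at all, so substituting $x:C$ for $x:B$ leaves it a valid axiom.

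For every remaining rule the type of $x$ is merely carried along in the environments of the premises, so I would apply the induction hypothesis to each premise whose environment contains $x:B$, turning it into $x:C$, and then reapply the same rule. In the additive systems ($\addl$, $\addls$) this is immediate, since the premises of $\aapp$, $\aabs$, $\asubstr$, $\ini$ and $\ine$ share the conclusion environment. In the multiplicative systems ($\multl$, $\multls$) the conclusion environment of $\mapp$, $\msubsi$, $\msubsii$ is a union $\uplus$ of the premise environments; here I would use that $\uplus$ combines only \emph{compatible} contexts, so whenever $x$ occurs in several branches it occurs there with the same type $B$, and replacing it by $C$ in each branch preserves compatibility and rebuilds exactly $\Gam, x:C$ in the conclusion. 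The side conditions on bound variables ($x \notin \Gam$ in $\mabsii$, $x \notin \Del$ in $\msubsii$) are preserved because passing from $x:B$ to $x:C$ does not alter the domain of any environment, and the bound variable of an abstraction or substitution is distinct from $x$ by the variable convention, hence is never the variable whose type is being modified.

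The routine cases then close by the pattern above, and the intersection rules $\ini$, $\ine$ present no difficulty since their premises and conclusion share one environment. I expect the only genuine subtlety---the main obstacle---to be the multiplicative context split: one must verify that the replacement is carried out consistently across all branches in which $x$ appears, and that compatibility of the $\uplus$-operands survives, rather than the induction itself being conceptually hard.
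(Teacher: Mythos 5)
Your proof is correct, and its outer skeleton (induction on the typing derivation, with all the content concentrated in the axiom case) matches the paper's. The genuine difference is how the critical case is discharged: where you build the fresh axiom $\Gam, x:C \vd x:C$ and then invoke \cref{l:ll-and-typing} once with $C \ll B$ to obtain $\Gam, x:C \vd x:B$, the paper instead runs a \emph{nested} induction on the derivation of $C \ll B$, using the rules $\ini$ and $\ine$ to rebuild $\Gam, x:C \vd_\TST x:B$ by hand. That inner induction is in effect a re-proof of the variable instance of \cref{l:ll-and-typing}, so your factoring is strictly more economical and avoids the duplication; since \cref{l:ll-and-typing} is established before \cref{l:ll-preserves-typing} and its proof does not depend on it, there is no circularity. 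Conversely, you are more explicit than the paper on the inductive step, which the paper dismisses in one line: your discussion of the multiplicative context split (compatibility of $\uplus$, preservation of the domains for the side conditions of $\mabsii$ and $\msubsii$) is exactly the bookkeeping that "straightforwardly" hides. One small remark: in the multiplicative systems the axiom case with subject $y \neq x$ is not merely harmless but vacuous, since the multiplicative axiom's environment is the singleton $\{y:A\}$ and so cannot contain $x:B$; the paper notes this explicitly, while your phrasing treats it as if the substitution argument were needed there. This costs nothing, but stating the vacuity would be cleaner.
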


\begin{restatable}[Weakening]{lemma}{weakening}
If $\Del \cap \fv(t) = \emptyset$,
then 
$\Gam \vd_{\addls} t:A$ iff $\Gam, \Del \vd_{\addls} t:A$. 
\label{l:weakening}
\end{restatable}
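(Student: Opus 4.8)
The plan is to prove both implications by induction on the structure of the typing derivation, handling the forward direction (adding $\Del$, \emph{weakening}) and the backward direction (removing $\Del$, \emph{strengthening}) separately but along identical lines. Throughout I read the hypothesis $\Del \cap \fv(t) = \emptyset$ as $\dom(\Del) \cap \fv(t) = \emptyset$, I assume $\Gam$ and $\Del$ have disjoint (hence compatible) domains so that $\Gam, \Del$ denotes a genuine environment, and I exploit the standing $\alpha$-convention that the variable bound by an $\aabs$ or $\asubstr$ step is distinct from every name occurring in $\Gam$ and $\Del$; in particular such a bound variable never lies in $\dom(\Del)$.

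For $\Gam \vd_{\addls} t:A \Rightarrow \Gam, \Del \vd_{\addls} t:A$ I induct on the derivation of $\Gam\vd t:A$. In the case $\aaxiom$ with $t=x$ we have $x\in\fv(t)$, hence $x\notin\dom(\Del)$, so $\Gam, \Del$ still contains $x:A$ and the axiom reapplies. The case $\aapp$, and the argument subderivation of $\asubstr$, follow immediately from the induction hypothesis, since $\fv$ of each immediate subterm is contained in $\fv(t)$ and so inherits the disjointness hypothesis. For $\aabs$, with $t = \l x.t_1$ obtained from $\Gam, x:B \vd t_1:C$ and $A = B \> C$, I first observe that $\fv(t_1)\subseteq\fv(t)\cup\set{x}$ together with $x\notin\dom(\Del)$ gives $\dom(\Del)\cap\fv(t_1) = \emptyset$; the induction hypothesis then yields $(\Gam, x:B), \Del\vd t_1:C$, which equals $(\Gam, \Del), x:B$ because $x$ is fresh for $\Del$, and a final $\aabs$ step produces $\Gam, \Del\vd \l x.t_1: B \> C$. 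The body subderivation of $\asubstr$ is treated the same way, and $\ini$, $\ine$ are immediate from the induction hypothesis since the subject term is unchanged.

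The converse $\Gam, \Del\vd_{\addls} t:A \Rightarrow \Gam\vd_{\addls} t:A$ is symmetric. In the case $\aaxiom$ with $t=x$, the binding $x:A$ lies in $\Gam, \Del$, and since $x\in\fv(t)$ forces $x\notin\dom(\Del)$, it must already lie in $\Gam$, so $\Gam\vd x:A$. The compound cases reorganise the environment exactly as above: one pushes the freshly bound variable next to $\Gam$ (licensed by $x\notin\dom(\Del)$), applies the induction hypothesis to discard $\Del$, and reapplies the corresponding rule.

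The main obstacle — really the only point requiring care — is the treatment of the binders $\aabs$ and $\asubstr$. There one must check both that the bound variable is fresh for $\Del$, so the environment may be reassociated as $(\Gam, x:B), \Del = (\Gam, \Del), x:B$, and that adjoining the bound variable to the free-variable set when descending into the body does not break the disjointness condition $\dom(\Del)\cap\fv(t_1)=\emptyset$. Both facts are supplied by the $\alpha$-convention keeping bound variables distinct from all names in the ambient environments, so no appeal to $\ll$ or to \Cref{l:ll-preserves-typing} is needed.
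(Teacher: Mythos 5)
Your proof is correct and follows essentially the same route as the paper's: a straightforward induction on the typing derivation in each direction, reapplying the last rule after invoking the induction hypothesis. The paper's version only sketches the $\aapp$ and $\aabs$ cases and dismisses the rest as similar, whereas you spell out the axiom case and the environment reassociation $(\Gam, x{:}B), \Del = (\Gam, \Del), x{:}B$ for the binders, but this is elaboration of the same argument, not a different one.
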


\begin{restatable}[Additive  Generation Lemma]{lemma}{additivegenerationlemma} 
  Let $\TST$ be an additive system. Then 
\begin{enumerate}
\item $\Gam \vd_\TST x:A$ iff there is $x:B \in \Gam$ and $B \ll A$.
\item $\Gam \vd_\TST t\ u:A$ iff  there exist $A_i, B_i, i \in \un{n}$ s.t.
$\capp{n} A_i \ll A$ and $\Gam \vd_\TST t: B_i \> A_i$ and
$\Gam \vd_\TST u:B_i$.
\item $\Gam \vd_\TST t[x/u]:A$ iff there exist $A_i, B_i, i \in \un{n}$ s.t.
$\capp{n} A_i \ll A$ and  $\forall i \in
\un{n}$ $\Gam \vd_\TST u:B_i$ and $\Gam, x:B_i \vd_\TST t: A_i$.
\item\label{lambda-case-add}  $\Gam \vd_\TST\l x. t:A$ iff   there exist $A_i, B_i, i \in
  \un{n}$ s.t. $\capp{n} (A_i \> B_i) \ll A$
and   $\forall i \in
\un{n}$   $\Gam, x:A_i \vd_\TST t:B_i$. 
\item \label{last-add} $\Gam \vd_\TST\l x. t:B \> C$ iff   $\Gam, x:B \vd_\TST t:C$. 
\end{enumerate}
\label{l:additive-generation-lemma}
\end{restatable}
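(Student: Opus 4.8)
The plan is to establish each of the five equivalences separately, and within each to treat the two directions quite differently. The ``if'' directions are all immediate synthesis arguments: given the hypothesised family $\{(A_i,B_i)\}_{i\in\un{n}}$ and the premises, one applies the relevant syntax-directed rule ($\aaxiom$ for item~1, $\aapp$ for item~2, $\asubstr$ for item~3, $\aabs$ for item~4) once for each $i$ to obtain $\Gam \vd_\TST t:A_i$, collapses these by repeated uses of $\ini$ into $\Gam \vd_\TST t:\capp{n}A_i$, and finally invokes \cref{l:ll-and-typing} together with $\capp{n}A_i \ll A$ to reach $\Gam \vd_\TST t:A$. Item~1 is the degenerate instance of this pattern: $\aaxiom$ to get $\Gam \vd_\TST x:B$, followed by \cref{l:ll-and-typing} using $B \ll A$.

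All the content lies in the ``only if'' (inversion) directions, which I would prove by induction on the derivation of $\Gam \vd_\TST t:A$. The organising observation is that, in an additive system, the last rule of such a derivation is either the syntax-directed rule dictated by the head constructor of $t$, or one of the two intersection rules $\ini$, $\ine$, which can be applied irrespective of the shape of $t$. In the syntax-directed case the witnessing family is read off directly with $n=1$ (so that $\capp{1}A_i = A_i = A$ and the $\ll$-condition is trivial). The two intersection cases are identical in form across items~1--4 and I would dispatch them uniformly. For $\ini$, where $A = A'\cap A''$ arises from $\Gam \vd_\TST t:A'$ and $\Gam \vd_\TST t:A''$, the induction hypothesis applied to both premises yields two families; I concatenate them, observe that the resulting intersection $I$ satisfies $I \ll A'$ and $I \ll A''$ (each by inequalities of the form $A\cap B \ll A$ together with transitivity), and conclude $I \ll A'\cap A'' = A$ by the last clause of the definition of $\ll$. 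For $\ine$, where $A = A_k$ arises from $\Gam \vd_\TST t:A_1\cap A_2$, the induction hypothesis gives a family whose intersection is $\ll A_1\cap A_2 \ll A_k = A$, and transitivity closes the case. Additivity is exactly what makes this clean: since the environment $\Gam$ is shared by the premises of $\ini$, the two families merge with no need to split or recombine contexts, and the per-member side conditions (such as $\Gam \vd_\TST u:B_i$ and $\Gam, x:B_i \vd_\TST t:A_i$ in item~3) survive the concatenation untouched.

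Item~5 I would not reprove from scratch but derive from item~4. Instantiating item~4 at $A = B \> C$ produces a family with $\capp{n}(A_i \> B_i) \ll B \> C$ and $\Gam, x:A_i \vd_\TST t:B_i$ for every $i$. Since no $A_i \> B_i$ is an intersection type and $B \> C$ is not one either, \cref{l:relation-between-ll-types} forces $B \> C = A_i \> B_i$ for some $i$, hence $B = A_i$ and $C = B_i$, which yields $\Gam, x:B \vd_\TST t:C$; the converse is a single application of $\aabs$.

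The main obstacle will be bookkeeping rather than mathematics: carrying the witnessing families $\{(A_i,B_i)\}$ and their $\ll$-inequalities correctly through the $\ini$ merge, and verifying that each clause's per-member typing conditions are preserved under concatenation of families. Once the uniform handling of $\ini$ and $\ine$ is isolated, the only genuine external inputs are the subtyping transfer of \cref{l:ll-and-typing} and the arrow-matching property of \cref{l:relation-between-ll-types}, both already available.
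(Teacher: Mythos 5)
Your proposal is correct and follows essentially the same route as the paper: the right-to-left directions by direct synthesis with the syntax-directed rules, repeated $\ini$, and \cref{l:ll-and-typing}; the left-to-right directions by induction on the derivation with the syntax-directed case giving $n=1$, the $\ini$ case handled by concatenating the two families obtained from the induction hypothesis, and the $\ine$ case by transitivity of $\ll$; and item~5 deduced from item~4 via \cref{l:relation-between-ll-types}. The only detail worth noting is that in item~1 the ``merge'' at an $\ini$ step is not a concatenation but the observation that the two witnesses $B_1,B_2$ coincide because environments are partial functions, exactly as in the paper.
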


\begin{restatable}[Multiplicative Generation Lemma]{lemma}{multiplicativegenerationlemma} 
  Let $\TST$ be a multiplicative system. Then 
\begin{enumerate}
\item \label{lax} $\Gam \vd_\TST x:A$ iff $\Gam = x:B$ and $B \ll A$.
\item \label{lapp} $\Gam \vd_\TST t\ u:A$ iff $\Gam = \Gam_1 \uplus \Gam_2$,
where $\Gam_1 = \fv(t)$ and $\Gam_2 = \fv(u)$ 
and there exist $A_i, B_i, i \in \un{n}$ s.t.
$\capp{n} A_i \ll A$ and  $\forall i \in \un{n}$, $\Gam_1 \vd_\TST t: B_i \> A_i$ and
$\Gam_2 \vd_\TST u:B_i$.
\item \label{lsub} $\Gam \vd_\TST t[x/u]:A$ iff $\Gam = \Gam_1 \uplus \Gam_2$,
where $\Gam_1 = \fv(t) \setminus \set{x}$ and $\Gam_2 = \fv(u)$
and there exist $A_i, B_i, i \in \un{n}$ s.t.
$\capp{n} A_i \ll A$ and  $\forall i \in
\un{n}$, $\Gam_2 \vd_\TST u:B_i$ and 
either $x \notin \fv(t)\ \&\ \Gam_1 \vd_\TST t: A_i$ or 
$x \in \fv(t)\ \&\ \Gam_1, x:B_i \vd_\TST t: A_i$.
\item \label{lambda-case}
$\Gam \vd_\TST \l x. t:A$ iff  $\Gam = \fv(\l x.t)$ and there exist $A_i, B_i, i \in
  \un{n}$ s.t. $\capp{n} (A_i \> B_i) \ll A$
and l $\forall i \in
\un{n}$, either $x \notin \fv(t)\ \&\ \Gam \vd_\TST t:B_i$ or 
$x \in \fv(t)\ \&\ \Gam, x:A_i \vd_\TST t:B_i$.
\item \label{last} $\Gam \vd_\TST \l x. t:B \> C$ iff $\Gam = \fv(\l x.t)$ and  $\Gam, x:B \vd_\TST
  t:C$ or $\Gam \vd_\TST t:C$.
\end{enumerate}
  \label{l:multiplicative-generation-lemma}
\end{restatable}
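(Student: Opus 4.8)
The plan is to prove each of the five biconditionals by treating its two directions separately: the forward (``only if'') direction as an inversion argument by induction on the typing derivation, and the backward (``if'') direction as a direct construction. Before starting, I would record the basic invariant of the multiplicative systems that $\dom(\Gam) = \fv(t)$ whenever $\Gam \vd_\TST t:A$; this follows by an easy induction on the derivation (the rules $\ini$ and $\ine$ leave $\Gam$ untouched, while each syntax-directed rule splits or extends the environment exactly along the free variables, in the spirit of the $\Gam \cap \fv(t)$ restriction of \cref{l:typable-add-typable-mul-simple-types}). This invariant forces the environment to decompose as stated, i.e. $\Gam = \fv(t) \uplus \fv(u)$ in the application case and $\Gam = \fv(\l x.t)$ in the abstraction case, and, crucially, it makes the rule applied at an abstraction or a substitution \emph{deterministic}: $\mabsi$ (resp. $\msubsi$) is used exactly when $x \in \fv(t)$ and $\mabsii$ (resp. $\msubsii$) exactly when $x \notin \fv(t)$, which is precisely the alternative appearing in parts 3, 4 and 5.

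For the forward direction of each part I would induct on the derivation and analyse the last rule. When this rule is the syntax-directed one matching the shape of the subject term (for instance $\mapp$ for $t\ u$, or $\maxiom$ for $x$), it directly yields the required family with a single index $n=1$, the domain invariant identifying the environment pieces. The two genuinely inductive cases are $\ini$ and $\ine$, which keep the subject fixed and are handled by manipulating the family $(A_i, B_i)_{i \in \un{n}}$. For $\ine$, where $A$ is a component of the premise's type $A_1 \cap A_2$, the same family works since $\capp{n} A_i \ll A_1 \cap A_2 \ll A$ by the projection and transitivity rules of $\ll$. For $\ini$, where $A = A' \cap A''$, I would \emph{concatenate} the families $(A_i,B_i)_{i\in\un{n}}$ and $(A'_j,B'_j)_{j\in\un{m}}$ obtained for $A'$ and $A''$, and verify $\capp{n} A_i \cap \capp{m} A'_j \ll A' \cap A''$ using the projection, transitivity and meet rules of $\ll$; the per-index typing premises carry over unchanged, so the merged family satisfies the statement.

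The backward direction of parts 1--4 is then routine: given the family, I derive $\Gam \vd_\TST t:A_i$ (or the appropriate arrow type) for each $i$ by the matching syntax-directed rule --- selecting $\mabsi$/$\msubsi$ or $\mabsii$/$\msubsii$ according to whether $x \in \fv(t)$ --- collapse the $n$ derivations by repeated $\ini$ to get $\Gam \vd_\TST t:\capp{n} A_i$, and finally pass from $\capp{n} A_i$ to $A$ via $\capp{n} A_i \ll A$ and \cref{l:ll-and-typing}. Part 5 need not be proved from scratch: it is the instance of part 4 with $A = B \> C$, and since $B \> C$ is not an intersection, \cref{l:relation-between-ll-types} supplies an index $i$ with $A_i \> B_i = B \> C$, from which the disjunction $\Gam, x:B \vd_\TST t:C$ or $\Gam \vd_\TST t:C$ is immediate (and the converse uses $\mabsi$ or $\mabsii$ as above). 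I expect the main obstacle to be the bookkeeping in the $\ini$ case of the inversion argument --- checking that the concatenated family still meets the $\ll$-inequality and that the environment and the choice between the two rule variants stay consistent across the merge --- together with the careful use of the domain invariant to align the syntactic side conditions $x \in \fv(t)$ and $x \notin \fv(t)$ with the rule actually applied.
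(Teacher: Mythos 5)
Your proposal is correct and follows essentially the same route as the paper's own proof: forward directions by induction on the typing derivation, with family concatenation for the $\ini$ case and transitivity of $\ll$ for the $\ine$ case; backward directions by applying the syntax-directed rules, collapsing with repeated $\ini$, and finishing with \cref{l:ll-and-typing}; and part 5 deduced from part 4 via \cref{l:relation-between-ll-types}. The only minor deviations are that you state the domain invariant $\dom(\Gam)=\fv(t)$ explicitly (the paper uses it silently to justify the environment decompositions and the determinism of $\mabsi$/$\mabsii$ and $\msubsi$/$\msubsii$), and that in the backward direction of part 3 you apply the substitution rule once per index and intersect afterwards, whereas the paper first intersects the $B_i$ and invokes \cref{l:ll-preserves-typing} to retype $t$ under $x:\capp{n}B_i$ --- both variants are sound.
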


We can now state  a 
correspondence between the multiplicative 
and additive systems with intersection types.

\begin{lemma}
\label{l:typable-add-typable-mul}
Let $t$ be a $\mterms$-term. Then 
$\Gam \vd_{\addls} t:A$
iff  $\Gam \cap \fv(t) \vd_{\multls} t:A$. Moreover, 
if $t$ is a  $\l$-term, then $\Gam \vd_{\addl} t:A$
iff  $\Gam \cap \fv(t) \vd_{\multl} t:A$.
\end{lemma}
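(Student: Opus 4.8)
The plan is to prove the statement for $\mterms$-terms by induction on the structure of $t$, and then to obtain the claim for $\l$-terms as an immediate specialisation. Both implications are handled uniformly by appealing to the generation lemmas (\cref{l:additive-generation-lemma} and \cref{l:multiplicative-generation-lemma}), which already absorb the intersection rules $\ini$ and $\ine$, so that only the shape of $t$ needs to be analysed. Two facts make the argument run smoothly: first, that for any $t$ typable in $\addls$ all free variables of $t$ lie in $\dom(\Gam)$, whence $\dom(\Gam \cap \fv(t)) = \fv(t)$ exactly, which is precisely the domain condition demanded by the multiplicative rules; and second, that restriction distributes over the free-variable splitting, i.e.\ $\Gam \cap \fv(t_1\ t_2) = (\Gam \cap \fv(t_1)) \uplus (\Gam \cap \fv(t_2))$ and similarly for a closure, the two restricted contexts being compatible because they arise from the same $\Gam$.

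First I would treat the variable case, where both sides reduce to the existence of $x:B \in \Gam$ with $B \ll A$. For the application case $t = t_1\ t_2$, the additive generation lemma yields types $B_i \> A_i$ and $B_i$ with $\capp{n} A_i \ll A$ such that $\Gam \vd_{\addls} t_1 : B_i \> A_i$ and $\Gam \vd_{\addls} t_2 : B_i$; the induction hypothesis converts these into $\Gam \cap \fv(t_1) \vd_{\multls} t_1 : B_i \> A_i$ and $\Gam \cap \fv(t_2) \vd_{\multls} t_2 : B_i$, and rule $\mapp$ followed by repeated $\ini$ and \cref{l:ll-and-typing} rebuilds $\Gam \cap \fv(t) \vd_{\multls} t : A$. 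The converse direction runs the same equivalences backwards, using the multiplicative generation lemma to obtain the forced splitting $\Gam \cap \fv(t) = (\Gam \cap \fv(t_1)) \uplus (\Gam \cap \fv(t_2))$, whose two halves feed the induction hypothesis.

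The abstraction and closure cases introduce a bound variable $y$, and this is where the two disciplines genuinely differ, so that is where I expect the real work. The additive derivation always carries $y : A_i$ in the (shared) context, whereas the multiplicative generation lemma splits into the subcase $y \in \fv(s)$, where the entry for $y$ is recorded, and the subcase $y \notin \fv(s)$, where it is dropped. When $y \in \fv(s)$ the induction hypothesis applied to the body $s$ delivers exactly the context $(\Gam \cap \fv(\l y. s)), y : A_i$ that the multiplicative rule expects (using the convention $y \notin \dom(\Gam)$). When $y \notin \fv(s)$ I would first invoke weakening (\cref{l:weakening}) to pass between $\Gam, y : A_i \vd_{\addls} s : B_i$ and $\Gam \vd_{\addls} s : B_i$, and only then apply the induction hypothesis; the identity $\Gam \cap \fv(s) = \Gam \cap \fv(\l y. s)$ then matches the binder-free multiplicative premise. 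The closure case $t = s[y/u]$ combines this binder bookkeeping for $s$ with the application-style splitting of $\Gam$ between $s$ and $u$.

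Finally, the $\l$-term statement follows with no further induction: on a pure $\l$-term the systems $\addl$ and $\addls$ derive the same judgements, and likewise $\multl$ and $\multls$, as already remarked, so the $\mterms$-term equivalence specialises directly (exactly as \cref{l:typable-add-typable-mul-simple-types} handles the simple-type case). The main obstacle throughout is the combinatorial bookkeeping of environments --- checking that the restriction $\Gam \cap (\cdot)$ commutes with the $\uplus$-splitting and with the insertion of the binder $y$, and that every domain condition of the multiplicative rules is met --- rather than anything conceptually deep, since all reasoning about $\ll$ and intersections is delegated to the generation lemmas and \cref{l:ll-and-typing}.
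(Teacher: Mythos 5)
Your proposal is correct and follows essentially the same route as the paper's proof: induction on the structure of $t$, with both directions driven by the additive and multiplicative generation lemmas (\cref{l:additive-generation-lemma}, \cref{l:multiplicative-generation-lemma}), weakening (\cref{l:weakening}) to reconcile the binder cases, and \cref{l:ll-and-typing} to reassemble intersection types. The paper states this only as a two-line sketch, and your write-up supplies the environment-restriction bookkeeping that sketch leaves implicit, so nothing further is needed.
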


\begin{proof} 
  The right to left implication is by induction on $t$ using both
  generation  lemmas and~\cref{l:weakening}.   The left  to
  right implication  is by induction on $t$  using the generation
  lemmas.
\end{proof}

Since  systems $\saddl$ and $\smultl$ (resp. $\saddls$  and $\smultls$)
type the same sets of $\l$-terms (resp. $\mterms$-terms) (\cref{l:typable-add-typable-mul-simple-types}), and
systems $\addl$ and $\multl$ (resp. $\addls$  and $\multls$) type
the same sets of  $\l$-terms (resp. $\mterms$-terms) (\cref{l:typable-add-typable-mul}), then, from now on,
 \defn{simply typable  $\l$-term} means typable in
$\saddl$ or  $\smultl$, \defn{intersection  typable  $\l$-term} means typable in
$\addl$ or  $\multl$,   
 \defn{simply typable  $\mterms$-term} means typable in
 $\saddls$ or $\smultls$ and 
 \defn{intersection  typable $\mterms$-term} means typable in
 $\addls$ or  $\multls$.

\subsection{Simply typable  $\mterms$-terms are $\lm$-strongly normalising}
\label{s:simply-typed-implies-sn}

The goal of this section is to show that simply typable $\mterms$-terms
are $\lm$-strongly normalising. This result turns out to be a
consequence of strong normalisation of simply typable $\l$-terms
in the partial  $\lpar$-calculus; a result 
which can be shown using a simple  arithmetical proof~\cite{vanDaalen,David}.
This proof is constructive as it only uses induction and
intuitionistic reasoning. 

\begin{lemma}
\label{l:substitution-of-sn-lpar}
Let  $t,u$ be  a simply typable  $\l$-terms. 
If $t,u\in \SN{\lpar}$, then $t\isubs{x}{u}
\in \SN{\lpar}$.
\end{lemma}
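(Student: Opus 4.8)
The plan is to run the arithmetical strong-normalisation argument of van Daalen and David directly on $\lpar$. Since $\lpar$ is finitely branching, each $t\in\SN{\lpar}$ has a well-defined maximal reduction length $\eta(t)$, and $\SN{\lpar}$ enjoys the inductive characterisation ``$s\in\SN{\lpar}$ iff every one-step reduct of $s$ lies in $\SN{\lpar}$'', which I use to drive the induction. Fix simple-type derivations with $x{:}A$ in the typing of $t$ and $\vdash u{:}A$, and write $\deg(A)$ for the size of $A$; by the usual substitution lemma for typing, $t\isubs{x}{u}$ is again simply typable, and simple types are preserved along $\lpar$-reduction, so $\deg(A)$ is stable throughout. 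I prove the statement by a principal induction on $\deg(A)$, within a fixed degree by a secondary induction on $\eta(u)$ and then $\eta(t)$, and with an innermost induction on the size $|t|$.

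First I would clear the structural cases by the innermost ($|t|$) induction. If $t=x$ then $t\isubs{x}{u}=u\in\SN{\lpar}$; if $t=y\neq x$ the reduct is $y$; if $t=\l y.t_1$ then $t\isubs{x}{u}=\l y.t_1\isubs{x}{u}$ where $t_1$ is an SN, typable, strictly smaller subterm, so the \ih\ gives $t_1\isubs{x}{u}\in\SN{\lpar}$ and an abstraction is SN exactly when its body is. If $t=t_1\,t_2$ the \ih\ gives that $p:=t_1\isubs{x}{u}$ and $q:=t_2\isubs{x}{u}$ are both SN, and it remains to show $p\,q\in\SN{\lpar}$.

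To treat the application case I classify a one-step reduct of $t\isubs{x}{u}$ by the position of the contracted redex, disposing of all positions \emph{except} created redexes. A step strictly inside one of the substituted copies of $u$ corresponds to a step $u\Rew{\lpar}u'$ in that copy, lowering $\eta(u)$; to make this precise without juggling the several copies I would generalise the statement to the simultaneous substitution of finitely many reducts of $u$ for distinct variables and take $\sum\eta$ as the secondary measure, so the case closes by the secondary \ih. A step contracting a redex whose abstraction already occurs in $t$ (including a pre-existing top redex $(\l y.a)\,c$) is, after tracking the one-occurrence firing of $\bp$ and the erasure $\BGc$, exactly the image of a step $t\Rew{\lpar}t'$; hence the reduct is $t'\isubs{x}{u}$ and we conclude by the $\eta(t)$-induction.

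The genuine obstacle is the \emph{created} redex: if $t$ contains $x\,c$ in function position and $u=\l y.w$, then $t\isubs{x}{u}$ exhibits the new redex $(\l y.w)\,c\isubs{x}{u}$, whose cut type $B$ (the domain of $A=B\to C$) satisfies $\deg(B)<\deg(A)$. Here a single $\bp$ step lowers none of $\eta(u),\eta(t),|t|$, so the step-by-step measure fails and one must account for the whole redex at once. The device is to invoke the principal \ih\ at the strictly smaller degree $B$: since $w$ is a subterm of the SN term $u$ and $c\isubs{x}{u}$ is SN by the innermost \ih, the principal \ih\ yields $w\isubs{y}{c\isubs{x}{u}}\in\SN{\lpar}$, and a companion ``the contractum of the head redex is SN'' statement, proved inside the same degree induction, upgrades this to $(\l y.w)\,c\isubs{x}{u}\in\SN{\lpar}$ and hence $p\,q\in\SN{\lpar}$. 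I expect the delicate part of the write-up to be organising this mutual dependence between the substitution statement and the head-redex statement, together with the bookkeeping forced by the one-occurrence-at-a-time nature of $\bp$ (a partial firing keeps the redex while strictly decreasing the number of remaining occurrences of the bound variable, and is therefore dominated by complete firing followed by $\BGc$).
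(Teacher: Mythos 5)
Your overall strategy is the right one --- it is the same van~Daalen/David arithmetical argument the paper itself uses, with the type degree of $u$ as the principal induction component and the degree-decrease trick for created redexes --- but the lexicographic measure you commit to, $(\deg(A),\ \sum\eta,\ \eta_{\lpar}(t),\ |t|)$, does not support the two case-closures you rest on it, and this is a genuine gap. The problem is that in $\lpar$ the skeleton steps themselves duplicate copies of $u$: a $\bp$-step of $t$ fires $(\l y.C\cwc{y})\,a \Rew{\bp} (\l y.C\cwc{a})\,a$, copying the argument $a$ together with every free occurrence of $x$ inside it. So in your case ``a step contracting a redex whose abstraction already occurs in $t$'', the reduct $t'\isubs{x}{u}$ contains strictly more copies of (reducts of) $u$ than $t\isubs{x}{u}$ did: under the occurrence-indexed reading of your generalised statement, $\sum\eta$ strictly increases whenever the copied argument contains a non-normal copy, while only $\eta_{\lpar}(t)$ decreases --- and $\eta_{\lpar}(t)$ sits \emph{after} the sum in your ordering, so the induction hypothesis is not available. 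Concretely, take $t=(\l y.\,z\,y\,y)\,x$ and $u$ typable, strongly normalising but not normal: $t\isubs{x}{u}$ has the one-step $\bp$-reduct $(\l y.\,z\,u\,y)\,u$, an instance whose copy-measure is $2\eta_{\lpar}(u)>\eta_{\lpar}(u)$. If you instead read $\sum\eta$ as ranging over \emph{distinct} variables, each substituted uniformly, this case is saved, but then your other case breaks: a step inside one of several copies of the same $u_i$ forces you to split that occurrence off with a fresh variable carrying the reduct $u_i'$, and the distinct-sum does not decrease (it grows by $\eta(u_i')$), so ``the case closes by the secondary \ih'' fails there. Under either reading one of your two ``easy'' cases is unjustified.

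The repair is to let $\eta_{\lpar}(t)$ dominate all copy-tracking, which is exactly how the paper organises the proof: its outer measure is $\langle \type(u), \eta_{\lpar}(t), |t|\rangle$, and it avoids a copy-sum altogether by decomposing $t$ into applicative head form ($t=x\,v\,\ov{v_n}$, $t=y\,\ov{v_n}$, $t=(\l y.s)\,v\,\ov{v_n}$, $t=\l y.v$) rather than into $t_1\,t_2$. The arguments are substituted first via the size component, so the only explicitly tracked copy of $u$ is the single head copy; a step inside any other copy is just a step inside one of the already-known strongly normalising terms $V=v\isubs{x}{u}$, $V_i=v_i\isubs{x}{u}$, and is handled by an \emph{inner} induction on $\eta_{\lpar}(u)+\eta_{\lpar}(V)+\sum_i\eta_{\lpar}(V_i)$, while a skeleton step decreases $\eta_{\lpar}(t)$ with nothing placed before it that could grow. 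Your treatment of the created redex --- degree decrease on the cut type, a companion ``the redex is strongly normalising if its contractum is'' lemma, and the observation that partial firing is dominated by complete firing followed by $\BGc$ --- matches what the paper does (it rewrites the contractum as $(z\,\ov{V_n})\isubs{z}{U}$, resp. $(\l y.C\cwc{z})\isubs{z}{V}$, and invokes the lemma at the smaller type), so once the measure is reordered to $(\deg(A),\eta_{\lpar}(t),\sum\eta,|t|)$, or the head-form decomposition is adopted, your proof goes through.
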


\begin{proof}
By  induction on  $\langle \type(u),   \eta_{\lpar}(t),  |t|
\rangle$, where $|t|$ is the number of constructors in $t$. \paper{We treat the interesting cases.} 

\begin{itemize}
\report{
\item  If  $t =   \lambda  y.  v$,  then  $v\isubs{x}{u}  \in
         \SN{\lpar}$  by  the  \ih\   and   thus $t\isubs{x}{u}    = \l
         x.      v\isubs{x}{u}    \in   \SN{\lpar}$.

\item $t  = y  \ov{v_n}$
         with $x \neq  y$. The \ih\ gives
$v_i\isubs{x}{u} \in \SN{\lpar}$ 
since $\eta_{\lpar}(v_i)$ decreases  and
         $|v_i|$ strictly decreases. Then we conclude
         straightforward. 

\item $t =x $. Then $x \isubs{x}{u} =u\in \SN{\lpar}$ by the hypothesis. 
}

\item $t  =  x  v   \ov{v_n}$. The \ih\ 
         gives  $V = v \isubs{x}{u}$ and  $V_i =
         v_i\isubs{x}{u}$ in $\SN{\lpar}$. 
         To show $t\isubs{x}{u}=u V \ov{V_n} \in \SN{\lpar}$
         it is sufficient to show that all its reducts are
         in $\SN{\lpar}$. We reason by induction on
         $\eta_{\lpar}(u) + \eta_{\lpar}(V) + \Sigma_{i \in 1
           \ldots n}\ \eta_{\lpar}(V_i)$. 

         \begin{itemize} 

        \item  If  the reduction takes place  in $u$,
         $V$ or $V_i$,  then the   property holds  by  the \ih\  \item
         Suppose $u = \l y . U$ and $(\l y  . U)\ V\ \ov{V_n} \Rew{\BGc}
         U\   \ov{V_n}$.   We  write     $U\    \ov{V_n}$  as      $(z\
         \ov{V_n})\isubs{z}{U}$, where $z$ is a fresh variable.  Since
         every  $V_i   \in  \SN{\lpar}$,   then  $z\   \ov{V_n}    \in
         \SN{\lpar}$.  Also, $u    \in  \SN{\lpar}$   implies $U   \in
         \SN{\lpar}$.  Thus,   $\type(U)   < \type(u)$   implies   $(z\
         \ov{V_n}) \isubs{z}{U} \in \SN{\lpar}$ by the \ih
        
         \item 
         Suppose  $u = \l y . C\cwc{y}$ and $(\l y . C\cwc{y})\ V\ \ov{V_n} 
        \Rew{\bp} (\l y. C\cwc{V})\ V\  \ov{V_n}$. 
         We write $\l y. C\cwc{V}$ as $(\l y. C\cwc{z})\isubs{z}{V}$, where
         $z$ is a fresh variable. Since $u \in \SN{\lpar}$, then
         $C\cwc{y} \in \SN{\lpar}$. The change of free occurrences of 
         variables preserve normalisation so that
         $C\cwc{z} \in \SN{\lpar}$ and thus $\l y. C\cwc{z} \in \SN{\lpar}$.
         We also  have $\type(V) = \type(v)  
          < \type(u)$     so     that we get 
         $(\l y. C\cwc{z})\isubs{z}{V} \in  \SN{\lpar}$ by the  \ih\ 
         \end{itemize}

\item $t =
         (\l y. s) v  \ov{v_n}$.  The  \ih\ gives $S = s\isubs{x}{u}$
         and    $V = v    \isubs{x}{u}$   and $V_i =  v_i
         \isubs{x}{u}$ are in $\SN{\lpar}$. These terms  are  also  typable.
         To show 
         $t\isubs{x}{u}        =   (\l y.       S)  V  \ov{V_n} \in \SN{\lpar}$
         it is sufficient to show that all its reducts are in $\SN{\lpar}$.  
         We reason by induction on
         $\eta_{\lpar}(S) + \eta_{\lpar}(V) + \Sigma_{i \in 1
           \ldots n}\ \eta_{\lpar}(V_i)$.

         \begin{itemize}
         \item  If  the reduction takes place  in $S$,
         $V$ or $V_i$,  then the   property holds  by  the \ih\  

         \item
         Suppose $(\l y  . S)\ V\ \ov{V_n} \Rew{\BGc}
         S\   \ov{V_n}$.   We  write     $S\    \ov{V_n}$  as      $(s\
         \ov{v_n})\isubs{x}{u}$.  Since
         $(\l y. s)\ v\ \ov{v_i} \Rew{\lpar} s\ \ov{v_n}$, then
         $\eta_{\lpar}(s\ \ov{v_n}) < \eta_{\lpar}((\l y. s)\ v\ \ov{v_n})$ 
         and thus we conclude 
          $S\   \ov{V_n}    \in
         \SN{\lpar}$ by the \ih
        
         \item 
         Suppose  $u = \l y . C\cwc{y}$ and $(\l y . C\cwc{y})\ V\ \ov{V_n} 
        \Rew{\bp} (\l y. C\cwc{V})\ V\  \ov{V_n}$. 
         We write $\l y. C\cwc{V}$ as $(\l y. C\cwc{v})\isubs{x}{u}$. 
         Since $(\l y . C\cwc{y})\ v\ \ov{v_n} 
        \Rew{\bp} (\l y. C\cwc{v})\ v\  \ov{v_n}$, then 
         $\eta_{\lpar}((\l y. C\cwc{v})\ v\  \ov{v_n}) < 
          \eta_{\lpar}((\l y. C\cwc{v})\ v\  \ov{v_n})$ and thus
         we conclude  $(\l y. C\cwc{V})\ V\  \ov{V_n} \in \SN{\lpar}$
         by the \ih

          \end{itemize}

\end{itemize} 
\end{proof}

\begin{theorem}[SN for $\lpar$]
  \label{c:lpar-sn}
  Let $t$ be a $\l$-term.
If $t$ is  simply typable,  then $t\in \SN{\lpar}$.
\end{theorem}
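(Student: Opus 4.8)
The plan is to prove the statement by induction on the structure of $t$, using the Additive Generation Lemma (\cref{l:additive-generation-lemma}) to invert the typing judgement at each subterm, and using \cref{l:substitution-of-sn-lpar} to control the redexes created along reductions. If $t = x$ then $t$ is already a $\lpar$-normal form, so $t \in \SN{\lpar}$. If $t = \l y. s$, inversion gives that $s$ is simply typable, hence $s \in \SN{\lpar}$ by the induction hypothesis; since every $\lpar$-redex is an application, each reduction step of $\l y. s$ takes place inside $s$, and therefore any infinite reduction of $\l y. s$ would yield one of $s$, whence $\l y. s \in \SN{\lpar}$.

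The application case $t = u\ v$ is the heart of the argument. Inversion yields that $u$ and $v$ are simply typable, so the induction hypothesis gives $u, v \in \SN{\lpar}$, and it remains to establish the auxiliary statement that, for simply typable $u,v$ with $u,v\in\SN{\lpar}$, one has $u\ v\in\SN{\lpar}$. When $u$ is not an abstraction I would argue by a side induction on $\eta_{\lpar}(u)+\eta_{\lpar}(v)$: the term $u\ v$ has no redex at the root, so its only reducts are $u'\ v$ and $u\ v'$, which remain typable by subject reduction and strictly decrease the measure; if some $u'$ is itself an abstraction the head-expansion case below applies, otherwise the side induction hypothesis does. When $u = \l x. w$, inversion makes $w$ simply typable, and since $w$ is a subterm of $u\in\SN{\lpar}$ we have $w\in\SN{\lpar}$; \cref{l:substitution-of-sn-lpar} then gives $w\isubs{x}{v}\in\SN{\lpar}$, and I would close the case with a head-expansion claim: whenever $w,v$ are simply typable with $w,v,w\isubs{x}{v}\in\SN{\lpar}$, then $(\l x. w)\ v\in\SN{\lpar}$.

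The head-expansion claim would be proved by a nested induction on the lexicographically ordered triple $\langle \eta_{\lpar}(w\isubs{x}{v}),\ \eta_{\lpar}(v),\ n_{x,w}\rangle$, where $n_{x,w}$ is the number of free occurrences of $x$ in $w$, by showing that every one-step reduct of $(\l x.w)\ v$ lies in $\SN{\lpar}$. A reduction inside $w$ or inside $v$ strictly decreases the first component when $x\in\fv(w)$, using that $\lpar$-reduction is stable under implicit substitution so that $w\isubs{x}{v}$ strictly reduces; when $x\notin\fv(w)$ the first component is unchanged but a reduction of $v$ strictly decreases the second component, which is why $\eta_{\lpar}(v)$ is included. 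A $\Rew{\BGc}$-step produces $w\in\SN{\lpar}$ outright.

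The delicate case, and the one I expect to be the main obstacle, is the $\Rew{\bp}$-step $(\l x.\, C\cwc{x})\ v \Rew{\bp} (\l x.\, C\cwc{v})\ v$, since neither $\eta_{\lpar}$ of the body nor of the argument need decrease. The observation that unblocks it is that the full implicit substitution is invariant under $\bp$: because $x\notin\fv(v)$ we have $C\cwc{v}\isubs{x}{v} = C\cwc{x}\isubs{x}{v} = w\isubs{x}{v}$, so the first component is preserved, whereas $n_{x,C\cwc{v}} = n_{x,w}-1$ strictly decreases the third component. Moreover $C\cwc{v}\in\SN{\lpar}$ follows from $w\isubs{x}{v}\in\SN{\lpar}$, since any reduction of $C\cwc{v}$ lifts to a reduction of its instance $C\cwc{v}\isubs{x}{v}=w\isubs{x}{v}$; thus the induction hypothesis applies to $(\l x.\, C\cwc{v})\ v$, the nested induction goes through, and the theorem follows.
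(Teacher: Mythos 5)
Your proof is correct, but it routes the application case quite differently from the paper. The paper's proof is the same structural induction, yet it dispatches $t = u\ v$ in one stroke: writing $t = (z\ v)\isubs{z}{u}$ for a fresh variable $z$ (with $z\ v$ strongly normalising because $v$ is, and typable by assigning $z$ the arrow type), it applies \cref{l:substitution-of-sn-lpar} once; all creation of head redexes---including the case where $u$ is, or later reduces to, an abstraction---is absorbed into that lemma's own induction, which decreases on the type of the substituted term. You instead use \cref{l:substitution-of-sn-lpar} only to obtain $w\isubs{x}{v} \in \SN{\lpar}$ when the head is $\l x. w$, and then prove from scratch a head-expansion lemma (the $\lpar$-analogue of the classical fact that $w\isubs{x}{v}, v \in \SN{\beta}$ implies $(\l x. w)\ v \in \SN{\beta}$) by the lexicographic triple $\langle \eta_{\lpar}(w\isubs{x}{v}), \eta_{\lpar}(v), n_{x,w}\rangle$, plus a side induction for non-abstraction heads. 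Your key observation---that a root $\bp$-step leaves $w\isubs{x}{v}$ literally invariant, since $C\cwc{v}\isubs{x}{v} = C\cwc{x}\isubs{x}{v}$ when $x \notin \fv(v)$, while strictly decreasing the number of free occurrences of $x$---is exactly right, and it is the same phenomenon the paper's lemma exploits, in disguise, inside its own $\bp$ sub-case; so your route duplicates part of the work already done in the proof of \cref{l:substitution-of-sn-lpar}. What the paper's encoding buys is brevity and maximal reuse of the lemma; what yours buys is a reusable, essentially type-free head-expansion lemma that isolates precisely why partial substitution does not break strong normalisation. Two caveats for a complete write-up: you silently rely on stability of $\Rew{\lpar}$ under implicit substitution (one step lifts to at least one step, and a step of $C\cwc{v}$ lifts to a step of $C\cwc{v}\isubs{x}{v}$) and on subject reduction for $\lpar$ in the simple type system; both are true and routine, but neither is proved in the paper. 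Also, citing \cref{l:additive-generation-lemma} is heavier than needed: in $\saddl$ there are no intersection rules, so inversion of application and abstraction judgements is immediate from the form of the rules.
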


\begin{proof}
By induction  on the structure  of $t$. The cases $t=x$  and $t=\l x. u$
are  straightforward.   If $t  = u v$,   then write   $t=(z\
v)\isubs{z}{u}$. By  the \ih\ $u,v \in  \SN{\lpar}$  and  thus~\cref{l:substitution-of-sn-lpar}   gives  $t  \in  \SN{\lpar}$.
\end{proof}

\begin{corollary}[SN for $\lm$ (i)]
  \label{c:sn-lm}
  Let $t$ be  a $\mterms$-term.
If $t$ is simply typable,   then $t\in \SN{\lm}$.
\end{corollary}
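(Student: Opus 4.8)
The plan is to reduce the statement to strong normalisation in the partial calculus $\lpar$, exploiting the translation $\lmpar{\_}$ from $\mterms$-terms to $\l$-terms together with \cref{c:lpar-sn} and the simulation of \cref{l:from-lm-to-lpar}. The guiding observation is that $\lmpar{\_}$ turns a closure $t[x/u]$ into the $\beta$-redex $(\l x. \lmpar{t})\ \lmpar{u}$, which is exactly the shape prescribed by the substitution rule $\asubstr$; hence simple typing is preserved by the translation, and any infinite $\lm$-reduction of $t$ would project to an infinite $\lpar$-reduction of the (typable, hence strongly $\lpar$-normalising) $\l$-term $\lmpar{t}$.

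First I would establish a typing-preservation statement: if $\Gam \vd_{\saddls} t:A$ then $\Gam \vd_{\saddl} \lmpar{t}:A$, by a routine induction on $t$. The cases for variables, applications and abstractions are immediate since $\lmpar{\_}$ commutes with these constructors. The only case needing an argument is $t = s[x/u]$: from the premises $\Gam \vd_{\saddls} u:B$ and $\Gam, x:B \vd_{\saddls} s:A$ of rule $\asubstr$, the induction hypothesis gives $\Gam \vd_{\saddl} \lmpar{u}:B$ and $\Gam, x:B \vd_{\saddl} \lmpar{s}:A$, so rule $\aabs$ yields $\Gam \vd_{\saddl} \l x.\lmpar{s}:B \> A$ and rule $\aapp$ yields $\Gam \vd_{\saddl} (\l x.\lmpar{s})\ \lmpar{u}:A$, where by definition $(\l x.\lmpar{s})\ \lmpar{u} = \lmpar{t}$. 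Applying \cref{c:lpar-sn} to $\lmpar{t}$ then gives $\lmpar{t} \in \SN{\lpar}$.

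Next I would run the projection argument, mirroring the converse half of \cref{c:lpar-lm} but now for an arbitrary $\mterms$-term. Suppose, for contradiction, that $t \notin \SN{\lm}$, and fix an infinite $\lm$-reduction sequence starting at $t$. Since $\Rew{\B}$ alone is terminating (each $\B$-step strictly decreases the number of applications and duplicates nothing), this sequence must contain infinitely many $\Rew{\m}$-steps. By \cref{l:from-lm-to-lpar}, every $\B$-step leaves the $\lmpar{\_}$-image unchanged, while every $\m$-step induces at least one $\lpar$-step on the image; concatenating these images along the infinite sequence yields an infinite $\lpar$-reduction starting at $\lmpar{t}$, contradicting $\lmpar{t} \in \SN{\lpar}$. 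Hence $t \in \SN{\lm}$.

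The only genuinely delicate point is making the two ingredients compose for a general $\mterms$-term rather than a pure $\l$-term: simple typing must be transported across $\lmpar{\_}$ (the routine but essential inductive step above), and the simulation of \cref{l:from-lm-to-lpar} must be applied along the whole infinite sequence, relying on termination of $\Rew{\B}$ to guarantee infinitely many $\m$-steps. I do not anticipate any substantial obstacle beyond this bookkeeping, since the heavy lifting—constructive strong $\lpar$-normalisation for simply typable $\l$-terms and the step-by-step $\m$-to-$\lpar$ simulation—has already been carried out.
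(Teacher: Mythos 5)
Your proof is correct, and its two main ingredients---typing preservation across $\lmpar{\_}$ (with exactly the $\asubstr$-to-$\aabs$/$\aapp$ case analysis you give) and the appeal to \cref{c:lpar-sn}---coincide with the paper's proof. Where you genuinely diverge is the final transfer step. The paper observes, in the same induction that establishes typing preservation, that $\lmpar{t} \Rewplus{\B} t$; it then invokes \cref{c:lpar-lm} to convert $\lmpar{t} \in \SN{\lpar}$ into $\lmpar{t} \in \SN{\lm}$ (legitimate since $\lmpar{t}$ is a pure $\l$-term), and concludes $t \in \SN{\lm}$ because $t$ is a $\lm$-reduct of $\lmpar{t}$ and strong normalisation is closed under reduction. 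You never use the fact $\lmpar{t} \Rewplus{\B} t$ at all; instead you run the projection argument directly on $t$: since \cref{l:from-lm-to-lpar} is stated for arbitrary $\mterms$-terms rather than only $\l$-terms, an infinite $\lm$-sequence from $t$---which must contain infinitely many $\Rew{\m}$-steps by termination of $\Rew{\B}$---projects to an infinite $\lpar$-sequence from $\lmpar{t}$, contradicting $\lmpar{t} \in \SN{\lpar}$. In effect you inline and generalise the converse half of the proof of \cref{c:lpar-lm}, which the paper only ever applies to pure $\l$-terms where $\lmpar{t} = t$. The paper's route buys modularity (it reuses \cref{c:lpar-lm} as a black box) at the price of the extra observation that the translation $\B$-reduces back to $t$; your route avoids that observation and the detour through $\SN{\lm}$ of $\lmpar{t}$, but requires noticing that the simulation lemma already holds at the level of $\mterms$-terms and re-derives reasoning the paper had already packaged. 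Both arguments are sound.
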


\begin{proof}
Take  $t$ typable  in  $\saddls$. Then,   $\lmpar{t}$ (defined in~\cref{s:lpar}) is a 
$\l$-term. One shows by induction on $t$ that
$\lmpar{t}$ is typable in $\saddl$,   and that $\lmpar{t} \Rewplus{\B} t$.
Since $\lmpar{t}$ is a simply typable  $\l$-term, then
by~\cref{c:lpar-sn}
$\lmpar{t} \in \SN{\lpar}$ and by~\cref{c:lpar-lm}
$\lmpar{t} \in \SN{\lm}$ . Thus $t$ is also in $\SN{\lm}$. 
\end{proof}

 This same result admits
  an alternativa proof. 
  \begin{corollary}[SN for $\lm$ (ii)]
  Let $t$ be  a $\mterms$-term.
  If $t$ is simply typable,   then $t\in \SN{\lm}$.
\end{corollary}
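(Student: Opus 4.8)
The plan is to derive $\lm$-strong normalisation of simply typable $\mterms$-terms from the corresponding property of the explicit-substitution calculus $\les$, using the simulation of~\cref{p:les-sim-lm} together with the strong normalisation of simply typable $\les$-terms established in~\cite{K07}. Since $\les$-reduction and $\lm$-reduction both operate on $\mterms$-terms, the very same simple type systems $\saddls$ and $\smultls$ apply on both sides, so no new type discipline is needed. Concretely, I would combine three facts: (i) the translation $\tra{\_}$ preserves simple typability; (ii) simply typable $\mterms$-terms lie in $\SN{\les}$ by~\cite{K07}; and (iii) each $\Rew{\lm}$-step is simulated by a \emph{non-empty} $\Rew{\les}$-sequence on translations.

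First I would prove the preservation lemma: if $\Gam \vd_{\saddls} t:A$, then $\Gam \vd_{\saddls} \tra{t}:A$, by induction on $t$. The variable and abstraction cases follow directly from the \ih. For an application, $\tra{t\ u} = (\tra{t}\ \tra{u})[\fresh{y}/\tra{u}]$: the derivation of $\Gam \vd_{\saddls} t\ u:A$ ends in $\aapp$ with premises $\Gam \vd_{\saddls} t:B \> A$ and $\Gam \vd_{\saddls} u:B$, so the \ih\ gives $\Gam \vd_{\saddls} \tra{t}:B \> A$ and $\Gam \vd_{\saddls} \tra{u}:B$, and $\aapp$ types $\tra{t}\ \tra{u}$; the idle substitution $[\fresh{y}/\tra{u}]$ is then absorbed by $\asubstr$, using \cref{l:weakening} to adjoin the fresh binder $\fresh{y} \notin \fv(\tra{t}\ \tra{u})$. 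The closure cases are analogous: one first types the pre-normal form $\tra{t}[y/\tra{u}]$ (and, when $y \in \fv(t)$, its idle copy) exactly as above, and then passes to its $\ALC$-normal form.

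With the lemma in hand, I would conclude by contradiction. Suppose $t$ is simply typable but $t \notin \SN{\lm}$, so there is an infinite reduction $t = t_0 \Rew{\lm} t_1 \Rew{\lm} \cdots$. Applying \cref{p:les-sim-lm} to each step yields $\tra{t_0} \Rewplus{\les} \tra{t_1} \Rewplus{\les} \cdots$, which is an \emph{infinite} $\les$-reduction since every segment is non-empty. But the preservation lemma makes $\tra{t}$ simply typable, whence $\tra{t} \in \SN{\les}$ by~\cite{K07}, contradicting the infinite sequence. Therefore $t \in \SN{\lm}$.

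The main obstacle is the closure case of the preservation lemma, namely that replacing $\tra{t}[y/\tra{u}]$ (possibly carrying an idle copy) by its $\ALC$-normal form does not destroy typability. I expect to discharge this via subject reduction for $\Rew{\ALC}$: each of the rules $\Appi$, $\Appii$, $\Appiii$, $\Lamb$, $\Compi$, $\Compii$ preserves the derived type, so the $\ALC$-normal form $\tra{t[y/u]}$ inherits the type $A$. The idle substitutions introduced by the translation never obstruct typing, precisely because their bodies are translations of already-typable subterms and their binders are fresh, so they are handled uniformly by $\asubstr$ together with \cref{l:weakening}.
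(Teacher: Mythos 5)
Your proposal is correct and follows essentially the same route as the paper's own proof: show that the translation $\tra{\_}$ preserves simple typability, invoke strong normalisation of simply typable $\les$-terms from~\cite{K07}, and project a hypothetical infinite $\lm$-reduction through \cref{p:les-sim-lm} to reach a contradiction. The paper works in $\smultls$ and dismisses the typability-preservation step as an easy induction, whereas you work in $\saddls$ (equivalent by \cref{l:typable-add-typable-mul-simple-types}) and fill in that induction, including the subject-reduction argument for $\Rew{\ALC}$ needed in the closure case — a detail the paper leaves implicit.
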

\begin{proof}
Let $\Gam \vd_{\smultls}  t:A$. 
It is not difficult
to  show that  $\tra{t}$ is also typable in
$\smultls$, by induction on $t$. Then $\tra{t} \in \SN{\les}$ by~\cite{K07}. 
Now, suppose  $t \notin \SN{\lm}$. Then given an infinite
$\lm$-reduction sequence starting at $t$
we can construct, by~\cref{p:les-sim-lm},
an infinite $\les$-reduction sequence
starting at $\tra{t}$. This leads to a contradiction with $\tra{t} \in \SN{\les}$.
Thus  $t \in \SN{\lm}$.
\end{proof}

\subsection{Intersection Typable  $\mterms$-terms are $\lm$-strongly normalising}
\label{s:intersection-typed-are-sn}

The goal of this section is to show that intersection
typable $\mterms$-terms
are $\lm$-strongly normalising.
We make use of the functions $\lmpar{\_}$
and $\tra{\_}$, respectively 
defined in~\cref{s:lpar} and~\cref{s:PSN}.

\begin{lemma}
    \label{t:t-typable-V-typable}
  Let $t$ be  a $\mterms$-term.
  Then $t$ is typable in $\multls$ if and only if
  $\lmpar{t}$ is typable in $\multl$.
\end{lemma}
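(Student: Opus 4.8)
The plan is to prove the stronger, judgement-level statement that the two systems derive exactly the same sequents, i.e.\ $\Gam \vd_{\multls} t:A$ if and only if $\Gam \vd_{\multl} \lmpar{t}:A$, from which the claimed equivalence of typability follows at once. First I would record the auxiliary fact that $\lmpar{\_}$ preserves free variables, $\fv(\lmpar{t}) = \fv(t)$ (and in particular $x \in \fv(t)$ iff $x \in \fv(\lmpar{t})$), by a straightforward induction on $t$. This is needed because the multiplicative systems thread free-variable sets through their environments via the splittings $\Gam = \Gam_1 \uplus \Gam_2$, so I must know that the translation does not disturb those sets.

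The biconditional is then proved by induction on the structure of $t$, driving every step through the Multiplicative Generation Lemma (\cref{l:multiplicative-generation-lemma}), which applies to $\multls$ and $\multl$ alike and conveniently absorbs the intersection rules $\ini$ and $\ine$ into the $\capp{n} A_i \ll A$ clauses, so that I never have to reason about intersection introduction or elimination separately. The cases $t = x$, $t = t_1\ t_2$ and $t = \l x.t_1$ are routine: since $\lmpar{\_}$ is the identity on variables and commutes with application and abstraction, the relevant clause of the generation lemma for $t$ matches verbatim the corresponding clause for $\lmpar{t}$, and the premises on the subterms are transferred back and forth by the induction hypothesis, using $\fv$-preservation to identify the environment splittings.

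The only case with real content is $t = t_1[x/t_2]$, where $\lmpar{t} = (\l x.\lmpar{t_1})\ \lmpar{t_2}$. Here I would apply the application clause of the generation lemma to this redex and then the abstraction clause to its head $\l x.\lmpar{t_1}$. This exhibits, for each $i$, a premise $\Gam_2 \vd \lmpar{t_2}:B_i$ together with either $\Gam_1, x:B_i \vd \lmpar{t_1}:A_i$ (when $x \in \fv(\lmpar{t_1})$) or $\Gam_1 \vd \lmpar{t_1}:A_i$ (when $x \notin \fv(\lmpar{t_1})$), subject to $\capp{n} A_i \ll A$ and $\Gam = \Gam_1 \uplus \Gam_2$. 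These are precisely the premises the substitution clause of the generation lemma produces for $\Gam \vd_{\multls} t_1[x/t_2]:A$, with the two subcases corresponding to the split between $\mabsi$/$\msubsi$ and $\mabsii$/$\msubsii$. By $\fv$-preservation the two environment decompositions coincide, and by the induction hypothesis the premises on $t_1, t_2$ are interchangeable with those on $\lmpar{t_1}, \lmpar{t_2}$, so this closes the case in both directions simultaneously.

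I expect the substitution case to be the main obstacle, not because it is deep but because it requires carefully lining up the single substitution-typing step of $\multls$ with the two-step abstraction-then-application typing of $\multl$, and keeping the two subcases ($x$ free or not in $t_1$) synchronised with the $\mabsi$/$\mabsii$ and $\msubsi$/$\msubsii$ variants. Once the two generation lemmas are aligned the reasoning is symmetric, so a single induction delivers both implications of the \emph{iff} at the same time.
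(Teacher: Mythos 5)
Your proof is correct, and the substitution case is handled exactly right: the clauses of \cref{l:multiplicative-generation-lemma} for $t_1[x/t_2]$ on the $\multls$ side and for the application-of-an-abstraction $(\l x.\lmpar{t_1})\ \lmpar{t_2}$ on the $\multl$ side produce literally the same data ($\capp{n} A_i \ll A$, the split $\Gam = \Gam_1 \uplus \Gam_2$, and the premises on $t_2$ and on $t_1$ with or without $x$ in the environment), so the induction hypothesis closes both directions at once. Your route is, however, not the one the paper takes: the paper proves the lemma by induction on the \emph{typing derivation} of $t$, where the forward direction is a rule-by-rule translation ($\msubsi$ becomes $\mabsi$ followed by $\mapp$, $\msubsii$ becomes $\mabsii$ followed by $\mapp$, and $\maxiom$, $\mapp$, $\mabsi$, $\mabsii$, $\ini$, $\ine$ map to themselves), whereas you do a structural induction on the term and let the generation lemma absorb all reasoning about $\ini$ and $\ine$. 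What your approach buys is a genuinely symmetric treatment of the \emph{iff}: derivation induction is immediate only left-to-right, and for the converse one must invert a derivation of $(\l x.\lmpar{t_1})\ \lmpar{t_2}$ that need not end in $\mapp$ (and whose abstraction premise need not end in $\mabsi$/$\mabsii$), which silently requires the same generation/inversion machinery you invoke explicitly; what it costs is dependence on the full strength of \cref{l:multiplicative-generation-lemma} and \cref{l:relation-between-ll-types} even for the easy direction. One small point you should make explicit: when you align the disjunction in clause \ref{last} of the generation lemma ($\Gam_1, x:B_i \vd \lmpar{t_1}:A_i$ or $\Gam_1 \vd \lmpar{t_1}:A_i$) with the $x \in \fv(t_1)$ versus $x \notin \fv(t_1)$ dichotomy of clause \ref{lsub}, you are using the fact that in the multiplicative systems a derivable judgement's environment has domain exactly the free variables of its subject, so exactly one disjunct is available; this is implicit in the generation lemma but deserves a sentence.
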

\begin{proof}
  By induction on the typing derivation of
  $t$.
\end{proof}

\begin{theorem}
  \label{t:typable-implies-sn-les}
  Let $t$ be  a $\mterms$-term.
If $t$ is intersection  typable,  then $t\in \SN{\les}$.
\end{theorem}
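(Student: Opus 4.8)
The plan is to reduce the statement to a strong normalisation property of the \emph{pure} $\l$-term $\lmpar{t}$ attached to $t$ by the translation $\lmpar{\_}$ of~\cref{s:lpar}, and then to transport $\les$-strong normalisation back to $t$ along the reduction $\lmpar{t} \Rewplus{\B} t$. The key external input will be the PSN property of $\les$ established in~\cite{K07}: every $\beta$-strongly normalising $\l$-term lies in $\SN{\les}$.

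First I would transfer the typing. Since intersection typability in $\addls$ and in $\multls$ coincide (\cref{l:typable-add-typable-mul}), I may assume that $t$ is typable in the multiplicative system $\multls$. By~\cref{t:t-typable-V-typable}, $\lmpar{t}$ is then typable in $\multl$, i.e.\ $\lmpar{t}$ is an intersection typable $\l$-term. Second, I would deduce that $\lmpar{t} \in \SN{\beta}$. The self-contained route is to extend the arithmetical argument of~\cref{l:substitution-of-sn-lpar} and~\cref{c:lpar-sn} from simple to intersection types: the induction measure $\langle \type(u), \eta_{\lpar}(t), |t| \rangle$ survives once $\type(u)$ is read as the well-founded structure of the intersection type assigned to $u$, a $\bp$- or $\BGc$-step still strictly decreasing the relevant component. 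This yields $\lmpar{t} \in \SN{\lpar}$, and since every $\beta$-step is simulated by a non-empty $\lpar$-reduction (replace each free occurrence of $x$ in $s$ by $\bp$ and erase the binder with $\BGc$, so $(\l x. s)\,u \Rewplus{\lpar} s\isubs{x}{u}$), we obtain $\lmpar{t} \in \SN{\beta}$. Alternatively one may simply invoke the classical characterisation of~\cite{CDC78,CDC80}.

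Third, I would apply PSN for $\les$~\cite{K07} to conclude $\lmpar{t} \in \SN{\les}$ from $\lmpar{t} \in \SN{\beta}$. Finally, exactly as already observed in the proof of~\cref{c:sn-lm}, an easy induction on $t$ gives $\lmpar{t} \Rewplus{\B} t$; since $\Rew{\B} \subseteq \Rew{\les}$, the term $t$ is an $\les$-reduct of $\lmpar{t}$, and a reduct of an $\les$-strongly normalising term is itself $\les$-strongly normalising, so $t \in \SN{\les}$.

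The main obstacle is the second step. The paper's constructive $\lpar$-normalisation argument was carried out only for \emph{simple} types, and promoting it to intersection types requires checking that the substitution lemma~\cref{l:substitution-of-sn-lpar} together with its measure remains valid when $u$ carries an intersection type; in particular one must verify that the subtyping relation $\ll$ and the rules $\ini$ and $\ine$ do not disturb the strict decrease of $\type(u)$ across a $\bp$-reduction. If one prefers not to redo this analysis, the classical result of~\cite{CDC78,CDC80} discharges the step at once, at the cost of leaning on an external theorem; every other step is routine given the lemmas already available.
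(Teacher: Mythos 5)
Your proposal is correct and, in its essentials, follows the same route as the paper's proof: transfer the typing of $t$ to $\lmpar{t}$ via \cref{t:t-typable-V-typable} and \cref{l:typable-add-typable-mul}, conclude $\lmpar{t}\in\SN{\beta}$ from the intersection-type characterisation of $\beta$-strong normalisation, apply PSN for $\les$~\cite{K07}, and transfer the result back to $t$ along $\lmpar{t}\Rewn{\les} t$, using closure of $\SN{\les}$ under reduction.

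The only divergence is in how you discharge $\lmpar{t}\in\SN{\beta}$, which you rightly identify as the delicate step. The paper does not attempt your primary option of extending the arithmetical $\lpar$-argument of \cref{l:substitution-of-sn-lpar} and \cref{c:lpar-sn} to intersection types; that argument is used only for \emph{simple} types (\cref{s:simply-typed-implies-sn}). Instead, the paper converts the $\multl$ typing of $\lmpar{t}$ into an $\addl$ typing via \cref{l:typable-add-typable-mul} and invokes the classical theorem that intersection-typable $\l$-terms are $\beta$-strongly normalising~\cite{Pottinger80} --- i.e.\ exactly your fallback option. So the step you call the main obstacle is precisely the step the paper outsources to the literature; taking your fallback makes your argument coincide with the paper's and renders it routine. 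Your primary option, by contrast, is genuinely speculative as written: with rules $\ini$ and $\ine$ a term does not have a single well-defined type, so reading $\type(u)$ as ``the'' intersection type of $u$ and claiming the measure $\langle \type(u), \eta_{\lpar}(t), |t| \rangle$ still strictly decreases would need a careful reworking (such extensions exist in the literature, but they are not a one-line adaptation), and nothing in the paper supplies it.
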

\begin{proof}
  Let $t$ be intersection  typable,
  so that in particular
  $t$ is typable in $\multls$.
 \Cref{t:t-typable-V-typable} gives
  $\lmpar{t} $ typable in $\multl$
  and~\cref{l:typable-add-typable-mul} gives
  $\lmpar{t} $ typable in $\addl$.
  Thus $\lmpar{t} $ is $\beta$-strongly normalising~\cite{Pottinger80}.
  As a consequence, $\lmpar{t}  \in \SN{\les}$
  by PSN~\cite{K07}.
  Since $\lmpar{t} \Rewn{\les} t$,
  then we  conclude also $t \in \SN{\les}$.
 \end{proof}

\begin{lemma}
    \label{t:t-typable-T-typable}
  Let $t$ be  a $\mterms$-term.
  Then $t$ is typable in $\multls$ if and only if
  $\tra{t}$ is typable in $\multls$.
\end{lemma}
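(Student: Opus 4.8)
The plan is to prove the slightly stronger statement that for every $\mterms$-term $t$, every environment $\Gam$ and every type $A$ we have $\Gam \vd_{\multls} t:A$ if and only if $\Gam \vd_{\multls} \tra{t}:A$; the lemma follows by existentially quantifying $\Gam$ and $A$. The argument is by induction on the structure of $t$, decomposing and rebuilding derivations at each step with the multiplicative generation lemma (\cref{l:multiplicative-generation-lemma}). Two facts keep the environments aligned throughout: the translation $\tra{\_}$ preserves free variables, and the rules of $\Rew{\ALC}$ preserve free variables exactly (\cref{s:preservation-free-variables}, as none of them is $\Gc$); moreover, in the multiplicative systems the environment of any derivable judgement equals the set of free variables of the subject (this is exactly what each clause of \cref{l:multiplicative-generation-lemma} records). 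Hence the environment $\Gam$ will be literally the same on both sides in every case.

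The single auxiliary ingredient I need is that typability in $\multls$ is invariant under $\Rew{\ALC}$ and under $=_{\Es}$: for all $s,s'$, if $s \Rew{\ALC} s'$ or $s =_{\Es} s'$ then $\Gam \vd_{\multls} s:A$ iff $\Gam \vd_{\multls} s':A$, and consequently $\Gam \vd_{\multls} s:A$ iff $\Gam \vd_{\multls} \ALC(s):A$. This is subject reduction together with subject expansion for the $\ALC$-rules, which I would establish by a case analysis on the applied rule and on the equation $\Com$: in each case one takes the given derivation apart with the generation lemma and reassembles it for the other term. This invariance is precisely what lets me pass the $\ALC(\cdot)$ that appears in the substitution clauses of $\tra{\_}$.

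The inductive cases are then routine. For $t=x$ the translation is the identity. For $t=\l x.s$ we have $\tra{t}=\l x.\tra{s}$; since $\fv(\l x.s)=\fv(\l x.\tra{s})$ and $x\in\fv(s)\iff x\in\fv(\tra{s})$, the abstraction clause of the generation lemma reduces both sides to the typability of $s$, respectively $\tra{s}$, in the appropriate environment, and these agree by the induction hypothesis. For $t=s\ u$ we have $\tra{t}=(\tra{s}\ \tra{u})[\fresh{y}/\tra{u}]$ with $\fresh{y}$ fresh. From left to right, the application clause of the generation lemma turns $\Gam\vd s\ u:A$ into types with $\capp{n}A_i \ll A$ and derivations $\fv(s)\vd s:B_i\>A_i$ and $\fv(u)\vd u:B_i$; the induction hypothesis transports these to $\tra{s},\tra{u}$, rebuilding $\fv(\tra{s})\uplus\fv(\tra{u})\vd \tra{s}\ \tra{u}:A$; then, as $\fresh{y}\notin\fv(\tra{s}\ \tra{u})$ and $\tra{u}$ is typable, rule $\msubsii$ adjoins the idle substitution $[\fresh{y}/\tra{u}]$ without changing the type, and, since $\fv(\tra{u})\subseteq\fv(\tra{s}\ \tra{u})$, without changing the environment $\Gam=\fv(s\ u)$. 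From right to left one reads the same decomposition backwards, using the substitution clause of the generation lemma in its $\fresh{y}\notin\fv(\tra{s}\ \tra{u})$ case to recover typability of $\tra{s}\ \tra{u}$, hence of $\tra{s}$ and $\tra{u}$, and concluding by the induction hypothesis. The two substitution cases $t=s[y/u]$ combine the same bookkeeping with the invariance lemma: when $y\in\fv(s)$ we have $\tra{t}=\ALC(\tra{s}[y/\tra{u}][\fresh{y}/\tra{u}])$, so I decompose $s[y/u]$ by the substitution clause (case $y\in\fv(s)$), transport by the induction hypothesis, reassemble a derivation of the \emph{un-normalised} $\tra{s}[y/\tra{u}][\fresh{y}/\tra{u}]$ (the genuine copy by $\msubsi$, the idle copy by $\msubsii$), and finally cross the $\ALC$-normalisation by the invariance lemma; the converse reverses each step. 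The subcase $y\notin\fv(s)$ is identical but simpler, using $\msubsii$ for the single substitution and no $\fresh{y}$-copy, since then $\tra{t}=\ALC(\tra{s}[y/\tra{u}])$.

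I expect the main obstacle to be the invariance lemma, and within it subject \emph{expansion} for $\Rew{\ALC}$ under intersection types. The delicate rules are the composition rules $\Compi,\Compii$ and the application-duplicating rule $\Appi$: in the multiplicative discipline these either split or merge the environment of the substituted term $v$, so reconstructing a derivation of the redex from one of its contractum requires gathering the several (because of intersection) types assigned to $v$ in its different copies into a single intersection type, using the rules $\ini$, $\ine$ and the order $\ll$ together with \cref{l:ll-and-typing}. Once this is carried out rule by rule, the remainder is exactly the structural induction sketched above.
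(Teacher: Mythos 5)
Your proposal is correct, and it is worth noting that it supplies considerably more than the paper does: the paper's entire proof of \cref{t:t-typable-T-typable} is the one-line claim ``by induction on the typing derivation of $t$'', with no treatment of the two directions of the equivalence and, more importantly, no mention of how to cross the $\ALC$-normalisation hidden in the clauses $\tra{t[y/u]} = \ALC(\ldots)$ of the translation. Your route---strengthening to preservation of the \emph{same} judgement $\Gam \vd_{\multls} t:A$, structural induction via \cref{l:multiplicative-generation-lemma}, and an explicit invariance lemma stating that $\multls$-typability is stable under $\Rew{\ALC}$ and $=_{\Es}$ in both directions---makes visible exactly the technical content that any complete proof needs and that the paper leaves implicit: whichever induction one runs, the case $t = s[y/u]$ forces one to relate typability of $\tra{s}[y/\tra{u}][\fresh{y}/\tra{u}]$ to that of its $\ALC$-normal form, i.e.\ subject reduction \emph{and} subject expansion for $\{\Appi,\Appii,\Appiii,\Lamb,\Compi,\Compii\}$ modulo $\Com$. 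Your diagnosis of where that lemma is delicate is also accurate: expansion across the duplicating rules requires collecting the possibly different intersection types of the copies of the substituted term into a single intersection, using $\ini$, \cref{l:ll-and-typing}, \cref{l:ll-preserves-typing} and \cref{l:relation-between-ll-types}, and the bookkeeping of environments goes through only because $\ALC$-steps preserve free variables exactly (\cref{s:preservation-free-variables}) and multiplicative environments coincide with free variables. What your version buys is a proof that actually handles the ``if and only if'' symmetrically (generation lemmas work equally well in both directions, whereas induction on the derivation of $t$ naturally gives only one direction) and isolates a reusable invariance lemma; what it costs is precisely the rule-by-rule verification of that lemma, which you have sketched but, as you acknowledge, still needs to be carried out in full for the proposal to be a complete proof.
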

\begin{proof}
  By induction on the typing derivation of
  $t$.
\end{proof}

\begin{theorem}
  \label{t:typable-implies-sn-milner}
Let $t$ be  a $\mterms$-term.
If $t$ is intersection  typable,  then $t\in \SN{\lm}$.
\end{theorem}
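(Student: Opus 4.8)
The plan is to transfer strong normalisation from $\les$ back to $\lm$ through the translation $\tra{\_}$, in exactly the same spirit as the second proof of SN for simply typable terms given above. Since $t$ is intersection typable, it is in particular typable in $\multls$ (recall that $\addls$ and $\multls$ type the same set of $\mterms$-terms by~\cref{l:typable-add-typable-mul}). By~\cref{t:t-typable-T-typable}, the translated term $\tra{t}$ is then also typable in $\multls$, hence intersection typable as well, so that~\cref{t:typable-implies-sn-les} yields $\tra{t} \in \SN{\les}$.

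The core of the argument is then a proof by contradiction. Suppose $t \notin \SN{\lm}$, so that there exists an infinite reduction sequence $t = t_0 \Rew{\lm} t_1 \Rew{\lm} t_2 \Rew{\lm} \cdots$. Applying the simulation~\cref{p:les-sim-lm} to each individual step gives $\tra{t_i} \Rewplus{\les} \tra{t_{i+1}}$ for every $i \geq 0$. Concatenating these segments produces an infinite $\les$-reduction sequence starting at $\tra{t_0} = \tra{t}$, which contradicts $\tra{t} \in \SN{\les}$. We therefore conclude $t \in \SN{\lm}$.

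The only genuine subtlety — and precisely the reason the simulation was stated with $\Rewplus{\les}$ rather than a possibly-empty $\Rewn{\les}$ — is the \emph{non-emptiness} of each simulating segment. Without it, an infinite $\lm$-sequence could in principle project onto a finite or stationary $\les$-sequence, and the contradiction would collapse. Thus the essential content of the theorem has already been established upstream, in the simulation~\cref{p:les-sim-lm} together with the typability transfers~\cref{t:t-typable-T-typable} and~\cref{t:typable-implies-sn-les}; this final statement is a short gluing argument, and I expect no real obstacle beyond recording these ingredients carefully.
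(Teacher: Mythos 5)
Your proof is correct and follows exactly the paper's own argument: pass to $\multls$-typability, transfer it to $\tra{t}$ via~\cref{t:t-typable-T-typable}, obtain $\tra{t}\in\SN{\les}$ from~\cref{t:typable-implies-sn-les}, and derive a contradiction from an assumed infinite $\lm$-sequence using the simulation~\cref{p:les-sim-lm}. Your explicit remark on the non-emptiness of each simulating segment ($\Rewplus{\les}$ rather than $\Rewn{\les}$) is a correct and welcome clarification of a point the paper leaves implicit.
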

\begin{proof}
Let $t$ be intersection  typable,
  so that in particular 
  $t$ is typable in $\multls$.
  By~\cref{t:t-typable-T-typable} also 
  $\tra{t}$ is  typable in $\multls$.
  Thus, $\tra{t} \in \SN{\les}$ by~\cref{t:typable-implies-sn-les}.
  Suppose $t \notin \SN{\lm}$, so that there is
   an infinite
   $\lm$-reduction sequence starting at $t$,
   which projects, 
  by~\cref{p:les-sim-lm}, to an infinite $\les$-reduction
  sequence starting at $\tra{t}$. This leads to a contradiction
  with $\tra{t} \in \SN{\les}$.
Thus we conclude $t \in \SN{\lm}$ as required.
 \end{proof}

\subsection{$\lm$-strongly normalising terms are intersection typable  $\mterms$-terms}
\label{s:sn-implies-intersection-typable}

We now complete the picture  by showing that the intersection
type discipline for $\mterms$-terms gives a characterisation of 
$\lm$-strongly normalising terms. To do this, we use the
translation $\lmpar{\_}$ introduced in~\cref{s:lpar}
to relate $\mterms$-terms to $\l$-terms. 



\begin{lemma}
  \label{l:compositionaly-back}
  Let $t,u$ be  $\mterms$-terms. 
Then $\lmpar{t}\isubs{x}{\lmpar{u}} = \lmpar{t\isubs{x}{u}}$.
\end{lemma}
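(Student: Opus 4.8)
The plan is to prove the identity by structural induction on $t$, since both $\lmpar{\_}$ and the implicit substitution $\isubs{x}{\_}$ are defined by recursion on term structure; the equation should then fall out by unfolding the two definitions and appealing to the induction hypothesis on the immediate subterms. Throughout I work modulo $\alpha$-equivalence, choosing the bound variable $y$ occurring in the abstraction and closure cases to be distinct from $x$ and not free in $u$ (hence not free in $\lmpar{u}$, as $\lmpar{\_}$ introduces no new free variables). This is exactly the variable convention already adopted in the paper, and it is what licenses pushing the substitution inside binders without capture.

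The base cases are immediate. For $t = x$ the left-hand side is $x\isubs{x}{\lmpar{u}} = \lmpar{u}$ while the right-hand side is $\lmpar{x\isubs{x}{u}} = \lmpar{u}$; for $t = y$ with $y \neq x$ both sides reduce to $y$. For the inductive cases I unfold both sides and match them. If $t = \l y. t_1$, the left-hand side is $\l y.(\lmpar{t_1}\isubs{x}{\lmpar{u}})$ and the right-hand side is $\l y.\lmpar{t_1\isubs{x}{u}}$, which agree by the induction hypothesis on $t_1$. If $t = t_1\,t_2$, both sides distribute the substitution over the application, yielding $(\lmpar{t_1}\isubs{x}{\lmpar{u}})\,(\lmpar{t_2}\isubs{x}{\lmpar{u}})$ on the left and $\lmpar{t_1\isubs{x}{u}}\,\lmpar{t_2\isubs{x}{u}}$ on the right, which coincide by the induction hypothesis on $t_1$ and $t_2$.

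The remaining case is the closure $t = t_1[y/t_2]$. Here I use $\lmpar{t_1[y/t_2]} = (\l y.\lmpar{t_1})\,\lmpar{t_2}$ together with the substitution clause $t_1[y/t_2]\isubs{x}{u} = t_1\isubs{x}{u}[y/t_2\isubs{x}{u}]$. The left-hand side then becomes $(\l y.(\lmpar{t_1}\isubs{x}{\lmpar{u}}))\,(\lmpar{t_2}\isubs{x}{\lmpar{u}})$ and the right-hand side $(\l y.\lmpar{t_1\isubs{x}{u}})\,\lmpar{t_2\isubs{x}{u}}$, and these agree by the induction hypothesis applied to $t_1$ and $t_2$.

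The argument is entirely routine, and there is no genuine obstacle beyond the bookkeeping of variable names. The only point requiring mild care is the abstraction and closure cases, where one must ensure that $y$ neither clashes with $x$ nor captures a free variable of $\lmpar{u}$; this is guaranteed by the convention that bound and free variables of a term are kept distinct, so that no $\alpha$-renaming step can interfere with the matching of the two sides. Hence the lemma follows.
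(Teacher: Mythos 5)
Your proof is correct and follows essentially the same route as the paper's: structural induction on $t$, with the base variable cases immediate, the abstraction and application cases closed by unfolding and the induction hypothesis, and the closure case $t_1[y/t_2]$ handled by unfolding $\lmpar{t_1[y/t_2]} = (\l y.\lmpar{t_1})\,\lmpar{t_2}$, applying the induction hypothesis twice, and folding back via the substitution clause. The variable-convention remarks you add are the same implicit $\alpha$-conversion bookkeeping the paper relies on, so there is nothing to change.
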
 
\begin{proof} By induction on $t$.

\begin{itemize}
\item If $t = y$ and $y = x$ then $x\isubs{x}{\lmpar{u}} = \lmpar{u} = \lmpar{x\isubs{x}{u}}$. If $y \neq x$ then $y\isubs{x}{\lmpar{u}} = y = \lmpar{y\isubs{x}{u}}$. 
\item If $t = \l y.t'$ then $\lmpar{t}\isubs{x}{\lmpar{u}} = (\l y.\lmpar{t'}\isubs{x}{\lmpar{u}}) = (\l y.\lmpar{t'\isubs{x}{u}})$ by the \ih. Then, $(\l y.\lmpar{t'\isubs{x}{u}}) = \lmpar{\l y.t'\isubs{x}{u}} = \lmpar{(\l y.t')\isubs{x}{u}}$.
\item If $t = u\ v$ then the proof is similar.
\item  If $t  = t_1[y/t_2]$  then  $\lmpar{t}\isubs{x}{\lmpar{u}} =
  ((\l  y.\lmpar{t_1})\lmpar{t_2})\isubs{x}{\lmpar{u}}  =$\linebreak
  $(\l                           y.\lmpar{t_1}\isubs{x}{\lmpar{u}})\
  (\lmpar{t_2}\isubs{x}{\lmpar{u}})$. Applying \ih\  twice we get
\begin{eqnarray*}
&&(\l y.\lmpar{t_1}\isubs{x}{\lmpar{u}})\ (\lmpar{t_2}\isubs{x}{\lmpar{u}})\\
&=&(\l y.\lmpar{t_1\isubs{x}{u}})\ \lmpar{t_2\isubs{x}{u}}\\
&=&\lmpar{t_1\isubs{x}{u}[y/t_2\isubs{x}{u}]}\\
&=&\lmpar{(t_1[y/t_2])\isubs{x}{u}}
\end{eqnarray*}
\end{itemize} 
\end{proof}

\begin{lemma}
\label{l:beta-les-via-back}
Let $t$ be a $\mterms$-term. 
If $\lmpar{t} \Rew{\beta} t'$, then  $\exists\ u$ s.t.  
$t  \Rewplus{\lm} u$ and $t' = \lmpar{u}$.
\end{lemma}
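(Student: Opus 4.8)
The plan is to proceed by induction on the structure of the $\mterms$-term $t$, in each case analysing where inside $\lmpar{t}$ the contracted $\beta$-redex sits. Two earlier results do the real work: \cref{l:compositionaly-back}, which gives $\lmpar{t}\isubs{x}{\lmpar{u}} = \lmpar{t\isubs{x}{u}}$, and full composition (\cref{l:full-composition}), which gives $t_1[y/t_2] \Rewplus{\lm} t_1\isubs{y}{t_2}$. Together they let me pull a single $\beta$-step back across the translation into a non-empty $\lm$-sequence.

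The cases $t = x$ and $t = \l y. s$ are immediate: in the first there is no redex, and in the second any $\beta$-step happens inside $\lmpar{s}$, so the induction hypothesis on $s$ supplies a witness $u'$ with $s \Rewplus{\lm} u'$ and I take $u = \l y. u'$. The genuinely interesting situations are those where the contracted redex is at the root of $\lmpar{t}$, which can only happen when $\lmpar{t}$ is an application whose left component is an abstraction. By inspection of the clauses defining $\lmpar{\_}$, this occurs in exactly two cases: $t = (\l x. s)\ t_2$, where $\lmpar{t} = (\l x. \lmpar{s})\ \lmpar{t_2}$, and $t = t_1[y/t_2]$, where $\lmpar{t} = (\l y. \lmpar{t_1})\ \lmpar{t_2}$.

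For $t = (\l x. s)\ t_2$ with the root redex contracted, I have $t' = \lmpar{s}\isubs{x}{\lmpar{t_2}} = \lmpar{s\isubs{x}{t_2}}$ by \cref{l:compositionaly-back}, and I set $u = s\isubs{x}{t_2}$. In $\lm$ the step $t \Rew{\B} s[x/t_2]$ followed by $s[x/t_2] \Rewplus{\lm} s\isubs{x}{t_2}$ (full composition) gives $t \Rewplus{\lm} u$ with $\lmpar{u} = t'$. For $t = t_1[y/t_2]$ with the root redex contracted, similarly $t' = \lmpar{t_1}\isubs{y}{\lmpar{t_2}} = \lmpar{t_1\isubs{y}{t_2}}$, and I take $u = t_1\isubs{y}{t_2}$, where now $t = t_1[y/t_2] \Rewplus{\lm} t_1\isubs{y}{t_2} = u$ is exactly full composition. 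All remaining subcases, namely an application or a closure where the redex lies strictly inside $\lmpar{t_1}$ or inside $\lmpar{t_2}$, are handled uniformly by applying the induction hypothesis to the relevant immediate subterm and reassembling: for instance, for $t = t_1\ t_2$ with the redex in $\lmpar{t_1}$, the hypothesis yields $u_1$ with $t_1 \Rewplus{\lm} u_1$ and $t' = \lmpar{u_1}\ \lmpar{t_2} = \lmpar{u_1\ t_2}$, so $u = u_1\ t_2$ works.

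The one point requiring care is recognising precisely when a root redex appears: one must observe that $\lmpar{t_1}$ is a $\l$-abstraction if and only if $t_1$ is, which is immediate from the clauses of $\lmpar{\_}$, since the translation of a variable, an application, or a closure is never an abstraction. Once that observation is in place the case distinction is exhaustive and each branch closes by the cited lemmas. In every branch the resulting $\lm$-sequence is non-empty — the root cases contribute at least a $\B$-step or a full-composition sequence, which is itself non-empty — so we obtain the required $\Rewplus{\lm}$ rather than merely $\Rewn{\lm}$.
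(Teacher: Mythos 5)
Your proof is correct and follows essentially the same route as the paper's: the root-redex cases (application with abstraction head, and closure) are discharged by \cref{l:full-composition} together with \cref{l:compositionaly-back}, and the internal cases by the induction hypothesis on the relevant subterm, exactly as in the paper, with the only cosmetic difference being that you phrase the induction on the structure of $t$ rather than on the derivation of the step $\lmpar{t} \Rew{\beta} t'$. Your explicit observation that $\lmpar{t_1}$ is an abstraction iff $t_1$ is (making the case split exhaustive) is a point the paper leaves implicit, and is a welcome clarification.
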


\begin{proof}
By induction on the reduction step $\lmpar{t}
\Rew{\beta} t'$.

If the step is external, then we have two possibilites.

\begin{itemize}
\item If $\lmpar{(\l x. t_1)\ t_2} = (\l x. \lmpar{t_1})\ \lmpar{t_2}
  \Rew{\beta} \lmpar{t_1}\isubs{x}{\lmpar{t_2}}$, then 
 $(\l x. t_1)\ t_2 \Rew{\B} t_1[x/t_2] \Rewplus{\lm} 
t_1\isubs{x}{t_2}$ by~\cref{l:full-composition}.
We conclude by~\cref{l:compositionaly-back}. 

\item If $\lmpar{t_1[x/t_2]} =  (\l x. \lmpar{t_1})\ \lmpar{t_2}
  \Rew{\beta}\lmpar{t_1}\isubs{x}{\lmpar{t_2}}$, then
$t_1[x/t_2] \Rewplus{\lm} 
t_1\isubs{x}{t_2}$ by~\cref{l:full-composition}.
We conclude again by~\cref{l:compositionaly-back}. 
\end{itemize}

If the step is internal, then we reason by cases. 

\begin{itemize}
\item If $\lmpar{t_1\ t_2} = \lmpar{t_1}\ \lmpar{t_2} \Rew{\beta}
  t'_1\ \lmpar{t_2}$, then $t_1 \Rewplus{\lm} u_1$ and $t'_1=
  \lmpar{u_1}$ by the \ih\  so that
  $ t_1\ t_2 \Rewplus{\lm} u_1\ t_2$
  and $t'_1\ \lmpar{t_2} = \lmpar{u_1\ t_2}$.

\item If $\lmpar{t_1\ t_2} = \lmpar{t_1}\ \lmpar{t_2} \Rew{\beta}
  \lmpar{t'_1}\ t'_2$, then this case is similar to the previous one.

\item If $\lmpar{\l x. t_1} = \l x. \lmpar{t_1}\Rew{\beta}\l
  x. t'_1$, then $t_1 \Rewplus{\lm} u_1$ and $t'_1=
  \lmpar{u_1}$ by the \ih\  so that
  $\l x. t_1 \Rewplus{\lm} \l x. u_1$
  and $\l x. t'_1 = \lmpar{\l x. u_1}$.

\item If $\lmpar{t_1[x/t_2]} =  (\l x. \lmpar{t_1})\ \lmpar{t_2}
  \Rew{\beta} (\l x. t'_1)\ \lmpar{t_2}$, then 
   $t_1 \Rewplus{\lm} u_1$ and $t'_1=
  \lmpar{u_1}$ by the \ih\  so that
  $t_1[x/t_2]  \Rewplus{\lm} u_1[x/t_2]$
  and $(\l x. t'_1)\ \lmpar{t_2} = \lmpar{u_1[x/t_2]}$.

\item If $\lmpar{t_1[x/t_2]} =  (\l x. \lmpar{t_1})\ \lmpar{t_2}
  \Rew{\beta} (\l x. \lmpar{t_1})\ t'_2$, then this case is
  similar to the previous one. 
\end{itemize}
\end{proof}

\begin{theorem} 
\label{t:sn-implies-typable}
Let $t$ be a $\mterms$-term. 
If $t \in \SN{\lm}$, then $t$ is an intersection typable  $\mterms$-term. 
\end{theorem}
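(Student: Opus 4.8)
The plan is to transfer, through the translation $\lmpar{\_}$ of \cref{s:lpar}, the classical characterisation of $\beta$-strongly normalising $\l$-terms by intersection types to $\mterms$-terms. The argument proceeds in three stages: project the strong normalisation of $t$ down to $\lmpar{t}$, type $\lmpar{t}$ by the classical theorem, then pull typability back up to $t$.

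First I would prove that $t \in \SN{\lm}$ implies that $\lmpar{t}$ is $\beta$-strongly normalising. This is the crux, and I would argue by contradiction. Suppose $\lmpar{t} = t_0 \Rew{\beta} t_1 \Rew{\beta} t_2 \Rew{\beta} \cdots$ is an infinite $\beta$-reduction sequence. Starting from $u_0 := t$, so that $\lmpar{u_0} = t_0$, I would apply \cref{l:beta-les-via-back} repeatedly: given $\lmpar{u_i} = t_i \Rew{\beta} t_{i+1}$, the lemma yields a term $u_{i+1}$ with $u_i \Rewplus{\lm} u_{i+1}$ and $\lmpar{u_{i+1}} = t_{i+1}$. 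Since every step $u_i \Rewplus{\lm} u_{i+1}$ is a \emph{non-empty} $\lm$-reduction, concatenating them produces an infinite $\lm$-reduction sequence starting at $t$, contradicting $t \in \SN{\lm}$. Hence $\lmpar{t}$ is $\beta$-strongly normalising.

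Next, since $\lmpar{t}$ is a $\beta$-strongly normalising $\l$-term, the classical characterisation of strong normalisation by intersection types~\cite{Pottinger80} makes $\lmpar{t}$ intersection typable, hence typable in $\multl$ (recall that $\addl$ and $\multl$ type the same $\l$-terms). Finally I would appeal to \cref{t:t-typable-V-typable}: typability of $\lmpar{t}$ in $\multl$ is equivalent to typability of $t$ in $\multls$. Therefore $t$ is typable in $\multls$, so $t$ is an intersection typable $\mterms$-term, as required.

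The only delicate point is the first stage: the lifting must yield a genuinely infinite sequence, and this relies precisely on \cref{l:beta-les-via-back} guaranteeing a \emph{strictly positive} number of $\lm$-steps for each $\beta$-step. Once this projection of divergence is in place, the remaining two stages are immediate assemblies of already-established results, and the theorem completes the characterisation begun in \cref{t:typable-implies-sn-milner}.
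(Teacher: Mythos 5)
Your proposal is correct and follows essentially the same route as the paper's own proof: project an alleged infinite $\beta$-reduction from $\lmpar{t}$ back to an infinite $\lm$-reduction from $t$ via \cref{l:beta-les-via-back}, conclude $\lmpar{t} \in \SN{\beta}$, type it by~\cite{Pottinger80}, and pull typability back to $t$ through \cref{l:typable-add-typable-mul} and \cref{t:t-typable-V-typable}. Your explicit iteration of \cref{l:beta-les-via-back}, stressing that each lifted step is non-empty, simply spells out what the paper leaves implicit.
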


\begin{proof}
Let $t \in  \SN{\lm}$.
Suppose $\lmpar{t} \notin \SN{\beta}$.  Then, there is an infinite
$\beta$-reduction  sequence starting  at $\lmpar{t}$,  
which can be projected, by~\cref{l:beta-les-via-back}, to an infinite
$\lm$-reduction sequence starting at $t$.  Thus $t
\notin  \SN{\lm}$, which leads  to a
contradiction.

Therefore $\lmpar{t}\in \SN{\beta}$, so that $\lmpar{t}$ is 
typable in $\addl$ by~\cite{Pottinger80}.  By~\cref{l:typable-add-typable-mul}
$\lmpar{t}$ is also typable in $\multls$,
and by~\cref{t:t-typable-V-typable} $t$ is typable in
$\multls$.
\end{proof}

\subsection{PSN}

We now  show  the PSN property
stating that  $\lm$-reduction preserves  $\beta$-strong
normalisation. A proof of this result
already exists~\cite{ShaneHOR06}. We reprove this property in a more simple way. 

\begin{corollary}[PSN for $\lm$]
  \label{c:psn}
  Let $t$ be a $\l$-term.
  If $t \in \SN{\beta}$,  then $t\in \SN{\lm}$.
\end{corollary}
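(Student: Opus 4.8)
The plan is to derive PSN as an immediate consequence of the intersection type characterisation of $\SN{\lm}$ developed in this section, rather than by a direct simulation of reductions. The guiding observation is that strong $\beta$-normalisation of pure $\l$-terms is itself characterised by intersection typability through Pottinger's theorem~\cite{Pottinger80}, which has already been invoked above in the proofs of~\cref{t:typable-implies-sn-les} and~\cref{t:sn-implies-typable}. So the whole point is to feed $t \in \SN{\beta}$ into that characterisation and then cross over to $\lm$.

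First I would take a $\l$-term $t$ with $t \in \SN{\beta}$ and apply~\cite{Pottinger80} to obtain that $t$ is typable in the additive intersection system $\addl$. Since $t$ is pure and the systems $\addl$ and $\addls$ type exactly the same $\l$-terms (as remarked after the definition of the intersection type systems), $t$ is also typable in $\addls$; that is, $t$ is an \emph{intersection typable} $\mterms$-term.

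Next I would invoke~\cref{t:typable-implies-sn-milner}, which asserts that every intersection typable $\mterms$-term lies in $\SN{\lm}$, to conclude $t \in \SN{\lm}$, as required.

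There is essentially no obstacle remaining at this stage: all the real work has been front-loaded into~\cref{t:typable-implies-sn-milner}, whose proof rests on the simulation of $\lm$ by $\les$ (\cref{p:les-sim-lm}) together with strong normalisation of typed $\les$ from~\cite{K07}. The only point needing a moment's care is to check that the notion of intersection typability obtained for the $\l$-term $t$ coincides with the one used for $\mterms$-terms; this is exactly guaranteed by the agreement of the two type disciplines on pure terms, so the corollary follows directly. This route also explains why the present PSN argument is considerably simpler and more self-contained than the earlier proof of~\cite{ShaneHOR06}, since it reuses the typing machinery instead of translating into a proof-net style calculus.
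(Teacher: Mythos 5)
Your proof is correct and follows essentially the same route as the paper: apply Pottinger's characterisation to get $t$ typable in $\addl$, transfer typability to $\addls$ (the paper justifies this by noting $\addls$ contains $\addl$, you by the agreement of the two systems on pure terms---both valid), and conclude via~\cref{t:typable-implies-sn-milner}.
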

\begin{proof}
If $t \in \SN{\beta}$, then $t$ is typable 
in $\addl$ by~\cite{Pottinger80}, so that
$t$ is also typable 
in $\addls$ (which contains $\addl$).
We conclude
$t \in \SN{\lm}$ by~\cref{t:typable-implies-sn-milner}. 
\end{proof}

We finally conclude with the following equivalences:

\begin{corollary}
Let $t$ be a $\mterms$-term. Then  $t$ is typable in $\addls$ iff
$t$ is typable in $\multls$ iff
$t \in \SN{\lm}$ iff $t \in \SN{\ldef}$. Furthermore,
let $t$ be a  $\l$-term. Then  $t$ is typable in $\addl$ iff
$t$ is typable in $\multl$ iff
$t \in \SN{\lm}$ iff $t \in \SN{\lpar}$ iff $t \in \SN{\beta}$.
\end{corollary}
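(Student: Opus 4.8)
The plan is to assemble the corollary purely as a chain of equivalences whose individual links have all been established earlier; the work is entirely in routing the implications correctly, not in new arguments.

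For the $\mterms$-term statement I would first record that intersection typability does not depend on the chosen system: the equivalence of typability in $\addls$ and in $\multls$ is exactly \cref{l:typable-add-typable-mul}, since from $\Gam \vd_{\addls} t:A$ one obtains $\Gam \cap \fv(t) \vd_{\multls} t:A$, and conversely any $\multls$-derivation already satisfies $\Gam = \fv(t)$ and hence is an $\addls$-derivation. The link to strong normalisation is then supplied in both directions: intersection typability implies $t \in \SN{\lm}$ by \cref{t:typable-implies-sn-milner}, and $t \in \SN{\lm}$ implies intersection typability by \cref{t:sn-implies-typable}. Finally $t \in \SN{\lm}$ iff $t \in \SN{\ldef}$ is precisely \cref{c:sn-lm-sn-ldef}, which closes the first chain.

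For the $\l$-term statement I would begin from the remark already noted in the text that, for a $\l$-term $t$, typability in $\addl$ coincides with typability in $\addls$ and typability in $\multl$ with typability in $\multls$; consequently $\addl$ iff $\multl$ follows again from \cref{l:typable-add-typable-mul}, and the equivalence with $\SN{\lm}$ follows from \cref{t:typable-implies-sn-milner} and \cref{t:sn-implies-typable} exactly as above. The equivalence $t \in \SN{\lm}$ iff $t \in \SN{\lpar}$ is \cref{c:lpar-lm}. It remains only to splice in $\SN{\beta}$: Pottinger's characterisation~\cite{Pottinger80} gives that a $\l$-term is typable in $\addl$ iff it lies in $\SN{\beta}$, so $\SN{\beta}$ is already equivalent to intersection typability and hence, by the links just assembled, to $\SN{\lm}$. (Alternatively one can argue $\SN{\beta}$ iff $\SN{\lm}$ directly: the forward direction is PSN, \cref{c:psn}, and the backward direction is immediate from \cref{l:one-step-beta-simulation}, since any infinite $\beta$-sequence from $t$ lifts step by step to an infinite $\lm$-sequence.)

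I do not expect a genuine obstacle, as the corollary is a synthesis of the preceding development rather than a fresh result. The only points needing minor care are verifying that the two intersection systems and the two simple systems type the same terms so that "intersection typable" and "simply typable" are unambiguous, and handling the $\SN{\lm} \Rightarrow \SN{\beta}$ direction, which is not isolated as a named statement but is either an instance of Pottinger's characterisation or an immediate consequence of the one-step simulation of $\beta$ by $\lm$.
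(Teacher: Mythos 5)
Your proposal is correct and follows essentially the same route as the paper's own proof: the same chain of equivalences via \cref{l:typable-add-typable-mul}, \cref{t:typable-implies-sn-milner}, \cref{t:sn-implies-typable}, \cref{c:sn-lm-sn-ldef}, \cref{c:lpar-lm}, Pottinger's characterisation, \cref{c:psn}, and \cref{l:one-step-beta-simulation}. The paper in fact uses both of the routes you mention for splicing in $\SN{\beta}$ (Pottinger for typability in $\addl$, and PSN plus one-step simulation for the direct equivalence with $\SN{\lm}$), so nothing is missing.
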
 
\begin{proof}
The statement $t$ is typable in $\addls$ iff $t$ is typable in
$\multls$ holds by~\cref{l:typable-add-typable-mul}.  The statement
$t$ intersection typable iff $t\in \SN{\lm}$ holds
by~\cref{t:typable-implies-sn-milner} and~\cref{t:sn-implies-typable}.
The statement $t \in \SN{\lm}$ iff $t \in \SN{\ldef}$ holds
by~\cref{c:sn-lm-sn-ldef}.  The statement $t$ is typable in $\addl$
iff $t$ is typable in $\multl$ holds
by~\cref{l:typable-add-typable-mul}.  The statement $t$ is typable in
$\addl$ iff $t \in \SN{\beta}$ holds by~\cite{Pottinger80}.  The
statement $t \in \SN{\lm}$ iff $t \in \SN{\lpar}$ holds
by~\cref{c:lpar-lm}.  \Cref{c:psn} gives $t \in \SN{\beta}$ implies $t
\in \SN{\lm}$. And $t \in \SN{\lm}$ implies $t \in \SN{\beta}$ is a
consequence of~\cref{l:one-step-beta-simulation}.

\end{proof}



\section{Relating Partial Substitutions to Graphical Formalisms}
\label{s:graphical}

\subsection{MELL Proof-nets}
\label{l:proof-nets}

Calculi  with explicit substitutions
enjoy a nice relation with the multiplicative exponential fragment
of linear logic (MELL). This is done by interpreting terms into
\emph{proof-nets},  
a graphical formalism which represent 
MELL proofs in natural deduction style.
In order  to obtain this interpretation,  one first  defines a
(simply)  typed version  of the  term calculus.    The translation
from  $\mterms$-terms to  proof-nets gives  a simulation  of  the reduction
rules   for  explicit  substitutions   via  cut   elimination  in
proof-nets.    As  an   immediate  consequence   of  this
simulation, one  proves that a  simply typed version of  the term
calculus is strongly normalizing.  Also, an important property of
the  simulation is  that each  step in  the calculus  with  ES is
simulated  by a  \emph{constant} number  of steps  in proof-nets:
this  shows that  the two  systems  are very  close, unlike  what
happens  when simulating  the  $\l$-calculus. This  gives also  a
powerful    tool   to    reason   about    the    complexity   of
$\beta$-reduction.

We apply this idea to the $\lm$-calculus  by
using previous work based on an interpretation of $\les$-terms into
MELL proof-nets~\cite{K07} and our translation in~\cref{s:PSN}. We thus
obtain:

Let $t$ be a $\mterms$-term which is simply typable.
Then $t$ is in particular typable in  the 
multiplicative simple typed system $\smultls$
given in~\cref{f:multiplicative-systems}. Then the  translation
of $t$ into a MELL proof-net can be  given by
$\pntra{t} = \pnlestra{\tra{t}}$, where $\tra{\_}$ is the
translation from $\lm$ to $\les$ introduced in~\cref{s:PSN},
while  $\pnlestra{\_}$ is the translation from $\les$ to MELL
proof-nets given in~\cite{K07}.
Call $R/E$ the strongly normalising reduction relation on MELL proof-nets. Then: 

\begin{proposition}
\label{p:lm-proof-nets}
Let $t$ be a $\mterms$-term.
If $t$ is typable in $\smultls$
and $t \Rew{\lm} t'$, then $\pntra{t} \Rewplus{R/E} {\cal C}[\pntra{t'}]$, where
${\cal C}[\pntra{t'}]$ denotes a proof-net 
containing $\pntra{t'}$ as a sub proof-net. 
\end{proposition}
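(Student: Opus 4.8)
The plan is to derive the statement by composing the two simulations already at hand, exploiting that $\pntra{t} = \pnlestra{\tra{t}}$ factors through the translation $\tra{\_}$ from $\lm$ to $\les$ (\cref{s:PSN}) followed by the translation $\pnlestra{\_}$ from $\les$ to MELL proof-nets of~\cite{K07}. Since $\pnlestra{\_}$ is only defined on simply typed $\les$-terms, I must first check that the proof-nets occurring in the statement make sense. From the hypothesis that $t$ is typable in $\smultls$, an induction on $t$ (the simple-type analogue of~\cref{t:t-typable-T-typable}, already used for the alternative proof of SN for $\lm$) shows that $\tra{t}$ is again typable in $\smultls$; hence $\pntra{t} = \pnlestra{\tra{t}}$ is a well-defined proof-net.

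Next I would apply~\cref{p:les-sim-lm}: from $t \Rew{\lm} t'$ it gives a \emph{non-empty} reduction $\tra{t} \Rewplus{\les} \tra{t'}$. Subject reduction for $\les$ (from~\cite{K07}) propagates typability along this sequence, so $\tra{t'}$ is typable in $\smultls$ and $\pntra{t'} = \pnlestra{\tra{t'}}$ is defined as well. I would then transport the sequence into proof-nets using the simulation of~\cite{K07}, which sends each $\les$-step to a non-empty cut-elimination sequence and is compatible with the equivalence $\Es$. Running this step by step along $\tra{t} \Rewplus{\les} \tra{t'}$ and concatenating the resulting reductions yields a non-empty reduction $\pntra{t} = \pnlestra{\tra{t}} \Rewplus{R/E} N$ for some proof-net $N$; non-emptiness is inherited from the non-emptiness of both simulations.

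It then remains to recognise $N$ as a net ${\cal C}[\pntra{t'}]$ containing $\pntra{t'} = \pnlestra{\tra{t'}}$ as a sub proof-net, which is exactly why the conclusion is stated up to a surrounding context rather than as an equality. The gap is caused by garbage: the translation $\tra{\_}$ introduces idle substitutions, and in proof-nets the $\Gc$-rule is realised by a cut of a box against a weakening, whose elimination does not disappear but leaves residual weakening nodes on the free variables of the erased argument. Consequently the cut-elimination reaching $\pnlestra{\tra{t'}}$ leaves that net embedded in surrounding weakening structure, i.e. $N = {\cal C}[\pntra{t'}]$; this is harmless for the intended use, namely transferring strong normalisation of $R/E$ to $\lm$. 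The main obstacle I anticipate is precisely this bookkeeping: tracking, simultaneously through both translations, the idle and mobile substitutions created by $\tra{\_}$ and the weakening garbage produced by cut-elimination, so as to exhibit $\pntra{t'}$ intact as a sub proof-net of $N$ while keeping every intermediate object simply typed. The $\B$-steps and the $\m$-steps ($\us$, $\Gc$) of $\Rew{\lm}$ would be handled separately, with the $\Gc$-case being the one responsible for the context ${\cal C}$.
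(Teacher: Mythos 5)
Your overall strategy is the paper's: factor $\pntra{\_} = \pnlestra{\tra{\_}}$, apply \cref{p:les-sim-lm} to get $\tra{t} \Rewplus{\les} \tra{t'}$, and then push that sequence through the proof-net simulation of~\cite{K07}. The typability bookkeeping you do first (that $\tra{t}$ is typable in $\smultls$, plus subject reduction along the $\les$-sequence) is fine and indeed necessary for the translations to be defined. However, there is a genuine gap in how you obtain non-emptiness of the proof-net reduction. You assert that the simulation of~\cite{K07} ``sends each $\les$-step to a non-empty cut-elimination sequence''. That is not what the cited result gives, and it is false in spirit: the whole point of interpreting explicit substitutions in proof-nets is that the propagation steps --- the $\ALC$-steps $\Appi$, $\Appii$, $\Appiii$, $\Lamb$, $\Compi$, $\Compii$ --- are invisible there; they are simulated by \emph{zero} steps (the two sides have the same interpretation, up to the equivalence $E$). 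Only the steps in $\{\B, \Var, \Gc\}$ are guaranteed to produce at least one $R/E$-step. So if the sequence $\tra{t} \Rewplus{\les} \tra{t'}$ happened to consist solely of $\ALC$-steps, your concatenation argument would yield an empty proof-net reduction, and the conclusion $\pntra{t} \Rewplus{R/E} {\cal C}[\pntra{t'}]$ (which requires at least one step, and whose non-emptiness is exactly what makes the SN corollary work) would not follow.

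The missing idea, which is the one substantive point of the paper's proof, is to inspect the proof of \cref{p:les-sim-lm} and observe that the $\les$-sequence simulating a single $\lm$-step always contains at least one step in $\{\B, \Var, \Gc\}$: a $\B$-step of $\lm$ is simulated using a $\B$-step of $\les$, a $\us$-step using a $\Var$-step, and a $\Gc$-step using a $\Gc$-step, the remaining steps being $\ALC$-normalisation. With that observation, Theorem 8.2 of~\cite{K07} yields a non-empty $R/E$-sequence ending in a net containing $\pnlestra{\tra{t'}}$, which is the stated conclusion. Your closing discussion of where the context ${\cal C}$ comes from (residual weakening structure left by erasure) is a reasonable gloss, but it is the non-emptiness, not the context, that needs the careful argument.
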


\begin{proof}
  Let $t \Rew{\lm} t'$.
  \Cref{p:les-sim-lm} gives $\tra{t} \Rewplus{\les}
\tra{t'}$.
Moreover, by a simple inspection of the proof of this proposition we 
know that there is  at least one $\{\B, \Var, \Gc\}$-step in the reduction
sequence $\tra{t} \Rewplus{\les}
\tra{t'}$. This together with Theorem 8.2 in~\cite{K07} gives 
$\pntra{t} = \pnlestra{\tra{t}} \Rewplus{R/E} {\cal
  C}[\pnlestra{\tra{t'}}]= {\cal C}[\pntra{t'}]$. 
\end{proof}

\begin{corollary}[SN for $\lm$ (iii)]
If $t$ is typable in $\smultls$, then $t\in \SN{\lm}$.
\end{corollary}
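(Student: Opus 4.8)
The plan is to argue by contradiction, transporting a hypothetical infinite $\lm$-reduction into an infinite $R/E$-reduction on MELL proof-nets, which is impossible because $R/E$ is strongly normalising. Suppose $t$ is typable in $\smultls$ but $t \notin \SN{\lm}$, so that there is an infinite reduction sequence
\[
t = t_0 \Rew{\lm} t_1 \Rew{\lm} t_2 \Rew{\lm} \cdots .
\]
To apply \cref{p:lm-proof-nets} at each step I first need every $t_i$ to be typable in $\smultls$, so that each translation $\pntra{t_i}$ is well-defined. This is guaranteed by subject reduction for $\smultls$: typability in $\smultls$ is preserved along $\Rew{\lm}$, which follows routinely from the multiplicative generation lemma (\cref{l:multiplicative-generation-lemma}) by inspecting each reduction rule. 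Hence, starting from the hypothesis that $t_0$ is typable, every $t_i$ is typable in $\smultls$ and $\pntra{t_i}$ is defined.

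Next I apply \cref{p:lm-proof-nets} to each single step $t_i \Rew{\lm} t_{i+1}$, obtaining a non-empty reduction
\[
\pntra{t_i} \Rewplus{R/E} {\cal C}_i[\pntra{t_{i+1}}],
\]
where each ${\cal C}_i$ is a proof-net context exhibiting $\pntra{t_{i+1}}$ as a sub proof-net. The crucial move is then to \emph{chain} these segments. Since cut-elimination on proof-nets may be performed inside any context, $\Rewplus{R/E}$ is closed under proof-net contexts; therefore the reduction $\pntra{t_{i+1}} \Rewplus{R/E} {\cal C}_{i+1}[\pntra{t_{i+2}}]$ can be replayed inside ${\cal C}_i[\,\cdot\,]$, yielding ${\cal C}_i[\pntra{t_{i+1}}] \Rewplus{R/E} {\cal C}_i[{\cal C}_{i+1}[\pntra{t_{i+2}}]]$. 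Iterating, I build
\[
\pntra{t_0} \Rewplus{R/E} {\cal C}_0[\pntra{t_1}] \Rewplus{R/E} {\cal C}_0[{\cal C}_1[\pntra{t_2}]] \Rewplus{R/E} \cdots ,
\]
an infinite $R/E$-reduction sequence because each of the infinitely many segments contributes at least one step (the relation is $\Rewplus{}$, hence non-empty). This contradicts the strong normalisation of $R/E$ on MELL proof-nets, so $t \in \SN{\lm}$.

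The main obstacle, and the point deserving care, is the chaining step: \cref{p:lm-proof-nets} does not return $\pntra{t_{i+1}}$ itself but only a net containing it as a subnet, so I must justify that reduction can continue \emph{within} the surrounding context ${\cal C}_i$ and that the accumulated contexts never discard or block the active subnet. This rests on the closure of proof-net cut-elimination under contexts, which is a structural property of the proof-net formalism and of the reduction relation $R/E$ inherited from~\cite{K07}; together with the guaranteed non-emptiness of each segment, it ensures the constructed $R/E$-sequence is genuinely infinite rather than eventually stationary.
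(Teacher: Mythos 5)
Your proof is correct and takes essentially the same route as the paper: the paper's one-line proof likewise concludes by combining \cref{p:lm-proof-nets} with the strong normalisation of $R/E$, turning a hypothetical infinite $\lm$-sequence into an infinite $R/E$-sequence. The two details you make explicit --- preservation of $\smultls$-typability along $\Rew{\lm}$ (so that each $\pntra{t_i}$ is defined and the proposition can be applied at every step) and closure of $R/E$-reduction under proof-net contexts (so that the segments $\pntra{t_i} \Rewplus{R/E} {\cal C}_i[\pntra{t_{i+1}}]$ can be chained into a single infinite reduction) --- are exactly what the paper's terse proof leaves implicit.
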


\begin{proof}
As $R/E$ is strongly normalising, we conclude $t \in \SN{\lm}$ using~\cref{p:lm-proof-nets}.
\end{proof}

\subsection{Local bigraphs}
\label{l:local-bigraphs}

Milner,    Leifer,     and    Jensen's    bigraphical    reactive
systems~\cite{Milner2001,LeiferMilner2000,JensenMilner2004} have been proposed  as a framework for modelling  the mobility of
distributed  agents able  to  manipulate their  own linkages  and
nested  locations. Milner  has  presented  an
encoding  of  $\lm$ as  a  bigraphical
reactive system $\lamBig$ as a means to study confluence in bigraphs~\cite{Milner2006}.  
This  encoding  may  also  be  understood  as  a  formalism  with
\emph{partial}  substitutions. 

The $\lm$-calculus   is  close   to  $\lamBig$  both
statically  and dynamically;  $\alpha$-equivalent terms  have the
same  encoding  and   one-step reduction  in  the  former matches
one-step reaction in  the latter. Thus,  any properties proved for
$\lm$ hold for the image of the encoding in $\lamBig$.

There is a close operational correspondence between $\lm$ and $\lamBig$:

\begin{proposition}[\cite{Milner2006}]
\label{p:lm-lambig-operational-correspondence}
Let $t$ be a $\mterms$-term.  Then $t \Rew{\lm} t'$ 
iff the
encoding of $t$ in $\lamBig$ can react in one step to  the encoding of $t'$ in $\lamBig$.
\end{proposition}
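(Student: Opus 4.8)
The statement is attributed to Milner \cite{Milner2006}, so the natural route is to appeal to the encoding of $\lamBig$ defined there; the plan is to recall that encoding and verify that it transports reduction faithfully in both directions. First I would fix the translation, write it $\scon{\cdot}$, sending each $\mterms$-term to a local bigraph, defined compositionally over the grammar $t ::= x \mid t\ t \mid \l x. t \mid t[x/t]$: variables map to points and their links, application and abstraction to the corresponding controls with their nestings, and a closure $t[x/u]$ to the control that binds the name $x$ in the encoding of $t$ while exposing a reactive site carrying the encoding of $u$. The crucial preliminary observation, which I would establish first, is that this encoding is invariant under $\alpha$-conversion, so that $\scon{\cdot}$ is well defined on $\alpha$-equivalence classes --- exactly the objects on which $\Rew{\lm}$ acts.

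The soundness direction, that $t \Rew{\lm} t'$ implies a one-step reaction, I would prove by induction on the derivation of $t \Rew{\lm} t'$, that is, by induction on the $\mterms$-context closing a root redex. The three root cases $\B$, $\Gc$ and $\us$ must each be matched by one reaction rule of $\lamBig$: the rule producing a substitution control, the rule deleting an idle substitution control whose bound name has no occurrence (corresponding to the side condition $x \notin \fv(t)$), and the partial-substitution rule that rewrites a single occurrence of the bound name to a fresh copy of the argument. Because the reaction relation of a bigraphical reactive system is itself closed under all active contexts, the contextual cases of the induction follow once the root cases are checked; here I would have to confirm that every $\mterms$-context maps to an active bigraphical context, so that redexes occurring inside contexts remain reactive.

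For completeness, that a reaction of $\scon{t}$ forces a matching $\lm$-step, the main work is to show the encoding has no spurious redexes: every occurrence of a reaction-rule left-hand side inside $\scon{t}$ arises as the image of a genuine $\lm$-redex in $t$. I would argue this by analysing the possible matchings of each reaction-rule left-hand side against $\scon{t}$ and showing, by induction on $t$, that each such matching factors through the encoding of a subterm in redex position; the reacted bigraph is then again in the image of $\scon{\cdot}$ and equals $\scon{t'}$ for the corresponding contractum $t'$.

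The step I expect to be the main obstacle is this completeness analysis of matchings, since it requires ruling out that the graphical structure permits reactions with no term-level counterpart --- in particular the handling of the side conditions $x \notin \fv(t)$ for $\Gc$ and $\{x\} \cup \fv(u) \subseteq \phi$ for $\us$, which in the bigraphical setting become conditions on link structure and on the scope of binding ports. Establishing that these linkage conditions are exactly what the matching of the bigraphical left-hand sides enforces is the delicate point; everything else reduces to compositional, rule-by-rule bookkeeping of the encoding. Since the result is Milner's, I would ultimately defer the detailed verification to \cite{Milner2006} and present the correspondence of redexes as the justification.
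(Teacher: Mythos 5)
The paper does not prove this proposition at all: it is imported directly from Milner's work, which is why it is stated as \textit{Proposition (\cite{Milner2006})} with no accompanying proof. Your reconstruction --- encoding invariant under $\alpha$-conversion, soundness by matching each of the root rules $\B$, $\us$, $\Gc$ to a reaction rule and closing under active contexts, completeness by showing every matching of a reaction-rule redex factors through the image of a term redex --- is a plausible outline of Milner's argument, and since you ultimately defer the verification to \cite{Milner2006} just as the paper does, your treatment is essentially the same as the paper's.
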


Thus, the image $\lamBig^e$ of the encoding is closed under reaction. 
We  can  reason  about   reaction  in  $\lamBig^e$  by  considering
reduction of $\lm$ terms without   metavariables:

\begin{corollary}[Confluence, PSN, SN]
$\lamBig^e$ is confluent and satisfies PSN. Encodings of intersection typed terms are strongly normalising.
\end{corollary}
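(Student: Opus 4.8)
The plan is to transfer each of the three properties from the $\lm$-calculus to $\lamBig^e$ through the operational correspondence of \cref{p:lm-lambig-operational-correspondence}. Writing $e(\cdot)$ for the encoding map, the essential observation is that $e$ is injective on $\alpha$-equivalence classes of $\mterms$-terms without metavariables and, by \cref{p:lm-lambig-operational-correspondence}, it is a step-for-step bijection between one-step $\lm$-reductions and one-step reactions in $\lamBig^e$. Consequently an $n$-step reaction sequence in $\lamBig^e$ corresponds to an $n$-step $\lm$-reduction sequence on the preimages, and vice versa; this lifts immediately to reflexive and transitive closures.

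For confluence, I would take an encoded term $b = e(t)$ reacting in $\lamBig^e$ to $b_1$ and to $b_2$. Since $\lamBig^e$ is closed under reaction, each $b_i = e(t_i)$ with $t \Rewn{\lm} t_i$ by the correspondence on the transitive closure. Confluence of $\Rew{\lm}$ on $\mterms$-terms (the confluence corollary of \cref{s:confluence}) then yields a term $s$ with $t_1 \Rewn{\lm} s$ and $t_2 \Rewn{\lm} s$. Translating these two reduction sequences forward through \cref{p:lm-lambig-operational-correspondence} produces reaction sequences $b_1 \to^{*} e(s)$ and $b_2 \to^{*} e(s)$, closing the diagram.

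For PSN and for strong normalisation of intersection-typed encodings the argument is uniform: an infinite reaction sequence in $\lamBig^e$ starting from $e(t)$ would, by repeated application of \cref{p:lm-lambig-operational-correspondence}, project to an infinite $\lm$-reduction sequence starting from $t$, contradicting $t \in \SN{\lm}$. For PSN I take $t$ a $\l$-term with $t \in \SN{\beta}$ and invoke \cref{c:psn} to get $t \in \SN{\lm}$; for the typed case I take $t$ intersection typable and invoke \cref{t:typable-implies-sn-milner} to get $t \in \SN{\lm}$. In either case the contradiction shows $e(t)$ admits no infinite reaction sequence, i.e.\ it is strongly normalising in $\lamBig^e$.

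The main obstacle I anticipate is not the transfer itself but pinning down its hypotheses: one must check that \cref{p:lm-lambig-operational-correspondence} genuinely gives a bijection on reachable states, so that every reaction target has a well-defined $\mterms$-term preimage and confluence can be both pulled back and pushed forward, and one must note that the restriction to terms without metavariables is harmless here, since $\lamBig$ does not model metaterms and all three statements concern encodings of pure $\mterms$-terms—precisely the fragment on which our $\lm$-results specialise. Once this bookkeeping is discharged, the three claims follow as immediate corollaries.
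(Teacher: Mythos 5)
Your proposal is correct and is essentially the paper's own argument: the paper gives no explicit proof for this corollary, relying precisely on the transfer of confluence (\cref{s:confluence}), PSN (\cref{c:psn}), and strong normalisation of intersection-typed terms (\cref{t:typable-implies-sn-milner}) through the step-for-step operational correspondence of \cref{p:lm-lambig-operational-correspondence}, together with closure of $\lamBig^e$ under reaction and the fact that $\alpha$-equivalent terms share an encoding. Your explicit bookkeeping about preimages and the harmlessness of restricting to metavariable-free terms only spells out what the paper leaves implicit.
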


\section{Conclusions}
\label{s:conclusions}

We  answer some  fundamental remaining questions  concerning the  adequacy of
Milner's $\l$-calculus with partial substitutions. In particular,
we  prove  that the  $\lm$-calculus  is  confluent  on terms  and
metaterms, that it  enjoys PSN, and that it allows a characterisation of $\lm$-strongly normalising
terms by using intersection type disciplines.

We relate $\lm$ to the  calculi with definitions $\lpar$ and $\ldef$,
thus obtaining a certain number of interesting results concerning
normalisation.   We also  relate the  $\lm$-calculus  to  
classical calculi  with explicit  substitutions.  Thus,  the $\lm$-calculus
can be  understood as a concise and  simple language implementing
partial and ordinary substitution,  both in implicit and explicit
style at the same time.

Last but not least, we establish a clear
connection   between  simply   typed   $\lm$-calculus  and   MELL
proof-nets,  thus  injecting  again  a  graph  representation  to
$\mterms$-terms  which   were  inspired  from   bigraphical  reactive
systems.

In  related    work,     Bundgaard    and
Hildebrandt~\cite{BundgaardHildebrandt2006}  use  partial   substitution similar to   $\lm$   in  their extension of Higher-Order  Mobile  Embedded  Resources (Homer), a higher-order  process calculus. Partial substitution
is also used  in different frameworks such as  for example Ariola
and  Felleisen's~\cite{AriolaFelleisen1997}  call-by-need  lambda
calculus and   Ariola    and   Klop's~\cite{AriolaKlop1997}   cyclic
$\lambda$-calculus.

  Grohmann  and  Miculan   have  modelled  the  call-by-name  and
  call-by-value       $\lambda$-calculi       with      bigraphs
  \cite{GrohmannMiculan2007}  by adapting  Milner's  model. While
  they concentrate on encodings  of $\l$-terms, the model is
  still   based  on  $\lm$ and   reduction  matches   reaction
  (\Cref{p:lm-lambig-operational-correspondence}). 
  Therefore,   our  results   can   be  used   to  reason   about
  normalisation properties  of encodings of  $\mterms$-terms in their
  models.



\paragraph{Acknowledgements}
We are grateful to  V. van  Oostrom who
pointed out to us references to calculi with 
partial notions of substitutions such
as $\lpar$ and $\ldef$. 

{\scriptsize
\paper{\bibliographystyle{abbrv}}}
\report{\bibliographystyle{alpha}}
\bibliography{Biblio}

\newpage
\appendix

\section{Proofs of~\cref{s:metaterms}}
\label{app:confluence}

We define a measure $\size(t)$ for $\mterms$-metaterm $t$ as follows: 

\[ \begin{array}{lll}
   \size(x)      &:= & 1\\
   \size(\mX_\Del) &:= & |\Del|\\
   \size(t\ u)   &:= & \size(t) + \size(u) \\
   \size(\l x. t)&:= & \size(t) \\
   \size(t[x/u]) &:= & \size(t) + \size(u) + \mul{x}{t} \cdot \size(u) \\
   &&\mbox{ unless } t = \mX_\Del [x_1/u_1]\ldots[x_n/u_n], x \in \Del \\
   \size(t[x/u]) &:= & \size(t) - 1 + \mul{x}{t} \cdot \size(u) \\
   &&\mbox{if } t = \mX_\Del [x_1/u_1]\ldots[x_n/u_n], x \in \Del
   \end{array} \] 

where  \[ \begin{array}{lll}
   \mul{x}{t}       &:= & 0 \hfill \mbox{ if } x \notin \fv(t)\\
   \mul{x}{x}       &:= & 1\\
   \mul{x}{\mX_\Del} &:= & 1 \hfill \mbox{ if } x \in \Del \\
   \mul{x}{t\ u}    &:= & \mul{x}{t} + \mul{x}{u}\\
   \mul{x}{\l y. t} &:= & \mul{x}{t}  \\
   \mul{x}{t[y/u]}  &:= & \mul{x}{t} + \mul{x}{u} + \mul{y}{t} \cdot \mul{x}{u}\\
    &&\mbox{ unless } t = \mX_\Del [x_1/u_1]\ldots[x_n/u_n], y \in \Del \\
   \mul{x}{t[y/u]}  &:= & \mul{x}{t}+ \mul{y}{t} \cdot \mul{x}{u}\\
    &&\mbox{ if } t = \mX_\Del [x_1/u_1]\ldots[x_n/u_n], y \in \Del 
   \end{array} \] 

Observe that $\size(t) \geq 1$ and $\mul{x}{t} \geq 0$.

The measure $\mul{x}{\_}$ places an upper bound on the number of free occurrences of $x$ in $\m$-reducts of $t$. The 
 last definition for $\mul{x}{t}$ comes from the intuition that
the number of free occurrences  of $x$  which give rise to redexes
in $t[y/u]$ is $\mul{x}{t} + \mul{x}{u} + (\mul{y}{t}-1) \cdot \mul{x}{u}$ as
the free occurrence of $y$ in $\mX_\Del$  does not create a redex by the
definition of  $\Rew{\usm}$.

The last  definition for $\size(t[x/u])$  is similar; the  $x$ in
the  metavariable is  counted by  $\mul{x}{t} \cdot  \size(u)$ so
since  we  do not  have  a  $\Rew{\usm}$-redex,  we subtract  one
$\size(u)$ and  since the $x$  in the metavariable is  useless in
terms of  reduction, we subtract  $1$ \ie the expression  for the
last case is  $\size(t) + \size(u) + \mul{x}{t}  \cdot \size(u) -
\size(u) - 1$. For example, the term $\mX_{\{x\}}[x/u]$ has size $\size(u)$.


\begin{lemma}
\label{l:size-mul-invariant-up-to-Com}
Let $t,t'$ be $\mterms$-metaterms. If $t =_\Com t'$, then:
\begin{enumerate}
\item $\mul{z}{t} = \mul{z}{t'}$ 
\item $\size(t) = \size(t')$;
\end{enumerate}
\end{lemma}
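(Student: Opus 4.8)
The plan is to exploit that $=_\Com$ is the congruence generated by the single root equation $t[x/u][y/v] =_\Com t[y/v][x/u]$ (subject to $y \notin \fv(u)$ and $x \notin \fv(v)$), closed under reflexivity, symmetry, transitivity and all contexts. Since both $\size(\_)$ and $\mul{z}{\_}$ are defined by structural recursion, I would prove both claims by induction on the derivation of $t =_\Com t'$, establishing claim~(1) for $\mul{z}{\_}$ first, because the computation for $\size$ relies on identities about $\mul{z}{\_}$.

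The closure cases are routine. Reflexivity and symmetry are immediate, and transitivity just composes two instances of the \ih. For a contextual step such as $s[w/r] =_\Com s'[w/r]$ arising from $s =_\Com s'$, I would unfold the relevant clause of each definition and apply the \ih\ to the changed subterm. Two facts make these cases go through uniformly: first, $=_\Com$ preserves free variables, since it only permutes substitutions; second, it preserves the special shape $\mX_\Del[x_1/u_1]\ldots[x_n/u_n]$ together with its annotation $\Del$, so the side conditions guarding the two clauses of each definition evaluate identically on both sides. Hence the \emph{same} clause fires on each side, and the reassembled values agree, using for $\size$ the already-established invariance of $\mul{w}{\_}$ to match the coefficient $\mul{w}{s} = \mul{w}{s'}$.

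The crux is the base case, a single root permutation. Writing $t$ for the common body, I would compare $t[x/u][y/v]$ with $t[y/v][x/u]$, splitting on whether $t$ has the special form $\mX_\Del[\ldots]$ and, if so, on whether $x \in \Del$ and whether $y \in \Del$; note that these membership conditions, and not the order of the two substitutions, decide which clause of each definition applies. In every subcase I would unfold $\size$ (resp.\ $\mul{z}{\_}$) twice. The decisive observation is that the cross terms $\mul{y}{u}$ and $\mul{x}{v}$ vanish, because $y \notin \fv(u)$ and $x \notin \fv(v)$ force them to $0$ by the first clause of $\mul{\_}{\_}$; consequently $\mul{y}{t[x/u]} = \mul{y}{t}$ and $\mul{x}{t[y/v]} = \mul{x}{t}$. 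After this simplification each measure collapses to a sum that is manifestly symmetric in the pairs $(x,u)$ and $(y,v)$, so the two orderings coincide. For instance, when $t$ is special with $x,y \in \Del$, both $\size(t[x/u][y/v])$ and $\size(t[y/v][x/u])$ reduce to $\size(t) - 2 + \mul{x}{t}\cdot\size(u) + \mul{y}{t}\cdot\size(v)$, while in the mixed subcase $x \in \Del$, $y \notin \Del$ the single $-1$ correction from the special clause surfaces on whichever side adds $[x/u]$ last, and both sides equal $\size(t) - 1 + \mul{x}{t}\cdot\size(u) + \size(v) + \mul{y}{t}\cdot\size(v)$.

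I expect the main obstacle to be precisely this base-case bookkeeping: tracking the four subcases and verifying that the $-1$ corrections contributed by the special clauses appear symmetrically no matter in which order the two independent substitutions are layered. The side conditions $y \notin \fv(u)$ and $x \notin \fv(v)$ are exactly what guarantee that the guards of the clauses, and hence the corrections, line up; without them the cross terms would survive and the two measures would differ.
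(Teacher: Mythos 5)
Your proposal is correct and follows essentially the same route as the paper's own proof: induction on the derivation of $t =_\Com t'$, with the root permutation $t_1[x/u][y/v] =_\Com t_1[y/v][x/u]$ split according to whether $t_1$ has the shape $\mX_\Del[x_1/u_1]\ldots[x_n/u_n]$ and whether $x,y$ belong to $\Del$, the cross terms $\mul{y}{u}$ and $\mul{x}{v}$ vanishing by the side conditions, and the contextual cases relying on $=_\Com$ preserving free variables and the metavariable-with-substitutions shape so that the same clause of $\size$ and $\mul{\cdot}{\cdot}$ fires on both sides. Your explicit collapsed forms for the base case (e.g.\ $\size(t_1) - 2 + \mul{x}{t_1}\cdot\size(u) + \mul{y}{t_1}\cdot\size(v)$ when $x,y \in \Del$) agree with the paper's computations.
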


\begin{proof} By induction on  $t =_\Com t'$. 
We first show the four interesting cases at
  the root having the form 
$t= t_1[x/u][y/v] =_\Com t_1[y/v][x/u]=t'$, in
  all of them we have $y \notin \fv(u)$ and $x \notin \fv(v)$.

\begin{itemize} 
\item $t_1 = \mX_\Del [x_1/u_1]\ldots[x_n/u_n]$, $x, y \in \Del$. We have 
\begin{enumerate}
\item 
\[ \begin{array}{ll}
  & \mul{z}{t_1[x/u][y/v]}\\
= & \mul{z}{t_1[x/u]} + \mul{y}{t_1[x/u]} \cdot \mul{z}{v} \\ 
= & \mul{z}{t_1} + \mul{x}{t_1} \cdot \mul{z}{u} + (\mul{y}{t_1} + \mul{x}{t_1} \cdot \mul{y}{u}) \cdot \mul{z}{v} \\ 
= & \mul{z}{t_1} + \mul{x}{t_1} \cdot \mul{z}{u} + \mul{y}{t_1} \cdot \mul{z}{v} \\ 
= & \mul{z}{t_1} + \mul{y}{t_1} \cdot \mul{z}{v} + (\mul{x}{t_1} + \mul{y}{t_1} \cdot \mul{x}{v}) \cdot \mul{z}{u} \\ 
= & \mul{z}{t_1[y/v]} + \mul{x}{t_1[y/v]} \cdot \mul{z}{u} \\ 
& \mul{z}{t_1[y/v][x/u]}
\end{array} \]

\item \[ \begin{array}{ll}
  & \size(t_1[x/u][y/v])\\
= & \size(t_1[x/u]) - 1 + \mul{y}{(t_1[x/u]} \cdot \size(v) \\
= & \size(t_1) - 1 + \mul{x}{t_1} \cdot \size(u) - 1 + \mul{y}{t_1}\cdot \size(v) +  \mul{x}{t_1} \cdot \mul{y}{u} \cdot \size(v) \\
= & \size(t_1) - 1 + \mul{y}{t_1} \cdot \size(v) - 1 + \mul{x}{t_1} \cdot \size(u) + 0 \\
= & \size(t_1) - 1 + \mul{y}{t_1} \cdot \size(v) - 1 + \mul{x}{t_1} \cdot \size(u) + \mul{y}{t_1} \cdot \mul{x}{v} \cdot \size(u) \\
= & \size(t_1[y/v]) - 1 + \mul{x}{(t_1[y/v]} \cdot \size(u) \\
& \size(t_1[y/v][x/u])
\end{array} \]
\end{enumerate}
\item $t_1 = \mX_\Del [x_1/u_1]\ldots[x_n/u_n]$, $x \in \Del$ $y \notin \Del$. We have 

\begin{enumerate}
\item 

\[ \begin{array}{ll}
  & \mul{z}{t_1[x/u][y/v]}\\
= & \mul{z}{t_1[x/u]} + \mul{z}{v} + \mul{y}{t_1[x/u]} \cdot \mul{z}{v} \\ 
= & \mul{z}{t_1} + \mul{x}{t_1} \cdot \mul{z}{u} + \mul{z}{v} + (\mul{y}{t_1} + \mul{x}{t_1} \cdot \mul{y}{u}) \cdot \mul{z}{v} \\ 
= & \mul{z}{t_1} + \mul{x}{t_1} \cdot \mul{z}{u} + \mul{z}{v} + \mul{y}{t_1} \cdot \mul{z}{v} \\ 
= & \mul{z}{t_1} + \mul{z}{v} + \mul{y}{t_1} \cdot \mul{z}{v} + (\mul{x}{t_1} + \mul{x}{v} + \mul{y}{t_1} \cdot \mul{x}{v}) \cdot \mul{z}{u} \\ 
= & \mul{z}{t_1[y/v]} + \mul{x}{t_1[y/v]} \cdot \mul{z}{u} \\ 
= & \mul{z}{t_1[y/v][x/u]}
\end{array} \]

\item \[ \begin{array}{ll}
  & \size(t_1[x/u][y/v])\\
= & \size(t_1[x/u]) + \size(v) + \mul{y}{(t_1[x/u]} \cdot \size(v) \\
= & \size(t_1) - 1 + \mul{x}{t_1} \cdot \size(u) + \size(v) + \mul{y}{t_1}\cdot \size(v) +  \mul{x}{t_1} \cdot \mul{y}{u} \cdot \size(v) \\
= & \size(t_1) - 1 + \mul{y}{t_1} \cdot \size(v) + \size(v) + \mul{x}{t_1} \cdot \size(u) + 0 \\
= & \size(t_1) + \size(v) + \mul{y}{t_1} \cdot \size(v) - 1 + \mul{x}{t_1} \cdot \size(u) \\
	= & \size(t_1) + \size(v) + \mul{y}{t_1} \cdot \size(v) - 1 + (\mul{x}{t_1} + \mul{x}{v} + \mul{y}{t_1} \cdot \mul{x}{v}) \cdot \size(u) \\
= & \size(t_1[y/v]) - 1 + \mul{x}{t_1[y/v]} \cdot \size(u) \\
& \size(t_1[y/v][x/u])
\end{array} \]
\end{enumerate}

\item $t_1 = \mX_\Del [x_1/u_1]\ldots[x_n/u_n]$, $x \notin \Del$ $y \in \Del$. Similar to the previous case.

\item $t_1 \neq \mX_\Del [x_1/u_1]\ldots[x_n/u_n]$ where $x,y \in \Del$. We have

\begin{enumerate}
\item 
\[ \begin{array}{ll}
  & \mul{z}{t_1[x/u][y/v]}\\
= & \mul{z}{t_1[x/u]} + \mul{z}{v} + \mul{y}{t_1[x/u]} \cdot \mul{z}{v} \\ 
= & \mul{z}{t_1} + \mul{z}{u} + \mul{x}{t_1} \cdot \mul{z}{u} + \mul{z}{v} + (\mul{y}{t_1} + \mul{y}{u} + \mul{x}{t_1} \cdot \mul{y}{u}) \cdot \mul{z}{v} \\ 
= & \mul{z}{t_1} + \mul{z}{u} + \mul{x}{t_1} \cdot \mul{z}{u} + \mul{z}{v} + \mul{y}{t_1} \cdot \mul{z}{v} \\ 
= & \mul{z}{t_1} + \mul{z}{v} + \mul{y}{t_1} \cdot \mul{z}{v} + \mul{z}{u} + (\mul{x}{t_1} + \mul{x}{v} + \mul{y}{t_1} \cdot \mul{x}{v}) \cdot \mul{z}{u} \\ 
= & \mul{z}{t_1[y/v]} + \mul{z}{u} + \mul{x}{t_1[y/v]} \cdot \mul{z}{u} \\ 
= & \mul{z}{t_1[y/v][x/u]}
\end{array} \]

\item
\[ \begin{array}{ll}
  & \size(t_1[x/u][y/v])\\
= & \size(t_1[x/u]) + \size(v) + \mul{y}{(t_1[x/u]} \cdot \size(v) \\
= & \size(t_1) + \size(u) + \mul{x}{t_1} \cdot \size(u) + \size(v) + \mul{y}{t_1}\cdot \size(v) + \mul{y}{u}\cdot \size(v) + \mul{x}{t_1} \cdot \mul{y}{u} \cdot \size(v) \\
= & \size(t_1) + \size(u) + \mul{x}{t_1} \cdot \size(u) + \size(v) + \mul{y}{t_1}\cdot \size(v) \\
= & \size(t_1) + \size(v) + \mul{y}{t_1} \cdot \size(v) + \size(u) + \mul{x}{t_1}\cdot \size(u) + \mul{x}{v}\cdot \size(u) + \mul{y}{t_1} \cdot \mul{x}{v} \cdot \size(u) \\
= & \size(t_1[y/v]) + \size(u) + \mul{x}{(t_1[y/v]} \cdot \size(u) \\
& \size(t_1[y/v][x/u])
\end{array} \]
\end{enumerate}
\end{itemize}

We now consider the inductive cases.

\begin{itemize}
\item $t = t_1\ t_2 =_\Com t_1\ t'_2 = t'$ with $t_2 =_\Com t'_2$.

\begin{enumerate}
\item $\mul{z}{t} = \mul{z}{t_1\ t_2} = \mul{z}{t_1} + \mul{z}{t_2} =_{\ih} \mul{z}{t_1} + \mul{z}{t'_2} = \mul{z}{t_1\ t'_2} = \mul{z}{t'}$.

\item 
$\size(t) = \size(t_1\ t_2) = \size(t_1) + \size(t_2) =_{\ih} \size(t_1) + \size(t'_2) = \size(t_1\ t'_2) = \size(t')$.
\end{enumerate}

\item $t = \l y. t_1 =_\Com \l y. t'_1 = t'$ and $t = t_1\ u =_\Com t'_1\ u = t'$ with $t_1 =_\Com t'_1$. Similar to the previous case.

\item $t = t_1[y/t_2] =_\Com t'_1[y/t_2] = t'$ with $t_1 =_\Com t'_1$.

  Consider $t_1  = \mX_\Del[x_1/u_1]\ldots[x_n/u_n]$ with  $y \in
  \Del$.   Then necessarily  $t'_1 =
  \mX_\Del[y_1/v_1]\ldots[y_n/v_n]$, where $v_i = u_{\sigma(i)}$,
  for some permutation $\sigma$. Moreover,  
  we can assume that $y \notin \bv(t_1)$ by $\alpha$-equivalence.

  Thus,

\begin{enumerate}
\item $\mul{z}{t}   =  \mul{z}{t_1[y/t_2]}  =   \mul{z}{t_1}  +
  \mul{y}{t_1}   \cdot  \mul{z}{t_2}   =_{\ih}   \mul{z}{t'_1}  +
  \mul{y}{t'_1}  \cdot   \mul{z}{t_2}  =  \mul{z}{t'_1[y/t_2]}  =
  \mul{z}{t'}$

\item $\size(t) = \size(t_1[y/t_2]) = \size(t_1) - 1
  +  \mul{y}{t_1}  \cdot \size(t_2)  =_{\ih}  \size(t'_1)  - 1  +
  \mul{y}{t'_1}   \cdot   \size(t_2)   =   \size(t'_1[y/t_2])   =
  \size(t')$.
\end{enumerate}

Otherwise,

\begin{enumerate}
\item $\mul{z}{t}  = \mul{z}{t_1[y/t_2]} =  \mul{z}{t_1} +
  \mul{z}{t_2}   +   \mul{y}{t_1}   \cdot  \mul{z}{t_2}   =_{\ih}
  \mul{z}{t'_1} + \mul{z}{t_2} + \mul{y}{t'_1} \cdot \mul{z}{t_2}
  =   \mul{z}{t'_1[y/t_2]}  =   \mul{z}{t'}$

\item  $\size(t)   =
  \size(t_1[y/t_2])  =  \size(t_1)  + \size(t_2)  +  \mul{y}{t_1}
  \cdot   \size(t_2)   =_{\ih}   \size(t'_1)   +   \size(t_2)   +
  \mul{y}{t'_1}   \cdot   \size(t_2)   =   \size(t'_1[y/t_2])   =
  \size(t')$.
\end{enumerate}

\item $t = t_1[y/t_2] =_\Com t_1[y/t'_2] = t'$ with $t_2 =_\Com t'_2$.

  Consider $t_1  = \mX_\Del[x_1/u_1]\ldots[x_n/u_n]$ with  $y \in
  \Del$. Then,

\begin{enumerate}
\item $\mul{z}{t}  = \mul{z}{t_1[y/t_2]} = \mul{z}{t_1} +
  \mul{y}{t_1}   \cdot   \mul{z}{t_2}   =_{\ih}  \mul{z}{t_1}   +
  \mul{y}{t_1}  \cdot   \mul{z}{t'_2}  =  \mul{z}{t_1[y/t'_2]}  =
  \mul{z}{t'}$
  \item  $\size(t) = \size(t_1[y/t_2]) = \size(t_1) - 1
  +  \mul{y}{t_1}  \cdot  \size(t_2)  =_{\ih} \size(t_1)  -  1  +
  \mul{y}{t_1}   \cdot   \size(t'_2)   =   \size(t_1[y/t'_2])   =
  \size(t')$.
\end{enumerate}

  Otherwise,

\begin{enumerate}
\item 
$\mul{z}{t}  = \mul{z}{t_1[y/t_2]} =  \mul{z}{t_1} +
  \mul{z}{t_2}   +   \mul{y}{t_1}   \cdot  \mul{z}{t_2}   =_{\ih}
  \mul{z}{t_1} + \mul{z}{t'_2} + \mul{y}{t_1} \cdot \mul{z}{t'_2}
  =   \mul{z}{t_1[y/t'_2]}  =   \mul{z}{t'}$

\item $\size(t)   =
  \size(t_1[y/t_2])  =  \size(t_1)  + \size(t_2)  +  \mul{y}{t_1}
  \cdot   \size(t_2)   =_{\ih}   \size(t_1)   +   \size(t'_2)   +
  \mul{y}{t_1}   \cdot   \size(t'_2)   =   \size(t_1[y/t'_2])   =
  \size(t')$.
\end{enumerate}
\end{itemize}
\end{proof}

\ignore{
For  Delia  rather  than  the  report:  I  think  the  definition
$\size(\mX_\Del) =  |\Del|$ is more faithful to  the intention of
$\size{}$  than  $\size(\mX_\Del)  =   1$  which  I  had  defined
previously; consider $\size(\mX_{\{x,y\}}[y/v][x/u])$:
\[
\begin{array}{llll}
& \mbox{Let } \size(\mX_\Del) = 1: \\
 & \size(\mX_{\{x,y\}}[y/v][x/u]) \\
= & \size(\mX_{\{x,y\}}[y/v]) - 1 + \mul{x}{\mX_{\{x,y\}}[y/v]} \cdot \size(u)\\
= & \size(\mX_{\{x,y\}}) - 1 + \size(v) - 1 + \mul{x}{\mX_{\{x,y\}}[y/v]} \cdot \size(u)\\
= & \size(v) - 1 + \mul{x}{\mX_{\{x,y\}}[y/v]} \cdot \size(u)\\
= & \size(v) - 1 + \mul{x}{\mX_{\{x,y\}}}\cdot \size(u) + \mul{y}{\mX_{\{x,y\}}} \cdot \mul{x}{v} \cdot \size(u)\\
= & \size(v) - 1 + \size(u) + \mul{x}{v} \cdot \size(u)\\
\end{array}
\]
whereas
\[
\begin{array}{llll}
& \mbox{Let } \size(\mX_\Del) = |\Del|: \\
 & \size(\mX_{\{x,y\}}[y/v][x/u]) \\
= & \size(\mX_{\{x,y\}}[y/v]) - 1 + \mul{x}{\mX_{\{x,y\}}[y/v]} \cdot \size(u)\\
= & \size(\mX_{\{x,y\}}) - 1 + \size(v) - 1 + \mul{x}{\mX_{\{x,y\}}[y/v]} \cdot \size(u) \\
= & \size(v) + \mul{x}{\mX_{\{x,y\}}[y/v]} \cdot \size(u) \\
= & \size(v) + \mul{x}{\mX_{\{x,y\}}}\cdot \size(u) + \mul{y}{\mX_{\{x,y\}}} \cdot \mul{x}{v} \cdot \size(u)\\
= & \size(v) + \size(u) + \mul{x}{v} \cdot \size(u)\\
\end{array}
\]
This last equation seems what we want.
}

We now extend the previous  measures to contexts by adding
 $\mul{x}{\Box} := 0$ and $\size(\Box) := 0$. 

\begin{lemma} \label{l:for-decreasing00}
Let $v$ be a $\mterms$-metaterm
such that $x,y \notin \fv(v)$ and
$x \neq  y$. Let $x \in \Del$. Then, 
\begin{enumerate}
\item\label{l:for-decreasing00-1}  $\mul{x}{C\cwc{x}} > \mul{x}{C\cwc{v}}$. 
\item\label{l:for-decreasing00-2} $\mul{y}{C\cwc{x}} \geq \mul{y}{C\cwc{v}}$. 
\item\label{l:for-decreasing00-3} $\mul{x}{C\cwc{\mX_\Del}_{x,\fv(v)}} >    \mul{x}{C\cwc{\mX_\Del[x/v]}_{x,\fv(v)}}$. 
\item\label{l:for-decreasing00-4} $\mul{y}{C\cwc{\mX_\Del}_{x,\fv(v)}} \geq \mul{y}{C\cwc{\mX_\Del[x/v]}_{x,\fv(v)}}$. 
\end{enumerate}
\end{lemma}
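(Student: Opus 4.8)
The plan is to establish all four inequalities by a single structural induction on the context $C$, grouping (1)--(2) into one induction and (3)--(4) into another, since the two pairs share the same inductive machinery and differ only in what is placed in the hole. In the second component of each pair (items~(2) and~(4)) I will prove the non-strict inequality \emph{uniformly} for every variable $y$ with $y\neq x$ and $y\notin\fv(v)$ that is not bound on the path to the hole; this uniformity is what lets me reinvoke the statement on the binders introduced by substitution contexts. Two conventions are used throughout: the $\alpha$-convention, under which every binder on the hole-path of $C$ is distinct from $x$, from $y$, and from $\fv(v)$ (and, in (3)--(4), from $\Del$), so that such a binder $z$ always satisfies $z\notin\fv(v)$ and hence $\mul{z}{v}=0$; and the guard clause $\mul{w}{t}=0$ whenever $w\notin\fv(t)$, which overrides the structural clauses of $\mul{\cdot}{\cdot}$.

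I would begin with the base case $C=\Box$, computing the atoms. For (1)--(2): $\mul{x}{x}=1>0=\mul{x}{v}$ and $\mul{y}{x}=0=\mul{y}{v}$, both zero values coming from the guard. For (3)--(4): since $x\notin\fv(\mX_\Del[x/v])=(\Del\setminus\set{x})\cup\fv(v)$, the guard gives $\mul{x}{\mX_\Del[x/v]}=0$, whereas $\mul{x}{\mX_\Del}=1$; and a short split on whether $y\in\Del$, using the special-form clause and $\mul{y}{v}=0$, yields $\mul{y}{\mX_\Del}=\mul{y}{\mX_\Del[x/v]}$. Thus in both pairs the hole content loses exactly one unit of $x$-multiplicity while keeping the multiplicity of every admissible $y$ unchanged --- precisely the invariant the inductions propagate.

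The cases $C=C'\ t$, $C=t\ C'$ and $C=\l z. C'$ are immediate: the relevant clause adds a common quantity to, or leaves unchanged, both sides, so the $\ih$ transports the strict (resp.\ non-strict) inequality. The substance is in the two substitution contexts. For $C=t[z/C']$ (hole in the argument) the special-form status is fixed by $t$, so the difference equals a single coefficient times the difference in $\mul{x}{C'\cwc{\cdot}}$: the coefficient is $1+\mul{z}{t}\geq 1$ in the ordinary case and $\mul{z}{t}$ in the special case, the latter being $\geq 1$ because well-formedness forces the binder $z$ of a special $t=\mX_{\Del'}[\ldots]$ (with $z\in\Del'$ and $z$ distinct from the chain's binders) to be genuinely free in $t$; I would isolate this as a one-line sublemma ($z\in\Del'$, $z$ not a chain binder $\Rightarrow\mul{z}{t}\geq 1$). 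For $C=C'[z/t]$ (hole in the body) the only delicate point is that, when the hole sits at the head of $C'$, filling it with a metavariable-headed $v$ may trigger the special clause while filling it with $x$ does not; but this mismatch is harmless, because the non-special formula then carries an extra summand $\mul{x}{t}\geq 0$ that only enlarges the difference, and the remaining terms are controlled by the $\ih$ of item~(1)/(3) (strict) and of item~(2)/(4) applied to $z$ (non-strict, legitimate since $z\neq x$ and $z\notin\fv(v)$). Items~(2) and~(4) run through the identical computations with every $>$ weakened to $\geq$.

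The main obstacle is the combined bookkeeping in the substitution contexts: one must simultaneously (i) track when the special-form clause fires, observing that in (3)--(4) both fillings share the head $\mX_\Del$ so no mismatch occurs, while in (1)--(2) the only possible mismatch increases the left-hand side; (ii) supply the auxiliary bound $\mul{z}{t}\geq 1$ for special $t$, which rests on the $\alpha$-convention that binders are distinct; and (iii) use capture-avoidance to guarantee $\mul{z}{v}=0$ for every hole-path binder $z$, so that the cross-terms coming from $v$ vanish. None of these is individually hard, but getting the clause selection and the guard to cooperate is where the proof must be written carefully.
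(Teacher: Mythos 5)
Your proposal is correct and takes essentially the same route as the paper's own proof: a mutual structural induction on $C$ covering all four items, the same base-case computations, the same observation that the application/abstraction cases are immediate, and the same two key technical points — that $\mul{z}{t} \geq 1$ when $t$ is a metavariable-headed chain with $z$ in its annotation (which the paper asserts inline, where you isolate it as a sublemma), and that the non-strict items (2)/(4) must be available at the binder $z$ of a substitution context, justified by $z \neq x$ and $z \notin \fv(v)$. The only deviation is your concern about a special-clause mismatch in the case $C = C'[z/t]$ when the hole sits at the head: under the conventions in force this mismatch cannot actually arise (a metavariable-headed $v$ whose annotation contains $z$ would force $z \in \fv(v)$, contradicting $z \notin \fv(v)$), but your argument for why it would be harmless is sound, so this is merely extra care rather than an error.
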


\begin{proof} By induction on $C$. Let $\phi = \{x\} \cup \fv(v)$.
\begin{itemize}

\item 

$C = \Box$.

\begin{enumerate}
\item $\mul{x}{x} = 1 > \mul{x}{v}= 0$.  

\item $\mul{y}{x}
  = 0 \geq \mul{y}{v}=
  0$. 

\item  $\mul{x}{\mX_\Del} = 1 > 0 = \mul{x}{\mX_\Del[x/v]}$, since by $\alpha$-conversion we can assume $x \notin \fv(\mX_\Del[x/v])$. 

\item  If $y \in \Del$,
 $\mul{y}{\mX_\Del}
 = \mul{y}{\mX_\Del} + 0 =
\mul{y}{\mX_\Del} + \mul{x}{\mX_\Del} \cdot \mul{y}{v} =
  \mul{y}{\mX_\Del[x/v]}.$

If $y \notin \Del$,
 $\mul{y}{\mX_\Del}
 = \mul{y}{\mX_\Del} + 0 =
\mul{y}{\mX_\Del} + \mul{y}{v} + \mul{x}{\mX_\Del} \cdot \mul{y}{v} =
  \mul{y}{\mX_\Del[x/v]}.$  
 \end{enumerate}

\item $C = D\ t$. 

\begin{enumerate}
\item       $\mul{x}{(D\ t)\cwc{x}} = 
       \mul{x}{D\cwc{x}\ t} =
       \mul{x}{D\cwc{x}} + \mul{x}{t} >_{\ih\ \ref{l:for-decreasing00-1}}
       \mul{x}{D\cwc{v}} + \mul{x}{t} = 
       \mul{x}{(D\ t)\cwc{v}}$

\item        $\mul{y}{(D\ t)\cwc{x}} = 
       \mul{y}{D\cwc{x}\ t} =
       \mul{y}{D\cwc{x}} + \mul{y}{t} \geq_{\ih\ 2}
       \mul{y}{D\cwc{v}} + \mul{y}{t} = 
       \mul{y}{(D\ t)\cwc{v}}$

 \item  			$\mul{x}{(D\ t)\cwc{\mX_\Del}_{\phi}} = 
       \mul{x}{D\cwc{\mX_\Del}_{\phi}\ t} =
       \mul{x}{D\cwc{\mX_\Del}_{\phi}} + \mul{x}{t} >_{\ih\ 3}
       \mul{x}{D\cwc{\mX_\Del[x/v]}_{\phi}} + \mul{x}{t} = 
       \mul{x}{(D\ t)\cwc{\mX_\Del[x/v]}_{\phi}}$

\item        $\mul{y}{(D\ t)\cwc{\mX_\Del}_{\phi}} = 
       \mul{y}{D\cwc{\mX_\Del}_{\phi}\ t} =
       \mul{y}{D\cwc{\mX_\Del}_{\phi}} + \mul{y}{t} \geq_{\ih\ 4}
       \mul{y}{D\cwc{\mX_\Del[x/v]}_{\phi}} + \mul{y}{t} = 
       \mul{y}{(D\ t)\cwc{\mX_\Del[x/v]}_{\phi}}$
\end{enumerate}

\item $C= t\ D$ and $C= \l z. D$ are similar.

\item $C=u[z/D]$.
By the  \ih\ \ref{l:for-decreasing00-1}, \ih\ \ref{l:for-decreasing00-2},
\ih\ \ref{l:for-decreasing00-3}, and \ih\ \ref{l:for-decreasing00-4}  we have $\mul{x}{D\cwc{x}} > \mul{x}{D\cwc{v}}$,
$\mul{y}{D\cwc{x}} \geq \mul{y}{D\cwc{v}}$,
$\mul{x}{D\cwc{\mX_\Del}_\phi} > \mul{x}{D\cwc{\mX_\Del[x/v]}_\phi}$, and
$\mul{y}{D\cwc{\mX_\Del}_\phi} \geq \mul{y}{D\cwc{\mX_\Del[x/v]}_\phi}$ respectively. 
 Also, by $\alpha$-conversion we can
  assume $z \notin \fv(v)$ so that 
  $\mul{z}{D\cwc{x}} \geq \mul{z}{D\cwc{v}}$ by the \ih\ \ref{l:for-decreasing00-2} and $\mul{z}{D\cwc{\mX_\Del}_\phi} \geq \mul{z}{D\cwc{\mX_\Del[x/v]}_\phi}$ by the \ih\ \ref{l:for-decreasing00-4}. 
  
\begin{enumerate}
\item       $\mul{x}{u[z/D]\cwc{x}} = 
       \mul{x}{u[z/D\cwc{x}]} =
       \mul{x}{u} + \mul{x}{D\cwc{x}} + \mul{z}{u} \cdot \mul{x}{D\cwc{x}} >
       \mul{x}{u} + \mul{x}{D\cwc{v}} + \mul{z}{u} \cdot \mul{x}{D\cwc{v}} = 
       \mul{x}{u[z/D\cwc{v}]}=
       \mul{x}{u[z/D]\cwc{v}}$

\item       $\mul{y}{u[z/D]\cwc{x}} = 
       \mul{y}{u[z/D\cwc{x}]} =
       \mul{y}{u} + \mul{y}{D\cwc{x}} + \mul{z}{u} \cdot \mul{y}{D\cwc{x}} \geq 
       \mul{y}{u} + \mul{y}{D\cwc{v}} + \mul{z}{u} \cdot \mul{y}{D\cwc{v}} =
       \mul{y}{u[z/D\cwc{v}]}=
       \mul{y}{u[z/D]\cwc{v}}$

\item Then, 
\[
\begin{array}{ll}
& \mul{x}{u[z/D]\cwc{\mX_\Del}_{\phi}}\\
 = &
\mul{x}{u[z/D\cwc{\mX_\Del}_{\phi}]}\\
 = &
\mul{x}{u} + m + \mul{z}{u} \cdot \mul{x}{D\cwc{\mX_\Del}_{\phi}} \\
 >&
\mul{x}{u} + m' + \mul{z}{u} \cdot \mul{x}{D\cwc{\mX_\Del[x/v]}_{\phi}} \\
 = &
\mul{x}{u[z/D\cwc{\mX_\Del[x/v]}_{\phi}]}\\
= &
\mul{x}{u[z/D]\cwc{\mX_\Del[x/v]}_{\phi}}
\end{array}
\]
where $m, m' = 0$ if $u = \mX_{\Del'} [x_1/u_1]\ldots[x_n/u_n], z
\in  \Del'$   (in  which  case   $\mul{z}{u}  >  0$)  and   $m  =
\mul{x}{D\cwc{\mX_\Del}_{\phi}}$,              $m'              =
\mul{x}{D\cwc{\mX_\Del[x/v]}_{\phi}}$  otherwise (hence $m > m'$ by \ih\ 3). 
\item Then, 
\[
\begin{array}{ll}
& \mul{y}{u[z/D]\cwc{\mX_\Del}_{\phi}}\\
 = &
\mul{y}{u[z/D\cwc{\mX_\Del}_{\phi}]}\\
 = &
\mul{y}{u} + m + \mul{z}{u} \cdot \mul{y}{D\cwc{\mX_\Del}_{\phi}} \\
 >&
\mul{y}{u} + m' + \mul{z}{u} \cdot \mul{y}{D\cwc{\mX_\Del[x/v]}_{\phi}} \\
 = &
\mul{y}{u[z/D\cwc{\mX_\Del[x/v]}_{\phi}]}\\
= &
\mul{y}{u[z/D]\cwc{\mX_\Del[x/v]}_{\phi}}
\end{array}
\]
where $m, m' = 0$ if $u = \mX_{\Del'} [x_1/u_1]\ldots[x_n/u_n], z
\in \Del'$ (in which case $\mul{z}{u} > 0$) and $m =
\mul{y}{D\cwc{\mX_\Del}_{\phi}}$, $m' =
\mul{y}{D\cwc{\mX_\Del[x/v]}_{\phi}}$ otherwise (hence $m > m'$ by \ih\ 4).
\end{enumerate}

\item $C=D[z/u]$. 
By
  the \ih\ \ref{l:for-decreasing00-1}, \ih\ \ref{l:for-decreasing00-2}, \ih\
  \ref{l:for-decreasing00-3}, and \ih\ \ref{l:for-decreasing00-4} we have $\mul{x}{D\cwc{x}} > \mul{x}{D\cwc{v}}$,
  $\mul{y}{D\cwc{x}} \geq \mul{y}{D\cwc{v}}$,
  $\mul{x}{D\cwc{\mX_\Del}_\phi}
  > \mul{x}{D\cwc{\mX_\Del[x/v]}_\phi}$, and
  $\mul{y}{D\cwc{\mX_\Del}_\phi} \geq \mul{y}{D\cwc{\mX_\Del[x/v]}_\phi}$
  respectively.  Also, by $\alpha$-conversion we can assume $z \neq x$
  and by definition of our notation we can assume $z \notin \fv(v)$.
  Thus, $\mul{z}{D\cwc{x}} \geq \mul{z}{D\cwc{v}}$ by the \ih\ \ref{l:for-decreasing00-2} and
  $\mul{z}{D\cwc{\mX_\Del}_\phi} \geq \mul{z}{D\cwc{\mX_\Del[x/v]}_\phi}$
  by the \ih\ \ref{l:for-decreasing00-4}.

\begin{enumerate}
\item 
      $\mul{x}{D[z/u]\cwc{x}} = 
       \mul{x}{D\cwc{x}[z/u]} =
       \mul{x}{D\cwc{x}} + \mul{x}{u} + \mul{z}{D\cwc{x}} \cdot \mul{x}{u} >
       \mul{x}{D\cwc{v}} + \mul{x}{u} + \mul{z}{D\cwc{v}} \cdot \mul{x}{u} = 
       \mul{x}{D\cwc{v}[z/u]}=
       \mul{x}{D[z/u]\cwc{v}}$.

\item 
      $\mul{y}{D[z/u]\cwc{x}} = 
       \mul{y}{D\cwc{x}[z/u]} =
       \mul{y}{D\cwc{x}} + \mul{y}{u} + \mul{z}{D\cwc{x}} \cdot \mul{y}{u} \geq
       \mul{y}{D\cwc{v}} + \mul{y}{u} + \mul{z}{D\cwc{v}} \cdot \mul{y}{u} = 
       \mul{y}{D\cwc{v}[z/u]}=
       \mul{y}{D[z/u]\cwc{v}}$
       

\item Then, 
\[
\begin{array}{ll}
& \mul{x}{D[z/u]\cwc{\mX_\Del}_{\phi}}\\
 = &
\mul{x}{D\cwc{\mX_\Del}_{\phi}[z/u]}\\
 = &
\mul{x}{D\cwc{\mX_\Del}_{\phi}} + m + \mul{z}{D\cwc{\mX_\Del}_{\phi}} \cdot \mul{x}{u} \\
 >&
\mul{x}{D\cwc{\mX_\Del[x/v]}_{\phi}} + m + \mul{z}{D\cwc{\mX_\Del[x/v]}_{\phi}} \cdot \mul{x}{u} \\
 = &
\mul{x}{D\cwc{\mX_\Del[x/v]}_{\phi}[z/u]}\\
= &
\mul{x}{D[z/u]\cwc{\mX_\Del[x/v]}_{\phi}}
\end{array}
\]
as $z \neq x$ where $m = 0$ if $D\cwc{\mX_\Del}_{\phi} = \mX_{\Del'} [x_1/u_1]\ldots[x_n/u_n], y \in \Del'$ and $\mul{x}{u}$ otherwise.

\item Then, 
\[
\begin{array}{ll}
& \mul{y}{D[z/u]\cwc{\mX_\Del}_{\phi}}\\
 = &
\mul{y}{D\cwc{\mX_\Del}_{\phi}[z/u]}\\
 = &
\mul{y}{D\cwc{\mX_\Del}_{\phi}} + m + \mul{z}{D\cwc{\mX_\Del}_{\phi}} \cdot \mul{y}{u} \\
 \geq&
\mul{y}{D\cwc{\mX_\Del[x/v]}_{\phi}} + m + \mul{z}{D\cwc{\mX_\Del[x/v]}_{\phi}} \cdot \mul{y}{u} \\
 = &
\mul{y}{D\cwc{\mX_\Del[x/v]}_{\phi}[z/u]}\\
= &
\mul{y}{D[z/u]\cwc{\mX_\Del[x/v]}_{\phi}}
\end{array}
\]
as $z \neq x$ where $m = 0$ if $D\cwc{\mX_\Del}_{\phi} = \mX_{\Del'} [x_1/u_1]\ldots[x_n/u_n], y \in \Del$ and $\mul{y}{u}$ otherwise.
\end{enumerate}         
\end{itemize}
\end{proof}

\begin{lemma} \label{l:for-decreasing0}
Let $v$ be  a $\mterms$-metaterm such that $y \notin \fv(v)$.
Let $x \neq y$. Then,  
$\mul{x}{C\cwc{y}} + \mul{y}{C\cwc{y}} \cdot \mul{x}{v} \geq  
 \mul{x}{C\cwc{v}} + \mul{y}{C\cwc{v}} \cdot \mul{x}{v}$.
\end{lemma}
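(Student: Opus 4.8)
The plan is to prove the inequality by structural induction on the context $C$, reading the statement as universally quantified over the tracked variable: for a fixed metaterm $v$ and variable $y$ with $y\notin\fv(v)$, I will show that for \emph{every} context $C$ and \emph{every} variable $x\neq y$,
\[
\mul{x}{C\cwc{y}} + \mul{y}{C\cwc{y}}\cdot\mul{x}{v} \;\geq\; \mul{x}{C\cwc{v}} + \mul{y}{C\cwc{v}}\cdot\mul{x}{v}.
\]
Writing $\Phi(x,C)$ for the left-hand side minus the right-hand side, the goal is $\Phi(x,C)\geq 0$. The only reason to keep $x$ universally quantified is that the inductive step for a substitution context introduces a \emph{fresh} bound variable, and I will need the statement instantiated at that variable as well.

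For the base case $C=\Box$ one gets equality: since $x\neq y$ we have $\mul{x}{y}=0$ and $\mul{y}{y}=1$, while $y\notin\fv(v)$ gives $\mul{y}{v}=0$, so both sides equal $\mul{x}{v}$. For $C=D\ t$, $C=t\ D$ and $C=\l z. D$ the measure $\mul{\_}{\_}$ is additive in the hole and the fixed part contributes identically to both fillings, so $\Phi(x,C)=\Phi(x,D)\geq 0$ follows directly from the \ih.

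The substitution contexts are the crux. For $C=t[z/D]$ (hole in the body) $t$ is fixed, so whichever clause of the definition of $\mul{\_}{t[z/\_]}$ applies is the same for both fillings; a short computation gives $\Phi(x,C)=(1+\mul{z}{t})\cdot\Phi(x,D)$ in the general clause and $\Phi(x,C)=\mul{z}{t}\cdot\Phi(x,D)$ in the metavariable-spine clause, both nonnegative by the \ih. For $C=D[z/u]$ (hole in the left component) one obtains, after cancelling the contributions of $u$ that are common to both fillings,
\[
\Phi(x,C) \;=\; \Phi(x,D) \;+\; \bigl(\mul{z}{D\cwc{y}}-\mul{z}{D\cwc{v}}\bigr)\cdot\bigl(\mul{x}{u}+\mul{y}{u}\cdot\mul{x}{v}\bigr).
\]
Here $\Phi(x,D)\geq 0$ is the \ih\ at $x$, and the second factor is manifestly nonnegative; the factor $\mul{z}{D\cwc{y}}-\mul{z}{D\cwc{v}}$ is nonnegative because, after $\alpha$-renaming so that $z\notin\fv(v)$, it equals $\Phi(z,D)$, which is the \ih\ at the fresh variable $z\neq y$ (using $\mul{z}{v}=0$).

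The main obstacle, and the point requiring care, is that the two clauses defining $\mul{\_}{t[z/u]}$ are selected according to whether $t$ is a metavariable spine $\mX_\Del[x_1/u_1]\ldots[x_n/u_n]$ with $z\in\Del$ (the clause tuned to $\Rew{\usm}$) or not, and for $C=D[z/u]$ this head can \emph{a priori} change when the hole sits on the head path: $D\cwc{y}$ is then variable-headed whereas $D\cwc{v}$ could be a spine headed by $v$'s metavariable. I will dispatch this by invoking $\alpha$-conversion to assume $z\notin\fv(v)$ and that the bound variables of $v$ avoid $z$; then even when $v$ is itself a spine $\mX_{\Del_v}[\ldots]$ its annotation satisfies $z\notin\Del_v$, so the metavariable-spine clause fails for the $v$-filling exactly as it fails for the variable-headed $y$-filling. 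Consequently both fillings are governed by the same formula in every configuration, the displayed identities hold verbatim, and the computation goes through.
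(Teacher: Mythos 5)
Your proof is correct, and it diverges from the paper's argument at precisely the one non-routine point. Both proofs are structural inductions on $C$ with identical base and congruence cases, and both in effect reduce the case $C=D[z/u]$ to the identity
\[
\Phi(x,C)\;=\;\Phi(x,D)\;+\;\bigl(\mul{z}{D\cwc{y}}-\mul{z}{D\cwc{v}}\bigr)\cdot\bigl(\mul{x}{u}+\mul{y}{u}\cdot\mul{x}{v}\bigr),
\]
where $\Phi$ denotes the left-hand side minus the right-hand side of the claim: the paper's chain of inequalities applies the \ih\ to the first summand and then discharges $\mul{z}{D\cwc{y}}\geq\mul{z}{D\cwc{v}}$ by invoking the separately proved monotonicity lemma (\cref{l:for-decreasing00}, second item), whereas you obtain that factor from the induction itself, by keeping the statement universally quantified over the measured variable and instantiating the \ih\ at the binder $z$, using $\mul{z}{v}=0$ (after $\alpha$-renaming) to collapse it to exactly the needed inequality. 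Your version is thus self-contained --- in effect the variable half of \cref{l:for-decreasing00} is re-proved inside the same induction --- at the cost of carrying the stronger quantified statement through; the paper's factorisation pays for itself instead because \cref{l:for-decreasing00}, including its strict and metavariable variants, is reused in \cref{l:for-decreasing1}, \cref{l:for-decreasing0usm} and \cref{l:for-decreasing1usm}. You are also more explicit than the paper on a genuine subtlety: that the \emph{same} defining clause of $\mul{\_}{\_[z/u]}$ (general versus metavariable-spine) is selected for both fillings, which you secure by observing that hygiene ($z\notin\fv(v)$ together with the bound variables of $v$ avoiding $z$) forces $z\notin\Del_v$ even when $v$ is a spine headed by a metavariable; the paper's computation for this lemma writes only the general clause and leaves that case split implicit.
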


\begin{proof} By induction on $C$. Remark that $\mul{y}{v} = 0$ by
  $\alpha$-conversion. 

\begin{itemize}
\item $C = \Box$. 

     \[ \begin{array}{ll} 
      \mul{x}{y} + \mul{y}{y} \cdot \mul{x}{v} & = \\ 
      0 + \mul{x}{v} & = \\  
      \mul{x}{v} +  0  & = \\
      \mul{x}{v} + \mul{y}{v} \cdot \mul{x}{v}
      \end{array} \] 

\item $C = D\ t$. 

    \[ \begin{array}{ll} 
       \mul{x}{D\cwc{y}\ t} + \mul{y}{D\cwc{y}\ t} \cdot \mul{x}{v} & =       \\
       \mul{x}{D\cwc{y}} + \mul{x}{t} + \mul{y}{D\cwc{y}} \cdot \mul{x}{v} + \mul{y}{t} \cdot \mul{x}{v}
        & \geq_{\ih}\\
        \mul{x}{D\cwc{v}} + \mul{x}{t} + \mul{y}{D\cwc{v}}\cdot \mul{x}{v}+  \mul{y}{t}
      \cdot \mul{x}{v}& = \\
      \mul{x}{D\cwc{v}\ t} + \mul{y}{D\cwc{v}\ t} \cdot \mul{x}{v}
      \end{array} \]

\item $C= t\ D$ and $C= \l z. D$ are similar.

\item $C=D[z/u]$. W.l.o.g. we can assume $z \notin \fv(v)$
  and $z \neq x$.

\[ \begin{array}{ll}
\mul{x}{D\cwc{y}[z/u]} + \mul{y}{D\cwc{y}[z/u]} \cdot \mul{x}{v} & = \\
\mul{x}{D\cwc{y}} + \mul{x}{u} + \mul{z}{D\cwc{y}} \cdot \mul{x}{u} + 
\mul{y}{D\cwc{y}} \cdot \mul{x}{v} + & \\
\mul{y}{u}\cdot \mul{x}{v} + \mul{z}{D\cwc{y}} \cdot \mul{y}{u}
\cdot \mul{x}{v} & \geq_{\ih}  \\
\mul{x}{D\cwc{v}} + \mul{x}{u} + \mul{z}{D\cwc{y}} \cdot \mul{x}{u} + 
\mul{y}{D\cwc{v}} \cdot \mul{x}{v} + & \\
\mul{y}{u}\cdot \mul{x}{v} + \mul{z}{D\cwc{y}} \cdot \mul{y}{u}
\cdot \mul{x}{v} & \geq_{\mbox{\cref{l:for-decreasing00}}} \\
\mul{x}{D\cwc{v}} + \mul{x}{u} + \mul{z}{D\cwc{v}} \cdot \mul{x}{u} + 
\mul{y}{D\cwc{v}} \cdot \mul{x}{v} + & \\
 \mul{y}{u} \cdot \mul{x}{v} + \mul{z}{D\cwc{v}} \cdot \mul{y}{u}
\cdot \mul{x}{v} & = \\
\mul{x}{D\cwc{v}[z/u]} + \mul{y}{D\cwc{v}[z/u]} \cdot \mul{x}{v} 
\end{array} \]

\item $C=u[z/D]$. 

\[ \begin{array}{ll}
\mul{x}{u[z/D\cwc{y}]} + \mul{y}{u[z/D\cwc{y}]} \cdot \mul{x}{v} & = \\
\mul{x}{u} + \mul{x}{D\cwc{y}} + \mul{z}{u} \cdot \mul{x}{D\cwc{y}} +
\mul{y}{u} \cdot \mul{x}{v}  + & \\
\mul{y}{D\cwc{y}} \cdot \mul{x}{v}  +
\mul{z}{u} \cdot \mul{y}{D\cwc{y}} \cdot \mul{x}{v}  &  = \\ 
\mul{x}{u} + (1+ \mul{z}{u}) \cdot (\mul{x}{D\cwc{y}} +\mul{y}{D\cwc{y}}
\cdot \mul{x}{v}) + \mul{y}{u} \cdot \mul{x}{v} 
& \geq_{\ih}\\ 
\mul{x}{u} + (1+ \mul{z}{u}) \cdot (\mul{x}{D\cwc{v}} +\mul{y}{D\cwc{v}} \cdot \mul{x}{v})+ \mul{y}{u} \cdot \mul{x}{v} 
& = \\ 
\mul{x}{u} + \mul{x}{D\cwc{v}} + \mul{z}{u} \cdot \mul{x}{D\cwc{v}} + 
\mul{y}{u}\cdot \mul{x}{v} + & \\
\mul{y}{D\cwc{v}}\cdot \mul{x}{v} + \mul{z}{u} \cdot \mul{y}{D\cwc{v}} \cdot \mul{x}{v} & = \\
\mul{x}{u[z/D\cwc{v}]} + \mul{y}{u[z/D\cwc{v}]} \cdot \mul{x}{v} \\
\end{array} \]

\end{itemize}
\end{proof}

\begin{lemma} \label{l:for-decreasing0usm}
Let $v$ be a $\mterms$-metaterm such that $y \notin \fv(v)$.
Let $\phi = \{y\} \cup \fv(v)$,
$x \neq y$ and $y \in \Del$. Then, \\
$\mul{x}{C\cwc{\mX_\Del}_{\phi}} + \mul{y}{C\cwc{\mX_\Del}_{\phi}} \cdot \mul{x}{v} \geq  
 \mul{x}{C\cwc{\mX_\Del[y/v]}_{\phi}} + \mul{y}{C\cwc{\mX_\Del[y/v]}_{\phi}} \cdot \mul{x}{v}$.
\end{lemma}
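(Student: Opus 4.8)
The plan is to prove the inequality by induction on the context $C$, following the very same case analysis as the proof of \cref{l:for-decreasing0}, but with the hole now filled by the metavariable $\mX_\Del$ (on the left) and by $\mX_\Del[y/v]$ (on the right) in place of the variable $y$ and the term $v$. As in that lemma, by $\alpha$-conversion I may assume that any substitution binder $z$ met while descending into $C$ satisfies $z \neq y$ and $z \notin \fv(v)$, and I will freely use that $\mul{\cdot}{\cdot} \geq 0$ and $\size(\cdot) \geq 1$.

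First I would settle the base case $C = \Box$, where in fact the two sides are equal. Since $y \in \Del$ we have $\mul{y}{\mX_\Del} = 1$, so the left-hand side is $\mul{x}{\mX_\Del} + \mul{x}{v}$. On the right, $\mX_\Del[y/v]$ is a metavariable stack (with $n = 0$) whose binder $y$ lies in $\Del$, so the metavariable clause gives $\mul{x}{\mX_\Del[y/v]} = \mul{x}{\mX_\Del} + \mul{y}{\mX_\Del}\cdot\mul{x}{v} = \mul{x}{\mX_\Del} + \mul{x}{v}$; moreover $y \notin \fv(\mX_\Del[y/v])$ because $y \notin \fv(v)$, hence $\mul{y}{\mX_\Del[y/v]} = 0$ and the right-hand side is again $\mul{x}{\mX_\Del} + \mul{x}{v}$.

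The cases $C = D\ t$, $C = t\ D$ and $C = \l z. D$ need only the induction hypothesis: in each of them $\mul{\cdot}{\cdot}$ either distributes additively over the hole-free component or is left unchanged by the abstraction, so the claim follows by adding a fixed quantity to both sides of the inductive inequality for $D$. The two substitution contexts carry the real content. For $C = u[z/D]$, expanding $\mul{x}{u[z/D\cwc{\cdot}]}$ and $\mul{y}{u[z/D\cwc{\cdot}]}$ exposes a common non-negative coefficient $c$ (equal to $\mul{z}{u}$ or $1 + \mul{z}{u}$ according to the definitional clause, which is fixed by $u$ and $z$ alone); collecting terms rewrites the left-hand side as a hole-independent part plus $c\cdot\bigl(\mul{x}{D\cwc{\mX_\Del}_\phi} + \mul{y}{D\cwc{\mX_\Del}_\phi}\cdot\mul{x}{v}\bigr)$, and the induction hypothesis for $D$ (multiplied by $c \geq 0$) closes it. For $C = D[z/u]$ the same collection leaves, besides the bracket controlled by the induction hypothesis, a term $\mul{z}{D\cwc{\mX_\Del}_\phi}\cdot(\mul{x}{u} + \mul{y}{u}\cdot\mul{x}{v})$; here I invoke \cref{l:for-decreasing00}(4), read with active variable $y$ (legitimate since $y \in \Del$ and $y \notin \fv(v)$) and observed variable $z$, to obtain $\mul{z}{D\cwc{\mX_\Del}_\phi} \geq \mul{z}{D\cwc{\mX_\Del[y/v]}_\phi}$, which after multiplication by the non-negative factor finishes the case exactly as in \cref{l:for-decreasing0}.

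The main obstacle is not conceptual but the bookkeeping of the definitional split of $\mul{\cdot}{\cdot[z/\cdot]}$, which branches on whether the term being substituted into is a metavariable stack $\mX_{\Del'}[\ldots]$ with $z \in \Del'$. The point to verify is that prepending the inner substitution $[y/v]$ does not alter which branch applies: the outer stack shape and the index set $\Del'$ are untouched, so the very same clause governs both $D\cwc{\mX_\Del}_\phi$ and $D\cwc{\mX_\Del[y/v]}_\phi$ (and likewise $u$ on both sides). Once this consistency is noted, every case reduces to a monotone combination of the induction hypothesis with \cref{l:for-decreasing00}, and the inequality follows.
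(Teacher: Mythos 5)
Your proposal is correct and follows essentially the same route as the paper's own proof: induction on $C$, with an exact equality in the base case $C=\Box$, the induction hypothesis scaled by a non-negative coefficient $p+\mul{z}{u}$ handling the case $C=u[z/D]$, and \cref{l:for-decreasing00}(4) (with active variable $y$ and observed variable $z$) supplying $\mul{z}{D\cwc{\mX_\Del}_\phi} \geq \mul{z}{D\cwc{\mX_\Del[y/v]}_\phi}$ in the case $C=D[z/u]$. Your explicit branch-consistency observation is precisely what the paper's $m,n,m',n',p$ bookkeeping encodes implicitly.
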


\begin{proof} By induction on $C$. 
\begin{itemize}
\item $C = \Box$. 
     \[ \begin{array}{ll} 
      \mul{x}{\mX_\Del} + \mul{y}{\mX_\Del} \cdot \mul{x}{v} & = \\ 
      \mul{x}{\mX_\Del[y/v]} + \mul{y}{\mX_\Del[y/v]} \cdot \mul{x}{v}
      \end{array} \] 

\item $C = D\ t$. 

    \[ \begin{array}{ll} 
       \mul{x}{D\cwc{\mX_\Del}_{\phi}\ t} + \mul{y}{D\cwc{\mX_\Del}_{\phi}\ t} \cdot \mul{x}{v} & =       \\
       \mul{x}{D\cwc{\mX_\Del}_{\phi}} + \mul{x}{t} + \mul{y}{D\cwc{\mX_\Del}_{\phi}} \cdot \mul{x}{v} + \mul{y}{t} \cdot \mul{x}{v}
        & \geq_{\ih}\\
        \mul{x}{D\cwc{\mX_\Del[y/v]}_{\phi}} + \mul{x}{t} + \mul{y}{D\cwc{\mX_\Del[y/v]}_{\phi}}\cdot \mul{x}{v}+  \mul{y}{t}
      \cdot \mul{x}{v}& = \\
      \mul{x}{D\cwc{\mX_\Del[y/v]}_{\phi}\ t} + \mul{y}{D\cwc{\mX_\Del[y/v]}_{\phi}\ t} \cdot \mul{x}{v}
      \end{array} \]

\item $C= t\ D$ and $C= \l z. D$ are similar.

\item $C=D[z/u]$. W.l.o.g. we can assume $z \notin \fv(v)$.

\[ \begin{array}{ll}
\mul{x}{D\cwc{\mX_\Del}_{\phi}[z/u]} + \mul{y}{D\cwc{\mX_\Del}_{\phi}[z/u]} \cdot \mul{x}{v} & = \\
\mul{x}{D\cwc{\mX_\Del}_{\phi}} + m + \mul{z}{D\cwc{\mX_\Del}_{\phi}} \cdot \mul{x}{u} + 
\mul{y}{D\cwc{\mX_\Del}_{\phi}} \cdot \mul{x}{v}\\
+ n \cdot \mul{x}{v} + \mul{z}{D\cwc{\mX_\Del}_{\phi}} \cdot \mul{y}{u}
\cdot \mul{x}{v} & \geq_{\ih}  \\
\mul{x}{D\cwc{\mX_\Del[x/v]}_\phi} + m + \mul{z}{D\cwc{\mX_\Del}_\phi} \cdot \mul{x}{u} + 
\mul{y}{D\cwc{\mX_\Del[x/v]}_\phi} \cdot \mul{x}{v}\\
 + n \cdot \mul{x}{v} + \mul{z}{D\cwc{\mX_\Del}_\phi} \cdot \mul{y}{u}
\cdot \mul{x}{v} & \geq_{\mbox{\cref{l:for-decreasing00}}} \\
\mul{x}{D\cwc{\mX_\Del[x/v]}_\phi} + m + \mul{z}{D\cwc{\mX_\Del[x/v]}_\phi} \cdot \mul{x}{u} + 
\mul{y}{D\cwc{\mX_\Del[x/v]}_\phi} \cdot \mul{x}{v}\\
 + n \cdot \mul{x}{v} + \mul{z}{D\cwc{\mX_\Del[x/v]}_\phi} \cdot \mul{y}{u}
\cdot \mul{x}{v} & = \\
\mul{x}{D\cwc{\mX_\Del[x/v]}_\phi[z/u]} + \mul{y}{D\cwc{\mX_\Del[x/v]}_\phi[z/u]} \cdot \mul{x}{v} 
\end{array} \] 
as $z \neq x$ where $m, n = 0$ if $D\cwc{\mX_\Del}_{\phi} = \mX_{\Del'} [x_1/u_1]\ldots[x_n/u_n], y \in \Del'$ and $m = \mul{x}{u}, n = \mul{y}{u}$ otherwise.

\item $C=u[z/D]$. 

\[ \begin{array}{ll}
\mul{x}{u[z/D\cwc{\mX_\Del}_\phi]} + \mul{y}{u[z/D\cwc{\mX_\Del}_\phi]} \cdot \mul{x}{v} & = \\
\mul{x}{u} + m + \mul{z}{u} \cdot \mul{x}{D\cwc{\mX_\Del}_\phi} +
\mul{y}{u} \cdot \mul{x}{v}\\
  + n \cdot \mul{x}{v}  +
\mul{z}{u} \cdot \mul{y}{D\cwc{\mX_\Del}_\phi} \cdot \mul{x}{v}  &  = \\ 
\mul{x}{u} + (p+ \mul{z}{u}) \cdot (\mul{x}{D\cwc{\mX_\Del}_\phi} +\mul{y}{D\cwc{\mX_\Del}_\phi}
\cdot \mul{x}{v}) + \mul{y}{u} \cdot \mul{x}{v} 
& \geq_{\ih}\\ 
\mul{x}{u} + (p+ \mul{z}{u}) \cdot (\mul{x}{D\cwc{\mX_\Del[x/v]}_\phi} +\mul{y}{D\cwc{\mX_\Del[x/v]}_\phi} \cdot \mul{x}{v})+ \mul{y}{u} \cdot \mul{x}{v} 
& = \\ 
\mul{x}{u} + m' + \mul{z}{u} \cdot \mul{x}{D\cwc{\mX_\Del[x/v]}_\phi} + 
\mul{y}{u}\cdot \mul{x}{v}\\
 + n' \cdot \mul{x}{v} + \mul{z}{u} \cdot \mul{y}{D\cwc{\mX_\Del[x/v]}_\phi} \cdot \mul{x}{v} & = \\
\mul{x}{u[z/D\cwc{\mX_\Del[x/v]}_\phi]} + \mul{y}{u[z/D\cwc{\mX_\Del[x/v]}_\phi]} \cdot \mul{x}{v} \\
\end{array} \]
where $m, n, m', n', p = 0$ if $u = \mX_{\Del'} [x_1/u_1]\ldots[x_n/u_n], z \in \Del'$ (in which case $\mul{z}{u} > 0$) and $m = \mul{x}{D\cwc{\mX_\Del}_{\phi}}$, $n = \mul{y}{D\cwc{\mX_\Del}_{\phi}}$, $m' = \mul{x}{D\cwc{\mX_\Del[x/v]}_{\phi}}$, $n' = \mul{y}{D\cwc{\mX_\Del[x/v]}_{\phi}}$, $p = 1$ otherwise (hence $p \geq p'$ in both cases).

\end{itemize}
\end{proof}

\begin{lemma} \label{l:for-decreasing1}
Let $v$ be a $\mterms$-metaterm such that $x \notin \fv(v)$. 
Then, \\
$\size(C\cwc{x}) + \mul{x}{C\cwc{x}} \cdot \size(v) > \size(C\cwc{v}) +
\mul{x}{C\cwc{v}} \cdot \size(v)$.
\end{lemma}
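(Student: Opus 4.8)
The plan is to prove the inequality by induction on the structure of the context $C$, matching the compositional definitions of $\size$ and $\mul{x}{\_}$. It is convenient to abbreviate $\Phi(w) := \size(w) + \mul{x}{w}\cdot\size(v)$, so that the goal reads $\Phi(C\cwc{x}) > \Phi(C\cwc{v})$. The arithmetic facts doing the work throughout are $\size(v) \geq 1 > 0$ and $\mul{x}{v} = 0$, the latter because $x \notin \fv(v)$. For the base case $C = \Box$ I compute directly $\Phi(x) = 1 + \size(v)$ against $\Phi(v) = \size(v)$, the strictness coming precisely from the single occurrence counted by $\mul{x}{x} = 1$.

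For the shape-preserving constructors $C = D\ t$, $C = t\ D$ and $C = \l z.D$, both measures distribute additively, and the summands contributed by the part not containing the hole are identical on the two sides; after cancelling them the goal collapses to $\Phi(D\cwc{x}) > \Phi(D\cwc{v})$, which is the induction hypothesis. The case $C = u[z/D]$ (hole in the argument of the substitution) is almost as easy: the body $u$ is fixed, so $\mul{z}{u}$ agrees on both sides, and regrouping the definitions yields $\Phi(C\cwc{w}) = c + k\cdot\Phi(D\cwc{w})$, where $c$ gathers terms ($\size(u) + \mul{x}{u}\cdot\size(v)$, possibly minus one in the special metavariable case) that are identical on the two sides, and $k = 1 + \mul{z}{u}$ or $k = \mul{z}{u} \geq 1$. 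Since $k > 0$, multiplying the induction hypothesis $\Phi(D\cwc{x}) > \Phi(D\cwc{v})$ by $k$ preserves strictness.

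The genuinely delicate case is $C = D[z/u]$, where the hole sits in the body and the coefficient $\mul{z}{D\cwc{\_}}$ now differs between the two sides. After cancelling the common summands the goal reduces to
\[
\Phi(D\cwc{x}) + \mul{z}{D\cwc{x}}\cdot\big(\size(u) + \mul{x}{u}\cdot\size(v)\big)
> \Phi(D\cwc{v}) + \mul{z}{D\cwc{v}}\cdot\big(\size(u) + \mul{x}{u}\cdot\size(v)\big).
\]
The first summands obey a strict inequality by the induction hypothesis, while for the second I invoke point~(\ref{l:for-decreasing00-2}) of \cref{l:for-decreasing00} to get $\mul{z}{D\cwc{x}} \geq \mul{z}{D\cwc{v}}$, using $\alpha$-conversion to assume $z \neq x$ and $z \notin \fv(v)$; as the common factor $\size(u) + \mul{x}{u}\cdot\size(v)$ is nonnegative, the second summands satisfy $\geq$, and a strict plus a non-strict inequality remains strict.

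I expect the main obstacle to be bookkeeping rather than conceptual: both $\size$ and $\mul{x}{\_}$ branch according to whether the enclosing term has the special form $\mX_\Del[x_1/u_1]\ldots[x_n/u_n]$ with the bound variable in $\Del$, so each substitution context splits into two sub-cases. The observation that keeps this manageable is that this shape is determined by $C$ alone — the hole lies inside some $u_i$ or inside $D$, never in the metavariable head — so $C\cwc{x}$ and $C\cwc{v}$ always fall into the same definitional branch, and in both branches the discriminating terms (the ``$-1$'' and the ``$+\size(u)$'') coincide on the two sides and cancel, leaving exactly the residual inequality analysed above.
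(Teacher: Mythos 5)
Your proof is correct and follows essentially the same route as the paper's: structural induction on $C$, with the base case computed directly, the congruence cases by the induction hypothesis, the hole-in-body case $C = D[z/u]$ closed by combining the induction hypothesis with point (2) of \cref{l:for-decreasing00} (giving $\mul{z}{D\cwc{x}} \geq \mul{z}{D\cwc{v}}$ under the assumptions $z \neq x$, $z \notin \fv(v)$), and the hole-in-argument case $C = u[z/D]$ closed by factoring out a positive coefficient ($1+\mul{z}{u}$, resp.\ $\mul{z}{u} \geq 1$) in front of the induction hypothesis. If anything, your explicit bookkeeping for the metavariable branches of $\size$ and $\mul{x}{\_}$ --- observing that the branch is determined by $C$ alone, so the discriminating terms (the ``$-1$'' versus the ``$+\size(u)$'') coincide on both sides and cancel --- is more thorough than the paper's displayed computations for this lemma, which only write out the generic-branch equations.
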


\begin{proof} By induction on $C$. 
\begin{itemize}
\item $C = \Box$. 

      \[ \begin{array}{ll}
         \size(x) + \mul{x}{x} \cdot \size(v) & = \\ 
           1 + \size(v) & > \\
           \size(v) + 0 & = \\ 
           \size(v) +\mul{x}{v} \cdot \size(v) \\
           \end{array} \] 

\item $C = D\ t$. 

      \[ \begin{array}{ll}
       \size(D\cwc{x}\ t) + \mul{x}{D\cwc{x}\ t} \cdot \size(v) & = \\
       \size(D\cwc{x}) + \size(t) + (\mul{x}{D\cwc{x}} + \mul{x}{t})
       \cdot \size(v) & = \\
       \size(D\cwc{x}) + \size(t) + \mul{x}{D\cwc{x}}  \cdot \size(v) +
       \mul{x}{t} \cdot \size(v) & >_{\ih} \\
       \size(D\cwc{v}) + \size(t) + \mul{x}{D\cwc{v}}\cdot  \size(v) + \mul{x}{t} \cdot  \size(v) & = \\
      \size(D\cwc{v}\ t)  + \mul{x}{D\cwc{v}\ t} \cdot \size(v)
       \end{array} \]

\item $C= t\ D$ and $C= \l z. D$ are similar.

\item $C=D[z/u]$. W.l.o.g. we assume $z \notin \fv(v)$.

     \[ \begin{array}{ll}
      \size(D\cwc{x}[z/u]) + \mul{x}{D\cwc{x}[z/u]} \cdot \size(v) & = \\ 
      \size(D\cwc{x}) + \mul{z}{D\cwc{x}} \cdot \size(u) + \size(u) +  
      \mul{x}{D\cwc{x}}\cdot \size(v) + & \\
      \mul{x}{u}\cdot \size(v) + \mul{z}{D\cwc{x}} \cdot
      \mul{x}{u} \cdot \size(v) & >_{\ih} \\
      \size(D\cwc{v}) +\mul{z}{D\cwc{x}} \cdot \size(u) +  \size(u) + 
      \mul{x}{D\cwc{v}}\cdot \size(v) + & \\
      \mul{x}{u}\cdot \size(v) + \mul{z}{D\cwc{x}} \cdot
      \mul{x}{u} \cdot \size(v) & \geq_{\mbox{\cref{l:for-decreasing00}}}  \\
      \size(D\cwc{v}) +\mul{z}{D\cwc{v}} \cdot \size(u) +  \size(u) + 
      \mul{x}{D\cwc{v}}\cdot \size(v) + & \\
      \mul{x}{u}\cdot \size(v) + \mul{z}{D\cwc{v}} \cdot
      \mul{x}{u} \cdot \size(v) & = \\
      \size(D\cwc{v}[z/u]) + \mul{x}{D\cwc{v}[z/u]} \cdot \size(v)
      \end{array} \]

\item $C=u[z/D]$. 

\[ \begin{array}{ll}
\size(u[z/D\cwc{x}]) + \mul{x}{u[z/D\cwc{x}]} \cdot \size(v) & = \\ 
\size(u) + \size(D\cwc{x}) + \mul{z}{u} \cdot  \size(D\cwc{x}) + 
\mul{x}{u}\cdot \size(v) + & \\
\mul{x}{D\cwc{x}}\cdot \size(v) + \mul{z}{u} \cdot \mul{x}{D\cwc{x}}\cdot \size(v) & = \\
\size(u) + (\mul{z}{u} +1 )\cdot  (\size(D\cwc{x}) +
\mul{x}{D\cwc{x}}\cdot \size(v)) + 
\mul{x}{u}\cdot \size(v)   & >_{\ih}\\
\size(u) + (\mul{z}{u} +1 )\cdot  (\size(D\cwc{v}) +
\mul{x}{D\cwc{v}}\cdot \size(v)) + 
\mul{x}{u}\cdot \size(v)   & = \\
\size(u) + \size(D\cwc{v}) + \mul{z}{u} \cdot \size(D\cwc{v}) + 
\mul{x}{u} \cdot \size(v) + & \\
\mul{x}{D\cwc{v}} \cdot \size(v) + \mul{z}{u} \cdot \mul{x}{D\cwc{v}}\cdot \size(v)
 & = \\
\size(u[z/D\cwc{v}]) + \mul{x}{u[z/D\cwc{v}]} \cdot \size(v) 
\end{array} \]

\end{itemize}
\end{proof}

\begin{lemma} \label{l:for-decreasing1usm}
Let $v$ be a $\mterms$-metaterm
such that $x \notin \fv(v)$. Let $\phi = \{x\} \cup \fv(u)$
and $x \in \Del$. 
Then,\\
$\size(C\cwc{\mX_\Del}_\phi) + \mul{x}{C\cwc{\mX_\Del}_\phi} \cdot \size(v) > \size(C\cwc{\mX_\Del[x/v]}_\phi) +
\mul{x}{C\cwc{\mX_\Del[x/v]}_\phi} \cdot \size(v)$.
\end{lemma}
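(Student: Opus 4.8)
The plan is to prove the statement by induction on the structure of the context $C$, following exactly the pattern of the proof of~\cref{l:for-decreasing1} but replacing the hole fillers $x$ and $v$ by $\mX_\Del$ and $\mX_\Del[x/v]$. The intuition is the same: filling the hole with $\mX_\Del$ contributes a genuine free occurrence of $x$ (counted by $\mul{x}{\_}$), whereas filling it with $\mX_\Del[x/v]$ binds that occurrence locally, so the weighted size $\size(\_) + \mul{x}{\_}\cdot\size(v)$ strictly decreases. This is precisely what is needed to show that $\size$ drops along a $\Rew{\usm}$ step. Throughout I would invoke $\alpha$-conversion to assume $z \neq x$ and $z \notin \fv(v)$ whenever a binder $z$ is met, exactly as in~\cref{l:for-decreasing1}.

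For the base case $C = \Box$ I would compute directly. Since $x \in \Del$ we have $\size(\mX_\Del) = |\Del|$ and $\mul{x}{\mX_\Del} = 1$, so the left-hand side is $|\Del| + \size(v)$. On the right, $\mX_\Del$ has the special head form (with $n=0$) and $x \in \Del$, so the last clause of $\size$ gives $\size(\mX_\Del[x/v]) = |\Del| - 1 + \size(v)$, while $\mul{x}{\mX_\Del[x/v]} = 0$ because $x$ is bound in $\mX_\Del[x/v]$ (using $x \notin \fv(v)$, so $x \notin \fv(\mX_\Del[x/v])$). Hence the right-hand side equals $|\Del| - 1 + \size(v)$ and the inequality holds with a strict slack of $1$. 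This slack is exactly the source of strictness, coming from the $-1$ in the special $\size$-clause, and the inductive cases must be careful to preserve it.

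The application and abstraction cases $C = D\ t$, $C = t\ D$ and $C = \l z. D$ are immediate: both $\size$ and $\mul{x}{\_}$ split additively, the extra summands ($\size(t)$, $\mul{x}{t}\cdot\size(v)$, and so on) appear identically on both sides and cancel, and I would apply the induction hypothesis to the subcontext $D$, which already yields the strict inequality. The two substitution cases $C = D[z/u]$ and $C = u[z/D]$ carry the real work. Here I would expand $\size$ and $\mul{x}{\_}$ of the substitution, isolate the block $\size(D\cwc{\mX_\Del}_\phi) + \mul{x}{D\cwc{\mX_\Del}_\phi}\cdot\size(v)$ to which the induction hypothesis applies (giving the strict drop), and then control the leftover coefficients of the form $\mul{z}{D\cwc{\mX_\Del}_\phi}$ using parts~3 and~4 of~\cref{l:for-decreasing00}, which state precisely that $\mul{x}{\_}$ strictly decreases and $\mul{z}{\_}$ weakly decreases when $\mX_\Del$ is replaced by $\mX_\Del[x/v]$. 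As these coefficients multiply nonnegative quantities, monotonicity preserves the inequality and the strictness inherited from the induction hypothesis survives.

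The main obstacle I anticipate is the bookkeeping forced by the special metavariable clauses in the definitions of $\size$ and $\mul$: when the subcontext $D\cwc{\mX_\Del}_\phi$ (or the term $u$, in the case $C = u[z/D]$) is itself of head form $\mX_{\Del'}[x_1/u_1]\ldots[x_n/u_n]$ with the relevant binder in $\Del'$, the ``unless/if'' alternatives of the definitions apply and certain summands vanish. As in~\cref{l:for-decreasing0usm}, I would absorb this into placeholders $m, m'$ (equal to $0$ in the special subcase and to the generic expression otherwise) and check that in either subcase the required relation $m \geq m'$ follows from the corresponding part of~\cref{l:for-decreasing00}. Aligning these case splits so that the strict $>$ from the base case and the induction hypothesis is never silently weakened to $\geq$ is the delicate point; everything else is routine expansion of the two measures.
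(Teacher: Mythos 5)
Your plan matches the paper's proof essentially step for step: the same induction on $C$, the same base-case computation in which the strict slack of $1$ comes from the $-1$ in the metavariable clause of $\size$, the same use of parts 3 and 4 of \cref{l:for-decreasing00} to control the leftover $\mul{z}{\cdot}$ coefficients in the substitution cases, and the same placeholder bookkeeping ($m$, $n$) for the special head-form subcases. The one point to make precise when writing it out is that in the case $C = u[z/D]$ strictness survives not merely because the coefficients are nonnegative, but because the coefficient multiplying the induction-hypothesis block is at least $1$ (it is $1 + \mul{z}{u}$ in general, and $\mul{z}{u} \geq 1$ in the special subcase since $z \in \Del'$), which is exactly how the paper argues.
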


\begin{proof} By induction on $C$. 
\begin{itemize}
\item $C = \Box$. 

      \[ \begin{array}{ll}
         \size(\mX_\Del) + \mul{x}{\mX_\Del} \cdot \size(v) & = \\ 
           |\Del| + \size(v) & > \\
           |\Del| - 1 + \size(v) + 0 & = \\ 
           \size(\mX_\Del[x/v]) +\mul{x}{\mX_\Del[x/v]} \cdot \size(v) \\
           \end{array} \] 

\item $C = D\ t$. 

      \[ \begin{array}{ll}
       \size(D\cwc{\mX_\Del}_\phi\ t) + \mul{x}{D\cwc{\mX_\Del}_\phi\ t} \cdot \size(v) & = \\
       \size(D\cwc{\mX_\Del}_\phi) + \size(t) + (\mul{x}{D\cwc{\mX_\Del}_\phi} + \mul{x}{t})
       \cdot \size(v) & = \\
       \size(D\cwc{\mX_\Del}_\phi) + \size(t) + \mul{x}{D\cwc{\mX_\Del}_\phi}  \cdot \size(v) +
       \mul{x}{t} \cdot \size(v) & >_{\ih} \\
       \size(D\cwc{\mX_\Del[x/v]}_\phi) + \size(t) + \mul{x}{D\cwc{\mX_\Del[x/v]}_\phi}\cdot  \size(v) + \mul{x}{t} \cdot  \size(v) & = \\
      \size(D\cwc{\mX_\Del[x/v]}_\phi\ t)  + \mul{x}{D\cwc{\mX_\Del[x/v]}_\phi\ t} \cdot \size(v)
       \end{array} \]

\item $C= t\ D$ and $C= \l z. D$ are similar.

\item $C=D[z/u]$. W.l.o.g. we assume $z \notin \fv(v)$.

     \[ \begin{array}{ll}
      \size(D\cwc{\mX_\Del}_\phi[z/u]) + \mul{x}{D\cwc{\mX_\Del}_\phi[z/u]} \cdot \size(v) & = \\ 
      \size(D\cwc{\mX_\Del}_\phi) + \mul{z}{D\cwc{\mX_\Del}_\phi} \cdot \size(u) + m +  
      \mul{x}{D\cwc{\mX_\Del}_\phi}\cdot \size(v) \\
      + n + \mul{z}{D\cwc{\mX_\Del}_\phi} \cdot
      \mul{x}{u} \cdot \size(v) & >_{\ih} \\
      \size(D\cwc{\mX_\Del[x/v]}_\phi) +\mul{z}{D\cwc{\mX_\Del}_\phi} \cdot \size(u) +  m + 
      \mul{x}{D\cwc{\mX_\Del[x/v]}_\phi}\cdot \size(v) \\
      + n + \mul{z}{D\cwc{\mX_\Del}_\phi} \cdot
      \mul{x}{u} \cdot \size(v) & \geq_{\mbox{\cref{l:for-decreasing00}}}  \\
      \size(D\cwc{\mX_\Del[x/v]}_\phi) +\mul{z}{D\cwc{\mX_\Del[x/v]}_\phi} \cdot \size(u) +  m + 
      \mul{x}{D\cwc{\mX_\Del[x/v]}_\phi}\cdot \size(v) \\
      + n + \mul{z}{D\cwc{\mX_\Del[x/v]}_\phi} \cdot
      \mul{x}{u} \cdot \size(v) & = \\
      \size(D\cwc{\mX_\Del[x/v]}_\phi[z/u]) + \mul{x}{D\cwc{\mX_\Del[x/v]}_\phi[z/u]} \cdot \size(v)
      \end{array} \]
as $z \neq x$ where $m, n = -1, 0$ if $D\cwc{\mX_\Del}_\phi = \mX_{\Del'} [x_1/u_1]\ldots[x_n/u_n], y \in \Del'$ and $m = \size(u), n = \mul{x}{u}\cdot \size(v)$ otherwise.

\item $C=u[z/D]$. 

If $u = \mX_{\Del'} [x_1/u_1]\ldots[x_n/u_n], z \in \Del'$ then we have
\[ \begin{array}{ll}
\size(u[z/D\cwc{\mX_\Del}_\phi]) + \mul{x}{u[z/D\cwc{\mX_\Del}_\phi]} \cdot \size(v) & = \\ 
\size(u) -1 + \mul{z}{u} \cdot  \size(D\cwc{\mX_\Del}_\phi) + 
\mul{x}{u}\cdot \size(v)
 + \mul{z}{u} \cdot \mul{x}{D\cwc{\mX_\Del}_\phi}\cdot \size(v)
  & =\\
\size(u) -1 + \mul{z}{u} \cdot ( \size(D\cwc{\mX_\Del}_\phi) +  \mul{x}{D\cwc{\mX_\Del}_\phi}\cdot \size(v)) + 
\mul{x}{u}\cdot \size(v)
  & >_{\ih}\\
\size(u) -1 + \mul{z}{u} \cdot ( \size(D\cwc{\mX_\Del[x/v]}_\phi) +  \mul{x}{D\cwc{\mX_\Del[x/v]}_\phi}\cdot \size(v)) + 
\mul{x}{u}\cdot \size(v)
  & = \\
\size(u) -1 + \mul{z}{u} \cdot \size(D\cwc{\mX_\Del[x/v]}_\phi) + 
\mul{x}{u}\cdot \size(v) + \mul{z}{u} \cdot \mul{x}{D\cwc{\mX_\Del[x/v]}_\phi}\cdot \size(v)
  & = \\
\size(u[z/D\cwc{\mX_\Del[x/v]}_\phi]) + \mul{x}{u[z/D\cwc{\mX_\Del[x/v]}_\phi]} \cdot \size(v) 
\end{array} \] 

If $u \neq \mX_{\Del'} [x_1/u_1]\ldots[x_n/u_n], z \in \Del'$ then we have
\[ \begin{array}{ll}
\size(u[z/D\cwc{\mX_\Del}_\phi]) + \mul{x}{u[z/D\cwc{\mX_\Del}_\phi]} \cdot \size(v) & = \\ 
\size(u) + (\mul{z}{u} + 1) \cdot ( \size(D\cwc{\mX_\Del}_\phi) + \mul{x}{D\cwc{\mX_\Del}_\phi}\cdot \size(v))  + 
\mul{x}{u}\cdot \size(v)
  & >_{\ih}\\
\size(u) + (\mul{z}{u} + 1) \cdot ( \size(D\cwc{\mX_\Del[x/v]}_\phi) + \mul{x}{D\cwc{\mX_\Del[x/v]}_\phi}\cdot \size(v))  + 
\mul{x}{u}\cdot \size(v)
  & =\\
\size(u[z/D\cwc{\mX_\Del[x/v]}_\phi]) + \mul{x}{u[z/D\cwc{\mX_\Del[x/v]}_\phi]} \cdot \size(v) 
\end{array} \] 

\end{itemize}
\end{proof}

\begin{lemma}
\label{l:sub-terminating-metaterms}
The reduction relation  $\Rew{\m}$ is terminating on $\mterms$-metaterms.
\end{lemma}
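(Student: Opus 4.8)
The plan is to use the measure $\size(\cdot)$ defined above as a termination measure for $\Rew{\sm}$: I will show that $\size$ is invariant under the equivalence $\Es$ and strictly decreases along every $\Rew{\sm}$ step. Since $\size$ is an $\nat$-valued measure with $\size(t) \geq 1$, and $\Rew{\m}$ is by definition $\Rew{\sm}$ taken modulo $\Es$, an $\Es$-invariant measure that strictly decreases on $\Rew{\sm}$ immediately rules out infinite $\Rew{\m}$-sequences, giving termination.

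First I would record the invariance of $\size$ under $\Es$. The maps $\size$ and $\mul{\cdot}{\cdot}$ depend only on occurrences and on the annotating sets $\Del$, not on the names of bound variables, so $\alpha$-conversion leaves them unchanged; invariance under $\Com$ is exactly \cref{l:size-mul-invariant-up-to-Com}. Hence $\size$ is constant on $\Es$-classes and it suffices to work with $\Rew{\sm}$.

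The heart of the proof is the claim that $P \Rew{\sm} P'$ implies both (a) $\size(P) > \size(P')$ and (b) $\mul{z}{P} \geq \mul{z}{P'}$ for every variable $z$, which I would prove by simultaneous induction on $P$. Property (b) is carried along precisely because $\size$ is not additive: $\size(t[x/u])$ depends on $u$ with coefficient $1 + \mul{x}{t}$, so to push (a) through a context whose spine contains explicit substitutions I need the binder's multiplicity not to grow. For the root cases, $\Gc$ is immediate, since $x \notin \fv(t)$ forces $\mul{x}{t} = 0$, the normal clause of $\size(t[x/u])$ applies, and $\size(u) \geq 1$ gives $\size(t[x/u]) = \size(t) + \size(u) > \size(t)$; moreover $\mul{z}{t[x/u]} = \mul{z}{t} + \mul{z}{u} \geq \mul{z}{t}$. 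For $\us$, where $C\cwc{x}_\phi[x/u] \Rew{\us} C\cwc{u}_\phi[x/u]$, the two bodies $C\cwc{x}$ and $C\cwc{u}$ fall under the same clause of $\size(\cdot[x/u])$ and of $\mul{z}{\cdot[x/u]}$: the metavariable clause is triggered by whether $C\cwc{\cdot}$ is $\mX_\Del[\ldots]$-headed with $x \in \Del$, and since $x \notin \fv(u)$ this is decided by $C$ alone, not by what fills the hole. Cancelling the common summand, (a) reduces to \cref{l:for-decreasing1} and (b) to \cref{l:for-decreasing0} (for $z \neq x$) together with \cref{l:for-decreasing00} (for $z = x$). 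The rule $\usm$ is handled identically, with \cref{l:for-decreasing1usm} for (a) and \cref{l:for-decreasing0usm} plus \cref{l:for-decreasing00} for (b); here the side condition $C \neq \Box[y_1/v_1]\ldots[y_n/v_n]$ is what guarantees the two bodies share the same head and hence the same clause. The inductive cases are routine from the clauses: for applications and abstractions (a) and (b) are additive; for $P = P_1[w/P_2]$ with the step in $P_1$ I use that $\size(P)$ is monotone in $\size(P_1)$ and in $\mul{w}{P_1}$, invoking the induction hypotheses (a) and (b) (the latter at $z = w$), while a step in $P_2$ decreases $\size$ because the coefficient $1 + \mul{w}{P_1} \geq 1$ is positive. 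A small point to check is that the metavariable clause is stable under reduction inside $P_1$, which holds because no $\Rew{\sm}$ step can alter the metavariable head $\mX_\Del$ of an $\mX_\Del[\ldots]$-shaped term.

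The main obstacle is entirely concentrated in the root cases for $\us$ and $\usm$: such a step textually copies the substituted term $u$ onto an occurrence, so the naive syntactic size grows, and termination must instead be paid for by the product $\mul{x}{t} \cdot \size(u)$, where $\mul{x}{t}$ over-approximates the number of still-reducible occurrences of $x$. The genuinely delicate work, namely verifying that replacing a single occurrence lowers this coefficient by strictly more than the size of the inserted copy, uniformly across all context shapes and across the metavariable special cases, is exactly the content of the context inductions already established in \cref{l:for-decreasing00,l:for-decreasing0,l:for-decreasing0usm,l:for-decreasing1,l:for-decreasing1usm}, which I invoke rather than reprove here.
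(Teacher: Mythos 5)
Your proof is correct and takes essentially the same route as the paper: the same measure $\size$ together with the multiplicity bound $\mul{z}{\cdot}$, invariance under $\Es$ via \cref{l:size-mul-invariant-up-to-Com}, and the root cases for $\us$, $\usm$ and $\Gc$ discharged by \cref{l:for-decreasing0,l:for-decreasing0usm,l:for-decreasing1,l:for-decreasing1usm}. The only (immaterial) organizational difference is that you prove the weak decrease of $\mul{z}{\cdot}$ and the strict decrease of $\size$ by one simultaneous induction, whereas the paper first establishes the multiplicity property on its own and then reuses it in the internal cases of the size induction.
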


\begin{proof} 
We first show that  $t \Rew{\m} t'$ implies $\mul{x}{t} \geq  \mul{x}{t'}$
for every variable $x$. This can be done  by induction on $t \Rew{\m}
t'$. If the reduction 
step is internal, then the \ih\ allows
to conclude. Otherwise, 
 
\begin{itemize}
\item $u[y/v]\Rew{\Gc} u$.       We have 

     $\mul{x}{u} + \mul{x}{v} + \mul{y}{u} \cdot \mul{x}{v} \geq 
      \mul{x}{u}$.

\item $C\cwc{y}[y/v]\Rew{\us} C\cwc{v}[y/v]$. Thus, $y \notin \fv(v)$ and
 $x \neq y$. We have

      \[ \begin{array}{l}
       \mul{x}{C\cwc{y}[y/v]} = \\
       \mul{x}{C\cwc{y}} + \mul{x}{v} + \mul{y}{C\cwc{y}} \cdot  \mul{x}{v} \geq_{\mbox{\cref{l:for-decreasing0}}} \\
       \mul{x}{C\cwc{v}} + \mul{x}{v} + \mul{y}{C\cwc{v}} \cdot  \mul{x}{v} = \\
       \mul{x}{C\cwc{v}[y/v]}\\
       \end{array} \]

\item $C\cwc{\mX_\Del}_{\phi} [y/v]\Rew{\usm} C\cwc{\mX_\Del[y/v]}_{\phi}[y/v]$ with $\phi = \{y\} \cup \fv(v)$. Thus, $y \in \Del$, $C \neq \Box[y_1/v_1]\ldots[y_n/v_n], n \geq 0$, $y \notin \fv(v)$ and
 $x \neq y$. We have

      \[ \begin{array}{l}
       \mul{x}{C\cwc{\mX_\Del}_\phi[y/v]} = \\
       \mul{x}{C\cwc{\mX_\Del}_\phi} + \mul{x}{v} + \mul{y}{C\cwc{\mX_\Del}_\phi} \cdot  \mul{x}{v}
       \geq_{\mbox{\cref{l:for-decreasing0usm}}} \\
       \mul{x}{C\cwc{\mX_\Del[y/v]}_\phi} + \mul{x}{v} + \mul{y}{C\cwc{\mX_\Del[y/v]}_\phi} \cdot  \mul{x}{v} = \\
       \mul{x}{C\cwc{\mX_\Del[y/v]}_\phi[y/v]}\\
       \end{array} \]
\end{itemize}

To show that $t \Rew{\m} t'$ implies $\size(t) > \size(t')$
we also  reason by induction on $t \Rew{\m} t'$. If the reduction
step is internal, then the previous property and the \ih\ allow
to conclude. Otherwise

\begin{itemize}
\item $u[x/v]\Rew{\Gc} u$.

      We have $\size(u) + \size(v) + \mul{x}{u} \cdot \size(v) >
      \size(u)$.

\item $C\cwc{x}[x/v]\Rew{\us} C\cwc{v}[x/v]$. We have

      \[ \begin{array}{l}
       \size(C\cwc{x}[x/v]) = \\
       \size(C\cwc{x}) + \size(v) + \mul{x}{C\cwc{x}} \cdot \size(v) >_{\mbox{\cref{l:for-decreasing1}}} \\
       \size(C\cwc{v}) + \size(v) + \mul{x}{C\cwc{v}} \cdot \size(v)\\
       \size(C\cwc{v}[x/v])\\
       \end{array} \]

\item $C\cwc{\mX_\Del}_{\phi} [x/v]\Rew{\usm} C\cwc{\mX_\Del[x/v]}_{\phi}[x/v]$ with $\phi = \{x\} \cup \fv(v)$. Thus, $x \in \Del$, $C \neq \Box[y_1/v_1]\ldots[y_n/v_n], n \geq 0$, $x \notin \fv(v)$. We have

      \[ \begin{array}{l}
       \size(C\cwc{\mX_\Del}_\phi[x/v]) = \\
       \size(C\cwc{\mX_\Del}_\phi) + \size(v) + \mul{x}{C\cwc{x}_\phi} \cdot \size(v) >_{\mbox{\cref{l:for-decreasing1usm}}} \\
       \size(C\cwc{\mX_\Del[x/v]}_\phi) + \size(v) + \mul{x}{C\cwc{\mX_\Del[x/v]}_\phi} \cdot \size(v)\\
       \size(C\cwc{\mX_\Del[x/v]}_\phi[x/v])\\
       \end{array} \]
\end{itemize}
\end{proof}

\begin{lemma} 
\label{l:lm-local-confluent-coherent}
The reduction relation $\Rew{\m}$  is
  locally confluent and locally coherent on metaterms.
\end{lemma}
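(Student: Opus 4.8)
The plan is to verify the two hypotheses required by the abstract theorem of Jouannaud and Kirchner~\cite{JK86} for the rewrite system $R=\{\us,\Gc,\usm\}$ considered modulo the equational theory $E$ generated by $\alpha$ and $\Com$ (the $\alpha$ part being silent since we work on $\alpha$-classes). Concretely, I would show that every local peak $t_1 \LRew{\m} s \Rew{\m} t_2$ is joinable modulo $\Es$ (local confluence), and that every mixed peak $t_1 \LRew{\m} s =_{\Com} s'$ can be closed by $R$-reductions into two $\Es$-equal terms (local coherence). Since $\Rew{\m}$ is already closed under all contexts, a routine induction on the relative positions of the two contracted redexes reduces both tasks to the configurations where the redexes are not disjoint, namely where they share the active substitution $[x/u]$, or where one redex lies inside the body or the argument of the substitution controlling the other; disjoint redexes commute immediately.

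For local confluence the genuine overlaps are few. First, two $\us$ or $\usm$ steps that use the \emph{same} top substitution $[x/u]$ but act on two distinct occurrences (of $x$, or of metavariables $\mX_\Del$ with $x\in\Del$): because the variable convention forces $x\notin\fv(u)$, contracting one occurrence neither creates nor destroys the other, so each residual peak closes by performing the remaining step once. Second, the interaction with $\Gc$: a $\Gc$-step at a substitution $[y/v]$ requires $y\notin\fv(\text{body})$, whereas a $\us$ or $\usm$ step at that \emph{same} substitution requires an occurrence of $y$ in the body, so they never conflict; the remaining cases pair a $\Gc$ on $[y/v]$ with a $\us$ or $\usm$ whose active occurrence sits either in the retained body (independent positions, hence commuting) or in the discarded $v$ (in which case that step is simply erased, and both branches meet after one further $\Gc$). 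Nested $\us$ or $\usm$ redexes, where the active occurrence of the outer substitution lies inside the body processed by the inner one, likewise converge by contracting the surviving residuals.

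For local coherence I would enumerate the overlaps between $\Com$ and each rule. When $\Com$ permutes two independent substitutions $[x/u][y/v]$ (so $y\notin\fv(u)$ and $x\notin\fv(v)$), any $\Gc$-, $\us$- or $\usm$-step occurring in $u$, in $v$, in the common body, or at one of the two substitutions is orthogonal to the permutation and produces the same term up to $=_{\Com}$. The delicate configuration is a $\usm$-step whose context is a chain of substitutions standing above the metavariable, interacting with a $\Com$-permutation inside that chain. Here the side condition $C\neq\Box[y_1/v_1]\ldots[y_n/v_n]$ ensures that $\usm$ fires only when genuine non-substitution structure separates $[x/u]$ from $\mX_\Del$; since $\Com$ reorders only mutually independent substitutions, the occurrence selected by $\usm$ is preserved by the reordering, and the two results coincide modulo $\Es$.

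The main obstacle I expect is precisely this last family of coherence diagrams, where one must track simultaneously the inclusion $\{x\}\cup\fv(u)\subseteq\phi$ attached to $\us$ and $\usm$, the independence conditions of $\Com$, and the context restriction $C\neq\Box[\ldots]$, and verify that permuting independent substitutions never slides $[x/u]$ into the forbidden chain-above-the-metavariable position in a way that would block a required $\usm$-step. All other diagrams are routine once the disjoint and nested cases are dispatched by context closure. Having established local confluence and local coherence, the lemma follows; combined with termination of $\Rew{\m}$ (\cref{l:sub-terminating-metaterms}) it yields, via~\cite{JK86}, the confluence and uniqueness of $\m$-normal forms modulo $\Es$ invoked in \cref{l:unique-es-nf}.
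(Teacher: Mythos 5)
Your proposal follows essentially the same route as the paper's proof: both verify the Jouannaud--Kirchner conditions by an exhaustive case analysis of critical pairs, dispatching parallel-position and same-substitution overlaps trivially, closing the ($\us$/$\usm$)-versus-$\Gc$ peaks by one extra $\Gc$ step on the branch whose rewrite landed in the discarded argument, and handling coherence with $\Com$ via diagrams in which the permutation is replayed after the rewrite, including the key observation that permuting independent substitutions cannot turn a legal $\usm$-context into the forbidden $\Box[y_1/v_1]\ldots[y_n/v_n]$ shape. The paper fills in exactly the cases you outline, so there is no substantive difference in approach.
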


\begin{proof} We check all the $\m$-critical pairs. 
  \begin{itemize}
  \item All the critical pairs coming from   two reduction rules can
    be closed  trivially. Indeed, if the critical pairs arises from
   reductions in parallel positions, then this case is trivial.
   If a substitution $[x/u]$ is propagated via $\us$ 
   on two different occurrences
   $x$ inside a term $t$, then the order of application of $\us$
   being irrelevant this case is  also trivial. The same for
   the  propagation of $[x/u]$ via $\usm$ w.r.t two metavariables.
   Remark that $\us$ and $\usm$ also commute. 
   The more delicate case is a critical pair between ($\us$ or
   $\usm$) and $\Gc$. This pair can also be closed as the
   following example shows:

   \[ y[x/u] \LRew{\Gc } y[z/x][x/u] \Rew{\us} y[z/u][x/u] \]  

   is closed by

   \[ y[x/u]  \LRew{\Gc} y[z/u][x/u] \]  
  \item The critical pairs between $\Gc$ and $\C$ can be closed
    as in the forthcoming~\cref{l:ALC-lc-lc}.

  \item The critical pairs between $\us$ and $\C$ can
    be closed as follows:

    \[ \begin{array}{clc}
       C\cwc{x}[x/u][y/v] & =_{\C} & C\cwc{x}[y/v][x/u]\\
       \downarrow_{\us}   &        &  \downarrow_{\us}  \\
       C\cwc{u}[x/u][y/v] & =_{\C} & C\cwc{u}[y/v][x/u]\\
       \end{array} \]

 \item The critical pairs between $\usm$ and $\C$ can
    be closed as follows:

\[ \begin{array}{clc}
       C\cwc{\mX_{\Del}}[x/u][y/v] & =_{\C} & C\cwc{\mX_{\Del}}[y/v][x/u]\\
       \downarrow_{\usm}   &        &  \downarrow_{\usm}  \\
       C\cwc{\mX_{\Del}[x/u]}[x/u][y/v] & =_{\C} & C\cwc{\mX_{\Del}[x/u]}[y/v][x/u]\\
       \end{array} \]

    Remark that $C$ being a good context, the same
    happens with $C[y/v]$ so that reduction $\usm$ 
    from the term $C\cwc{x}[y/v][x/u]$ is allowed.

  \end{itemize}
    
\end{proof}

\begin{lemma}
\label{l:snf-stable-substitution}
Let $t$ and $u$ be $\m$-normal forms. Then $t \isubs{x}{u}$ is a $\m$-normal form.
\end{lemma}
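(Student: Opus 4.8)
The plan is to argue by structural induction on $t$, reading off the possible shapes of a $\m$-normal form from \cref{l:forma-snf} and exploiting that $\Rew{\m}$ is generated only by $\us$, $\usm$ and $\Gc$ (the rule $\B$ lives in $\Rew{\lm}$, not in $\Rew{\m}$), so that every $\m$-redex sits at a closure node. Throughout I would use the standard $\alpha$-convention that the binders occurring in $t$ are distinct from $x$ and from $\fv(u)$, which is what makes the distribution clauses of $\isubs{x}{u}$ apply without capture.

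The base and structural cases are routine. If $t=x$ then $t\isubs{x}{u}=u$ is normal by hypothesis; if $t=y\neq x$ then $t\isubs{x}{u}=y$. For $t=t_1\,t_2$ and $t=\l y.t_1$ the substitution distributes, and since the immediate subterms are normal by the induction hypothesis and no $\m$-rule reduces at an application or abstraction node, the results $t_1\isubs{x}{u}\,t_2\isubs{x}{u}$ and $\l y.t_1\isubs{x}{u}$ are again normal (first clause of \cref{l:forma-snf}).

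The real content is the metavariable case $t=\mX_\Del[x_1/u_1]\ldots[x_n/u_n]$, with each $u_i$ normal, $x_i\in\Del$, and the $x_i$ pairwise distinct. Distributing gives $\mX_\Del\isubs{x}{u}[x_1/u_1\isubs{x}{u}]\ldots[x_n/u_n\isubs{x}{u}]$, i.e.\ a metavariable under a stack of closures whose bodies are normal (the $u_i\isubs{x}{u}$ by induction, and, when $x\in\Del$, the extra body $u$ by hypothesis). Rather than match this against the exact shape of \cref{l:forma-snf}, I would verify directly that it carries no $\m$-redex modulo $\Es$: there is no $\us$-redex because the prefix $\mX_\Del[\ldots]$ contains no leaf variable occurrences and the bodies are normal; there is no $\usm$-redex because each closure sits over $\mX_\Del$ through a context that is a pure substitution stack, which violates the side condition $C\neq\Box[y_1/v_1]\ldots[y_m/v_m]$ and stays a stack under any $\Com$-reordering; and there is no $\Gc$-redex because every binder (the $x_i$, and $x$ itself in the relevant subcase) lies in $\Del$ and the binders are pairwise distinct — using $x\neq x_i$ from the free/bound convention — so each remains free in its own prefix, again stably under $\Com$.

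The step I expect to be the main obstacle is precisely this metavariable case. The delicate point is that when $x\in\Del$ and $x\in\fv(u)$ the new closure $[x/u]$ can put $x$ back among the free variables of the later bodies, so the output need not satisfy the inequality $x_i\notin\fv(u_j)$ (for $i<j$) that appears in \cref{l:forma-snf}; this is exactly why I would phrase the argument as an absence-of-redex check for the three rules (taken modulo $\Es$) rather than as an appeal to the canonical normal form. Once these three redex types are excluded, $t\isubs{x}{u}$ is a $\m$-normal form and the induction closes.
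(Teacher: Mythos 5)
Your proof is correct and runs on the same skeleton as the paper's: structural induction on $t$, with the variable, application and abstraction cases handled identically, and all of the content concentrated in the metavariable case. The difference is how that case is closed. The paper, after using $\alpha$-conversion to assume $x_i \notin \fv(u)$, concludes by citing \cref{l:forma-snf} as if it were a characterisation of $\m$-normal forms; you replace that citation by a direct check that no $\us$-, $\usm$- or $\Gc$-redex exists modulo $\Es$. This is more than a stylistic variant: \cref{l:forma-snf} is stated only as a \emph{necessary} condition, and its recorded conditions are in fact not sufficient for normality --- the term $\mX_{\{x_1,x_2\}}[x_1/x_2][x_2/y]$ satisfies them (bodies normal, binders in $\Del$, $x_1 \notin \fv(y)$) yet has a $\us$-redex, because the condition $x_i \notin \fv(u_j)$ for $i<j$ says nothing about occurrences of an \emph{outer} binder inside an \emph{inner} body. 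So the paper's appeal to \cref{l:forma-snf} tacitly uses freeness facts it never records, whereas your redex-exclusion is the way to make the argument airtight. (As for the corner case $x \in \fv(u)$ that motivated you: in every use the paper makes of this lemma, $t\isubs{x}{u}$ arises from a well-formed closure $t[x/u]$, where the convention that no variable is both free and bound already forces $x \notin \fv(u)$; but your argument does not need to lean on that.)

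One step of your check is, however, under-justified, and it needs exactly the facts mentioned above. For $\us$ you argue that the prefix ``contains no leaf variable occurrences and the bodies are normal'', but the scope of an outer closure $[x_i/u_i\isubs{x}{u}]$ \emph{does} contain variable occurrences, namely those inside the earlier bodies $u$ and $u_j\isubs{x}{u}$ with $j<i$, and normality of those bodies does not exclude occurrences of $x_i$ among them. What excludes them is $x_i \notin \fv(u)$ (your freshness convention) together with $x_i \notin \fv(u_j)$ for $j \neq i$: for $i<j$ this is the condition recorded in \cref{l:forma-snf}, while for $j<i$ it follows from normality of $t$ itself, since an occurrence of $x_i$ in $u_j$ with $j<i$ --- either as a variable or as an annotation of a metavariable inside $u_j$ --- would give a $\us$- or $\usm$-redex of $t$. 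The same observation is needed to complete your $\usm$ bullet, which only treats the head metavariable $\mX_\Del$: a metavariable occurring \emph{inside} an earlier body and carrying $x_i$ in its annotation is reached through a non-stack context, so it must be ruled out by these same freeness facts. With that addition your induction closes; this is a patch, not a failure of the approach.
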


\begin{proof} The proof is by induction on $t$ using~\cref{l:forma-snf}.
\begin{itemize}
\item $t = y$. If $y = x$, $t \isubs{x}{u} = u$. If $y \neq x$,
  $t \isubs{x}{u} = y$
which is a $\m$-normal form. 
\item $t = t_1\ t_2$. $t \isubs{x}{u} = (t_1\ t_2) \isubs{x}{u} =
  t_1   \isubs{x}{u}\  t_2   \isubs{x}{u}$.   By  the   induction
  hypothesis,  $t_1 \isubs{x}{u}$ and  $t_2 \isubs{x}{u}$  are in
  $\m$-normal form.
\item $t  = \lambda y. t_1$.  $t \isubs{x}{u} =  (\lambda y. t_1)
  \isubs{x}{u} =  \lambda y. t_1 \isubs{x}{u}$.  By the induction
  hypothesis, $t_1 \isubs{x}{u}$ is in $\m$-normal form.
\item $t = \mX_\Del  [x_1/u_1]\ldots[x_n/u_n]$ with every $u_i$ a
  $\m$-normal  form. By  the \ih\   every $u_i\isubs{x}{u}$  is an
  $\m$-normal form and by $\alpha$-conversion we can suppose that
  $x_i   \notin   \fv(u)$.   Thus   $t\isubs{x}{u}   =   \mX_\Del
  \isubs{x}{u}   [x_1/u_1\isubs{x}{u}]\ldots[x_n/u_n\isubs{x}{u}]$
  is an $\m$-normal form by~\cref{l:forma-snf}.
\end{itemize}
\end{proof}



  \begin{lemma}
\label{l:snf-explicit-substitution}
Let $t$ and $u$ be metaterms.
Then $\m(t[x/u]) = \m(t)\isubs{x}{\m(u)}$.
\end{lemma}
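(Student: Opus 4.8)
The plan is to reduce the statement to the uniqueness of $\m$-normal forms, avoiding any induction. First I would observe that since $\m(t)$ and $\m(u)$ are the (unique, modulo $\Es$) $\m$-normal forms of $t$ and $u$, \cref{l:snf-stable-substitution} guarantees that $\m(t)\isubs{x}{\m(u)}$ is itself in $\m$-normal form. Consequently it suffices to exhibit a $\Rew{\m}$-reduction from $t[x/u]$ to this term: once we know $t[x/u] \Rewn{\m} \m(t)\isubs{x}{\m(u)}$ with the right-hand side normal, the uniqueness of $\m$-normal forms modulo $\Es$ (\cref{l:unique-es-nf}) forces $\m(t[x/u]) = \m(t)\isubs{x}{\m(u)}$.

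To build that reduction I would proceed in two phases. Since $\Rew{\m}$ is closed under all contexts and we have $t \Rewn{\m} \m(t)$ and $u \Rewn{\m} \m(u)$, reducing inside the contexts $\Box[x/u]$ (of the form $C[x/s]$) and $\m(t)[x/\Box]$ (of the form $s[x/C]$) yields $t[x/u] \Rewn{\m} \m(t)[x/u] \Rewn{\m} \m(t)[x/\m(u)]$. Then I would apply Full Composition for metaterms to the closure $\m(t)[x/\m(u)]$, obtaining $\m(t)[x/\m(u)] \Rewn{\m} \m(t)\isubs{x}{\m(u)}$. Concatenating the two phases gives $t[x/u] \Rewn{\m} \m(t)\isubs{x}{\m(u)}$, and combining this with the initial observation that the target is normal, the uniqueness lemma delivers the claimed equality.

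The only delicate point, such as it is, concerns the bookkeeping around working modulo $\Es$. The asserted equality is between $\Es$-equivalence classes, since $\m(\cdot)$ is only defined up to $\Es$; so each appeal to uniqueness must be read as the statement that any two $\m$-normal forms of one metaterm are $\Es$-equal, and the final chain $\m(t[x/u]) =_{\Es} \m(t)\isubs{x}{\m(u)}$ follows because $\Rewn{\m}$ preserves the (unique) normal form. I would also note that Full Composition here produces a genuine $\Rew{\m}$-sequence—of length zero precisely when $x \notin \fv(\m(t))$, in which case the $\Gc$-rule already removes the substitution—so no $\B$-step is ever needed, keeping the whole argument inside the terminating, confluent subsystem $\m$. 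Everything else is routine given context closure of $\Rew{\m}$, \cref{l:snf-stable-substitution}, and Full Composition for metaterms.
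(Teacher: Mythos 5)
Your proof is correct, and it takes a genuinely different route from the paper's. The paper proves \cref{l:snf-explicit-substitution} by structural induction on $t$, with cases for variables, abstractions, applications, metavariables and closures (the last split into further subcases on the shape of the closure body), pushing $\m$ through each constructor via \cref{l:s-distributes} and \cref{l:es-application-lambda-composition} and invoking the induction hypothesis. You instead give a purely rewriting-theoretic argument: $t[x/u] \Rewn{\m} \m(t)[x/\m(u)]$ by closure of $\Rew{\m}$ under the contexts $\Box[x/u]$ and $\m(t)[x/\Box]$, then $\m(t)[x/\m(u)] \Rewn{\m} \m(t)\isubs{x}{\m(u)}$ by Full Composition for Metaterms, and finally the target is an $\m$-normal form by \cref{l:snf-stable-substitution}, so uniqueness of $\m$-normal forms modulo $\Es$ (\cref{l:unique-es-nf}) forces the claimed equality, read modulo $\Es$ as the paper's conventions require. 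All three ingredients are established independently of the present lemma (\cref{l:snf-stable-substitution} by induction using \cref{l:forma-snf}, Full Composition for Metaterms in~\cref{s:basic-properties-lm}, and \cref{l:unique-es-nf} from termination plus local confluence and coherence), so there is no circularity. What your route buys: it is shorter, eliminates the case analysis, and exposes the lemma as a formal consequence of termination, confluence, full composition, and stability of normal forms under implicit substitution. What the paper's route buys: it is computational and self-contained, needing neither Full Composition nor confluence beyond the well-definedness of $\m(\cdot)$, and the commutation lemmas it exercises (\cref{l:s-distributes}, \cref{l:es-application-lambda-composition}) are reused elsewhere, e.g.\ in \cref{l:stable-paralp}.

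One inessential slip in your closing aside: the length-zero instances of Full Composition are not the case $x \notin \fv(\m(t))$ --- there the sequence is exactly one $\Gc$-step --- but rather the case where $\m(t)$ is of the form $\mX_\Del[y_1/u_1]\ldots[y_n/u_n]$ with $x \in \Del$, where $\m(t)[x/\m(u)]$ and $\m(t)\isubs{x}{\m(u)}$ are already $=_{\Com}$-equal with no reduction at all. This does not affect your argument; the only point you actually need there is the correct one, namely that the entire reduction stays inside the subsystem $\m$, with no $\B$-steps.
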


\begin{proof}
The proof is by induction on $t$.
\begin{itemize}
\item $t = z$. 
 If $z = x$ then $\m(t[x/u]) = \m(u) = x\isubs{x}{\m(u)} = t\isubs{x}{\m(u)}$. If $z \neq x$ then $\m(t[x/u]) = \m(z) = z\isubs{x}{\m(u)} = t\isubs{x}{\m(u)}$. 
\item $t = \l z.t_1$. 
 Then \[ \begin{array}{llll } 
   \m(t[x/u]) & = & \\
   \m((\l z.t_1)[x/u])& =_{\mbox{\cref{l:es-application-lambda-composition}}} \\
   \m(\l z.t_1[x/u]) & =_{\mbox{\cref{l:s-distributes}}} \\
   \l z.\m(t_1[x/u]) &  =_{\ih}\\
   \l z. \m(t_1)\isubs{x}{\m(u)} & = \\
   (\l z. \m(t_1))\isubs{x}{\m(u)} & =_{\mbox{\cref{l:s-distributes}}} \\
   \m(\l z. t_1)\isubs{x}{\m(u)}&  = \m(t)\isubs{x}{\m(u)}\\
  \end{array} \]  
\item $t = t_1\ t_2$. 
 Then

\[ \begin{array}{llll } 
  \m(t[x/u]) & = \\
  \m((t_1\ t_2)[x/u]) & =_{\mbox{\cref{l:es-application-lambda-composition},~\cref{l:s-distributes}}} \\
  \m(t_1[x/u]) \m(t_2[x/u]) & =_{\ih} \\
   \m(t_1)\isubs{x}{\m(u)}
  \m(t_2)\isubs{x}{\m(u)} & = \\
  (\m(t_1) \m(t_2))\isubs{x}{\m(u)} &
  =_{\mbox{\cref{l:s-distributes}}} \\
  \m(t_1\ t_2)\isubs{x}{\m(u)} & =  
  \m(t)\isubs{x}{\m(u)} \\
  \end{array} \] 
\item $t = \mX_\Del$. 
 Then $\m(t[x/u]) = \m(\mX_\Del[x/u]) = \mX_\Del[x/\m(u)] = \mX_\Del\isubs{x}{\m(u)} = \m(t)\isubs{x}{\m(u)}$. 
\item $t = t_1[y/v]$. 
\begin{itemize}
\item $t_1 = z$. 
 If $z = y$ then $\m(t[x/u]) = \m(v[x/u]) =_{\ih} \m(v)\isubs{x}{\m(u)} = \m(y[y/v])\isubs{x}{\m(u)} = \m(t)\isubs{x}{\m(u)}$. 
 If $z \neq y$ then $\m(t[x/u]) = \m(z[x/u]) =_{\ih} \m(z)\isubs{x}{\m(u)} = \m(z[y/v])\isubs{x}{\m(u)} = \m(t)\isubs{x}{\m(u)}$. 
\item $t_1 = \l z.t_2$. 
Then 

\[ \begin{array}{llll}
  \m(t[x/u]) & = \\
  \m((\l z.t_2)[y/v][x/u]) &
=_{\mbox{\cref{l:es-application-lambda-composition}}}\\
\m(\l z.t_2[y/v][x/u]) & =_{\mbox{\cref{l:s-distributes}}} \\
\l
 z.\m(t_2[y/v][x/u]) & =_{\ih} \\
 \l z. \m(t_2[y/v])\isubs{x}{\m(u)} & = \\
 (\l
z. \m(t_2[y/v]))\isubs{x}{\m(u)}&  =_{\mbox{\cref{l:s-distributes}}} \\
\m(\l z. t_2[y/v])\isubs{x}{\m(u)} &
=_{\mbox{\cref{l:es-application-lambda-composition}}} \\
 \m((\l
z. t_2)[y/v])\isubs{x}{\m(u)} & = 
\m(t)\isubs{x}{\m(u)}
\end{array} \]  
\item $t_1 = t_2\ t_3$. 
 Then \[ \begin{array}{llll}
   \m(t[x/u]) & = \\
   \m((t_2\ t_3)[y/v][x/u]) & =_{\mbox{\cref{l:es-application-lambda-composition},~\cref{l:s-distributes}}} \\
   \m(t_2[y/v][x/u]) \m(t_3[y/v][x/u]) & =_{\ih}  \\
   \m(t_2[y/v])\isubs{x}{\m(u)} \m(t_3[y/v])\isubs{x}{\m(u)} & = \\
   (\m(t_2[y/v]) \m(t_3[y/v]))\isubs{x}{\m(u)} &  =_{\mbox{\cref{l:es-application-lambda-composition},~\cref{l:s-distributes}}} \\
   \m((t_2\ t_3)[y/v])\isubs{x}{\m(u)}&  =
   \m(t)\isubs{x}{\m(u)}
 \end{array} \]  

\item $t_1 = \mX_\Del$. Then
  \[ \begin{array}{llll}
    \m(t[x/u]) & = \\
    \m(\mX_\Del[y/v][x/u])  & =_{\mbox{\cref{l:es-application-lambda-composition}}} \\
    \m(\mX_\Del[x/u][y/v[x/u]])  & = \\
    \m(\mX_\Del[x/u])[y/\m(v[x/u])]  & = \\
    \mX_\Del\isubs{x}{\m(u)}[y/\m(v[x/u])]  & =_{\ih} \\
    \mX_\Del\isubs{x}{\m(u)}[y/\m(v)\isubs{x}{\m(u)}])  & = \\
    (\mX_\Del[y/\m(v)])\isubs{x}{\m(u)}  & = \\
    \m(\mX_\Del[y/v])\isubs{x}{\m(u)}  & = \m(t)\isubs{x}{\m(u)}
  \end{array} \]
%
\item $t_1 = t_2[y_1/v_1]$. Let $y=y_2, v=v_2$ for
  convenience. Then 

\[ \begin{array}{ll}
\m(t[x/u]) & = \\ 
\m(t_2[y_1/v_1][y_2/v_2][x/u]) & =_{\mbox{\cref{l:es-application-lambda-composition}}}\\
 \m(t_2[y_1/v_1][x/u][y_2/v_2[x/u]])&  =_{\ih} \\
 \m(\m(t_2[y_1/v_1])\isubs{x}{\m(u)}[y_2/\m(v_2)\isubs{x}{\m(u)}]) & = \\
\m(\m(t_2[y_1/v_1])[y_2/\m(v_2)]\isubs{x}{\m(u)}) & = \\
\m(t_2[y_1/v_1][y_2/v_2])\isubs{x}{\m(u)} & = \\
\m(t)\isubs{x}{\m(u)}
\end{array}\]
\end{itemize}
\end{itemize}
\end{proof}

\section{Proofs of~\cref{s:PSN}}
\label{app:relation-les}

\alclocallyconfluent*

\begin{proof}
To check local confluence of $\Rew{\ALC}$ we have to check that

\[ t_1 \LRew{\ALC}  t \Rew{A} t_2 \mbox{ implies }
  t_1 \Rewn{\ALC}  t_3 =_{\Es} t_4 \LRewn{\ALC} t_2 \]

where  $A = \{ \Appi, \Appii, \Appiii, \Lamb, \Compi, \Compii \}$

To check local coherence  of $\Rew{\ALC}$ we have to check that

\[ t_1 \LRew{\ALC}  t =_{\Es}  t_2 
\mbox{ implies } 
t_1 \Rewn{\ALC}  t_3 =_{\Es} t_4 \LRewn{\ALC} t_2 \] 

By~\cite{JK86} it is sufficient to show that for every critical
pair $\langle c_1, c_2 \rangle$ there exist two $\ALC$-normal forms $c_1!$ and $c_2!$ of $c_1$ and $c_2$ respectively such that $c_1! =_{\Es} c_2!$. We thus check this property, remarking that in some cases we find a common reduct
before arriving at an $\ALC$-normal form so that in particular
the desired property holds.

\begin{itemize}
\item $\Compii$ and $\Appi$:

We have $y \in \fv(u)\ \&\ y \notin \fv(t_1\ t_2)\ \&\ x \in
\fv(t_1)\ \&\ x \in \fv(t_2)$ so that 

$(t_1\ t_2)[x/u][y/v] \Rew{\Compii} (t_1\ t_2)[x/u[y/v]]$
        
$(t_1\ t_2)[x/u][y/v] \Rew{\Appi} (t_1[x/u]\ t_2[x/u])[y/v] $

can be closed by

$(t_1\ t_2)[x/u[y/v]] \Rew{\Appi} t_1[x/u[y/v]] \ t_2[x/u[y/v]]$

$(t_1[x/u]\ t_2[x/u])[y/v] \Rew{\Appi}
t_1[x/u][y/v]\ t_2[x/u][y/v] \Rewn{\Compii}t_1[x/u[y/v]] \
t_2[x/u[y/v]]$

\item $\Compii$ and $\Appii$: 

We have $y \in \fv(u)\ \&\ y \notin \fv(t_1\ t_2)\ \&\ x \notin
\fv(t_1)\ \&\ x \in \fv(t_2)$ so that

$(t_1\ t_2)[x/u][y/v] \Rew{\Compii} (t_1\ t_2)[x/u[y/v]]$
        
$(t_1\ t_2)[x/u][y/v] \Rew{\Appii} (t_1\ t_2[x/u])[y/v] $

can be closed by

$(t_1\ t_2)[x/u[y/v]] \Rew{\Appii} t_1\ t_2[x/u[y/v]]$

$(t_1\ t_2[x/u])[y/v] \Rew{\Appii}
t_1\ t_2[x/u][y/v] \Rewn{\Compii} t_1\ t_2[x/u[y/v]]$

\item $\Compii$ and $\Appiii$: Similar.

\item $\Compii$ and $\Lamb$: 

We have $y \in \fv(u)\ \&\ y \notin \fv(\l z. t)$ so that

$(\l z. t)[x/u][y/v] \Rew{\Compii} (\l z. t)[x/u[y/v]]$ 

$(\l z. t)[x/u][y/v] \Rew{\Lamb} (\l z. t[x/u])[y/v]$ 

can be closed by

$(\l z. t)[x/u[y/v]] \Rew{\Lamb} \l z. t[x/u[y/v]]$

$(\l z. t[x/u])[y/v] \Rew{\Lamb}\l z. t[x/u][y/v] \Rew{\Compii} \l
z. t[x/u[y/v]]$

\item $\Compi$ and $\Appi$:

We have $y \in \fv(u)\ \&\ y \in \fv(t_1\ t_2)\ \&\ x \in
\fv(t_1)\ \&\ x \in \fv(t_2)$ so that

$(t_1\ t_2)[x/u][y/v] \Rew{\Compi} (t_1\ t_2)[y/v][x/u[y/v]]$

$(t_1\ t_2)[x/u][y/v] \Rew{\Appi} (t_1[x/u]\ t_2[x/u])[y/v]$
 
If $y \in \fv(t_1)\ \&\ y \in \fv(t_2)$, then we close by

$(t_1\ t_2)[y/v][x/u[y/v]] \Rew{\Appi} 
 (t_1[y/v]\ t_2[y/v])[x/u[y/v]] \Rew{\Appi} 
t_1[y/v][x/u[y/v]]\ t_2[y/v][x/u[y/v]]$

$(t_1[x/u]\ t_2[x/u])[y/v] \Rew{\Appi} 
t_1[x/u][y/v]\ t_2[x/u][y/v]\Rewn{\Compi} 
t_1[y/v][x/u[y/v]]\ t_2[y/v][x/u[y/v]]$

If $y \notin \fv(t_1)\ \&\ y \in \fv(t_2)$, then we close by

$(t_1\ t_2)[y/v][x/u[y/v]] \Rew{\Appii} 
 (t_1\ t_2[y/v])[x/u[y/v]] \Rew{\Appi} 
t_1[x/u[y/v]]\ t_2[y/v][x/u[y/v]]$

$(t_1[x/u]\ t_2[x/u])[y/v] \Rew{\Appi} 
(t_1[x/u][y/v]\ t_2[x/u][y/v]\Rew{\Compi} 
t_1[x/u][y/v]\ t_2[y/v][x/u[y/v]] \\ \Rew{\Compii}
t_1[x/u[y/v]]\ t_2[y/v][x/u[y/v]]$

If $y \in \fv(t_1)\ \&\ y \notin \fv(t_2)$, then is similar to the
previous case. 

Remark that the case  $y \notin \fv(t_1)\ \&\ y \notin \fv(t_2)$
is not possible.

\item $\Compi$ and $\Appii$

We have $y \in \fv(u)\ \&\ y \in \fv(t_1\ t_2)\ \&\ x \notin
\fv(t_1)\ \&\ x \in \fv(t_2)$ so that

$(t_1\ t_2)[x/u][y/v] \Rew{\Compi} (t_1\ t_2)[y/v][x/u[y/v]]$

$(t_1\ t_2)[x/u][y/v] \Rew{\Appii} (t_1\ t_2[x/u])[y/v]$

If $y \in \fv(t_1)\ \&\ y \in \fv(t_2)$, then we close by

$(t_1\ t_2)[y/v][x/u[y/v]] \Rew{\Appi} 
 (t_1[y/v]\ t_2[y/v])[x/u[y/v]] \Rew{\Appii} 
t_1[y/v]\ t_2[y/v][x/u[y/v]]$

$(t_1\ t_2[x/u])[y/v] \Rew{\Appi} 
t_1[y/v]\ t_2[x/u][y/v]\Rewn{\Compi} 
t_1[y/v]\ t_2[y/v][x/u[y/v]]$

If $y \notin \fv(t_1)\ \&\ y \in \fv(t_2)$, then we close by

$(t_1\ t_2)[y/v][x/u[y/v]] \Rew{\Appii} 
 (t_1\ t_2[y/v])[x/u[y/v]] \Rew{\Appii} 
 t_1\ t_2[y/v][x/u[y/v]]$

$(t_1\ t_2[x/u])[y/v] \Rew{\Appii} 
 t_1\ t_2[x/u][y/v]\Rew{\Compi} 
t_1\ t_2[y/v][x/u[y/v]]$

If $y \in \fv(t_1)\ \&\ y \notin \fv(t_2)$, then is similar to the
previous case. 

Remark that the case  $y \notin \fv(t_1)\ \&\ y \notin \fv(t_2)$
is not possible.

\item $\Compi$ and $\Appiii$: this case is similar to the
  previous one.

\item $\Compi$ and $\Lamb$:

We have $y \in \fv(u)\ \&\ y \in \fv(\l z.t)$ so that

$(\l z. t)[x/u][y/v] \Rew{\Compi} (\l z. t)[y/v][x/u[y/v]]$ 

$(\l z. t)[x/u][y/v] \Rew{\Lamb} (\l z. t[x/u])[y/v]$ 

can be closed by

$(\l z. t)[y/v][x/u[y/v]] \Rewn{\Lamb} \l z. t[y/v][x/u[y/v]]$

$(\l z. t[x/u])[y/v] \Rew{\Lamb} \l z. t[x/u][y/v]\Rew{\Compi} \l z. t[y/v][x/u[y/v]]$

\item $\Compii$ and $\Compii$:

We have $z \in \fv(v)\ \&\ z \notin \fv(t[x/u])\ \&\ y \in
\fv(u)\ \&\ y \notin \fv(t)$ so that

$t[x/u][y/v][z/w] \Rew{\Compii} t[x/u][y/v[z/w]]$

$t[x/u][y/v][z/w] \Rew{\Compii} t[x/u[y/v]][z/w]$

can be closed by

$t[x/u][y/v[z/w]] \Rew{\Compii} t[x/u[y/v[z/w]]]$

$t[x/u[y/v]][z/w] \Rew{\Compii} t[x/u[y/v][z/w]] \Rew{\Compii}  t[x/u[y/v[z/w]]]$

\item $\Compi$ and $\Compii$:

We have $z \in \fv(t[x/u])\ \&\ z \in \fv(v)\ \&\ 
y \in \fv(u)\ \&\ y \notin \fv(t)$ so that 

$t[x/u][y/v][z/w] \Rew{\Compi} t[x/u][z/w][y/v[z/w]]$

$t[x/u][y/v][z/w] \Rew{\Compii} t[x/u[y/v]][z/w]$

If $z \in \fv(u)\ \&\ z \in \fv(t)$, we close by

$t[x/u][z/w][y/v[z/w]]  \Rew{\Compi} t[z/w][x/u[z/w]][y/v[z/w]]$

$t[x/u[y/v]][z/w] \Rew{\Compi} t[z/w][x/u[y/v][z/w]]
\Rew{\Compi}  t[z/w][x/u[z/w][y/v[z/w]]]$

If $z \in \fv(u)\ \&\ z \notin \fv(t)$, we close by

$t[x/u][z/w][y/v[z/w]]  \Rew{\Compii} t[x/u[z/w]][y/v[z/w]]$

$t[x/u[y/v]][z/w] \Rew{\Compii} t[x/u[y/v][z/w]]
\Rew{\Compi}  t[x/u[z/w][y/v[z/w]]]$

If $z \notin \fv(u)\ \&\ z \in \fv(t)$, we close by

$t[x/u][z/w][y/v[z/w]]  =_{\C}  t[z/w][x/u][y/v[z/w]]
\Rew{\Compii} t[z/w][x/u[y/v[z/w]]] $

$t[x/u[y/v]][z/w] \Rew{\Compi} t[z/w] [x/u[y/v][z/w] ]
\Rew{\Compii}  t[z/w] [x/u[y/v[z/w]]]$

\item $\Compii$ and $\Compi$:

We  have $z \in \fv(v)\ \&\ z \notin \fv(t[x/u])\ \&\
y \in \fv(u)\ \&\ y \in \fv(t)$ so that 

$t[x/u][y/v][z/w] \Rew{\Compii} t[x/u][y/v[z/w]]$

$t[x/u][y/v][z/w] \Rew{\Compi} t[y/v][x/u[y/v]][z/w]$

can be closed by

$t[x/u][y/v[z/w]] \Rew{\Compi} t[y/v[z/w]][x/u[y/v[z/w]]]$

$t[y/v][x/u[y/v]][z/w] \Rew{\Compi} t[y/v][z/w][x/u[y/v][z/w]]
\Rew{\Compii} t[y/v[z/w]][x/u[y/v][z/w]] \\
\Rew{\Compii} t[y/v[z/w]][x/u[y/v[z/w]]]$

\item $\Compi$ and $\Compi$

We  have $z \in \fv(v)\ \&\ z \in \fv(t[x/u])\ \&\
y \in \fv(u)\ \&\ y \in \fv(t)$ so that 

$t[x/u][y/v][z/w] \Rew{\Compi} t[x/u][z/w][y/v[z/w]]$

$t[x/u][y/v][z/w] \Rew{\Compi} t[y/v][x/u[y/v]][z/w]$

If $z \in \fv(u)\ \&\ z \notin \fv(t)$, then we close by

$t[x/u][z/w][y/v[z/w]] \Rew{\Compii} 
 t[x/u[z/w]][y/v[z/w]]\Rew{\Compi}
t[y/v[z/w]][x/u[z/w][y/v[z/w]]]$

$t[y/v][x/u[y/v]][z/w] \Rew{\Compi}
t[y/v][z/w][x/u[y/v][z/w]] \Rew{\Compi}
t[y/v][z/w][x/u[z/w][y/v[z/w]]]  \\ \Rew{\Compii}
t[y/v[z/w]][x/u[z/w][y/v[z/w]]]$

If $z \notin \fv(u)\ \&\ z \in \fv(t)$, then we close by

$t[x/u][z/w][y/v[z/w]] =_{\C} 
 t[z/w][x/u][y/v[z/w]] \Rew{\Compi}
t[z/w][y/v[z/w]][x/u[y/v[z/w]]]$

$t[y/v][x/u[y/v]][z/w] \Rew{\Compi}
t[y/v][z/w][x/u[y/v][z/w]]\Rew{\Compi}
t[z/w][y/v[z/w]][x/u[y/v][z/w]] \\ \Rew{\Compii}
t[z/w][y/v[z/w]][x/u[y/v[z/w]]]$

\item $\Appi$ and $\C$:

We have $y \notin \fv(u)\ \&\ x \notin \fv(v)\ \&\ x \in
\fv(t_1)\ \&\ x \in \fv(t_2)$ so that

$(t_1\ t_2)[x/u][y/v] =_{\C} (t_1\ t_2)[y/v] [x/u]$

$(t_1\ t_2)[x/u][y/v] \Rew{\Appi} (t_1[x/u]\ t_2[x/u])[y/v]$

If $y \in \fv(t_1)\ \&\ y \in \fv(t_2)$, we close by 

$(t_1\ t_2)[y/v] [x/u] \Rew{\Appi} (t_1[y/v] \ t_2[y/v] )[x/u]
 \Rew{\Appi} t_1[y/v][x/u] \ t_2[y/v][x/u]$

$(t_1[x/u]\ t_2[x/u])[y/v]\Rew{\Appi}
t_1[x/u][y/v]\ t_2[x/u][y/v] =_{\C} t_1[y/v][x/u]\ t_2[y/v][x/u]$

If $y \notin \fv(t_1)\ \&\ y \in \fv(t_2)$, we close by 

$(t_1\ t_2)[y/v] [x/u] \Rew{\Appii} (t_1 \ t_2[y/v] )[x/u]
 \Rew{\Appi} t_1[x/u] \ t_2[y/v][x/u]$

$(t_1[x/u]\ t_2[x/u])[y/v]\Rew{\Appii}
t_1[x/u]\ t_2[x/u][y/v] =_{\C} t_1[x/u]\ t_2[y/v][x/u]$

The case  $y \in \fv(t_1)\ \&\ y \notin \fv(t_2)$ is similar to
the previous one. 

If $y \notin \fv(t_1)\ \&\ y \notin \fv(t_2)$, we close by 

$(t_1\ t_2)[y/v] [x/u] \Rew{\Gc} (t_1 \ t_2)[x/u]
 \Rew{\Appi} t_1[x/u] \ t_2[x/u]$

$(t_1[x/u]\ t_2[x/u])[y/v]\Rew{\Gc}
t_1[x/u]\ t_2[x/u]$

\item $\Appii$ and $\C$:

We have $y \notin \fv(u)\ \&\ x \notin \fv(v)\ \&\ x \notin
\fv(t_1)\ \&\ x \in \fv(t_2)$ so that

$(t_1\ t_2)[x/u][y/v] =_{\C} (t_1\ t_2)[y/v] [x/u]$

$(t_1\ t_2)[x/u][y/v] \Rew{\Appii} (t_1\ t_2[x/u])[y/v]$

If $y \in \fv(t_1)\ \&\ y \in \fv(t_2)$, we close by 

$(t_1\ t_2)[y/v] [x/u] \Rew{\Appi} (t_1[y/v] \ t_2[y/v] )[x/u]
 \Rew{\Appii} t_1[y/v]\ t_2[y/v][x/u]$

$(t_1\ t_2[x/u])[y/v]\Rew{\Appi}
t_1[y/v]\ t_2[x/u][y/v] =_{\C} t_1[y/v]\ t_2[y/v][x/u]$

If $y \notin \fv(t_1)\ \&\ y \in \fv(t_2)$, we close by 

$(t_1\ t_2)[y/v] [x/u] \Rew{\Appii} (t_1 \ t_2[y/v] )[x/u]
 \Rew{\Appii} t_1\ t_2[y/v][x/u]$

$(t_1\ t_2[x/u])[y/v]\Rew{\Appii}
t_1\ t_2[x/u][y/v] =_{\C} t_1\ t_2[y/v][x/u]$

The case  $y \in \fv(t_1)\ \&\ y \notin \fv(t_2)$ is similar to
the previous one. 

If $y \notin \fv(t_1)\ \&\ y \notin \fv(t_2)$, we close by 

$(t_1\ t_2)[y/v] [x/u] \Rew{\Gc} (t_1 \ t_2)[x/u]
 \Rew{\Appii} t_1\ t_2[x/u]$

$(t_1\ t_2[x/u])[y/v]\Rew{\Gc}
t_1\ t_2[x/u]$

\item $\Appiii$ and $\C$: similar to the previous one. 

\item $\Lamb$ and $\C$:

We have $y \notin \fv(u)\ \&\ x \notin \fv(v)$ so that 

$(\l z. t)[x/u][y/v] =_{\C} (\l z. t)[y/v] [x/u]$

$(\l z. t)[x/u][y/v] \Rew{\Lamb} (\l z. t[x/u])[y/v]$

can be closed by 

$(\l z. t)[y/v] [x/u] \Rewn{\Lamb}\l z. t[y/v] [x/u]$

$(\l z. t[x/u])[y/v] \Rew{\Lamb} \l z. t[x/u][y/v] =_{\C} \l z. t[y/v][x/u]$

\item $\Compii$ and $\C$:

We have $z \notin \fv(v)\ \&\ y \notin \fv(w)\ \&\ y \in
\fv(u)\ \&\ y \notin \fv(t)$ so that 

$t[x/u][y/v][z/w] =_{\C} t[x/u][z/w][y/v]$

$t[x/u][y/v][z/w] \Rew{\Compii} t[x/u[y/v]][z/w]$

If $z \in \fv(t)\ \&\  z \in \fv(u)$, then

$t[x/u][z/w][y/v] \Rew{\Compi}
t[z/w][x/u[z/w]][y/v] \Rew{\Compii}
t[z/w][x/u[z/w][y/v]]$

$t[x/u[y/v]][z/w] \Rew{\Compi} 
t[z/w] [x/u[y/v][z/w] ] =_{\C} t[z/w] [x/u[z/w][y/v]]$

If $z \notin \fv(t)\ \&\ z \in \fv(u)$, then

$t[x/u][z/w][y/v] \Rew{\Compii}
t[x/u[z/w]][y/v] \Rew{\Compii}
t[x/u[z/w][y/v]]$

$t[x/u[y/v]][z/w] \Rew{\Compi} 
t[x/u[y/v][z/w] ] =_{\C} t[x/u[z/w][y/v]]$

If $z \notin \fv(u)$, then

$t[x/u][z/w][y/v] =_{\C} t[z/w][x/u][y/v] \Rew{\Compii} t[z/w][x/u[y/v]]$

$t[x/u[y/v]][z/w] =_{\C} t[z/w][x/u[y/v]]$

\item $\Compi$ and $\C$:

We have $z \notin \fv(v)\ \&\ y \notin \fv(w)\ \&\ y \in
\fv(u)\ \&\ y \in \fv(t)$ so that

$t[x/u][y/v][z/w] =_{\C} t[x/u][z/w][y/v]$

$t[x/u][y/v][z/w] \Rew{\Compi} t[y/v][x/u[y/v]][z/w]$

If $z \in \fv(t)\ \&\ z \in \fv(u)$, then 

$t[x/u][z/w][y/v] \Rew{\Compi} t[z/w][x/u[z/w]][y/v] \Rew{\Compi}
t[z/w][y/v][x/u[z/w][y/v]]$

$t[y/v][x/u[y/v]][z/w]\Rew{\Compi} t[y/v][z/w][x/u[y/v][z/w]] =_{\C}
t[z/w][y/v][x/u[z/w][y/v]]$

If $z \notin \fv(t)\ \&\ z \in \fv(u)$, then

$t[x/u][z/w][y/v] \Rew{\Compii} t[x/u[z/w]][y/v] \Rew{\Compi}
t[y/v][x/u[z/w][y/v]]$

$t[y/v][x/u[y/v]][z/w]\Rew{\Compii} t[y/v][x/u[y/v][z/w]] =_{\C}
t[y/v][x/u[z/w][y/v]]$

If $z \notin \fv(u)$, then 

$t[x/u][z/w][y/v] =_{\C} t[z/w][x/u][y/v] \Rew{\Compi} t[z/w][y/v] [x/u[y/v]]$

$t[y/v][x/u[y/v]][z/w]  =_{\C} t[z/w][y/v][x/u[y/v]]$ 

\item $\Compii$ and $\C$:

We have $y \notin \fv(u)\ \&\ x \notin \fv(v)\ \&\ 
z \in \fv(v)\ \&\ z \notin \fv(t[x/u])$ so that

$t[x/u][y/v][z/w] =_{\C} t[y/v][x/u][z/w]$

$t[x/u][y/v][z/w] \Rew{\Compii} t[x/u][y/v[z/w]]$

can be closed by 

$t[y/v][x/u][z/w] =_{\C} t[y/v][z/w][x/u] \Rew{\Compii} t[y/v[z/w]][x/u]$

$t[x/u][y/v[z/w]]=_{\C} t[y/v[z/w]][x/u]$

\item $\Compi$ and $\C$:

We have $y \notin \fv(u)\ \&\ x \notin \fv(v)\ \&\ 
z \in \fv(v)\ \&\ z \in \fv(t[x/u])$ so that

$t[x/u][y/v][z/w] =_{\C} t[y/v][x/u][z/w]$

$t[x/u][y/v][z/w] \Rew{\Compi} t[x/u][z/w][y/v[z/w]]$

If $z \in \fv(t)\ \&\ z \in \fv(u)$, then 

$t[y/v][x/u][z/w] \Rew{\Compi} t[y/v][z/w][x/u[z/w]] \Rew{\Compi}
t[z/w][y/v[z/w]][x/u[z/w]]$

$t[x/u][z/w][y/v[z/w]] \Rew{\Compi} t[z/w][x/u[z/w]][y/v[z/w]]
=_{\C} t[z/w][y/v[z/w]][x/u[z/w]]$

If $z \notin \fv(t)\ \&\ z \in \fv(u)$, then 

$t[y/v][x/u][z/w]\Rew{\Compi} t[y/v][z/w][x/u[z/w]]\Rew{\Compii}t[y/v[z/w]][x/u[z/w]]$

$t[x/u][z/w][y/v[z/w]]\Rew{\Compii} t[x/u[z/w]][y/v[z/w]]
=_{\C} t[y/v[z/w]][x/u[z/w]]$

If $z \in \fv(t)\ \&\ z \notin \fv(u)$, then 

$t[y/v][x/u][z/w] =_{\C} t[y/v][z/w][x/u] \Rew{\Compi} t[z/w][y/v[z/w]][x/u]$

$t[x/u][z/w][y/v[z/w]]=_{\C} t[z/w][y/v[z/w]][x/u]$

\end{itemize}

\end{proof}

\ignore{
  \technicallemmaalc*
\begin{proof}
  If $z \notin \fv(t)$ then  $\ALC(t[z/u]) = \l \vec{y}.\ALC(t')[z/u]$
  with $t=\l  \vec{y}.t'$ and    $t'   \neq  \l y'.t''$.   Then    $\l
  \vec{y}.\ALC(t')[z/u]    \Rew{\les}   \l      \vec{y}.\ALC(t')[z/u']
  \Rewn{\les} \l \vec{y}.\ALC(t')[z/\ALC(u')] = \ALC(t[z/u'])$.

Otherwise,  $z \in \fv(t)$, and  we
reason by induction on  the structure of $t$. 
\report{
\begin{itemize}
\item $t = z$. 
$\ALC(t[z/u]) = z[z/u] \Rew{\les} z[z/u'] \Rewn{\les} z[z/\ALC(u')] = \ALC(t[z/u'])$.

\item $t = \l y. t_1$. 

$\ALC(t[z/u]) = \ALC((\l y. t_1)[z/u]) = \l y.\ALC(t_1[z/u])$\\
$\Rewplus{\les} $ (\ih) $\:\l y.\ALC(t_1[z/u']) = \ALC((\l y.t_1)[z/u'])$.

\item $t = (t_1\ t_2)$.  

There are three subcases.

\begin{itemize}
\item $z \in \fv(t_1)$, $z \in \fv(t_2)$.

$\ALC(t[z/u]) = \ALC((t_1\ t_2)[z/u]) = (\ALC(t_1[z/u])\ \ALC(t_2[z/u]))$\\
$\Rewplus{\les}$ (\ih\ twice) $(\ALC(t_1[z/u'])\ \ALC(t_2[z/u']))$\\
$ = \ALC(t_1[z/u']\ t_2[z/u']) = \ALC((t_1\ t_2)[z/u'])$.

\item $z \in \fv(t_1)$, $z \notin \fv(t_2)$. Similar.

\item $z \notin \fv(t_1)$, $z \in \fv(t_2)$. Similar.
\end{itemize}

\item  $t =  t_0[\vec{x}/\vec{t}]$ with  $\vec{x} =  x_1, \ldots,
  x_n$,  $\vec{t}  =  t_1,  \ldots,  t_n$, $n  >  0$,  $t_0  \neq
  v_1[y/v_2]$,   $x_j  \notin  \fv(t_i)$   for  all   $i,j \in [1..n]$ (the
  substitutions are independent).

%

Since  we are rewriting  modulo $\Es$  and the  substitutions are
independent, we choose an order for them in the proofs below. For
convenience,    we   write    $[x_i\ldots   x_j]$    instead   of
$[x_i/t_i][x_{i+1}/t_{i+1}]\cdots[x_j/t_j]$.

We now reason by cases on $t_0$.

\begin{itemize}
\item $t_0 = y$. Let $z \in \fv(t_i)$, $i \in [1..k]$, $z \notin
  \fv(t_{j})$, $j \in [k+1..n]$.  Let $y = z$. We have
\begin{eqnarray*}
&\ALC(t[z/u])\\ 
=&\ALC(z[\vec{x}/\vec{t}][z/u])\\
=&z[z/u][x_1/\ALC(t_1[z/u])]\cdots[x_k/\ALC(t_k[z/u])][x_{k+1}\ldots x_n]\\
\mbox{(\ih)} \Rewplus{\les} &z[z/u][x_1/\ALC(t_1[z/u'])]\cdots[x_k/\ALC(t_k[z/u'])][x_{k+1}\ldots x_n]\\
\Rewplus{\les}&z[z/\ALC(u')][x_1/\ALC(t_1[z/u'])]\cdots[x_k/\ALC(t_k[z/u'])][x_{k+1}\ldots x_n]\\
=&\ALC(z[\vec{x}/\vec{t}][z/u'])\\
=&\ALC(t[z/u'])
\end{eqnarray*}

W.l.o.g., the remaining  cases are $y = x_i$ for $i \in [1..k]$,
$y = x_{i}$
for $i \in [k+1..n]$, and
$y \neq z,x_i$ for all $i \in [1..n]$ and they are similar.

\item $t_0 = \l y. v$.

$\ALC(t[z/u]) = \ALC((\l y. v)[\vec{x}/\vec{t}][z/u]) = \l y.\ALC(v[\vec{x}/\vec{t}][z/u])\: \mbox{ (\ih)}$\\
$\Rewplus{\les} \l y. \ALC(v[\vec{x}/\vec{t}][z/u']) =\ALC((\l y. v)[\vec{x}/\vec{t}][z/u']) = \ALC(t[z/u'])$.

\item $t_0 = (v_1\ v_2)$.
We partition the set of substitutions $\{[x_1/t_1],\ldots,[x_n/t_n]\}$ as follows:
\begin{itemize}
\item $x_1 \ldots x_{k_1} \in \fv(v_1)$, $x_1 \ldots x_{k_1} \notin \fv(v_2)$;
\item $x_{k_1+1} \ldots x_{k_2} \in \fv(v_1)$, $x_{k_1+1} \ldots x_{k_2} \in \fv(v_2)$;
\item $x_{k_2+1} \ldots x_{k_3} \notin \fv(v_1)$, $x_{k_2+1} \ldots x_{k_3} \in \fv(v_2)$;
\item $x_{k_3+1} \ldots x_{k_4} \notin \fv(v_1) \cup \fv(v_2)$, $z \in \fv(t_{k_3+1}),\ldots, \fv(t_{k_4})$;
\item $x_{k_4+1} \ldots x_{k_n} \notin \fv(v_1) \cup \fv(v_2)$, $z \notin \fv(t_{k_4+1}),\ldots, \fv(t_{k_n})$.
\end{itemize}

There are three  cases depending on whether $z  \in \fv(v_1) \cup
\bigcup_{i  \in  [1..k_2]}\fv(t_i)$   or  $z  \in  \fv(v_2)  \cup
\bigcup_{i \in [k_1..k_3]}\fv(t_i)$. We treat the case where both
conditions are true. The other cases are similar.

The       use  of \ih\    below    does    not  follow directly   from
$\ALC(v_1[\vec{x}/\vec{t}][z/u])                        \Rewplus{\les}
\ALC(v_1[\vec{x}/\vec{t}][z/u'])$   but   can   be  derived  from  the
hypothesis by  considering  the structure of   $v_1$ as  $\l  \vec{y}. 
v_1'$ and using $x_{k_2+1}\ldots  x_n \notin \fv(v_1)$ (similarly  for
$v_2$).

\begin{eqnarray*}
&\ALC(t[z/u])\\
=&\ALC((v_1\ v_2)[\vec{x}/\vec{t}][z/u])\\
=&(\ALC(v_1[x_1\ldots x_{k_2}][z/u])\ \ALC(v_2[x_{k_1+1}\ldots x_{k_3}][z/u]))\\
&[x_{k_3+1}/\ALC(t_{k_3+1}[z/u])]\cdots[x_{k_4}/\ALC(t_{k_4}[z/u])][x_{k_4+1}\ldots x_{k_n}]\\
\mbox{(\ih)} \Rewplus{\les}&(\ALC(v_1[x_1\ldots x_{k_2}][z/u'])\ \ALC(v_2[x_{k_1+1}\ldots x_{k_3}][z/u']))\\
&[x_{k_3+1}/\ALC(t_{k_3+1}[z/u])]\cdots[x_{k_4}/\ALC(t_{k_4}[z/u])][x_{k_4+1}\ldots x_{k_n}]\\
\mbox{(\ih)} \Rewplus{\les}
&(\ALC(v_1[x_1\ldots x_{k_2}][z/u'])\ \ALC(v_2[x_{k_1+1}\ldots x_{k_3}][z/u']))\\
&[x_{k_3+1}/\ALC(t_{k_3+1}[z/u'])]\cdots[x_{k_4}/\ALC(t_{k_4}[z/u'])][x_{k_4+1}\ldots x_{k_n}]\\
=&\ALC(t[z/u'])
\end{eqnarray*}
\end{itemize}

\item  $t = t_0[\vec{x}/\vec{t}]$ with  $\vec{x} =  x_1, \ldots, x_n$,
  $\vec{t} = t_1,  \ldots, t_n$, $n  > 0$, $t_0 \neq v_1[y/v_2]$, $x_j
  \in    \fv(t_i)$ for   some   $i<j$   (the   substitutions are   not
  independent). This case degenerates into the previous case:

\begin{eqnarray*}
&\ALC(t[z/u])\\
=&\ALC((v_1\ v_2)[\vec{x}/\vec{t}][z/u])\\
=&\ALC((v_1\ v_2)[\vec{y}/\vec{s}][z/u])\\
&\mbox{such that the substitutions $[\vec{y}/\vec{s}]$ are independent}\\
\mbox{previous case }  \Rewplus{\les} &\ALC((v_1\ v_2)[\vec{y}/\vec{s}][z/u'])\\
=&\ALC((v_1\ v_2)[\vec{x}/\vec{t}][z/u'])\\
=&\ALC(t[z/u'])
\end{eqnarray*}
\end{itemize}
}
\end{proof}
}

\ignore{
\lessimulateslm*

\begin{proof}
 By induction on the definition of $t \Rew{\lm} t'$. By~\cref{l:tra-preserves-es} (twice), it is
 sufficient to prove that $t \Rew{\B\sm} t'$
 implies $\tra{t}
 \Rew{\B\sm} \tra{t'}$. First we consider the base case of the induction
 where reductions occur at the root.

\begin{itemize}
\item  $t=(\lambda x.u)\ v \Rew{\B} u[x/v]=t'$. 

  If $x \in \fv(u)$, then
  $\tra{(\lambda x.u)\ v}
  =  ((\lambda x.\tra{u})\ \tra{v}) [z/\tra{v}]
  \Rew{\B}  \tra{u}[x/\tra{v}][z/\tra{v}]
\Rewn{\ALC}  \ALC(\tra{u}[x/\tra{v}][z/\tra{v}])
  =  \tra{u[x/v]}$.

  If $x \notin \fv(u)$, then 
  $\tra{(\lambda x.u)\ v}
  =  ((\lambda x.\tra{u})\ \tra{v}) [z/\tra{v}]
  \Rew{\B}  \tra{u}[x/\tra{v}][z/\tra{v}]
\Rew{\Gc}  \tra{u}[x/\tra{v}]
\Rewn{\ALC}  \ALC(\tra{u}[x/\tra{v}])=  \tra{u[x/v]}
$.

\item $t=C\cwc{x}[x/u] \Rew{\us} C\cwc{u}[x/u]=t'$. 
We reason by  induction on the context  $C$.
\begin{itemize}
\item $C = \Box$.
  Then $t=x[x/u] \Rew{\us} u[x/u]=t'$ and
  $\tra{t}=x[x/\tra{u}][\fresh{x}/\tra{u}] \Rew{\Var}
  \tra{u}[\fresh{x}/\tra{u}] \Rewn{\ALC} \ALC(\tra{u}[\fresh{x}/\tra{u}])
  = \tra{u[x/u]}$.

\item $C = D\ v$.

We treat the case where $x \notin \fv(v)$. By the \ih, $\tra{D\cwc{x}[x/u]} = $\linebreak $\ALC(\tra{D\cwc{x}}[x/\tra{u}][\fresh{x}/\tra{u}]) \Rewplus{\les} \tra{D\cwc{u}[x/u]}$. We have 
$$
\begin{array}{rcll}
\tra{D\cwc{u}[x/u]}& =& \ALC(\tra{D\cwc{u}}[x/\tra{u}]) & x \notin \fv(D\cwc{u})\\
\tra{D\cwc{u}[x/u]}& =& \ALC(\tra{D\cwc{u}}[x/\tra{u}][\fresh{x}/\tra{u}]) & x \in \fv(D\cwc{u})
\end{array}
$$
so by~\cref{l:non-interfering-garbage},
$$
\begin{array}{ll}
\ALC(\tra{D\cwc{x}}[x/\tra{u}]) \Rewplus{\les} \ALC(\tra{D\cwc{u}}) & x \notin \fv(D\cwc{u})\\
\ALC(\tra{D\cwc{x}}[x/\tra{u}]) \Rewplus{\les} \ALC(\tra{D\cwc{u}}[x/\tra{u}]) & x \in \fv(D\cwc{u}).
\end{array}
$$
When $x \in \fv(D\cwc{u})$ then
\begin{eqnarray*}
&\tra{(D\cwc{x}\ v)[x/u]}\\
=& \ALC((\tra{D\cwc{x}}\ \tra{v})[z/\tra{v}][x/\tra{u}][\fresh{x}/\tra{u}])\\
=& (\ALC(\tra{D\cwc{x}}[x/\tra{u}])\ \ALC(\tra{v}))[z/\tra{v}][\fresh{x}/\tra{u}]\\
\Rewplus{\les} & (\ALC(\tra{D\cwc{u}}[x/\tra{u}])\ \ALC(\tra{v}))[z/\tra{v}][\fresh{x}/\tra{u}]\\
=& (\ALC(\tra{D\cwc{u}}[x/\tra{u}]\ \tra{v}))[z/\tra{v}][\fresh{x}/\tra{u}]\\
=& \ALC((\tra{D\cwc{u}}\ \tra{v})[z/\tra{v}][x/\tra{u}][\fresh{x}/\tra{u}])\\
=&\tra{(D\cwc{u} v)[x/u]}.
\end{eqnarray*}

Otherwise, if $x \notin \fv(D\cwc{u})$ then
\begin{eqnarray*}
&\tra{(D\cwc{x}\ v)[x/u]}\\
=& \ALC((\tra{D\cwc{x}}\ \tra{v})[z/\tra{v}][x/\tra{u}][\fresh{x}/\tra{u}])\\
=& (\ALC(\tra{D\cwc{x}}[x/\tra{u}])\ \ALC(\tra{v}))[z/\tra{v}][\fresh{x}/\tra{u}]\\
\Rewplus{\les} & (\ALC(\tra{D\cwc{u}})\ \ALC(\tra{v}))[z/\tra{v}][\fresh{x}/\tra{u}]\\
=& \ALC((\tra{D\cwc{u}}\ \tra{v})[z/\tra{v}][\fresh{x}/\tra{u}])\\
=&\tra{(D\cwc{u} v)[x/u]}
\end{eqnarray*}

\ShortLongProof
{
The case where $x \notin \fv(D\cwc{u})$ follows similarly as do the cases where $x \in \fv(v)$.
}
{
The cases where $x \in \fv(v)$ follow similarly.
}
\item $C = v\ D$. 
\ShortLongProof
{
Similar.
}
{

We treat the case where $x \notin \fv(v)$. Again we have
$$
\begin{array}{ll}
\ALC(\tra{D\cwc{x}}[x/\tra{u}]) \Rewplus{\les} \ALC(\tra{D\cwc{u}}) & x \notin \fv(D\cwc{u})\\
\ALC(\tra{D\cwc{x}}[x/\tra{u}]) \Rewplus{\les} \ALC(\tra{D\cwc{u}}[x/\tra{u}]) & x \in \fv(D\cwc{u}).
\end{array}
$$
When $x \in \fv(D\cwc{u})$ we have the following.
\begin{eqnarray*}
&\tra{(v\ D\cwc{x})[x/u]}\\
=& \ALC((\tra{v}\ \tra{D\cwc{x}})[z/\tra{D\cwc{x}}][x/\tra{u}][\fresh{x}/\tra{u}])\\
=& (\ALC(\tra{v})\ \ALC(\tra{D\cwc{x}}[x/\tra{u}]))[z/\ALC(\tra{D\cwc{x}}[x/\tra{u}])][\fresh{x}/\tra{u}]\\
\Rewplus{\les} & (\ALC(\tra{v})\ \ALC(\tra{D\cwc{u}}[x/\tra{u}]))[z/\ALC(\tra{D\cwc{u}}[x/\tra{u}])][\fresh{x}/\tra{u}]\\
= & \ALC(\tra{v}\ \tra{D\cwc{u}}[x/\tra{u}])[z/\ALC(\tra{D\cwc{u}}[x/\tra{u}])][\fresh{x}/\tra{u}]\\
= & \ALC((\tra{v}\ \tra{D\cwc{u}})[z/\tra{D\cwc{u}}][x/\tra{u}][\fresh{x}/\tra{u}])\\
=&\tra{(v\ D\cwc{u})[x/u]}
\end{eqnarray*}
The case where $x \notin \fv(D\cwc{u})$ follows similarly as do the cases where $x \in \fv(v)$.
}
\item $C = \l z. D$. 
\ShortLongProof
{
Similar.
}
{

When $x \in \fv(D\cwc{u})$ we have:
\begin{eqnarray*}
& \tra{C\cwc{x}[x/u]}\\
=& \ALC((\l z. \tra{D\cwc{x}})[x/\tra{u}][\fresh{x}/\tra{u}]) \\
=& \l z.\ALC(\tra{D\cwc{x}}[x/\tra{u}][\fresh{x}/\tra{u}])\\
=& \l z. \tra{D\cwc{x}[x/u]}\\
\Rewplus{\les} (\ih)& \l z. \tra{D\cwc{u}[x/u]} \\
=& \l z.\ALC(\tra{D\cwc{u}}[x/\tra{u}][\fresh{x}/\tra{u}])\\
=& \ALC((\l z. \tra{D\cwc{u}})[x/\tra{u}][\fresh{x}/\tra{u}]) \\
=& \tra{C\cwc{u}[x/u]}
\end{eqnarray*}

When $x \notin \fv(D\cwc{u})$ we have:
\begin{eqnarray*}
& \tra{C\cwc{x}[x/u]}\\
=& \ALC((\l z. \tra{D\cwc{x}})[x/\tra{u}][\fresh{x}/\tra{u}]) \\
=& \l z.\ALC(\tra{D\cwc{x}}[x/\tra{u}][\fresh{x}/\tra{u}])\\
=& \l z. \tra{D\cwc{x}[x/u]}\\
\Rewplus{\les} (\ih)& \l z. \tra{D\cwc{u}[x/u]} \\
=& \l z.\ALC(\tra{D\cwc{u}}[x/\tra{u}])\\
=& \ALC((\l z. \tra{D\cwc{u}})[x/\tra{u}]) \\
=& \tra{C\cwc{u}[x/u]}
\end{eqnarray*}
}
\item $C = D[z/v]$.

The case is further broken down depending on whether $z \in \fv(D\cwc{x})$, $x \in \fv(D\cwc{u})$, and $x \in \fv(v)$. 

$\tra{D\cwc{x}[x/u]} = \ALC(\tra{D\cwc{x}}[x/\tra{u}][\fresh{x}/\tra{u}]) \Rewplus{\les} \tra{D\cwc{u}[x/u]}$ by the  \ih. We have 
$$
\begin{array}{rcll}
\tra{D\cwc{u}[x/u]}& =& \ALC(\tra{D\cwc{u}}[x/\tra{u}]) & x \notin \fv(D\cwc{u})\\
\tra{D\cwc{u}[x/u]}& =& \ALC(\tra{D\cwc{u}}[x/\tra{u}][\fresh{x}/\tra{u}]) & x \in \fv(D\cwc{u})
\end{array}
$$
so by~\cref{l:non-interfering-garbage},
$$
\begin{array}{ll}
\ALC(\tra{D\cwc{x}}[x/\tra{u}]) \Rewplus{\les} \ALC(\tra{D\cwc{u}}) & x \notin \fv(D\cwc{u})\\
\ALC(\tra{D\cwc{x}}[x/\tra{u}]) \Rewplus{\les} \ALC(\tra{D\cwc{u}}[x/\tra{u}]) & x \in \fv(D\cwc{u})
\end{array}
$$


As $z \in \fv(D\cwc{x})$ and $z\neq x$ implies
$z \in \fv( \tra{D\cwc{x}}[x/\tra{u}])$ and $z \in \fv(\tra{D\cwc{u}})$,
then 
by~\cref{l:adding-independent-subsi}.\ref{l:adding-independent-subsia} and~\cref{l:adding-independent-subsi}.\ref{l:adding-independent-subsib} we have
$$
\begin{array}{ll}
\ALC(\tra{D\cwc{x}}[x/\tra{u}][z/v']) \Rewplus{\les} \ALC(\tra{D\cwc{u}}[z/v'])\\
\hfill x \notin \fv(D\cwc{u})\\
\ALC(\tra{D\cwc{x}}[x/\tra{u}][z/v']) \Rewplus{\les} \ALC(\tra{D\cwc{u}}[x/\tra{u}][z/v']) \\
\hfill x \in \fv(D\cwc{u}).
\end{array}
$$

\ShortLongProof
{
We take the case where $x \in \fv(D\cwc{u})$, $z \in \fv(D\cwc{x})$, $x \notin \fv(v)$. We assume below that $D\cwc{x} \neq \l y.s$ for some $s$ -- the alternative case can be similarly proved.
\begin{eqnarray*}
& \tra{C\cwc{x}[x/u]}\\
= & \ALC(\tra{D\cwc{x}}[z/\tra{v}][\fresh{z}/\tra{v}][x/\tra{u}][\fresh{x}/\tra{u}])\\
= & \ALC(\tra{D\cwc{x}}[x/\tra{u}][z/\tra{v}])[\fresh{z}/\tra{v}][\fresh{x}/\tra{u}]\\
\Rewplus{\les} &     
    \ALC(\tra{D\cwc{u}}[x/\tra{u}][z/\tra{v}])[\fresh{z}/\tra{v}][\fresh{x}/\tra{u}]\\
= & \ALC(\tra{D\cwc{u}}[z/\tra{v}][\fresh{z}/\tra{v}][x/\tra{u}][\fresh{x}/\tra{u}])\\
= & \tra{C\cwc{u}[x/u]}
\end{eqnarray*}
}
{
We assume below that $D\cwc{x} \neq \l y.s$ for some $s$ -- the alternative case can be similarly proved.

Let $x \in \fv(D\cwc{u})$, $z \in \fv(D\cwc{x})$, $x \in \fv(v)$. 
\begin{eqnarray*}
& \tra{C\cwc{x}[x/u]}\\
= & \ALC(\tra{D\cwc{x}}[z/\tra{v}][\fresh{z}/\tra{v}][x/\tra{u}][\fresh{x}/\tra{u}])\\
= & \ALC(\tra{D\cwc{x}}[x/\tra{u}][z/\tra{v}[x/\tra{u}]])[\fresh{z}/\tra{v}[x/\tra{u}]][\fresh{x}/\tra{u}]\\
\Rewplus{\les} &         \ALC(\tra{D\cwc{u}}[x/\tra{u}][z/\tra{v}[x/\tra{u}]])[\fresh{z}/\tra{v}[x/\tra{u}]][\fresh{x}/\tra{u}]\\
= & \ALC(\tra{D\cwc{u}}[z/\tra{v}][\fresh{z}/\tra{v}][x/\tra{u}][\fresh{x}/\tra{u}])\\
= & \tra{C\cwc{u}[x/u]}
\end{eqnarray*}

Let $x \in \fv(D\cwc{u})$, $z \in \fv(D\cwc{x})$, $x \notin \fv(v)$. 
\begin{eqnarray*}
& \tra{C\cwc{x}[x/u]}\\
= & \ALC(\tra{D\cwc{x}}[z/\tra{v}][\fresh{z}/\tra{v}][x/\tra{u}][\fresh{x}/\tra{u}])\\
= & \ALC(\tra{D\cwc{x}}[x/\tra{u}][z/\tra{v}])[\fresh{z}/\tra{v}][\fresh{x}/\tra{u}]\\
\Rewplus{\les} &     
    \ALC(\tra{D\cwc{u}}[x/\tra{u}][z/\tra{v}])[\fresh{z}/\tra{v}][\fresh{x}/\tra{u}]\\
= & \ALC(\tra{D\cwc{u}}[z/\tra{v}][\fresh{z}/\tra{v}][x/\tra{u}][\fresh{x}/\tra{u}])\\
= & \tra{C\cwc{u}[x/u]}
\end{eqnarray*}

Let $x \notin \fv(D\cwc{u})$, $z \notin \fv(D\cwc{x})$, $x \in \fv(v)$. 
\begin{eqnarray*}
& \tra{C\cwc{x}[x/u]}\\
= & \ALC(\tra{D\cwc{x}}[z/\tra{v}][x/\tra{u}][\fresh{x}/\tra{u}])\\
= & \ALC(\tra{D\cwc{x}}[x/\tra{u}][z/\tra{v}[x/\tra{u}]])[\fresh{x}/\tra{u}]\\
\Rewplus{\les} &         \ALC(\tra{D\cwc{u}}[z/\tra{v}[x/\tra{u}]])[\fresh{x}/\tra{u}]\\
= & \ALC(\tra{D\cwc{u}}[z/\tra{v}][x/\tra{u}][\fresh{x}/\tra{u}])\\
= & \tra{C\cwc{u}[x/u]}
\end{eqnarray*}

Let $x \notin \fv(D\cwc{u})$, $z \notin \fv(D\cwc{x})$, $x \notin \fv(v)$. 
\begin{eqnarray*}
& \tra{C\cwc{x}[x/u]}\\
= & \ALC(\tra{D\cwc{x}}[z/\tra{v}][x/\tra{u}][\fresh{x}/\tra{u}])\\
= & \ALC(\tra{D\cwc{x}}[x/\tra{u}][z/\tra{v}])[\fresh{x}/\tra{u}]\\
\Rewplus{\les} &     
    \ALC(\tra{D\cwc{u}}[z/\tra{v}])[\fresh{x}/\tra{u}]\\
= & \ALC(\tra{D\cwc{u}}[z/\tra{v}][\fresh{x}/\tra{u}])\\
= & \tra{C\cwc{u}[x/u]}
\end{eqnarray*}

The remaining four cases are similar.
}

\item $C = v[z/D]$.

$\tra{D\cwc{x}[x/u]} = \ALC(\tra{D\cwc{x}}[x/\tra{u}][\fresh{x}/\tra{u}]) \Rewplus{\les} \tra{D\cwc{u}[x/u]}$ by the \ih. We assume below that $v \neq \l y.s$ for some $s$ -- the alternative is similarly explained.

The  case is further broken down  depending on whether $z \in \fv(v)$,
$x \in \fv(D\cwc{u})$,  and $x  \in \fv(v)$.  For the  cases where  $x
\notin \fv(D\cwc{u})$, we use  the    equations and reductions    from~\cref{figEqSimproofi}.    For   the    cases    where   $x   \in
\fv(D\cwc{u})$,  we   use     the equations  and     reductions   from~\cref{figEqSimproofii}.

\begin{figure}[p]
\begin{eqnarray}
\nonumber  &\tra{D\cwc{u}[x/u]}\\
\nonumber  &=\ALC(\tra{D\cwc{u}}[x/\tra{u}]).\\
\nonumber  & \mbox{Thus, by~\cref{l:non-interfering-garbage},}\\
\nonumber  &\ALC(\tra{D\cwc{x}}[x/\tra{u}])\\
\nonumber  &\Rewplus{\les} \ALC(\tra{D\cwc{u}}).\\
\nonumber  & \mbox{Thus, by~\cref{c:reduce-copies-of-subs},}\\
\nonumber  &\ALC(\tra{v}[z/\ALC(\tra{D\cwc{x}}[x/\tra{u}])])\\
&\Rewplus{\les}  \ALC(\tra{v}[z/\ALC(\tra{D\cwc{u}})]) \label{EqSimproof1}\\
\nonumber  & \mbox{and}\\
\nonumber  &\ALC(\tra{v}[x/\tra{u}])][z/\ALC(\tra{D\cwc{x}}[x/\tra{u}])])\\
&\Rewplus{\les}  \ALC(\tra{v}[x/\tra{u}])][z/\ALC(\tra{D\cwc{u}})])~\label{EqSimproof2}\\
\nonumber  & \mbox{Let $\fresh{z}$ be fresh for the equations above. Then:}\\
\nonumber  &\ALC(\tra{v}[z/\ALC(\tra{D\cwc{x}}[x/\tra{u}])][\fresh{z}/\ALC(\tra{D\cwc{x}}[x/\tra{u}])])\\
\nonumber =&\ALC(\tra{v}[z/\ALC(\tra{D\cwc{x}}[x/\tra{u}])])[\fresh{z}/\ALC(\tra{D\cwc{x}}[x/\tra{u}])]\\
\nonumber \Rewplus{\les} &
 \ALC(\tra{v}[z/\ALC(\tra{D\cwc{u}})])[\fresh{z}/\ALC(\tra{D\cwc{x}}[x/\tra{u}])]\\
\nonumber \Rewplus{\les} &
 \ALC(\tra{v}[z/\ALC(\tra{D\cwc{u}})])[\fresh{z}/\ALC(\tra{D\cwc{u}})]\\
= &\ALC(\tra{v}[z/\ALC(\tra{D\cwc{u}})][\fresh{z}/\ALC(\tra{D\cwc{u}})])~\label{EqSimproof3}\\
\nonumber  & \mbox{and}\\
\nonumber  &\ALC(\tra{v}[x/\tra{u}][z/\ALC(\tra{D\cwc{x}}[x/\tra{u}])][\fresh{z}/\ALC(\tra{D\cwc{x}}[x/\tra{u}])])\\
\nonumber =&\ALC(\tra{v}[x/\tra{u}][z/\ALC(\tra{D\cwc{x}}[x/\tra{u}])])[\fresh{z}/\ALC(\tra{D\cwc{x}}[x/\tra{u}])]\\
\nonumber \Rewplus{\les} &
 \ALC(\tra{v}[x/\tra{u}][z/\ALC(\tra{D\cwc{u}})])[\fresh{z}/\ALC(\tra{D\cwc{x}}[x/\tra{u}])]\\
\nonumber \Rewplus{\les} &
 \ALC(\tra{v}[x/\tra{u}][z/\ALC(\tra{D\cwc{u}})])[\fresh{z}/\ALC(\tra{D\cwc{u}})]\\
=&
\ALC(\tra{v}[x/\tra{u}][z/\ALC(\tra{D\cwc{u}})][\fresh{z}/\ALC(\tra{D\cwc{u}})])~\label{EqSimproof4} \\
\end{eqnarray}
\caption{Equations for the cases where $x$ is not free in $\fv(D{\cwc{u}})$}
\label{figEqSimproofi}
\end{figure}

\begin{figure}[p]
\begin{eqnarray}
\nonumber &\tra{D\cwc{u}[x/u]}\\
\nonumber &= \ALC(\tra{D\cwc{u}}[x/\tra{u}][\fresh{x}/\tra{u}]).\\
\nonumber & \mbox{Thus, by~\cref{l:non-interfering-garbage},}\\
\nonumber &\ALC(\tra{D\cwc{x}}[x/\tra{u}])\\
\nonumber &\Rewplus{\les} \ALC(\tra{D\cwc{u}}[x/\tra{u}]).\\
\nonumber & \mbox{Thus, by~\cref{c:reduce-copies-of-subs},}\\
\nonumber &\ALC(\tra{v}[z/\ALC(\tra{D\cwc{x}}[x/\tra{u}])])\\
&\Rewplus{\les} \ALC(\tra{v}[z/\ALC(\tra{D\cwc{u}}[x/\tra{u}])]) \label{EqSimproof5}\\
\nonumber  & \mbox{and}\\
\nonumber &\ALC(\tra{v}[x/\tra{u}])][z/\ALC(\tra{D\cwc{x}}[x/\tra{u}])])\\
&\Rewplus{\les} \ALC(\tra{v}[x/\tra{u}])][z/\ALC(\tra{D\cwc{u}}[x/\tra{u}])])~\label{EqSimproof6}\\
\nonumber  & \mbox{Let $\fresh{z}$ be fresh for the equations above. Then:}\\
\nonumber  &\ALC(\tra{v}[z/\ALC(\tra{D\cwc{x}}[x/\tra{u}])][\fresh{z}/\ALC(\tra{D\cwc{x}}[x/\tra{u}])])\\
\nonumber =&\ALC(\tra{v}[z/\ALC(\tra{D\cwc{x}}[x/\tra{u}])])[\fresh{z}/\ALC(\tra{D\cwc{x}}[x/\tra{u}])]\\
\nonumber \Rewplus{\les} &
 \ALC(\tra{v}[z/\ALC(\tra{D\cwc{u}}[x/\tra{u}])])[\fresh{z}/\ALC(\tra{D\cwc{x}}[x/\tra{u}])]\\
\nonumber \Rewplus{\les} &
 \ALC(\tra{v}[z/\ALC(\tra{D\cwc{u}}[x/\tra{u}])])[\fresh{z}/\ALC(\tra{D\cwc{u}}[x/\tra{u}])]\\
= &\ALC(\tra{v}[z/\ALC(\tra{D\cwc{u}}[x/\tra{u}])][\fresh{z}/\ALC(\tra{D\cwc{u}}[x/\tra{u}])])~\label{EqSimproof7}\\
\nonumber  & \mbox{and}\\
\nonumber  &\ALC(\tra{v}[x/\tra{u}][z/\ALC(\tra{D\cwc{x}}[x/\tra{u}])][\fresh{z}/\ALC(\tra{D\cwc{x}}[x/\tra{u}])])\\
\nonumber =&\ALC(\tra{v}[x/\tra{u}][z/\ALC(\tra{D\cwc{x}}[x/\tra{u}])])[\fresh{z}/\ALC(\tra{D\cwc{x}}[x/\tra{u}])]\\
\nonumber \Rewplus{\les} & \ALC(\tra{v}[x/\tra{u}][z/\ALC(\tra{D\cwc{u}}[x/\tra{u}])])[\fresh{z}/\ALC(\tra{D\cwc{x}}[x/\tra{u}])]\\
\nonumber \Rewplus{\les} &
 \ALC(\tra{v}[x/\tra{u}][z/\ALC(\tra{D\cwc{u}}[x/\tra{u}])])[\fresh{z}/\ALC(\tra{D\cwc{u}}[x/\tra{u}])]\\
=& \ALC(\tra{v}[x/\tra{u}][z/\ALC(\tra{D\cwc{u}}[x/\tra{u}])][\fresh{z}/\ALC(\tra{D\cwc{u}}[x/\tra{u}])])~\label{EqSimproof8}
\end{eqnarray}
\caption{Equations for the cases where $x$ is free in $D{\cwc{u}}$}
\label{figEqSimproofii}
\end{figure}

The structure of the proofs is the same. Depending on $z \in \fv(v)$,  $x \notin \fv(v)$, or $x \in \fv(D\cwc{u})$, we use a different equation from the figures as follows:

\begin{center}
\begin{tabular}{|c|c|c|}
\hline
& $x \notin \fv(D\cwc{u})$ & $x \in \fv(D\cwc{u})$
\\
\hline
$z \notin \fv(v)$ and $x \notin \fv(v)$ & Use (\ref{EqSimproof1}) & Use (\ref{EqSimproof5})\\
\hline
$z \notin \fv(v)$ and $x \in \fv(v)$ & Use (\ref{EqSimproof2}) & Use (\ref{EqSimproof6})\\
\hline
$z \in \fv(v)$ and $x \notin \fv(v)$ & Use (\ref{EqSimproof3}) & Use (\ref{EqSimproof7})\\
\hline
$z \in \fv(v)$ and $x \in \fv(v)$ & Use (\ref{EqSimproof4}) & Use (\ref{EqSimproof8})\\
\hline
\end{tabular}
\end{center}

\ShortLongProof
{

For example, we take the case where $z \in \fv(v)$,  $x \notin \fv(v)$, and $x \in \fv(D\cwc{u})$. 
\begin{eqnarray*}
& \tra{C\cwc{x}[x/u]}\\
= & \ALC(\tra{v}[z/\tra{D\cwc{x}}][\fresh{z}/\tra{D\cwc{x}}][x/\tra{u}][\fresh{x}/\tra{u}])\\
= & \ALC(\tra{v}[z/\ALC(\tra{D\cwc{x}}[x/\tra{u}])][\fresh{z}/\ALC(\tra{D\cwc{x}}[x/\tra{u}])][\fresh{x}/\tra{u}])\\
\Rewplus{\les} &
\ALC(\tra{v}[z/\ALC(\tra{D\cwc{u}}[x/\tra{u}])][\fresh{z}/\ALC(\tra{D\cwc{u}}[x/\tra{u}])][\fresh{x}/\tra{u}])\\
= & \ALC(\tra{v}[z/\tra{D\cwc{u}}][\fresh{z}/\tra{D\cwc{u}}][x/\tra{u}][\fresh{x}/\tra{u}])\\
= & \tra{C\cwc{u}[x/u]}
\end{eqnarray*}
}
{

Thus for example,
let $z \in \fv(v)$,  $x \notin \fv(v)$, and $x \in \fv(D\cwc{u})$. Then,
\begin{eqnarray*}
& \tra{C\cwc{x}[x/u]}\\
= & \ALC(\tra{v}[z/\tra{D\cwc{x}}][\fresh{z}/\tra{D\cwc{x}}][x/\tra{u}][\fresh{x}/\tra{u}])\\
= & \ALC(\tra{v}[z/\ALC(\tra{D\cwc{x}}[x/\tra{u}])][\fresh{z}/\ALC(\tra{D\cwc{x}}[x/\tra{u}])][\fresh{x}/\tra{u}])\\
&\mbox{(use (\ref{EqSimproof7}) and~\cref{l:adding-independent-subsi}.\ref{l:adding-independent-subsia})}\\
\Rewplus{\les} & \ALC(\tra{v}[z/\ALC(\tra{D\cwc{u}}[x/\tra{u}])][\fresh{z}/\ALC(\tra{D\cwc{u}}[x/\tra{u}])][\fresh{x}/\tra{u}])\\
= & \ALC(\tra{v}[z/\tra{D\cwc{u}}][\fresh{z}/\tra{D\cwc{u}}][x/\tra{u}][\fresh{x}/\tra{u}])\\
= & \tra{C\cwc{u}[x/u]}
\end{eqnarray*}

Let $z \notin \fv(v)$,  $x \notin \fv(v)$, and $x \notin \fv(D\cwc{u})$.
\begin{eqnarray*}
& \tra{C\cwc{x}[x/u]}\\
= & \ALC(\tra{v}[z/\tra{D\cwc{x}}][x/\tra{u}][\fresh{x}/\tra{u}])\\
= & \ALC(\tra{v}[z/\ALC(\tra{D\cwc{x}}[x/\tra{u}])][\fresh{x}/\tra{u}])\\
&\mbox{(use (\ref{EqSimproof1}) and~\cref{l:adding-independent-subsi}.\ref{l:adding-independent-subsia})}\\
\Rewplus{\les} & \ALC(\tra{v}[z/\ALC(\tra{D\cwc{u}})][\fresh{x}/\tra{u}])\\
= & \ALC(\tra{v}[z/\tra{D\cwc{u}}][\fresh{x}/\tra{u}])\\
= & \tra{C\cwc{u}[x/u]}
\end{eqnarray*}

The other six cases proceed in a similar way. 
}
\end{itemize}

\item  $u[x/v] \Rew{\Gc} u$.

Let us write $u$ as $\l \vec{y}.u'$,  where $u'$ is not a
$\l$-abstraction. 

 As $\fv(t) = \fv(\tra{t})$ and all bound variables are distinct, $x \notin \fv(u')$. Then

$$
\begin{array}{ll}
  & \tra{u[x/v]}\\
= & \ALC(\tra{\l \vec{y}.u'}[x/\tra{v}])\\
= & \ALC((\l \vec{y}.  \tra{ u'})[x/\tra{v}])\\
= & \ALC(\l\vec{y}. \tra{u'}[x/\tra{v}])\\
= & \l\vec{y}. \ALC(\tra{ u'}[x/\tra{v}])\\
= & \l \vec{y}.\tra{u'}[x/\tra{v}]\\
\Rew{\Gc} & \l \vec{y}.\tra{u'}\\
= & \tra{u}
\end{array}
$$

\end{itemize}

The inductive cases are the following:

\begin{itemize}
\item $t = (u_1\ u_2) \Rew{\lm} (u'_1\ u_2) = t'$. By the \ih\  we have the diagram below.
$$
\SelectTips{cm}{}
\xymatrix{
(u_1\ u_2) \ar@{-> }[r]_>(0.965){\lm} \ar[d]_{\nombretra} & (u'_1\ u_2)  \ar[d]_{\nombretra}\\
(\tra{u_1}\ \tra{u_2})[z/\tra{u_2}] \ar@{-> }[r]^>(0.88)+_>(0.925){\les}  & (\tra{u'_1}\ \tra{u_2})[z/\tra{u_2}]
}
$$

\item $t = (u_1\ u_2) \Rew{\lm} (u_1\ u_2') = t'$. Similar.
\item $t = \l y. u_1 \Rew{\lm} \l y. u'_1  = t'$. Similar.

\item $t = u_1[x/u_2] \Rew{\lm} u_1'[x/u_2] = t'$.

By the \ih, $\tra{u_1} \Rewplus{\les} \tra{u_1'}$. We break the case over three subcases.

\begin{itemize}
\item $x \notin \fv(u_1)$. 

$\tra{u_1[x/u_2]} = \ALC(\tra{u_1}[x/\tra{u_2}])$. 

By~\cref{l:adding-independent-subsi}.\ref{l:adding-independent-subsia}, $\ALC(\tra{u_1}[x/\tra{u_2}]) \Rewplus{\les} \ALC(\tra{u_1'}[x/\tra{u_2}])$ $= \tra{u_1'[x/u_2]}.$

\item $x \in \fv(u_1)$, $x \in \fv(u_1')$. 

$\tra{u_1[x/u_2]} = \ALC(\tra{u_1}[x/\tra{u_2}][\fresh{x}/\tra{u_2}])$.

By~\cref{l:adding-independent-subsi}.\ref{l:adding-independent-subsib}, $\ALC(\tra{u_1}[x/\tra{u_2}]) \Rewplus{\les} \ALC(\tra{u_1'}[x/\tra{u_2}])$ 
then by~\cref{l:adding-independent-subsi}.\ref{l:adding-independent-subsia},
$$
\begin{array}{rl}
&\ALC(\tra{u_1}[x/\tra{u_2}][\fresh{x}/\tra{u_2}])\\
=&\ALC(\ALC(\tra{u_1}[x/\tra{u_2}])[\fresh{x}/\tra{u_2}])\\
\Rewplus{\les}& \ALC(\ALC(\tra{u_1'}[x/\tra{u_2}])[\fresh{x}/\tra{u_2}])\\
=& \ALC(\tra{u_1'}[x/\tra{u_2}][\fresh{x}/\tra{u_2}])\\
=& \tra{u_1'[x/u_2]}
\end{array}
$$

\item $x \in \fv(u_1)$, $x \notin \fv(u_1')$. 

$\tra{u_1[x/u_2]} = \ALC(\tra{u_1}[x/\tra{u_2}][\fresh{x}/\tra{u_2}])$.

By~\cref{l:adding-independent-subsi}.\ref{l:adding-independent-subsic}, $\ALC(\tra{u_1}[x/\tra{u_2}]) \Rewplus{\les} \tra{u_1'}$ then by~\cref{l:adding-independent-subsi}.\ref{l:adding-independent-subsia},
$$
\begin{array}{rl}
&\ALC(\tra{u_1}[x/\tra{u_2}][\fresh{x}/\tra{u_2}])\\
=&\ALC(\ALC(\tra{u_1}[x/\tra{u_2}])[\fresh{x}/\tra{u_2}])\\
\Rewplus{\les} & \ALC(\tra{u_1'}[\fresh{x}/\tra{u_2}])\\
=_{\alpha}& \tra{u_1'[x/u_2]}
\end{array}
$$
\end{itemize}

\item $t = u_1[x/u_2] \Rew{\lm} u_1[x/u_2'] = t'$.

By the \ih, $\tra{u_2} \Rewplus{\les} \tra{u_2'}$. We break the proof over two subcases.

\begin{itemize}
\item $x \notin \fv(u_1)$. 

$\tra{u_1[x/u_2]} = \ALC(\tra{u_1}[x/\tra{u_2}])$. 

By~\cref{c:reduce-copies-of-subs}, $\ALC(\tra{u_1}[x/\tra{u_2}]) \Rewplus{\les} \ALC(\tra{u_1}[x/\tra{u_2'}])$ $= \tra{u_1[x/u_2']}$.

\item $x \in \fv(u_1)$. 

$\tra{u_1[x/u_2]} = \ALC(\tra{u_1}[x/\tra{u_2}][\fresh{x}/\tra{u_2}])$. 

By~\cref{c:reduce-copies-of-subs},
$\ALC(\tra{u_1}[x/\tra{u_2}]) \Rewplus{\les}
\ALC(\tra{u_1}[x/\tra{u_2'}])$. 

Let's  write  $\ALC(\tra{u_1}[x/\tra{u_2}])$  as $\l  \vec{y}.v$,
where $v$ is  not a $\l$-abstraction. Then we  have $\l \vec{y}.v
\Rewplus{\les}  \l  \vec{y}.v' =  \ALC(\tra{u_1}[x/\tra{u_2'}])$.
Therefore,

$$
\begin{array}{rl}
&\ALC(\tra{u_1}[x/\tra{u_2}][\fresh{x}/\tra{u_2}])\\
=& \l \vec{y}.v [\fresh{x}/\tra{u_2}]\\
\Rewplus{\les}& \l \vec{y}.v' [\fresh{x}/\tra{u_2}]\\
\Rewplus{\les}& \l \vec{y}.v' [\fresh{x}/\tra{u_2'}]\\
=&\ALC(\tra{u_1}[x/\tra{u_2'}][\fresh{x}/\tra{u_2'}])\\
=&\tra{u_1[x/u_2']}.
\end{array}
$$
\end{itemize}
\end{itemize}
\end{proof}
}

\section{Proofs of~\cref{s:charact-les-lm}}
\label{app:types}

\environmentstableorderoftypes*

\begin{proof}
By induction on the derivation of $\Gam, x:B \vd_\TST  t:A$. 

\begin{itemize}
\item If $\Gam, x:B \vd_\TST  t:A$ is an axiom, then $t$ is a variable
  $y$. We consider two cases.
      \begin{itemize}
      \item $x \neq y$. Then $y:A \in \Gam$.
        If $\TST$ is an additive
        system, then $\Gam, x:C
        \vd_\TST  y:A$ is also an axiom in $\TST$.
        If $\TST$ is a multiplicative 
        system, then $\Gam, x:B \vd_\TST  t:A$ cannot be an axiom, so
        this case does not apply. 
      \item $x = y$. Thus  $A=B$ (otherwise the judgement cannot be
        an axiom). We reason by induction on $C \ll B$.
        \begin{itemize}
        \item $C=B \ll B$. Trivial.
        \item $C=B \cap B' \ll B$. Then $\Gam, x:B \cap B' \vd_\TST 
          x:A$ follows from the axiom  $\Gam, x:B \cap B' \vd_\TST 
          x:B \cap B'$ by the typing rule $\ine$.
        \item $C=B' \cap B \ll B$. Similar.
        \item $C \ll D \ll B$. By two applications of the second \ih\  we have
          sequently $\Gam, x:D \vd_\TST  x:A$ and thus $\Gam, x:C \vd_\TST 
          x:A$.
        \item $B = B_1 \cap B_2$ and $C \ll B_1\ \&\ C \ll B_2$.
        Then $\Gam, x:C \vd_\TST 
          x:B_i$ follows from the axiom  $\Gam, x:B_i  \vd_\TST 
          x:B_i$. Then $\Gam, x:C \vd_\TST 
          x:B_1 \cap B_2$ follows  by the typing rule $\ine$.
        \end{itemize}
      \end{itemize}      
\item If $\Gam, x:B \vd_\TST  t:A$ is not an axiom, then the property
  follows straightforwardly by the first \ih. 
\end{itemize}
\end{proof}

\weakening*

\begin{proof}
\begin{description}
\item [$\Rightarrow$] By induction on  the derivation of
  $\Gam \vd_{\addls}  t:A$. The base case  $(\axiom)$ is trivial.
  For the  inductive case, we  break the proof over  the possible
  rules:
\begin{description}
\item[(\app)]
Suppose the derivations ends with 
              \[ \begin{prooftree}
                 \Gam \vd t:B \> A \sep  \Gam \vd u:B
                 \justifies{\Gam \vd (t\ u):A}
                 \end{prooftree} \] 
       
      By the \ih\  we have $\Gam, \Del \vd t:B \> A$ and $\Gam, \Del \vd u:B$. We then apply $(\app)$.
\item[(\abs)]
Let $A=B \> C$. Suppose the derivations ends with 
              \[ \begin{prooftree}
                 \Gam,x:B \vd t:C
                 \justifies{\Gam \vd \l x.t:B \> C}
                 \end{prooftree} \] 
       
      By the \ih\  we have $\Gam, \Del, x:B \vd t:C$. We then apply $(\app)$.
\end{description}
The rest of the cases are similar.
\item[$\Leftarrow$]
 By  induction on   the derivation of
  $\Gam, \Del \vd_{\addls}  t:A$.
\end{description}
\end{proof}

\additivegenerationlemma*

\begin{proof}
  The  right to  left implications   follow  from the  typing rules in
  the additive systems  and~\cref{l:ll-and-typing} and~\cref{l:ll-preserves-typing}.

The left to right implication of the first four points are
shown by induction on the typing derivation of the left part.
\begin{enumerate}
\item $\Gam \vd_\TST x:A$.
      \begin{itemize}
      \item If $x:A \in \Gam$ (so that the typing is an axiom),
            then $B=A$.
      \item Suppose  $A = C_1 \cap C_2$ and the derivations ends with 

              \[ \begin{prooftree}
                 \Gam \vd_\TST x:C_1 \sep  \Gam \vd_\TST x:C_2
                 \justifies{\Gam \vd_\TST x:C_1 \cap C_2}
                 \end{prooftree} \] 

            By the \ih\  there is $B_1 \ll C_1$ and $B_2 \ll C_2$
            s.t. $x:B_1, x:B_2 \in \Gam$, thus $B_1=B_2$
            and $B_1 \ll C_1 \cap C_2$ concludes the proof of
            this case. 
      \item Suppose  the derivations ends with 

              \[ \begin{prooftree}
                 \Gam \vd_\TST x: A \cap A'
                 \justifies{\Gam \vd_\TST x:A}
                 \end{prooftree} \] 
    
            By the \ih\  there is $B \ll A \cap A'$ 
            s.t. $x:B \in \Gam$. By transitivity 
            $B \ll A$ which concludes the proof of this case.
      \end{itemize}
\item $\Gam \vd_\TST t\ u:A$.
  \begin{itemize}
  \item If the derivation ends with
        \[ \begin{prooftree}
            \Gam \vd_\TST t: A'  \> A \sep \Gam \vd_\TST u:A'
            \justifies{\Gam \vd_\TST t\ u:A}
           \end{prooftree}\]

        then the property immediately holds. 
  \item If the  derivation ends with
        \[ \begin{prooftree}
           \Gam \vd_\TST t\ u:C_1  \sep \Gam \vd_\TST t\ u: C_2
           \justifies{\Gam \vd_\TST t\ u:C_1 \cap C_2}
           \end{prooftree}\]

        By the \ih\  there are $A_i, B_i, i \in \un{n}$ s.t.
        $\capp{n} A_i \ll C_1$ and $\Gam \vd_\TST t: B_i \> A_i$ and
        $\Gam \vd_\TST u:B_i$ for all $i \in \un{n}$. Also, there are $A'_i, B'_i, i \in \un{n'}$ s.t.
        $\capp{n'} A'_i \ll C_2$ and $\Gam \vd_\TST t: B'_i \> A'_i$ and
        $\Gam \vd_\TST u:B'_i$ for all $i \in \un{n}$. Since  $\capp{n} A_i \cap \capp{n'} A'_i
        \ll C_1 \cap C_2$, 
         this concludes this case. 

  \item If the  derivation ends with
        \[ \begin{prooftree}
           \Gam \vd_\TST t\ u:A \cap B
           \justifies{\Gam \vd_\TST t\ u:A}          
           \end{prooftree}\]

       By the \ih\  there are $A_i, B_i, i \in \un{n}$ s.t.
        $\capp{n} A_i \ll A \cap B$ and $\Gam \vd_\TST t: B_i \> A_i$ and
        $\Gam \vd_\TST u:B_i$ for all $i \in \un{n}$. Since $\capp{n} A_i \ll A$ this
        concludes this case. 
  \end{itemize}
\item $\Gam \vd_\TST t[x/u]:A$.
   \begin{itemize}
  \item If the derivation ends with
        \[ \begin{prooftree}
            \Gam \vd_\TST u: B \sep \Gam, x:B \vd_\TST t:A
            \justifies{\Gam \vd_\TST t[x/u]:A}
           \end{prooftree}\]

        then the property immediately holds. 
  \item If the  derivation ends with
        \[ \begin{prooftree}
           \Gam \vd_\TST t[x/u]:C_1  \sep \Gam \vd_\TST t[x/u]: C_2
           \justifies{\Gam \vd_\TST t[x/u]:C_1 \cap C_2}
           \end{prooftree}\]

        By the \ih\  there are $A_i, B_i, i \in \un{n}$ s.t.
        $\capp{n} A_i \ll C_1$ and $\Gam \vd_\TST u: B_i$ and
        $\Gam, x:B_i \vd_\TST t:A_i$ for all $i \in \un{n}$. Also  there are $A'_i, B'_i, i \in \un{n'}$ s.t.
        $\capp{n'} A'_i \ll C_2$ and $\Gam \vd_\TST u: B'_i$ and
        $\Gam, x:B'_i \vd_\TST t:A'_i$ for all $i \in \un{n}$.
         Since  $\capp{n} A_i \cap \capp{n'} A'_i
        \ll C_1 \cap C_2$, 
         this concludes this case. 

  \item If the  derivation ends with
        \[ \begin{prooftree}
           \Gam \vd_\TST t[x/u]:A \cap B
           \justifies{\Gam \vd_\TST t[x/u]:A}          
           \end{prooftree}\]

       By the \ih\  there are $A_i, B_i, i \in \un{n}$ s.t.
        $\capp{n} A_i \ll A \cap B$ and $\Gam, x:B_i \vd_\TST t: A_i$ and
        $\Gam \vd_\TST u:B_i$ for all $i \in \un{n}$. Since $\capp{n} A_i \ll A$ this
        concludes this case. 
  \end{itemize}
\item  $\Gam \vd_\TST\l x. t:A$.
  \begin{itemize}
  \item If $A = A_1 \> A_2$ and the derivation ends with
        \[ \begin{prooftree}
            \Gam, x:A_1 \vd_\TST t: A_2
            \justifies{\Gam \vd_\TST\l x. t:A_1 \> A_2}
           \end{prooftree}\]
        then the property immediately holds. 
  
  \item If $A = C_1 \cap C_2$ the derivation ends with
             \[ \begin{prooftree}
            \Gam \vd_\TST\l x. t:C_1 \sep \Gam \vd_\TST\l x. t:C_2
            \justifies{\Gam \vd_\TST\l x. t:C_1 \cap C_2}
           \end{prooftree}\]

        By the \ih\  there are $A_i, B_i, i \in
  \un{n}$ s.t. $\capp{n} (A_i \> B_i) \ll C_1$
and    $\Gam, x:A_i \vd_\TST t:B_i$ for all $i \in \un{n}$. Also, there are
       $A'_i, B'_i, i \in
  \un{n'}$ s.t. $\capp{n'} (A'_i \> B'_i) \ll C_2$
and    $\Gam, x:A'_i \vd_\TST t:B'_i$ for all $i \in \un{n}$.
        Since  $\capp{n} (A_i \> B_i) \cap \capp{n'} (A'_i \> B'_i)  \ll C_1 \cap C_2$, 
         this concludes this case. 
  \item If the derivation ends with
             \[ \begin{prooftree}
            \Gam \vd_\TST\l x. t:A \cap B
            \justifies{\Gam \vd_\TST\l x. t:A}
          \end{prooftree}\] 

          By  the  \ih\  there  are  $A_i,  B_i, i  \in \un{n}$  s.t.
          $\capp{n} (A_i \>  B_i) \ll A \cap B$  and $\Gam, x:A_i
          \vd_\TST t:B_i$ for all $i \in \un{n}$. Since $\capp{n} (A_i \> B_i) \ll A$ this
          concludes this case.

  \end{itemize}
      
\end{enumerate}

The left to right implication of point~\ref{last-add} follows from point~\ref{lambda-case-add}
and~\cref{l:relation-between-ll-types}. Indeed,
If $\Gam \vd_\TST\l x. t:B \> C$, then point~\ref{lambda-case-add}
gives  $\Gam, x:B_i \vd_\TST t:C_i$ for $\capp{n} (B_i \> C_i) \ll B
\> C$. \Cref{l:relation-between-ll-types} gives
$B \> C  =B_j \> C_j$ for some $j \in \un{n}$, thus 
$\Gam, x:B \vd_\TST t:C$. 
\end{proof}

\multiplicativegenerationlemma*
\begin{proof}
The right to left implications follow from the typing rules in in the
multiplicative systems in~\cref{f:typing-rules-l},~\cref{l:ll-and-typing}
and~\cref{l:ll-preserves-typing}:
\begin{enumerate}
\item $\Gam = x:B, B \ll A$.

Use $(\axiom)$ to prove $\Gam \vd_\TST x:B$ and then apply~\cref{l:ll-and-typing}.

\item $\Gam = \Gam_1 \uplus \Gam_2, \Gam_1 = \fv(t)$ and $\Gam_2 = \fv(u)$ where $\Gam_1 = \fv(t)$ and $\Gam_2 = \fv(u)$ 
and there exist $A_i, B_i \in \un{n}$ s.t.
$\capp{n} A_i \ll A$ and $\Gam_1 \vd_\TST t: B_i \> A_i$ and
$\Gam_2 \vd_\TST u:B_i$.

Applying $(\app)$, we  have $\Gam_1 \uplus \Gam_2 \vd_\TST t\ u: A_i$
for  all $i$.  Then, using  $n-1$ applications  of  $(\ini)$ proves
$\Gam_1 \uplus \Gam_2 \vd_\TST t\ u: \capp{n} A_i$. As $\capp{n} A_i
\ll A$,  we have $\Gam_1  \uplus \Gam_2 \vd_\TST t\ u: A$  by~\cref{l:ll-and-typing}.

\item $\Gam = \Gam_1 \uplus \Gam_2$,
where $\Gam_1 = \fv(t) \setminus \set{x}$ and $\Gam_2 = \fv(u)$
and there exist $A_i, B_i \in \un{n}$ s.t.
$\capp{n} A_i \ll A$ and  $\forall i \in
\un{n}$ $\Gam_2 \vd_\TST u:B_i$ and 
either $x \notin \fv(t)\ \&\ \Gam_1 \vd_\TST t: A_i$ or 
$x \in \fv(t)\ \&\ \Gam_1, x:B_i \vd_\TST t: A_i$.

We can apply $(\ini)$ repeatedly to prove $\Gam_2 \vd_\TST u:\capp{n} B_i$.

If $x  \notin \fv(t)$  then we can  apply $(\ini)$  repeatedly to
prove  $\Gam_1 \vd_\TST t:\capp{n} A_i$  so, by  $(\subsii)$, $\Gam_1
\uplus \Gam_2 \vd_\TST t[x/u]:\capp{n} A_i$.

If  $x  \in \fv(t)$  then  by~\cref{l:ll-preserves-typing},
$\Gam_1,  x:\capp{n} B_i  \vd_\TST t:A_i$  for all  $i$. We  can apply
$(\ini)$  repeatedly   to  prove  $\Gam_1,   x:\capp{n}  B_i  \vd_\TST
t:\capp{n}  A_i$ so,  by  $(\subsi)$, $\Gam_1  \uplus \Gam_2  \vd_\TST
t[x/u]:\capp{n} A_i$.

By~\cref{l:ll-and-typing}, $\Gam_1 \uplus \Gam_2 \vd_\TST t[x/u]:A$.

\item $\Gam = \fv(\l x.t)$ and there exist $B_i, C_i \in
  \un{n}$ s.t. $\capp{n} (B_i \> C_i) \ll A$
and   $\forall i \in
\un{n}$  either $x \notin \fv(t)\ \&\ \Gam \vd_\TST t:C_i$ or 
$x \in \fv(t)\ \&\ \Gam, x:B_i \vd_\TST t:C_i$. 

If $x \notin \fv(t)$ then,  by $(\absii)$, $\Gam \vd_\TST \l x.t:
B_i \> C_i$ for all $i$. We can then apply $(\ini)$ repeatedly to
prove $\Gam \vd_\TST t:\capp{n} (B_i \> C_i)$.

If $x \in \fv(t)$ then,  by $(\absi)$, $\Gam \vd_\TST\l x.t: B_i
\> C_i$  for all  $i$. We can  then apply $(\ini)$  repeatedly to
prove $\Gam \vd_\TST t:\capp{n} (B_i \> C_i)$.

By~\cref{l:ll-and-typing}, $\Gam \vd_\TST\l x.t:A$.

\item Use $(\absi)$ or $(\absii)$.
\end{enumerate}

The left to right implication of points~\ref{lax}-\ref{lambda-case}
is by induction on the typing derivation of the left part.
\begin{enumerate}
\item $\Gam \vd_\TST x:A$.
      \begin{itemize}
      \item If $\Gam = x:A$ (so that the typing is an axiom),
            then $B=A$.
      \item Suppose  $A = C_1 \cap C_2$ and the derivations ends with 

              \[ \begin{prooftree}
                 \Gam \vd_\TST x:C_1 \sep  \Gam \vd_\TST x:C_2
                 \justifies{\Gam \vd_\TST x:C_1 \cap C_2}
                 \end{prooftree} \] 

            By the \ih\  there is $B_1 \ll C_1$ and $B_2 \ll C_2$
            s.t. $\Gam=x:B_1$ and $\Gam=x:B_2$, thus $B_1=B_2$
            and $B_1 \ll C_1 \cap C_2$ concludes the proof of
            this case. 
      \item Suppose  the derivations ends with 

              \[ \begin{prooftree}
                 \Gam \vd_\TST x: A \cap A'
                 \justifies{\Gam \vd_\TST x:A}
                 \end{prooftree} \] 
    
            By the \ih\  there is $B \ll A \cap A'$ 
            s.t. $\Gam = x:B$. By transitivity 
            $B \ll A$ which concludes the proof of this case.
      \end{itemize}
\item $\Gam \vd_\TST t\ u:A$.
  \begin{itemize}
  \item If the derivation ends with
        \[ \begin{prooftree}
            \Gam_1 \vd_\TST t: A'  \> A \sep \Gam_2 \vd_\TST u:A'
            \justifies{\Gam_1 \uplus \Gam_2 \vd_\TST t\ u:A}
           \end{prooftree}\]

        then the property immediately holds. 
  \item If the  derivation ends with
        \[ \begin{prooftree}
           \Gam \vd_\TST t\ u:C_1  \sep \Gam \vd_\TST t\ u: C_2
           \justifies{\Gam \vd_\TST t\ u:C_1 \cap C_2}
           \end{prooftree}\]

        By the \ih\  there are $A_i, B_i, i \in \un{n}$ s.t.
        $\capp{n} A_i \ll C_1$ and $\Gam_1 \vd_\TST t: B_i \> A_i$ and
        $\Gam_2 \vd_\TST u:B_i$ for all $i \in \un{n}$. Also, there are $A'_i, B'_i, i \in \un{n'}$ s.t.
        $\capp{n'} A'_i \ll C_2$ and $\Gam_1 \vd_\TST t: B'_i \> A'_i$ and
        $\Gam_2 \vd_\TST u:B'_i$ for all $i \in \un{n}$. Since $\capp{n} A_i \cap \capp{n'} A'_i
        \ll C_1 \cap C_2$, 
         this concludes this case. 

  \item If the  derivation ends with
        \[ \begin{prooftree}
           \Gam \vd_\TST t\ u:A \cap B
           \justifies{\Gam \vd_\TST t\ u:A}          
           \end{prooftree}\]

       By the \ih\  there are $A_i, B_i, i \in \un{n}$ s.t.
        $\capp{n} A_i \ll A \cap B$ and $\Gam_1 \vd_\TST t: B_i \> A_i$ and
        $\Gam_2 \vd_\TST u:B_i$ for all $i \in \un{n}$. Since $\capp{n} A_i \ll A$ this
        concludes this case. 
  \end{itemize}
\item $\Gam \vd_\TST t[x/u]:A$.
   \begin{itemize}
  \item If $\Gam = \Gam_1 \uplus \Gam_2$ and the derivation ends with
        \[ \begin{prooftree}
            \Gam_2 \vd_\TST u: B \sep \Gam_1, x:B \vd_\TST t:A
            \justifies{\Gam_1 \uplus \Gam_2 \vd_\TST t[x/u]:A}
           \end{prooftree}\]

        then the property immediately holds. 
  \item If  $\Gam = \Gam_1 \uplus \Gam_2$ and the derivation ends with
        \[ \begin{prooftree}
            \Gam_2 \vd_\TST u: B \sep \Gam_1 \vd_\TST t:A
            \justifies{\Gam \vd_\TST t[x/u]:A}
           \end{prooftree}\]

        then the property immediately holds. 
  \item If the  derivation ends with
        \[ \begin{prooftree}
           \Gam \vd_\TST t[x/u]:C_1  \sep \Gam \vd_\TST t[x/u]: C_2
           \justifies{\Gam \vd_\TST t[x/u]:C_1 \cap C_2}
           \end{prooftree}\]

        By the \ih\  there are $A_i, B_i, i \in \un{n}$ s.t.
        $\capp{n} A_i \ll C_1$ and $\Gam_2 \vd_\TST u: B_i$ and
        $\Gam_1, x:B_i \vd_\TST t:A_i$ or $\Gam_1 \vd_\TST t:A_i$ for all $i \in \un{n}$. 
        Also  there are $A'_i, B'_i, i \in \un{n'}$ s.t.
        $\capp{n'} A'_i \ll C_2$ and $\Gam_2 \vd_\TST u: B'_i$ and
        $\Gam_2, x:B'_i \vd_\TST t:A'_i$ or $\Gam_2 \vd_\TST t:A'_i$ for all $i \in \un{n}$.
         Since  $\capp{n} A_i \cap \capp{n'} A'_i
        \ll C_1 \cap C_2$, 
         this concludes this case. 

  \item If the  derivation ends with
        \[ \begin{prooftree}
           \Gam \vd_\TST t[x/u]:A \cap B
           \justifies{\Gam \vd_\TST t[x/u]:A}          
           \end{prooftree}\]

       By the \ih\  there are $A_i, B_i, i \in \un{n}$ s.t.
        $\capp{n} A_i \ll A \cap B$ and $\Gam_1, x:B_i \vd_\TST t:
        A_i$ or $\Gam_1 \vd_\TST t: A_i$ and
        $\Gam_2 \vd_\TST u:B_i$ for all $i \in \un{n}$. Since $\capp{n} A_i \ll A$ this
        concludes this case. 
  \end{itemize}
\item  $\Gam \vd_\TST\l x. t:A$.
  \begin{itemize}
  \item If $A = A_1 \> A_2$ and the derivation ends with
        \[ \begin{prooftree}
            \Gam, x:A_1 \vd_\TST t: A_2
            \justifies{\Gam \vd_\TST\l x. t:A_1 \> A_2}
           \end{prooftree}\]
        then the property immediately holds. 
  
  \item If $A = A_1 \> A_2$ and the derivation ends with
        \[ \begin{prooftree}
            \Gam \vd_\TST t: A_2
            \justifies{\Gam \vd_\TST\l x. t:A_1 \> A_2}
           \end{prooftree}\]
        then the property immediately holds. 
  
  \item If $A = C_1 \cap C_2$ the derivation ends with
             \[ \begin{prooftree}
            \Gam \vd_\TST\l x. t:C_1 \sep \Gam \vd_\TST\l x. t:C_2
            \justifies{\Gam \vd_\TST\l x. t:C_1 \cap C_2}
           \end{prooftree}\]

        By the \ih\  there are $A_i, B_i, i \in
  \un{n}$ s.t. $\capp{n} (A_i \> B_i) \ll C_1$
and    $\Gam, x:A_i \vd_\TST t:B_i$ or $\Gam \vd_\TST t:B_i$ for all $i \in \un{n}$. Also, there are
       $A'_i, B'_i, i \in
  \un{n'}$ s.t. $\capp{n'} (A'_i \> B'_i) \ll C_2$
and    $\Gam, x:A'_i \vd_\TST t:B'_i$ or $\Gam \vd_\TST t:B'_i$ for all $i \in \un{n}$.
        Since  $\capp{n} (A_i \> B_i) \cap \capp{n'} (A'_i \> B'_i)  \ll C_1 \cap C_2$, 
         this concludes this case. 
  \item If the derivation ends with
             \[ \begin{prooftree}
            \Gam \vd_\TST\l x. t:A \cap B
            \justifies{\Gam \vd_\TST\l x. t:A}
          \end{prooftree}\] 

          By  the  \ih\  there  are  $A_i,  B_i, i  \in \un{n}$  s.t.
          $\capp{n} (A_i \>  B_i) \ll A \cap B$  and $\Gam, x:A_i
          \vd_\TST t:B_i$ or $\Gam  \vd_\TST t:B_i$ for all $i \in \un{n}$. Since $\capp{n} (A_i \> B_i) \ll A$ this
          concludes this case.

  \end{itemize}
      
\end{enumerate}

The left to right implication of point~\ref{last} follows from point~\ref{lambda-case}
and~\cref{l:relation-between-ll-types}. Indeed,
If $\Gam \vd_\TST\l x. t:B \> C$, then point~\ref{lambda-case}
gives  $\Gam, x:B_i \vd_\TST t:C_i$ (resp. $\Gam \vd_\TST t:C_i$) for $\capp{n} (B_i \> C_i) \ll B
\> C$. \Cref{l:relation-between-ll-types} gives
$B \> C  =B_j \> C_j$ for some $j \in \un{n}$, thus 
$\Gam, x:B \vd_\TST t:C$ (resp. $\Gam \vd_\TST t:C$). 
\end{proof}


\end{document}